\titlespacing{\paragraph}{0pt}{0.35\baselineskip plus 0.25\baselineskip minus 0.1\baselineskip}{1em}
\newcommand{\OO}{\mathcal{O}}
\newcommand{\bag}{\mathcal{B}}
\newcommand{\children}{\operatorname{children}}
\newcommand{\poly}{\operatorname{poly}}
\DeclarePairedDelimiter{\abs}{\lvert}{\rvert}
\DeclarePairedDelimiter{\floor}{\lfloor}{\rfloor}
\DeclarePairedDelimiter{\set}{\lbrace}{\rbrace}
\newcommand{\suchthat}{\mathrel{}\mathclose{}\ifnum\currentgrouptype=16\middle\fi\vert\mathopen{}\mathrel{}}
\DeclarePairedDelimiter{\paren}{\lparen}{\rparen}
\newcommand{\emb}{\phi}
\newcommand{\Emb}{\operatorname{Emb}}
\newcommand{\EmbGood}{\Emb^\star}
\newcommand{\EmbSolid}{\Emb^{+}}
\newcommand{\flipdist}{\operatorname{dist}}
\newcommand{\flipparity}{\operatorname{flip-parity}}
\newcommand{\gluecorner}{\operatorname{glue-corner}}
\declaretheorem[style=plain,name={Theorem}]{theorem}
\declaretheorem[style=plain,name={Lemma},sibling=theorem]{lemma}
\declaretheorem[style=plain,name={Corollary},sibling=theorem]{corollary}
\declaretheorem[style=plain,name={Observation},sibling=theorem]{observation}
\declaretheorem[style=definition,name={Definition},sibling=theorem]{definition}
\title{Worst-Case Polylog Incremental SPQR-trees: \\
Embeddings, Planarity, and Triconnectivity.}
\author[1]{Jacob Holm\thanks{Partially supported by the VILLUM Foundation grant 16582, Basic Algorithms Research Copenhagen (BARC).}}
\author[2]{Eva Rotenberg\thanks{Partially supported by Independent Research Fund Denmark grant "AlgoGraph" 2018-2021 (8021-00249B).}}
\affil[1]{University of Copenhagen \hspace{1em}{\small \href{mailto:jaho@di.ku.dk}{jaho@di.ku.dk}}}
\affil[2]{Technical University of Denmark \hspace{1em}{\small \href{mailto:eva@rotenberg.dk}{erot@dtu.dk}}}
\date{}
\begin{document}
\thispagestyle{empty}	
\maketitle
\begin{abstract}
  We show that every labelled planar graph $G$ can be assigned a
  canonical embedding $\emb(G)$, such that for any planar $G'$
  that differs from $G$ by the insertion or deletion of one edge, the number of
  local changes to the combinatorial embedding needed to get from $\emb(G)$ to $\emb(G')$
  is $\OO(\log n)$. 

In contrast, there exist embedded graphs where $\Omega(n)$ changes are necessary to accommodate one inserted edge. 
We provide a matching lower bound of $\Omega(\log n)$ local changes, and although our upper bound is worst-case, our lower bound hold in the amortized case as well.

Our proof is based on BC trees and SPQR trees, and we develop
\emph{pre-split} variants of these for general graphs, based on a
novel biased heavy-path decomposition, where the structural changes
corresponding to edge insertions and deletions in the underlying
graph consist of at most $\OO(\log n)$ basic operations of a
particularly simple form.

  As a secondary result, we show how to maintain the pre-split trees
  under edge insertions in the underlying graph deterministically in
  worst case $\OO(\log^3 n)$ time.  Using this, we obtain
  deterministic data structures for incremental planarity testing,
  incremental planar embedding, and incremental triconnectivity, that
  each have worst case $O(\log^3 n)$ update and query time,
  answering an open question by La Poutré and Westbrook from 1998.

\end{abstract}

\thispagestyle{empty}
\newpage
\setcounter{page}{1}

\section{Introduction}
The motivation behind dynamic data structures such as dynamic graphs is that local changes, such as the insertion of an edge, should only have limited influence on the global properties of the graph, and thus, after a local change, one needs not process the entire graph again in order to have substantial information about properties of the graph.
An example of a class of well-studied questions is that of connectivity and $k$-edge/vertex-connectivity: upon the insertion or deletion of an edge, 
a representation of the graph is quickly updated, such that later \emph{queries} to whether an at query-time specified pair of vertices are connected or $k$-edge/vertex-connected, can be answered promptly. 

For planarity, the most well-studied query is that of whether a given edge can be added without violating planarity, or without violating the planar embedding. Other natural queries include questions about whether a component of the graph is itself planar. 
In this paper, our main focus of study is not the question of \emph{whether} the graph is planar, but \emph{how} the graph is embedded in the plane. Here, and throughout the paper, we view embeddings from a combinatorial point of view: a graph is embedded if each edge, for each of its endpoints, can compute its right and left neighbours in the circular ordering. An embedding is planar if said circular ordering around the endpoints is realisable as a planar drawing of the graph (see \cite{KleinMozesOnline}). 

In this setting, \emph{queries} to the neighbours around an endpoint of an edge are well-defined, and local changes obtain a new and interesting meaning: for each edge in the graph, we can say that it is \emph{affected} by an update if its set of neighbours of either endpoint has changed. A \emph{local change} to the graph \emph{or its embedding} is any change such that only a constant number of edges are affected (where the constant depends only on the type of operation changing the graph).
The local changes we will consider are: the insertion of an edge across a face, or the deletion of an edge, as well as the \emph{flip} operations that keeps the graph fixed, but changes the embedding by taking a subgraph that is separated from the rest of the graph by either a separation pair (a \emph{separation flip}), or an articulation point (an \emph{articulation flip}), and either \emph{reflecting} it, or \emph{sliding} it to a new position in the edge order of the separating vertices (see Figure~\ref{fig:flip}).

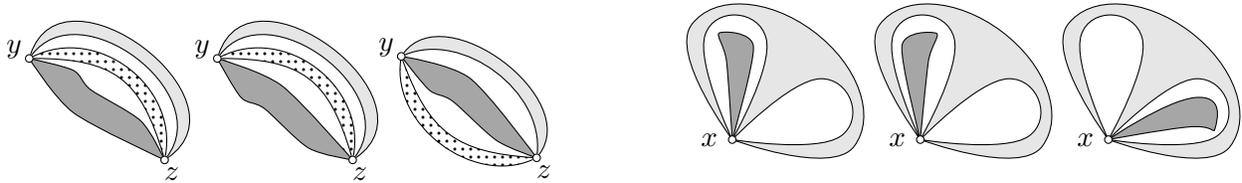
\begin{figure}[htb]
	\begin{minipage}[t]{.45\textwidth}
		\centering
		\begin{tikzpicture}[
    vertex/.style={
      draw,
      circle,
      fill=white,
      minimum size=1mm,
      inner sep=0pt,
    },
  ]
  \begin{scope}[shift={(2.5,0)},rotate=-36.87,scale=.45]
    \path[use as bounding box] (0,-1.5) rectangle (4,2);
    \node[vertex,label={[label distance=-2mm]105:$y$}] (x) at (0,0) {};
    \node[vertex,label={[label distance=-2mm]-15:$z$}] (y) at (5,0) {};

    \coordinate (x1) at (0,2.5);
    \coordinate (y1) at (5,2.5);
    \coordinate (x2) at (-1,3);
    \coordinate (y2) at (6,3);

    \coordinate (c1) at (2,-.35) {};
    \coordinate (x3) at (1.25,-.5) {};
    \coordinate (y3) at (3.75,-.5) {};
    \coordinate (c2) at (2,.5) {};

    \coordinate (x4) at (2,1.5);
    \coordinate (y4) at (3,1.5);
    \coordinate (x5) at (1,2);
    \coordinate (y5) at (4,2);

    \begin{pgfonlayer}{background}
      \draw[fill=gray!20] (x) .. controls (x1) and (y1) .. (y)
      .. controls (y2) and (x2) .. (x);

      \draw[fill=gray!40,pattern=dots] (x) .. controls (x4) and (y4) .. (y)
      .. controls (y5) and (x5) .. (x);

      \draw[fill=gray!70] plot[smooth] coordinates {(x) (x3) (c1) (y3) (y)}
      plot [smooth] coordinates {(y) (c2) (x)};
    \end{pgfonlayer}
  \end{scope}

  \begin{scope}[shift={(0,0)},rotate=-36.87,scale=.45]
    \path[use as bounding box] (0,-1.5) rectangle (4,2);
    \node[vertex,label={[label distance=-2mm]105:$y$}] (x) at (0,0) {};
    \node[vertex,label={[label distance=-2mm]-15:$z$}] (y) at (5,0) {};

    \coordinate (x1) at (0,2.5);
    \coordinate (y1) at (5,2.5);
    \coordinate (x2) at (-1,3);
    \coordinate (y2) at (6,3);

    \coordinate (c1) at (2,.35) {};
    \coordinate (x3) at (1.25,.5) {};
    \coordinate (y3) at (3.75,.5) {};
    \coordinate (c2) at (2,-.5) {};

    \coordinate (x4) at (2,1.5);
    \coordinate (y4) at (3,1.5);
    \coordinate (x5) at (1,2);
    \coordinate (y5) at (4,2);

    \begin{pgfonlayer}{background}
      \draw[fill=gray!20] (x) .. controls (x1) and (y1) .. (y)
      .. controls (y2) and (x2) .. (x);

      \draw[fill=gray!40,pattern=dots] (x) .. controls (x4) and (y4) .. (y)
      .. controls (y5) and (x5) .. (x);

      \draw[fill=gray!70] plot[smooth] coordinates {(x) (x3) (c1) (y3) (y)}
      plot [smooth] coordinates {(y) (c2) (x)};
    \end{pgfonlayer}

  \end{scope}

  \begin{scope}[shift={(5,0.1)},rotate=-36.87,scale=.45]
    \path[use as bounding box] (0,-1.7) rectangle (4,1.8);
    \node[vertex,label={[label distance=-2mm]105:$y$}] (x) at (0,-.2) {};
    \node[vertex,label={[label distance=-2mm]-15:$z$}] (y) at (5,-.2) {};

    \coordinate (x1) at (0,2.5);
    \coordinate (y1) at (5,2.5);
    \coordinate (x2) at (-1,3);
    \coordinate (y2) at (6,3);

    \coordinate (c1) at (2,-.35) {};
    \coordinate (x3) at (1.25,-.5) {};
    \coordinate (y3) at (3.75,-.5) {};
    \coordinate (c2) at (2,.5) {};

    \coordinate (x4) at (1,1.5);
    \coordinate (y4) at (4,1.5);
    \coordinate (x5) at (0,2);
    \coordinate (y5) at (5,2);

    \coordinate (x6) at (2,-1.5);
    \coordinate (y6) at (3,-1.5);
    \coordinate (x7) at (1,-2);
    \coordinate (y7) at (4,-2);

    \begin{pgfonlayer}{background}
      \draw[fill=gray!20] (x) .. controls (x4) and (y4) .. (y)
      .. controls (y5) and (x5) .. (x);

      \draw[fill=gray!40,pattern=dots] (x) .. controls (x6) and (y6) .. (y)
      .. controls (y7) and (x7) .. (x);

      \draw[fill=gray!70] plot[smooth] coordinates {(x) (x3) (c1) (y3) (y)}
      plot [smooth] coordinates {(y) (c2) (x)};
    \end{pgfonlayer}
  \end{scope}

\end{tikzpicture}
 		\vspace{.1em}	
		\subcaption{Separation flips: reflect and  slide.}\label{fig:separation-flip}
	\end{minipage}  
	\hfill
	\begin{minipage}[t]{.45\textwidth}
		\centering
		\begin{tikzpicture}[
    vertex/.style={
      draw,
      circle,
      fill=white,
      minimum size=1mm,
      inner sep=0pt,
    },
  ]
  \begin{scope}[shift={(2.5,0)},rotate=-36.87, scale=.45]
    \path[use as bounding box] (-2,0) rectangle (2,4);
    \node[vertex,label={180:$x$}] (x) at (0,0) {};

    \coordinate (a1) at (-3,2);
    \coordinate (a2) at (0,4);
    \coordinate (a3) at (3,2);
    \coordinate (b1) at (2.5,2);
    \coordinate (b2) at (1,3);
    \coordinate (c1) at (-1,3);
    \coordinate (c2) at (-2.5,2);

    \coordinate (d1) at (-1.2,2);
    \coordinate (d2) at (-1.5,2.8);
    \coordinate (d3) at (-2.1,1.9);

    \begin{pgfonlayer}{background}
      \draw[fill=gray!20]
      plot[smooth,tension=1] coordinates {
        (x) (a1) (a2) (a3) (x)
      }
      plot[smooth,tension=1] coordinates {
        (x) (b1) (b2) (x)
      }
      plot[smooth,tension=1] coordinates {
        (x) (c1) (c2) (x)
      };

      \draw[fill=gray!70]
      plot[smooth,tension=.5] coordinates {
        (x) (d1) (d2) (d3) (x)
      };
    \end{pgfonlayer}
  \end{scope}

  \begin{scope}[shift={(0,0)},rotate=-36.87, scale=.45]
    \path[use as bounding box] (-2,0) rectangle (2,4);
    \node[vertex,label={180:$x$}] (x) at (0,0) {};

    \coordinate (a1) at (-3,2);
    \coordinate (a2) at (0,4);
    \coordinate (a3) at (3,2);
    \coordinate (b1) at (2.5,2);
    \coordinate (b2) at (1,3);
    \coordinate (c1) at (-1,3);
    \coordinate (c2) at (-2.5,2);

    \begin{scope}[rotate=72,xscale=-1]
    \coordinate (d1) at (-1.2,2);
    \coordinate (d2) at (-1.5,2.8);
    \coordinate (d3) at (-2.1,1.9);
    \end{scope}

    \begin{pgfonlayer}{background}
      \draw[fill=gray!20]
      plot[smooth,tension=1] coordinates {
        (x) (a1) (a2) (a3) (x)
      }
      plot[smooth,tension=1] coordinates {
        (x) (b1) (b2) (x)
      }
      plot[smooth,tension=1] coordinates {
        (x) (c1) (c2) (x)
      };

      \draw[fill=gray!70]
      plot[smooth,tension=.5] coordinates {
        (x) (d1) (d2) (d3) (x)
      };
    \end{pgfonlayer}
  \end{scope}

  \begin{scope}[shift={(5,0)},rotate=-36.87,scale=.45]
    \path[use as bounding box] (-2,0) rectangle (2,4);
    \node[vertex,label={180:$x$}] (x) at (0,0) {};

    \coordinate (a1) at (-3,2);
    \coordinate (a2) at (0,4);
    \coordinate (a3) at (3,2);
    \coordinate (b1) at (2.5,2);
    \coordinate (b2) at (1,3);
    \coordinate (c1) at (-1,3);
    \coordinate (c2) at (-2.5,2);

    \begin{scope}[rotate=-75]
    \coordinate (d1) at (-1.2,2);
    \coordinate (d2) at (-1.5,2.8);
    \coordinate (d3) at (-2.1,1.9);
    \end{scope}

    \begin{pgfonlayer}{background}
      \draw[fill=gray!20]
      plot[smooth,tension=1] coordinates {
        (x) (a1) (a2) (a3) (x)
      }
      plot[smooth,tension=1] coordinates {
        (x) (b1) (b2) (x)
      }
      plot[smooth,tension=1] coordinates {
        (x) (c1) (c2) (x)
      };

      \draw[fill=gray!70]
      plot[smooth,tension=1] coordinates {
        (x) (d1) (d2) (d3) (x)
      };
    \end{pgfonlayer}
  \end{scope}
\end{tikzpicture}
 		\vspace{.1em}	
		\subcaption{Articulation flips: reflect and slide.}\label{fig:articulation-flip}
	\end{minipage}
	\caption{Local changes to the embedding of a graph.\label{fig:flip}}
\end{figure}

In other words, one can say that there are two types of local changes: The ones that keep the embedding fixed but changes the graph by deleting an edge or inserting an edge across a face -- i.e. \emph{updates to the graph} --, and the ones that keep the graph fixed but change the embedding as in Figure~\ref{fig:flip} above -- i.e. \emph{updating the embedding}. 
We show an interesting connection between these two types of local changes, namely a class of embeddings where only worst-case $O(\log n)$ updates to the embedding are necessary to accommodate any update to the graph. This is asymptotically tight, matching an $\Omega(\log n)$ lower bound.

We give an analysis of changes to the embedding that can accommodate fully-dynamic  changes to the graph, that is, both edge insertions and edge deletions. As long as the resulting graph can be embedded in the plane, we will maintain an implicit representation of its embedding. Our analysis goes only via understanding the edge-insertion case. In terms of counting changes to the embedding, the deletion case is symmetric: to delete $e$, find the ``canonical" embedding of $G-e$, add $e$, and record the necessary changes to the embedding; to get from $G$ to $G-e$ perform their opposites. Thus, computation time aside, in terms of counting changes to the embedding, it is enough to study the incremental case. 

This notion of an analysis that counts only the \emph{changes of heart} is akin to the field of \emph{online algorithms with recourse} introduced by Imase and Waxman~\cite{DBLP:journals/siamdm/ImaseW91} for the problem of Dynamic Steiner Trees, and since studied e.g. for Packing, Covering, Edge-Orientations, and Perfect Matchings~\cite{DBLP:journals/ipl/AvitabileMP13, DBLP:conf/wads/BrodalF99, DBLP:conf/wads/GroveKKV95}.

As a welcome side effect, our results entail improved computation times for a series of related problems in the incremental setting. Specifically, since our bound for the changes to the embedding is worst-case, we obtain new \emph{worst-case} polylog incremental algorithms for planarity and $3$-connectivity, problems that were previously only efficiently solved in terms of amortised analysis. This is interesting on two accounts: One is from a practical point of view, where it is an asset to guarantee fast update-times, another is from a theoretical point of view, namely that this allows for unlimited undo.

In terms of technical contribution, our path towards understanding how embeddings are related, and edges are accommodated, goes via \emph{BC-trees and SPQR-trees}. Intuitively, for every connected graph, its BC-tree describes all its $2$-connected components, its articulation points, and how these are related. Similarly, the SPQR-tree describes the $3$-connected components and their relations via separation pairs. It is well-known that the SPQR-trees and the BC-tree can be used to count the number of embeddings -- here, we use them to dynamically maintain a specification of an embedding. 

An edge insertion in the graph becomes a path contraction in the BC-tree, and several path contractions in SPQR-trees; one for each block on said BC-path. To obtain our $O(\log n)$ bound, we need to ensure that these paths, though they may be long, only lead to limited changes to the embedding. They should be, so to speak, almost everywhere ready-to-contract. 
For solving this, our main idea is to maintain heavy path decompositions over the BC-tree and the SPQR-trees. Briefly speaking, we assign each edge to be either heavy or light in a way that has the property that the heavy edges form a forest of paths, and every pair of vertices are at most $O(\log n)$ light edges apart. Once we have made sure that the heavy paths are ready-to-contract, we only need to accommodate the $O(\log n)$ light edges, leading to the $O(\log n)$ total changes to the embedding. Since a single edge insertion in a connected component can lead to contractions of paths in the BC-tree and in several SPQR-trees, we introduce a weighting of nodes that ensures that the total number of light edges that are affected is still bounded.

\subsection{Our results}
In this paper, we present the following theorems:
\begin{theorem}\label{thm:flips}
For every planar graph $G$ there exists a \emph{canonical embedding} $\emb(G)$, such that for any edge $e$ in $G$, the number of local changes between $\emb(G-e)$ and $\emb(G)-e$ is bounded by $O(\log n)$.
\end{theorem}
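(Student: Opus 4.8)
\medskip
The plan is to reduce to the incremental (edge-insertion) case and then to read the canonical embedding off BC- and SPQR-trees carrying a biased heavy-path decomposition.

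\textbf{Reduction to edge insertion.} Since removing an edge preserves planarity, $\phi(G-e)$ and $\phi(G)-e$ are two combinatorial embeddings of the \emph{same} graph $G-e$, so it suffices to bound the number of flip operations (the separation and articulation flips of Figure~\ref{fig:flip}) carrying one to the other. Adding $e$ back, this is equivalent to showing: after $O(\log n)$ flips making some face of $\phi(G-e)$ incident to both endpoints of $e$, inserting $e$ across that face yields an embedding that is $O(\log n)$ flips away from $\phi(G)$. Hence it is enough to understand how the canonical embedding reacts to a single edge insertion.

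\textbf{The canonical embedding.} I would define $\phi(G)$ componentwise. For each connected component root its BC-tree, for each block root its (pre-split) SPQR-tree, and fix every remaining degree of freedom — the cyclic order around an articulation point, the permutation at a P-node, the reflection of an R-node skeleton, and the placement of a child block at its articulation point — canonically from the vertex labels of $G$, so that $\phi$ depends on the labelled graph alone and not on the choice of roots. On top of this, maintain a \emph{biased} heavy-path decomposition whose crucial feature is that the weight of an SPQR-node of a block $B$ also counts the sizes of the BC-subtrees hanging off the vertices of $B$; this makes heavy paths in one tree ``continue through'' the trees nested below it. Finally, make all canonical choices so that every heavy tree edge is \emph{ready-to-contract}: the two skeletons it joins are already reflected and rotated into the unique relative position that an insertion along that edge would demand.

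\textbf{Effect of inserting $e=(u,v)$.} If $u$ and $v$ lie in different components, their BC-trees are merged and one is re-rooted; re-rooting changes no embedding, and only the $O(\log n)$ ancestors along the new root path need re-canonicalising, costing $O(\log n)$ flips. Otherwise $u$ and $v$ are joined by a unique path $P$ in one BC-tree; inserting $e$ merges all blocks along $P$ into one new block, and within each block $B$ on $P$ the corresponding virtual copy of $e$ contracts a path $P_B$ in the SPQR-tree of $B$. Because the decomposition is biased and $\phi$ keeps heavy edges ready-to-contract, $P$ and all the $P_B$'s split into heavy subpaths — which contract with no flips — separated by only $O(\log n)$ light edges \emph{in total} over all of them; the biased weighting is exactly what makes this nested sum telescope to $O(\log n)$ rather than $O(\log^2 n)$. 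Reassembling the canonical embedding of $G$ then means re-canonicalising the $O(\log n)$ affected light nodes together with the newly formed skeleton, which costs $O(\log n)$ flips. Thus $\phi(G)$ is at most $O(\log n)$ flips from the embedding obtained by inserting $e$ into $\phi(G-e)$, which by the reduction is what we needed.

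\textbf{Main obstacle.} The delicate point is to make ``ready-to-contract'' simultaneously (i) a function of the labelled graph alone, (ii) consistent between a block and the blocks nested below it through shared articulation points, and (iii) compatible with the biased weighting, so that a BC-path together with all the SPQR-paths it induces meets only $O(\log n)$ light edges combined. Designing the weighting so that each block's contribution is charged against a geometrically shrinking share of one global potential — rather than each tree costing its own logarithmic factor — is where the novel biased heavy-path decomposition does the real work, and I expect that to be the heart of the argument.
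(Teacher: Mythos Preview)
Your proposal follows essentially the same route as the paper: BC/SPQR trees, pre-splitting, a biased heavy-path decomposition with weights that thread through the BC/SPQR nesting so the light-edge count telescopes to $O(\log n)$, and heavy paths kept ``ready-to-contract''. You have correctly identified both the architecture and the main obstacle.

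There is, however, one genuine gap that the paper has to work hard to close and that your sketch elides. You say that after contraction you ``re-canonicalise the $O(\log n)$ affected light nodes together with the newly formed skeleton, which costs $O(\log n)$ flips.'' But the newly formed skeleton may carry $\Theta(n)$ virtual edges, one per dashed child hanging off the contracted path. If your canonical rule for a dashed child's flip-bit is a function of the \emph{parent's} skeleton embedding (which it must be, if the child is to sit correctly inside the parent), then replacing the parent's skeleton wholesale can in principle toggle \emph{all} of those flip-bits. A purely label-based rule at each node does not by itself prevent this.

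The paper's fix is a dedicated \emph{flip-parity} mechanism (and, for articulation points, a companion \emph{glue-corner} function): each virtual edge $e$ in an embedded skeleton $H$ is assigned a bit $\flipparity(H,e)$ that is guaranteed to toggle exactly when a separation flip in $H$ reflects the region containing $e$, and to be stable under edge insertions/deletions in $H$ except on $O(1)$ local paths. The flip-bit on a dashed tree edge is then set to the XOR of the parent's and child's flip-parities, so that when the parent's embedding changes, the child's flip-bit \emph{automatically tracks it} and no physical flip is needed, except at the $O(1)$ edges near where the skeleton actually changed. Without this (or an equivalent device), your ``re-canonicalise the new skeleton'' step is where the argument would leak back to $\Omega(n)$ flips. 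Your ``Main obstacle'' paragraph gestures at consistency between a block and what hangs below it, but the specific danger is the many dashed children of a single contracted path, and that is what needs an explicit mechanism.
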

A matching lower bound is derived from the following construction:
\begin{theorem}\label{thm:intro-lowerbound}
There exists a family of planar graphs $\set{G_h}_{h\in \mathbb{N}}$ where $G_h$ has $O(2^h)$ vertices, and any embedding $\emb$ of $G_h$ has an edge $e\notin G_h$ such that inserting $e$ requires $h$ flips to $\emb$. 
\end{theorem}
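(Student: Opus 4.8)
The plan is to build, for each $h$, a graph $G_h$ shaped like a complete binary tree of depth $h$ whose nodes are copies of one fixed constant-size triconnected ``router'' gadget, together with $2^h$ candidate non-edges, each insertable only after ``aligning'' the $h$ routers on one root-to-leaf path, and such that no embedding aligns even a single such path for free. Concretely, I would first fix a constant-size $3$-connected planar graph $R$ with three designated edges — a \emph{parent port} $\pi$ and \emph{child ports} $\lambda,\rho$ — such that, in the (unique up to reflection) planar embedding of $R$, one side of $\pi$ shares a face with one side of $\lambda$ and the other side of $\pi$ shares a face with one side of $\rho$; producing such an $R$ is a finite check (a suitably marked cube works). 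I would then form $G_h$ from a complete binary tree of depth $h$ of copies of $R$ by merging each non-root copy's parent port with the appropriate (left/right) child port of its parent copy, and capping the leaves' child ports and the root's parent port with small rigid gadgets, all arranged so that $G_h$ is $2$-connected and its SPQR tree has exactly one $R$-node per copy of $R$ and no $P$-nodes or nontrivial $S$-nodes. Then $G_h$ has $\Theta(2^h)$ vertices; an embedding of $G_h$ is precisely a choice, for each copy $v$ of $R$, of one \emph{reflection bit} $x_v\in\{0,1\}$ recording how $v$ is glued into its parent; and, since there are no $P$-nodes and no cut vertices, the only flips are reflections of subgraphs separated by a separation pair, i.e.\ reflections of rooted subtrees of the SPQR tree. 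The key point is that such a reflection flips exactly the single bit of the subtree's topmost $R$-node (relative orientations inside the reflected subtree are preserved), so the flips are one independent bit toggle per copy of $R$, and the flip-distance between two embeddings is the Hamming distance of their bit-vectors.

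\emph{Which non-edges are expensive.} I would place a vertex $a$ on a fixed face of the root router and, for each leaf $\ell$ (identified with its left/right address $(b_1(\ell),\dots,b_h(\ell))\in\{0,1\}^h$), a vertex $b_\ell$ on a fixed face of that leaf's cap, and take the $e_\ell=(a,b_\ell)\notin G_h$ as the candidate edges. By the choice of $R$, a face started at $a$ propagates down the root-to-$\ell$ path from router $p_k$ into the next router $p_{k+1}$ exactly when $x_{p_k}=b_k(\ell)$; hence $a$ and $b_\ell$ are co-facial — equivalently, $e_\ell$ can be inserted across a face without violating planarity — if and only if $x_{p_k}=b_k(\ell)$ for all $k\in\{1,\dots,h\}$, i.e.\ iff ``following the reflection bits from the root reaches $\ell$''. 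Of all the subtree-reflection flips, only the $h$ at tree-edges along $\ell$'s path change any of $x_{p_1},\dots,x_{p_h}$, each toggling exactly one of them; so the minimum number of flips that makes $e_\ell$ insertable is exactly $\#\{k : x_{p_k}\neq b_k(\ell)\}$.

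\emph{The adversary.} Given any embedding, i.e.\ any bit-assignment, I would define a leaf $\ell^\star$ greedily: set $b_1(\ell^\star):=1-x_{\mathrm{root}}$ (which fixes the second router $p_2$ on $\ell^\star$'s path), then $b_2(\ell^\star):=1-x_{p_2}$ (fixing $p_3$), and in general $b_k(\ell^\star):=1-x_{p_k}$. By construction $x_{p_k}\ne b_k(\ell^\star)$ for every $k=1,\dots,h$, so inserting $e_{\ell^\star}$ requires $h$ flips, which is the claimed bound.

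\emph{Main obstacle.} The crux is the gadget analysis: verifying that a single fixed triconnected router $R$ really does make ``$a$ and $b_\ell$ co-facial'' equivalent to the clean bit-condition $x_{p_k}=b_k(\ell)$ for all $k$ — i.e.\ pinning down precisely how reflecting a router controls which face-side is glued across each of its ports, and ruling out that the true condition is weaker (which would let fewer than $h$ flips suffice). Everything else should be routine: assembling $G_h$ so that it is $2$-connected with exactly the stated SPQR structure (so the only flips are subtree reflections — no slides, no articulation flips), verifying that a subtree reflection toggles exactly one bit, and the bookkeeping relating tree depth $h$, vertex count $\Theta(2^h)$, and flip count $h$.
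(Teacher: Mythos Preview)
Your approach is sound and the greedy adversary is exactly right, but the construction is genuinely different from the paper's. The paper's $G_h$ is simply two copies of a complete binary tree of height $h$ with corresponding leaves identified; the candidate edges are between pairs of identified leaves, and the paper just asserts (with a figure for $h=3$, without spelling out the bit-vector argument) that in any embedding some such pair needs $h-1$ flips. Your version trades a simpler graph for a much more explicit analysis: by building $G_h$ from triconnected routers you force the SPQR tree to be a pure binary tree of $R$-nodes, so embeddings are literally bit-vectors, each separation flip toggles a single bit, and the lower bound reduces to a clean Hamming-distance argument. Two details deserve tightening. First, when gluing two routers at a port you must not leave a real edge between the two port vertices --- otherwise each port spawns a degree-$3$ $P$-node, and the face containing $a$ can get absorbed on the real-edge side of that $P$-node rather than propagating into the child router, breaking your clean bit condition; so ``merging ports'' should mean identifying the endpoint pairs and discarding the edge (the R-node skeletons then contain the port as a virtual edge, so they remain full triconnected cubes). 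Second, you invoke $x_{\mathrm{root}}$ in the adversary after defining $x_v$ only for non-root copies; this is fine once the root cap itself contributes a separation pair (so the cap must be more than a bare edge) and $x_{\mathrm{root}}$ is read as the root's orientation relative to that cap. Your cube example with ports $(000,001)$, $(010,011)$, $(100,101)$ does have the required face-routing property.
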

Thus, by repeatedly inserting an edge and deleting it again, we get a lower bound that matches Theorem~\ref{thm:flips}:
\begin{corollary}\label{cor:intro-lowerbound}
For every $n\in \mathbb{N}$, we can initiate a dynamic graph with $n$ vertices, and provide an adaptive sequence (of arbitrary length) of alternating edge insertions and edge deletions,
where every edge insertion requires $\Omega(\log n)$ flips.
\end{corollary}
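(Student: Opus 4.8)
The plan is to apply Theorem~\ref{thm:intro-lowerbound} at the scale $h = \Theta(\log n)$ and then to exploit that the offending edge it produces is chosen \emph{after} inspecting the current embedding, so that the same obstruction can be recreated arbitrarily often.

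First I would fix the parameter: given $n$, let $h$ be the largest integer for which $G_h$ has at most $n$ vertices; since $G_h$ has $O(2^h)$ vertices this gives $h = \Omega(\log n)$. Initiate the dynamic graph as $G_h$ together with $n - \abs{V(G_h)}$ isolated vertices, which pad the vertex count up to exactly $n$ and which never take part in any edge, hence never affect an embedding or a flip. The alternating sequence is then produced by an adaptive adversary as follows. At every point where the underlying graph equals $G_h$, the algorithm under analysis holds some combinatorial embedding $\emb$ of $G_h$; apply Theorem~\ref{thm:intro-lowerbound} to $\emb$ to obtain an edge $e \notin G_h$ whose insertion forces $h$ flips on $\emb$ (in particular $G_h + e$ is planar, so the insertion is a legal update). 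The adversary inserts $e$ — an insertion costing at least $h = \Omega(\log n)$ flips — and then immediately deletes $e$, returning the underlying graph to $G_h$, albeit possibly with a different embedding. Iterating this insert/delete pair yields an alternating sequence of arbitrary length in which every insertion costs $\Omega(\log n)$ flips, which is the claim.

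The only point requiring care — and it is a matter of phrasing rather than a genuine obstacle — is that the embedding reached after a deletion need not be the one we started from; this is harmless because Theorem~\ref{thm:intro-lowerbound} quantifies over \emph{all} embeddings of $G_h$, so a fresh bad edge exists no matter how the deletion was resolved. I would also remark that adaptivity is unavoidable here (with foresight one could pre-orient the embedding to make a particular edge cheap), and that since each of the arbitrarily many insertions individually costs $\Omega(\log n)$ flips, the same bound survives amortisation, matching the worst-case $O(\log n)$ upper bound of Theorem~\ref{thm:flips}.
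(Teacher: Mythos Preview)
Your proposal is correct and follows exactly the approach the paper uses: the paper derives the corollary in a single sentence (``by repeatedly inserting an edge and deleting it again'') immediately before stating it, and you have simply filled in the details of that sketch, including the padding to $n$ vertices and the observation that adaptivity handles whatever embedding the deletion leaves behind. Your remarks on adaptivity and amortisation also mirror the paper's own commentary in the abstract and introduction.
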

We provide algorithmic results for incremental graphs with worst-case guarantees:
\begin{theorem}\label{thm:incr3con}
There exists a deterministic data structure for incremental triconnectivity that handles edge insertions in $O(\log^3 n)$ worst-case time, and answers queries in $O(\log^3 n)$ worst-case time.
\end{theorem}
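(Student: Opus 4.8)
The plan is to run the data structure for maintaining the pre-split BC-trees and SPQR-trees under edge insertions in worst-case $O(\log^3 n)$ time (our secondary result, cf.\ the abstract), together with a standard incremental connectivity structure, and to observe that a triconnectivity query reduces to a constant number of navigation operations on these structures. Recall the characterisation we will rely on: for $u,v\in V(G)$ with $uv\notin E(G)$, the pair $\{u,v\}$ is $3$-vertex-connected in $G$ iff $u$ and $v$ lie in a common block (biconnected component) $B$ — in which case their local connectivity in $G$ equals that in $B$ — and, in the SPQR-tree $T(B)$, either some R-node has both $u$ and $v$ in its skeleton or some P-node has $u$ and $v$ as its two poles. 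Equivalently, writing $T_u$ and $T_v$ for the allocation subtrees (the nodes whose skeleton contains $u$, resp.\ $v$), the pair is $3$-connected iff $T_u\cap T_v$ is non-empty and is not a single polygon (S-)node in which $u$ and $v$ are non-adjacent on the cycle; the remaining case $uv\in E(G)$ is decided directly.

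First I would implement the high-level branching of a query $(u,v)$: (i) if $u$ and $v$ are in different connected components, answer ``no''; (ii) otherwise decide, by a constant-depth lookup in the BC-tree, whether $u$ and $v$ share a block $B$ (taking care that $u$ or $v$ may itself be a cut vertex and hence belong to several blocks, of which at most one can also contain the other); if not, answer ``no''; (iii) otherwise move to $T(B)$. Steps (i)--(ii) cost $O(\log n)$, comfortably within budget.

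Next I would realise step (iii) on the pre-split SPQR-tree using a handful of navigation primitives, each running within the $O(\log^3 n)$ budget of an update: given a vertex, return one of its allocation nodes; given two vertices, return either a node of $T_u\cap T_v$ together with its type (S/P/R) or — when the subtrees are disjoint — the pair of skeleton vertices that $2$-separates $u$ from $v$; and, inside an S-node, test whether two of its skeleton vertices are adjacent on the cycle and whether the connecting skeleton edge is real or virtual. Because the pre-split trees are built on a biased heavy-path decomposition, the tree path relevant to a query crosses only $O(\log n)$ light edges and meets $O(\log n)$ heavy paths, each equipped with the same balanced auxiliary structure that drives the edge-insertion updates; hence each primitive is supported in worst-case polylogarithmic time, and composing $O(1)$ of them answers the query in worst-case $O(\log^3 n)$. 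Since updates and queries use the same machinery, both bounds are worst case.

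The main obstacle is not the query logic, which becomes bookkeeping once the characterisation and the navigation primitives are in place, but the worst-case maintenance of the pre-split SPQR-trees themselves: an edge insertion contracts a possibly long path in the BC-tree and a path in each affected SPQR-tree, and the biased heavy-path decomposition together with its auxiliary structures must be restored in worst-case $O(\log^3 n)$ time while preserving the invariant that these paths are almost everywhere ready-to-contract. That is exactly the content of our secondary result, which Theorem~\ref{thm:incr3con} invokes as a black box; the incremental work here is only to confirm that the navigation primitives above are exposed by (or can be cheaply added to) that data structure and that all degenerate cases — query vertices that are cut vertices or P-node poles, length-$3$ S-node cycles, and the case $uv\in E(G)$ — are handled correctly.
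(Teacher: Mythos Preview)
Your high-level approach matches the paper's: maintain the pre-split BC- and SPQR-trees with the $O(\log^3 n)$ update machinery, and answer a query by first locating the common block and then testing whether the critical path $m(u,v)$ in that block's SPQR-tree collapses to a single R- or P-node. Your phrasing via the intersection of allocation subtrees is essentially equivalent to the paper's direct use of $m(u,v)$, though your handling of the case $uv\in E(G)$ (``decided directly'') is left vague; the clean rule is simply to find the unique SPQR node whose skeleton contains the real edge $uv$ and check whether it is an R- or P-node.

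There is, however, one genuine subtlety you gloss over. In the general (non-biconnected) case, the data structure does \emph{not} maintain an SPQR-tree per block $B$ of $G$; it maintains one SPQR-tree per solid path in the BC-tree, i.e., for the graph obtained from that path's blocks together with the \emph{strut} covering the path. So your step (iii) cannot simply ``move to $T(B)$'': the SPQR-tree you actually have access to is rooted at the $S$-node containing the strut, and that fake edge can make $u$ and $v$ look triconnected when they are not in $G$. The paper resolves this by a small extra case split: if $m(u,v)$ in the strutted SPQR-tree is an R- or P-node that is \emph{not} a child of the root, the strut is irrelevant and the answer is yes; if it \emph{is} a child of the root, one performs a \texttt{sever} (temporarily undoing the strut on that piece, possibly expanding a pre-contracted path) and re-tests inside the actual block. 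This is precisely the step that keeps the query correct while staying within the $O(\log^3 n)$ budget, and it is missing from your outline.
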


\noindent Note that this (Theorem~\ref{thm:incr3con}) holds for general graphs, not only planar graphs.

\begin{theorem}\label{thm:incrplanarity}
There exists a deterministic data structure for incremental planarity that handles edge insertions in $O(\log^3 n)$ worst-case time, and answers queries to whether an edge can be inserted, and to the neighbors of a given existing edge in the current embedding, in $O(\log n)$ worst-case time.
\end{theorem}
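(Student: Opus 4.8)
The plan is to build the planarity data structure directly on top of the pre-split BC-trees and SPQR-trees whose incremental maintenance in $O(\log^3 n)$ worst-case time is the secondary algorithmic result of the paper. For each connected component we keep its pre-split BC-tree, and for each block its pre-split SPQR-tree; these are organised along the biased heavy-path decomposition so that every node-to-node path crosses only $O(\log n)$ light edges and is represented by $O(\log n)$ canonical clusters (in the style of top trees / biased search trees over the heavy paths). Each cluster stores a constant-size summary of its skeleton: for an SPQR-tree cluster, whether the clustered part of the skeleton, glued along its two boundary virtual vertices, admits an embedding placing those two vertices on a common face (together with the flags needed to compose this, i.e. which of S-, P-, R-type obstructions occur and how the two boundary poles relate in the essentially-unique R-node embeddings); for a BC-tree cluster, the analogous ``can be made co-facial'' predicate for its two boundary articulation points, obtained by aggregating the SPQR-tree summaries of the blocks it spans. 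All of this is maintainable within the same $O(\log^3 n)$ budget, since an edge insertion touches only $O(\log n)$ clusters.

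For the query whether $(u,v)$ can be inserted while preserving planarity: if $u$ and $v$ lie in different connected components the answer is trivially yes (adding a bridge preserves planarity). Otherwise we look up allocation nodes of $u$ and $v$, take the path $\pi$ between them in the BC-tree, and use that inserting $(u,v)$ is equivalent to contracting $\pi$; the result is planar iff the blocks along $\pi$, glued at their shared articulation points, admit a simultaneous embedding placing $u$ and $v$ on one face. By the structure of SPQR-trees this is a purely ``local along a path'' condition: S- and P-nodes impose no obstruction, whereas an R-node forces the two entry/exit vertices of its block to be co-facial in its skeleton embedding. Hence the predicate decomposes over the maintained cluster summaries, and a single top-down walk over the $O(\log n)$ clusters representing $\pi$ in the BC-tree — descending into the $O(\log n)$ clusters of each relevant SPQR-tree path — evaluates it in $O(\log n)$ total time. (The same machinery in passing answers the ``can $e$ be inserted respecting the current embedding $\emb(G)$'' variant, restricting to the already-fixed skeleton orientations.)

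For the neighbour query we exploit that $\emb(G)$ is exactly the embedding encoded by the current pre-split SPQR-trees and BC-trees together with the chosen orientations of their skeletons. To report the cyclic successor of an edge $e$ around an endpoint $x$, we start at the skeleton of the allocation node of $e$ that contains $x$, read off the successor of $e$ there, and then repeatedly expand virtual edges: whenever the successor is a virtual edge $\epsilon$ we descend into the child skeleton represented by $\epsilon$ and replace $\epsilon$ by the edge incident to $x$ that comes first in the boundary cyclic order of that skeleton. The point is that because the trees are biased-heavy-path decomposed, such a ``boundary-first'' descent either terminates immediately or follows a light edge or jumps along a heavy path, so it halts after $O(\log n)$ steps; since the expansions form a single monotone descent we can use the biased search trees over the heavy paths to perform the jumps for $O(\log n)$ total, giving an $O(\log n)$ worst-case query. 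The update ``insert $(u,v)$'' is then handled by invoking the pre-split-tree maintenance: in the BC-tree it is a path contraction, which spawns one path contraction in the SPQR-tree of each block on the BC-path, all in $O(\log^3 n)$ worst-case time, and the $O(\log n)$ cluster summaries are refreshed within the same bound.

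I expect the main obstacle to be the neighbour query: one must argue that the hierarchical, virtual-edge-nested representation of $\emb(G)$ can be unfolded to the \emph{immediate} neighbour of a given edge in only $O(\log n)$ steps, which requires that the biased heavy-path decomposition genuinely bounds the depth of a single boundary-first descent and that the biased search trees let us skip along heavy paths without paying per node. The co-facial-predicate aggregation for the planarity query is more routine once the cluster interface is fixed, but making the S-/P-/R-node case analysis compose cleanly through both the SPQR-tree and the BC-tree layers also needs care.
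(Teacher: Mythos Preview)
Your approach is genuinely different from the paper's, and it is worth spelling out the contrast.

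The paper does \emph{not} build planarity and embedding queries directly on top of the pre-split BC/SPQR trees via composable cluster summaries. Instead, it uses the dynamic-embedding data structure of Holm and Rotenberg~\cite{DBLP:journals/mst/HolmR17} as a black box: maintain the canonical embedding $\emb(G)$ explicitly inside that structure. The structural theorem guarantees that inserting $(a,b)$ requires only $O(\log n)$ flips to $\emb(G)$; each flip is a single operation in~\cite{DBLP:journals/mst/HolmR17} costing $O(\log^2 n)$, giving the $O(\log^3 n)$ insertion bound. Neighbour queries are then trivial (the embedding is stored explicitly), and the ``can $(a,b)$ be inserted planarly'' query is answered by asking~\cite{DBLP:journals/mst/HolmR17} whether $a$ and $b$ share a face in the current canonical embedding --- which is the correct test precisely because the canonical embedding has already been re-flipped toward $m(a,b)$. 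So the paper's proof is essentially a one-paragraph reduction; all the work is in the $O(\log n)$-flips theorem, not in any new query machinery.

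Your route --- aggregating a ``co-facial'' predicate along the $m(x,y)$ path via $O(1)$-size cluster summaries, and answering neighbour queries by recursively expanding virtual edges --- is the Di~Battista--Tamassia style made worst-case. It is plausible, and if it works it would be more self-contained (no dependence on~\cite{DBLP:journals/mst/HolmR17}). But two points need real care that you only gesture at. First, for the planarity query you write ``$O(\log n)$ BC clusters, each descending into $O(\log n)$ SPQR clusters, for $O(\log n)$ total''; that only holds because of the \emph{biased} weighting across the two levels (Section~\ref{sec:heavyweighted} and Lemma~\ref{lem:dyntree-internal-telescope}), and you should invoke that telescoping explicitly rather than asserting it. Second, for the neighbour query, expanding virtual edges follows the path of SPQR nodes whose bag contains the vertex $x$; in the strict SPQR tree this path can have length $\Theta(n)$, so you genuinely rely on the pre-contraction of heavy paths (Section~\ref{sec:dynspqr-precontract}) together with the biased-tree jump along a heavy path to bound the descent by the light depth. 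That argument is doable but is not the triviality you present it as --- and it is exactly what the paper sidesteps by keeping an explicit embedding in~\cite{DBLP:journals/mst/HolmR17}.
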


\subsection{Previous work}
\paragraph{Biconnectivity.} In 1998~\cite{DBLP:journals/siamcomp/PoutreW98} La Poutré and Westbrook gave algorithms for maintaining $2$-edge and $2$-vertex connnectivity (here called \emph{biconnectivity}) in worst case $\OO(\log n)$ time per operation, under edge insertions with \emph{backtracking}.  They used a number of the same techniques that we do, and stated as an open problem whether they could be extended to triconnectivity and planarity testing.

\paragraph{SPQR-trees.} Understanding $3$-vertex connectivity (here called \emph{triconnectivity}) in a graph via a structure over its triconnected components, its separation pairs, and their relations, dates back to Saunders Mac Lane~\cite{maclane1937}, who also in his work discussed the connections between embeddings and triconnectivity via said structure. This was the mathematical foundation of the algorithms by Hopcroft and Tarjan~\cite{DBLP:journals/siamcomp/HopcroftT73,DBLP:journals/jacm/HopcroftT74} for calculating the triconnected components and, respectively, for determining whether a graph is planar, in linear time. Later, this structure was dubbed SPQR-tree by Di~Battista and Tamassia~\cite{DBLP:conf/focs/BattistaT89, DBLP:conf/icalp/BattistaT90}, who devised the first efficient incremental algorithm for planarity testing, handling updates to the graph in amortised $O(\log n)$ time. 

\paragraph{Planarity testing and triconnectivity.} 
\begin{sloppypar} 
The amortized update time of $O(\log n)$ by Di~Battista and Tamassia
was soon improved to expected $O(k\alpha(k,n))$ total time for $k$
operations by Westbrook~\cite{DBLP:conf/icalp/Westbrook92}, where $\alpha$ denotes the inverse of Ackermann's function. The
optimal total time of deterministic $O(k\alpha(k,n))$ for $k$
operations is due to La~Poutré~\cite{DBLP:conf/stoc/Poutre94}.
Further $\alpha$-time algorithms for incremental triconnectivity were
given by Di~Battista and
Tamassia~\cite{DBLP:journals/algorithmica/BattistaT96}, specifically,
their algorithm spends $\OO(k\alpha(k,n))$ total time for $k$
insertions and queries when the initial graph is biconnected, and
$\OO(n\log n+k)$ for general graphs.
\end{sloppypar}

The incremental algorithms referenced above achieve optimal or nearly optimal amortised running time, but may take up to linear worst-case time for an insertion. This is contrasted by the work by Galil et~al.\cite{DBLP:journals/jacm/GalilIS99}, who sought worst-case sublinear algorithms for this and other problems. Indeed, they give a data structure that handles any insertion or deletion of an edge in deterministic $O(n^{2/3})$ time while maintaining whether the graph is planar, and facilitating queries to triconnectivity between vertices in $O(n^{2/3})$ time. 

\paragraph{Biased search trees and dynamic trees} Our new technique goes via maintaining a heavy path decomposition of the SPQR-tree, as our graph changes dynamically. The heavy path decomposition was invented by Sleator and Tarjan~\cite{DBLP:journals/jcss/SleatorT83}, as a tool for handling dynamic forests
subject to links and cuts in worst case $\OO(\log n)$ time per operation.

This data structure for dynamic trees in turn uses the biased search
trees by Bent~et~al.~\cite{DBLP:journals/siamcomp/BentST85} to
represent each heavy path.  In a biased search tree $T$, each node $v$
has an associated weight $w(v)$, and the depth of $v$ is
$\OO(1+\log\frac{w(T)}{w(v)})$. This is in contrast to other balanced binary search trees, where the depth of a node is typically $\OO(\log n)$.

\paragraph{Dynamic embeddings}
The question about maintaining a dynamic embedded graph can be asked in many ways. Tamassia~\cite{DBLP:journals/jal/Tamassia96} gives an algorithm for a fixed embedding of a graph that handles edge-insertion across a face and ``undo" in 
amortized $O(\log n)$ time per operation. Italiano~et~al.~\cite{DBLP:conf/esa/ItalianoPR93} give an algorithm that maintains an embedded graph subject to deletions of edges and insertions across a face in $\OO(\log^2 n)$ time per update.

While the previous papers only allowed for edge insertions across a face, Eppstein~\cite{DBLP:conf/soda/Eppstein03} takes a different approach: The edge may be inserted in any way specified, possibly increasing the genus of the embedding, but as long as the genus $g$ is low, the update time for minimum spanning tree, the fundamental group, and orientability  of the surface is fast: $O(\log n + \poly\log g)$. The data structure allows for changes in the embedding corresponding to the flips in Figure~\ref{fig:flip}, and also allows for contraction of edges and splits of vertices.

The construction by Eppstein~\cite{DBLP:conf/soda/Eppstein03} uses a primal-dual decomposition that is updated dynamically. Combining this idea with ideas from Italiano~et~al.~\cite{DBLP:conf/esa/ItalianoPR93} and using top-trees~\cite{DBLP:journals/talg/AlstrupHLT05}, Holm and Rotenberg~\cite{DBLP:journals/mst/HolmR17} give a data structure for maintaining a dynamic planar embedding that allows for edge-deletions, insertions across a face, and flips (Figure~\ref{fig:flip}) in worst case $\OO(\log^2 n)$ time. The data structure in \cite{DBLP:journals/mst/HolmR17} also facilitates queries to whether a pair of vertices would be linkable after only one flip operation, to which it responds with the specific flip operation in the affirmative case.

\subsection{Overview of techniques}

For the purpose of our algorithmic results, we cannot afford to maintain constructions such as the SPQR-tree explicitly; the insertion or deletion of just one edge can lead to $\Theta(n)$ changes, which is challenging to handle in worst-case time. We show, however, that we can maintain an implicit representation of the forest of BC-trees and the SPQR-trees of blocks. It turns out that this implicit representation also simplifies the presentation of the structural result stated in Theorem~\ref{thm:flips}.
In this section, we point out some of the challenges with maintaining embeddings and triconnected components of incremental graphs, and give a high level overview of how we solve them.

\subsubsection{For biconnected graphs}~\label{sec:sketchbicon}
Once a graph is triconnected, its planar embedding is fixed. On the other hand, as soon as the graph has a separation pair, there is flexibility corresponding to flipping in said pair. 
Assuming first we have a biconnected graph, 
our approach goes via keeping track of the triconnected components and the separation cuts they have in common. 

The SPQR-tree is a well-known structure that maintains information about triconnectivity in the graph. The triconnected components appear as nodes in the SPQR-tree, so called R-nodes, and the separation pairs that separate them appear as edges. However, because of the complex nature of triconnectivity, these are not the only gadgets in the tree: The fact that one separation pair may split the graph into more than two parts is reflected in the SPQR-tree by having P-nodes represent such \emph{parallel} splits, and the fact that one may find more than two vertices arranged around a cycle such that any pair of them form a separation pair, is reflected in the SPQR-tree by having S-nodes represent such \emph{series} splits.

It is well-known that the SPQR-tree maintains information about the possible embeddings of the biconnected graph. 
In fact, given a rooted SPQR-tree, any embedding can be encoded by annotating edges and vertices of the SPQR tree. The edges can be annotated  with a bit telling whether the subgraph separated from the root by said pair should be flipped or not, and the P-nodes of degree $k$ can given an annotation specifying one of the $(k-1)!$ possible different orderings of their children.
When we want to make the embedding such that it is 
nearly ready
for an edge insertion, this means that most of these annotations on the edges and nodes should be set in a way that need not change when the adversarial edge insertion happens. 
Specifically, we want to maintain the invariant that only $O(\log n)$ annotations change upon any edge-insertion, and that each change to a $P$-node annotation only corresponds to a \emph{slide} operation (see Figure~\ref{fig:flip}).

Our idea is simple: Given the rooted SPQR-tree, calculate the heavy path decomposition, and make sure that the annotations on the heavy edges and on those $P$-nodes that are internal on the heavy path, are set in a favourable way:
Assume the endpoints of an edge-to-be-inserted are represented by SPQR-nodes, and consider the path in the SPQR-tree connecting them. Then, we can afford to perform $O(1)$ flips for each light edge, since there are only $O(\log n)$ of those. We can also afford to perform $O(1)$ flips for each distinct heavy path along the way, since there are only $O(\log n)$ of those. But we can not afford to perform a number of flips proportional to the length of the heavy paths. Thus, we need to ensure that all the maximal segments of heavy paths that can be ``covered" by an edge-insertion without violating planarity, are already embedded in a way such that said insertion is possible.

However, once the edge actually is inserted, the SPQR-tree may change drastically; up to $\Theta(n)$ SPQR-nodes may disappear, and up to $\Theta(n)$ new SPQR-nodes may arise. Providing each of the new SPQR-nodes with a new heavy child without breaking the invariant could cause up to $\Theta(n)$ flips. Fortunately, the nature of the new vertices is very predictable. The changes caused by an edge insertion (see Figure~\ref{fig:SPQR}) all happen along a path: first, each S- and P-node on the path can give rise to up to two new SPQR-nodes of the same type, inheriting a subset of the original node's children, and then, the path is contracted. Again, we use the heavy path decomposition to overcome this challenge: By
 pre-splitting most S- and P-nodes that have a heavy child (see Figure~\ref{fig:pre-splitting}), we make sure that only $O(\log n)$ end vertices of heavy paths may need to be split in order for an edge to be inserted.

Note that we can not just require all such nodes to be pre-split, as in some cases that would require $\Theta(n)$ nodes to be pre-split.  In fact, finding a deterministic, history-independent rule for when to pre-split that needs only $\OO(\log n)$ flips in the worst case, was one of the main technical challenges.
By carefully choosing which nodes to pre-split, we become able to prove that the act of pre-splitting and choosing a heavy child does not cascade.  

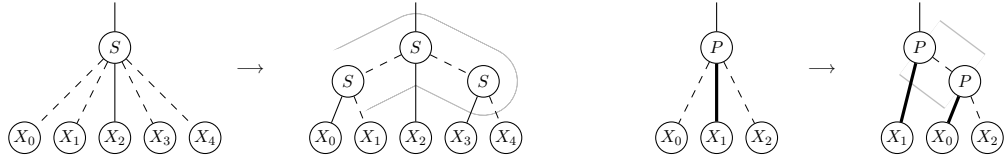
\begin{figure}[H]
\center
\begin{tikzpicture}[
    spqr-node/.style={
      draw,
      circle,
      fill=white,
      minimum size=7mm,
      inner sep=0pt,
    },
    spqr-edge/.style={
      draw,
    },
    dashed spqr-edge/.style={
      spqr-edge,
      very thin,
      dashed
    },
    solid spqr-edge/.style={
      spqr-edge,
      very thick,
    }
    every node/.style={scale=0.6}
    scale=.7
  ]
  \begin{scope}[scale=0.6, every node/.style={scale=0.6}]
    \coordinate (p) at (2,3);
    \node[spqr-node] (s0) at (2,2) {$S$};
    \node[spqr-node] (l0) at (0,0) {$X_0$};
    \node[spqr-node] (l1) at (1,0) {$X_1$};
    \node[spqr-node] (l2) at (2,0) {$X_2$};
    \node[spqr-node] (l3) at (3,0) {$X_3$};
    \node[spqr-node] (l4) at (4,0) {$X_4$};
    \draw[spqr-edge] (p) -- (s0);
    \draw[dashed spqr-edge] (s0) -- (l0);
    \draw[dashed spqr-edge] (s0) -- (l1);
    \draw[solid spqr-edge] (s0) -- (l2);
    \draw[dashed spqr-edge] (s0) -- (l3);
    \draw[dashed spqr-edge] (s0) -- (l4);
  \end{scope}
  \begin{scope}[shift={(3,0)},scale=0.6, every node/.style={scale=0.6}]
    \node at (0,1.5) {$\longrightarrow$};
  \end{scope}
  \begin{scope}[shift={(4,0)},scale=0.6, every node/.style={scale=0.6}]
    \coordinate (p) at (2,3);
    \node[spqr-node] (s0) at (2,2) {$S$};
    \node[spqr-node] (s1) at (0.5,1.25) {$S$};
    \node[spqr-node] (s2) at (3.5,1.25) {$S$};
    \node[spqr-node] (l0) at (0,0) {$X_0$};
    \node[spqr-node] (l1) at (1,0) {$X_1$};
    \node[spqr-node] (l2) at (2,0) {$X_2$};
    \node[spqr-node] (l3) at (3,0) {$X_3$};
    \node[spqr-node] (l4) at (4,0) {$X_4$};
    \draw[spqr-edge] (p) -- (s0);
    \draw[dashed spqr-edge] (s0) -- (s1);
    \draw[dashed spqr-edge] (s0) -- (s2);
    \draw[solid spqr-edge] (s1) -- (l0);
    \draw[dashed spqr-edge] (s1) -- (l1);
    \draw[solid spqr-edge] (s0) -- (l2);
    \draw[solid spqr-edge] (s2) -- (l3);
    \draw[dashed spqr-edge] (s2) -- (l4);
    \begin{pgfonlayer}{background}
      \filldraw[line width=26,line join=round,gray!50](s0.center)--(s1.center)--(s0.center)--(s2.center)--cycle;
      \filldraw[line width=25,line join=round,white](s0.center)--(s1.center)--(s0.center)--(s2.center)--cycle;
    \end{pgfonlayer}
  \end{scope}
\end{tikzpicture}
 \hfil \begin{tikzpicture}[
    spqr-node/.style={
      draw,
      circle,
      fill=white,
      minimum size=7mm,
      inner sep=0pt,
    },
    spqr-edge/.style={
      draw,
    },
    dashed spqr-edge/.style={
      spqr-edge,
      very thin,
      dashed
    },
    solid spqr-edge/.style={
      spqr-edge,
      very thick,
    }
  ]
  \begin{scope}[scale=0.6, every node/.style={scale=0.6}]
    \coordinate (p) at (1,3);
    \node[spqr-node] (p0) at (1,2) {$P$};
    \node[spqr-node] (l0) at (0,0) {$X_0$};
    \node[spqr-node] (l1) at (1,0) {$X_1$};
    \node[spqr-node] (l2) at (2,0) {$X_2$};
    \draw[spqr-edge] (p) -- (p0);
    \draw[dashed spqr-edge] (p0) -- (l0);
    \draw[solid spqr-edge] (p0) -- (l1);
    \draw[dashed spqr-edge] (p0) -- (l2);
  \end{scope}
  \begin{scope}[shift={(2,0)},scale=0.6, every node/.style={scale=0.6}]
    \node at (0,1.5) {$\longrightarrow$};
  \end{scope}
  \begin{scope}[shift={(3,0)},scale=0.6, every node/.style={scale=0.6}]
    \coordinate (p) at (.5,3);
    \node[spqr-node] (p0) at (.5,2) {$P$};
    \node[spqr-node] (p1) at (1.5,1.25) {$P$};
    \node[spqr-node] (l1) at (0,0) {$X_1$};
    \node[spqr-node] (l0) at (1,0) {$X_0$};
    \node[spqr-node] (l2) at (2,0) {$X_2$};
    \draw[spqr-edge] (p) -- (p0);
    \draw[solid spqr-edge] (p0) -- (l1);
    \draw[dashed spqr-edge] (p0) -- (p1);
    \draw[solid spqr-edge] (p1) -- (l0);
    \draw[dashed spqr-edge] (p1) -- (l2);
    \begin{pgfonlayer}{background}
      \filldraw[line width=26,line join=round,gray!50](p0.center)--(p1.center)--cycle;
      \filldraw[line width=25,line join=round,white](p0.center)--(p1.center)--cycle;
    \end{pgfonlayer}
  \end{scope}
\end{tikzpicture}
 	\caption{In order to accommodate edge insertions quickly, we pre-split the S- and P-nodes that lie internal on a heavy path. \label{fig:pre-splitting}}
\end{figure}

\subsubsection{Connected graphs}

If the graph is not biconnected, then it 
can be described via the block-cut tree (or \emph{BC-tree}), as consisting of \emph{blocks} or \emph{biconnected components}, separated by cutvertices. The basic idea is to maintain an SPQR-tree for each block. 

The main challenge is almost exactly the same as in the previous section: Insertion of an edge causes changes along a path, consisting of splits of cut-vertices along the path followed by a contraction of the path. This similar problem has a similar solution: again, we decompose the tree into heavy paths, pre-split vertices along each heavy path, and represent each heavy path in an easy-to-contract way. The easy-to-contract representation of each heavy path is easy to understand; we simply insert a \emph{strut}, an edge from the head to the tail of the heavy path, thus covering the entire path and making it artificially biconnected. We then represent the solid path with an SPQR-tree, whose root is the S-node formed by adding the strut, and where the children of the root correspond to the blocks along the path. 

In terms of planarity, this approach appears to be problematic, since the strut may violate planarity. To overcome this, we define a best-effort planar embedding, such that if the actual graph is always plane, the embedding allows the insertion of each strut across a face whenever possible, and we always know which struts are compatible with the planarity of the graph.

To maintain this heavy path decomposed BC-tree dynamically, we need to augment our structure for SPQR-trees with some extra operations, that translate into link and cut in the tree. Namely, to concatenate two heavy paths, we want to go from two SPQR-trees, each rooted in an S-node, each having children corresponding to the blocks on the path, to one SPQR-tree rooted in one S-node, whose children correspond to all the blocks on the combined path. Similarly, cleaving a heavy path corresponds to splitting an S-node into two S-nodes, each inheriting the section of the children that corresponds to their particular sub-path. 

Furthermore, because the inserted strut is only virtual, and can thus be removed or changed, we need the SPQR-trees to handle \emph{undo} of edge-insertions. Loosely speaking, we maintain a dependency forest over the inserted edges, and allow the deletion of any edge that is the root of its tree in the forest. Vice versa, inserting an edge in the graph corresponds to inserting a new root in the dependency forest, linking it to some of the roots of the existing trees.

We have now sketched a structure for maintaining embeddings of a connected incremental graph. Namely, to insert an edge, we find the corresponding path in the BC-tree. If we apply the well-known trick of forcing this path to be heavy, then this corresponds to only $O(\log n)$ operations of heavy paths being severed or melded. Each of these sever and meld operations perform $O(1)$ updates to $O(1)$ SPQR-trees. Each of these updates are handled by our structure for the SPQR-trees with $O(\log n)$ changes. Thus, we have presented a sketch of how to embed an incremental connected graph with only $O(\log ^2 n)$ flips per insertion.  

To improve this from $O(\log^2 n)$ to $O(\log n)$, we need to apply a carefully chosen weighting of the SPQR-nodes in each SPQR-tree, and redefine the heavy paths and light edges accordingly. This ensures that for each edge insertion, the total number of involved light edges the BC-tree and in the SPQR-trees is still bounded by $O(\log n)$.

\subsubsection{General graphs}

When the incremental graph is not connected, we maintain an embedding of each component. When an edge-insertion connects two components, the idea is to root each BC-tree in the endpoint of the new edge, and then link those to a root block corresponding to the new edge. 
This calls for an \emph{evert operation}, forcing a node of the BC-tree to be root.

\subsubsection{Algorithmic challenges}
In addition to the combinatorial result about embeddings, we also give several data structures:

\paragraph{Biased dynamic trees}
To support the rest of our algorithmic results, we need to extend
Sleator and Tarjan's dynamic trees~\cite{DBLP:journals/jcss/SleatorT83} (which maintain a heavy path
decomposition of an unweighted tree) to handle vertex weights. While
the extension seems obvious after the fact, we have been unable to
find a prior description.  We call the resulting data structure
\emph{biased dynamic trees}, by analogy with the Biased search trees
by Bent~et~al.~\cite{DBLP:journals/siamcomp/BentST85} (which are used
as subroutines by both versions of dynamic trees).  

\paragraph{Incremental SPQR/BC trees}
We present our biased dynamic trees as a nicely wrapped data structure
with certain operations available, but for our actual use on pre-split
BC/SPQR trees we have to open the black box and extend it to e.g. handle
different kinds of node splitting.  Also, the weights we use turn out
to be hard to maintain explicitly.  This is not an issue for the structural result, but costs an $\OO(\log^2 n)$ factor in running time.

In relation to the description above, we need to argue that the contraction of a path in the SPQR-tree can be done in efficient worst-case time. To overcome this challenge, we use a representation of the SPQR-tree where the difference between a heavy path and a contracted heavy path is concisely encoded. The encoding has to be done in a clever way that allows a heavy path to be cleaved into two heavy paths, and, vice versa, allows adjacent heavy paths to be concatenated, when the heavy path decomposition undergoes dyhnamic changes. This is technically somewhat similar to, but more involved than, the corresponding data structure for biconnectivity~\cite{DBLP:journals/siamcomp/PoutreW98}.

\paragraph{Triconnectivity} 
In order to answer triconnectivity queries, we need to augment our implicit representation of an SPQR-tree with enough information for us to use it to answer queries to whether a pair of vertices are triconnected. 

The implicit representation of the SPQR-tree contains information about triconnectivity between vertices of the graph: to query for triconnectivity of a pair of vertices, all we need is to find an R- or P-node in the SPQR-tree that contains both of them.

\paragraph{Planarity testing}
The main argument for the bounded number of changes to the embedding has been sketched in Subsection~\ref{sec:sketchbicon}. 
Those ideas lead to a data structure for maintaining an implicit representation of the SPQR-tree of a biconnected graph subject to edge-insertions, that handles each update in worst-case polylog time. Together with a well-chosen encoding of its embedding, we obtain a scheme for maintaining an embedding such that only $O(\log n)$ \emph{flips} (see Figure~\ref{fig:flip}) are necessary to accommodate each new edge. Roughly speaking, this can be used to obtain an incremental worst-case $O(\log^3 n)$ algorithm for planarity testing, by using \cite{DBLP:journals/mst/HolmR17} as a subroutine; we have argued that only $O(\log n)$ flips are necessary, but each of them can be found in $O(\log^2 n)$ time using the algorithm from \cite{DBLP:journals/mst/HolmR17}. 

Note that given the data structure from~\cite{DBLP:journals/mst/HolmR17}, we can answer in only $O(\log ^2 n)$ time whether the query edge is already compatible with the current embedding. Only when the embedding needs to change, the full $O(\log^3 n)$ time is necessary.

\paragraph{Queries to the embedding}
Along with maintaining the canonical embedding dynamically, we are able to answer queries to the embedding. Given an edge incident to a specific vertex $v$, we can output its two neighbours in the circular ordering around $v$ in $O(1)$ time. If we need to distinguish between its right and left neighbour, we can use~\cite{DBLP:journals/mst/HolmR17} to find the correct order in  $O(\log n)$ time.

\subsection{Organisation of the paper}

We begin, in Section~\ref{sec:flip-dist}, with a formal definition of the distance between a pair of embedded graphs, which allows us to formally state our result about existence of ``good" embeddings, as well as prove the matching lower bounds, that is, Theorem~\ref{thm:intro-lowerbound} and its corollary. In Section~\ref{sec:biaseddyntree}, we show how to extend the dynamic trees of Sleator and Tarjan~\cite{DBLP:journals/jcss/SleatorT83} to \emph{biased dynamic trees}, which we need in order to handle graphs that are not necessarily biconnected. Our main technical contribution lies in Section~\ref{sec:dynSPQR}, in which we show how to efficiently maintain an implicit representation of the SPQR-tree of a block, and (in Section~\ref{sec:incr-spqr}) we give a data structure for incremental triconnectivity for biconnected graphs. Correspondingly, Section~\ref{sec:dynbc} shows how to maintain the forest of BC-trees of an incremental graph, and (in Section~\ref{sec:incr-bc}) we use this to extend the incremental triconnectivity data structure to general graphs.
In Section~\ref{sec:dynEmb}, we show how to annotate the implicit SPQR-tree to describe an implicit canonical embedding for biconnected graphs, and in Section~\ref{sec:dynEmb-general} we extend this to general graphs.
Finally, Section~\ref{sec:implementation} is dedicated to some deferred implementation details of the algorithmic results.

\section{Flip-distance for labelled plane multigraphs}\label{sec:flip-dist}

Unless otherwise noted, we will be working with undirected
planar loopless multigraphs, with distinct vertex labels and distinct edge labels, both from some totally ordered, countable set.

Let $\Emb$ denote the (countably infinite) set of all plane embedded
graphs that are embeddings of such graphs, and given a graph $G$, let
$\Emb(G)\subseteq\Emb$ denote the set of all plane embeddings of $G$.
We say that two plane embedded graphs $H, H'\in\Emb$ are
\emph{adjacent} if either:
\begin{itemize}
\item There is an edge $e\in H$ such that $H-e=H'$, or an edge
  $e'\in H'$ such that $H=H'-e'$, or
\item There is a single \emph{flip} (either an
  \emph{articulation-flip} or a \emph{separation-flip}, see Figure~\ref{fig:flip}) that
  transforms $H$ into $H'$.
\end{itemize}

Define the \emph{flip-graph} $\mathcal{G}$ as the (countably infinite) graph
with $\Emb$ as its vertices, and an edge for each pair of adjacent
embeddings.  For any two plane embedded graphs $H, H'\in\Emb$ we
define the \emph{flip-distance} $\flipdist(H,H')$ to be the length of any
shortest path between $H$ and $H'$ in $\mathcal{G}$. Note that for any planar graph $G$ with $n$ vertices, the flip-distance between any two plane embeddings of $G$ is $\OO(n)$. We will extend
this notion of distance to sets $A,B\subseteq\Emb$ by defining
$\flipdist(A,B)$ to be the \emph{Hausdorff distance} between $A$ and
$B$, that is:
\begin{align*}
  \flipdist(A,B) &:= \max\set*{
    \max_{a\in A}\min_{b\in B}\flipdist(a,b)
    ,\,
    \max_{b\in B}\min_{a\in A}\flipdist(a,b)
  }
\end{align*}

Our main results about embeddings can be stated as
\begin{theorem}\label{thm:embclass}
  We define a set $\EmbGood\subseteq\Emb$ of \emph{good} embeddings,
  and define $\EmbGood(G):=\EmbGood\cap\Emb(G)$, such that for any
  planar graph $G$ with $n$ vertices, $\EmbGood(G)\neq\emptyset$ and
  for any edge $e$ in $G$,
  \begin{align*}
    \flipdist(\EmbGood(G-e),\EmbGood(G))\in\OO(\log n)
  \end{align*}
\end{theorem}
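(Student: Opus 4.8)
The plan is to make $\EmbGood$ explicit as the class of embeddings produced by a deterministic, history-independent recipe, and then to push an edge deletion — equivalently, as the excerpt already reduces matters, an edge \emph{insertion} — through that recipe. First I would fix, for a planar graph $H$, a canonical rooting of its forest of BC-trees and of the SPQR-tree of each of its blocks (root at the lexicographically least vertex, block, and skeleton edge), and then compute a \emph{biased} heavy-path decomposition of all of these trees with respect to a weighting of the SPQR-nodes that I will choose so that the number of light edges met on any relevant root-path, counted across the BC-tree and the SPQR-trees of the blocks it traverses, is $\OO(\log n)$. I would then call an embedding \emph{good} if (i) along every heavy path the skeleton-edge reflection bits and the cyclic child-orders of the P-nodes that are internal to that path are set in the single ``contract-ready'' pattern dictated by the decomposition, every S- and P-node internal to a heavy path is pre-split exactly as in Figure~\ref{fig:pre-splitting}, and every cut-vertex internal to a heavy BC-path carries its strut (using the best-effort planar embedding so the strut lies across a face whenever the graph permits it); and (ii) the remaining annotations — those on light edges and at the two ends of each heavy path — are left unconstrained. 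The freedom in (ii) is precisely what makes $\EmbGood(H)$ a \emph{set} rather than a single embedding, and $\EmbGood(H)\neq\emptyset$ follows simply by running the recipe.

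Second, I would record the structural effect of inserting an edge $e=uv$: it merges the blocks along the $u$--$v$ path of the BC-tree into one block, which is a contraction of that BC-path, and inside each block on the path it performs a contraction of the SPQR-path between the two poles, preceded by splitting each S- and P-node on that path into at most two nodes of the same type that inherit consecutive intervals of children (Figure~\ref{fig:pre-splitting} depicts exactly these splits, and the companion figure in the overview does so for the general SPQR case). The key point is then that a good embedding of $G-e$ is \emph{almost} ready for all of this at once: wherever the insertion path runs along a heavy path, condition (i) already supplies the pre-splits, the struts, and the contract-ready reflection bits and P-orders, so no flip is spent there; the only places that still need attention are the $\OO(\log n)$ light edges and $\OO(\log n)$ heavy-path endpoints the insertion path crosses, and at each of those $\OO(1)$ flips — a reflection, a P-node slide, or a strut adjustment (see Figure~\ref{fig:flip}) — suffice both to make $e$ insertable across a face and, after inserting $e$, to bring that spot into the contract-ready pattern of the \emph{new} decomposition.

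Third, I would verify that the good-embedding recipe for $G$ agrees with that for $G-e$ everywhere outside this $\OO(\log n)$-sized neighbourhood of the insertion path, so that from any $\phi'\in\EmbGood(G-e)$ one reaches some $\phi\in\EmbGood(G)$ by $\OO(\log n)$ flips, an insertion of $e$, and $\OO(\log n)$ further flips, and symmetrically from any $\phi\in\EmbGood(G)$ back to some $\phi'\in\EmbGood(G-e)$; both legs of the Hausdorff maximum defining $\flipdist(\EmbGood(G-e),\EmbGood(G))$ are then $\OO(\log n)$, which is the claim. This is where the biased weighting must be calibrated so that contracting the BC-path (and the nested SPQR-paths) changes the heavy/light classification of only $\OO(\log n)$ edges \emph{in total}, so that the ``contract-ready'' constraints imposed by goodness change at only $\OO(\log n)$ places and the good-embedding condition is otherwise untouched away from the insertion path.

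The main obstacle is exactly the design of that weighting together with the pre-split/strut rule. I expect the naive unit-weight heavy-path decomposition to fail, because a single graph-edge insertion triggers contractions in a whole chain of SPQR-trees — one per block on the BC-path — so the light-edge counts add up to $\OO(\log^2 n)$; and the naive rule ``pre-split every S/P node with a heavy child'' can demand $\Theta(n)$ pre-splits and can cascade. Pinning down a deterministic, history-independent rule that simultaneously (a) keeps the total light-edge count at $\OO(\log n)$, (b) keeps the pre-split/strut set met along any insertion path of size $\OO(\log n)$, and (c) provably does \emph{not} cascade, is the crux. Once this is in place, the flip-counting sketched above is routine bookkeeping, and the reduction from the deletion statement to the insertion statement is the symmetry already spelled out in the excerpt.
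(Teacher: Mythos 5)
Your construction of the "good" class — pre‑split BC/SPQR trees with a biased heavy‑path decomposition, heavy paths carrying contract‑ready flip‑bits, P‑orders, and struts, light edges and path endpoints left free — is exactly the combinatorial substrate the paper uses, and you correctly identify the two genuine difficulties (a weighting that makes the \emph{total} light‑edge count across the nested BC/SPQR trees $\OO(\log n)$ rather than $\OO(\log^2 n)$, and a non‑cascading pre‑split rule). So the ingredients are right. What differs, and what creates a gap in your argument as written, is the \emph{definition} of $\EmbGood(G)$ and the resulting proof architecture.

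The paper does not fix one canonical rooting. It defines, for each ordered pair $(x,y)$, a class $\EmbGood(G;x,y)$ tied to the pre‑split SPQR/BC tree that is $u$‑exposed with root $r$ where $r\cdots u=m(x,y)$ and with the selected vertices at $r,u$ chosen from $\{x,y\}$; then $\EmbGood(G):=\bigcup_{x,y}\EmbGood(G;x,y)$. With this union in hand, the theorem falls out of the triangle inequality for Hausdorff distance: $\flipdist(\EmbGood(G-e),\EmbGood(G-e;x,y))$ and $\flipdist(\EmbGood(G;x,y),\EmbGood(G))$ are each $\OO(\log n)$ by a ``re‑target'' lemma (changing which pair you are exposed for costs $\OO(\log n)$ solid/dashed edge changes, hence $\OO(\log n)$ flip‑bit changes), and $\flipdist(\EmbGood(G-e;x,y),\EmbGood(G;x,y))$ is $\OO(1)$ because an $m(x,y)$‑exposed pre‑split tree is \emph{already} primed so that inserting $(x,y)$ is a single contract‑path. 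The union is not a cosmetic choice: it decouples the comparison into ``change the exposed target'' and ``insert when ready'', and each of those is directly supported by the dynamic‑tree lemmas.

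Your single‑canonical‑rooting version is a genuinely different organization, and it has a real unfilled step: you must show that the canonical heavy‑path decompositions of $G$ and $G-e$, with their respective lexicographic rootings, agree except at $\OO(\log n)$ edges, so that the contract‑ready constraints imposed by ``good'' differ at only $\OO(\log n)$ places. You attribute this entirely to ``calibrating the biased weighting'', but that is not where the bound comes from. The weighting fixes the $\OO(\log n)$‑versus‑$\OO(\log^2 n)$ issue; the decomposition‑comparison bound instead needs the evert/expose/contract/re‑evert chain (re‑root to one end of $m(x,y)$, expose the other end, contract, re‑root to the canonical root of $G$), each step of which changes only $\OO(\log n)$ solid/dashed edges by the biased dynamic‑tree analysis. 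Without this you cannot rule out, e.g., many heavy‑to‑light flips among the off‑path children of a long $m(x,y)$. The paper's union sidesteps the whole question because it never compares two canonically rooted decompositions directly; if you keep the fixed rooting, you should make the evert/expose argument explicit and check that the canonical root of $G$ is reached from that of $G-e$ by a single evert of cost $\OO(\log n)$. There is also a secondary point your sketch glosses over: the paper's strut rule for a run treats the pair $(x,y)$ as having id $-\infty$ precisely so that $\Gamma(m(x,y))$'s strut is $(x,y)$; with your fixed rooting the struts along $m(x,y)$ are chosen by the default smallest‑id rule and may disagree with $(x,y)$ on every heavy segment $m(x,y)$ crosses, so you would need to argue that realigning them still costs only $\OO(1)$ per light‑edge crossing.
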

\begin{proof}The proof for biconnected graphs is deferred to Section~\ref{sec:heavySPQRembedding}, see Theorem~\ref{thm:biconn-embgood-flipdist}, and for general graphs, the proof is in Section~\ref{sec:EmbGoodBiconn}, see Theorem~\ref{thm:good-emb-bc}.
\end{proof}
\begin{theorem}
  Furthermore, we can maintain a good embedding for a dynamic planar
  graph under edge insertions and deletions in worst case $\OO(\log
  n)$ flips per operation.

  The algorithm uses linear space and $\OO(\log^3 n)$
  worst case time per edge insertion, and worst
  case linear time per edge deletion.
\end{theorem}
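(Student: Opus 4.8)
The plan is to split the statement into its two assertions and prove them by very different means: the $\OO(\log n)$-flips-per-operation bound is obtained by reading off what Theorem~\ref{thm:embclass} (equivalently Theorem~\ref{thm:flips}) already gives about the \emph{canonical} member of $\EmbGood$, while the $\OO(\log^3 n)$ worst-case insertion time and the linear space/time deletion bounds are obtained by assembling the implicit data structures developed in Sections~\ref{sec:biaseddyntree}--\ref{sec:implementation} and invoking~\cite{DBLP:journals/mst/HolmR17} as a black box.

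For the flip count, first fix a canonical representative $\emb(G)\in\EmbGood(G)$ by a deterministic tie-break on the annotation vector; since $\EmbGood$ is defined by a structural, history-independent rule, $\emb$ is then a function of the graph alone. For an insertion of $e$ taking $G-e$ to $G$, the work behind Theorem~\ref{thm:embclass} in Sections~\ref{sec:heavySPQRembedding} and~\ref{sec:EmbGoodBiconn} shows not just that $\flipdist(\EmbGood(G-e),\EmbGood(G))\in\OO(\log n)$ but that running the insertion procedure on the implicit representation of $\emb(G-e)$ rewrites only $\OO(\log n)$ heavy/light annotations and yields the implicit representation of $\emb(G)$; translating each rewritten annotation into a reflection or slide flip, together with one final edge-insertion-across-a-face, gives $\OO(\log n)$ local changes. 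Deletion of $e$ from $G$ is handled by the symmetry already noted in the introduction: simulate the insertion of $e$ into $\emb(G-e)$ to extract the flip sequence $f_1,\dots,f_k$ with $k=\OO(\log n)$ taking $\emb(G-e)$ to $\emb(G)$, and then emit the removal of $e$ followed by $f_k^{-1},\dots,f_1^{-1}$, which takes $\emb(G)$ to $\emb(G-e)$ in $\OO(\log n)$ local changes.

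For the running time of an insertion, I would maintain simultaneously: (i) the forest of implicit BC-trees layered on biased dynamic trees (Sections~\ref{sec:biaseddyntree},~\ref{sec:dynbc}); (ii) for each block, the implicit pre-split SPQR-tree in the concise encoding that distinguishes a heavy path from a contracted heavy path and still supports cleave and concatenate (Sections~\ref{sec:dynSPQR},~\ref{sec:implementation}); and (iii) the canonical embedding annotations (Sections~\ref{sec:dynEmb},~\ref{sec:dynEmb-general}) linked to the dynamic plane-embedding structure of~\cite{DBLP:journals/mst/HolmR17}. An insertion forces the relevant BC-path to be heavy, which is $\OO(\log n)$ heavy-path \emph{meld}/\emph{sever} operations; each such operation is $\OO(1)$ \emph{evert}, node-split, and path-contract operations on $\OO(1)$ SPQR-trees; and the SPQR-node weighting keeps the total number of light edges touched across the BC-tree and all involved SPQR-trees at $\OO(\log n)$. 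Each elementary tree operation costs $\OO(\log^2 n)$ worst-case time in the implicit representation — the extra $\OO(\log n)$ factor over plain biased dynamic trees being exactly the price of not storing the weights explicitly — and each of the $\OO(\log n)$ flips is realised in $\OO(\log^2 n)$ worst-case time by~\cite{DBLP:journals/mst/HolmR17}; summing yields $\OO(\log^3 n)$. Every component occupies linear space.

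For a deletion I would simply rebuild: recompute, from scratch in $\OO(n)$ time, a planar embedding, the BC-forest, the SPQR-trees, their biased heavy-path decompositions, the pre-splits, and the canonical annotations for $G-e$, and reinitialise~\cite{DBLP:journals/mst/HolmR17}; the $\OO(\log n)$ reversed flips are obtained by one simulated insertion of $e$ on a scratch copy, as above. This is $\OO(n)$ time and linear space, consistent with the stated bound. The main obstacle throughout is the worst-case insertion time, because one edge insertion can trigger $\Theta(n)$ structural changes to the SPQR-trees (up to $\Theta(n)$ $S$- and $P$-nodes splitting, followed by the contraction of a long path); what makes this survivable is that the pre-split invariant is governed by a deterministic, history-independent rule for which $S$/$P$-nodes carry their split, chosen so that restoring the invariant after an insertion touches only $\OO(\log n)$ heavy-path endpoints and does not cascade, and that the heavy-path encoding supports cleave/concatenate on the \emph{implicit} weighted structure so that the aggregate light-edge count stays $\OO(\log n)$ without ever materialising the weights. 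By contrast, the $\OO(\log n)$-flip accounting and the use of~\cite{DBLP:journals/mst/HolmR17} are comparatively routine.
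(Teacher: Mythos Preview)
Your proposal is correct and takes essentially the same approach as the paper: maintain the canonical embedding via the pre-split BC/SPQR machinery so that each insertion incurs $\OO(\log n)$ flip-bit changes, realise each flip in $\OO(\log^2 n)$ time via~\cite{DBLP:journals/mst/HolmR17} for the $\OO(\log^3 n)$ bound, and handle deletion by the symmetry/rebuild argument. If anything, you are more explicit than the paper's own deferral to Section~\ref{sec:incrPlanrBicon} about the data-structure layers and about why linear-time deletion is achieved by rebuilding from scratch.
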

\begin{proof}Deferred to section~\ref{sec:incrPlanrBicon}.
\end{proof}

Thus, we can restrict ourselves to the set of \emph{good} embeddings
and never need more than $\OO(\log n)$ flips per edge insertion or
deletion.  Note that the set of good embeddings is not necessarily
small.  There are graphs where $\abs{\EmbGood(G)}$ may
be as large as $2^{\Omega(n\log n)}$ (See Figure~\ref{fig:manygood}).

\begin{theorem}[Restatement of Theorem~\ref{thm:flips}]\label{thm:canonical-emb}
  For any planar graph $G$ with $n$ vertices we can define a 
  \emph{canonical} embedding $\emb(G)\in\EmbGood(G)$ such that for
  any edge $e$ in $G$,
  \begin{align*}
    \flipdist(\emb(G-e),\emb(G))\in\OO(\log n)
  \end{align*}

\end{theorem}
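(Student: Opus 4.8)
## Proof proposal for Theorem~\ref{thm:canonical-emb}

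The plan is to upgrade Theorem~\ref{thm:embclass} from "there is a good embedding" to "there is a canonical one", while keeping the $\OO(\log n)$ flip bound. First I would recall that $\EmbGood(G)$ is nonempty for every planar $G$, and that (by the already-deferred results for biconnected and general graphs) any $H\in\EmbGood(G-e)$ is within $\OO(\log n)$ flips of \emph{some} element of $\EmbGood(G)$, and symmetrically. What we do not yet have is a single deterministic choice $\emb(G)\in\EmbGood(G)$, depending only on $G$ (through its vertex and edge labels), with the property that $\emb(G-e)$ and $\emb(G)$ are $\OO(\log n)$-close. The key point is that the construction behind Theorem~\ref{thm:embclass} is already essentially canonical: good embeddings are obtained by rooting the BC-tree and the SPQR-trees of its blocks, taking the biased heavy-path decomposition, pre-splitting the S- and P-nodes internal on heavy paths, and then fixing the flip/slide annotations on the heavy edges and internal P-nodes to a "favourable" value, leaving only $\OO(\log n)$ free annotations per potential insertion. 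To make this a function of $G$ alone, I would pin down every remaining degree of freedom using the total order on the labels.

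Concretely, the steps, in order, are as follows. (1) Root the forest of BC-trees: in each connected component, pick the root block/cut-vertex by the minimum label rule (e.g. root at the block containing the minimum-labelled vertex, breaking further ties by edge label); this makes the heavy-path decomposition of each BC-tree a function of the labels, since the biased heavy-path decomposition is deterministic once weights and root are fixed, and the weights are themselves defined from the tree structure. (2) Inside each block, root its SPQR-tree canonically — at the Q-node (or virtual edge) of minimum label, say — and again take the deterministic biased heavy-path decomposition and the deterministic pre-splitting rule from Section~\ref{sec:dynSPQR}. (3) Fix the annotations: every heavy edge and every P-node internal on a heavy path gets its "favourable" annotation, which is forced by the construction; for the remaining light edges and non-internal P-nodes, choose the unique embedding minimizing the induced circular orders lexicographically in the labels (equivalently: never flip a light edge relative to its parent unless forced, and order P-node children by child label). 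This yields a well-defined $\emb(G)\in\EmbGood(G)$. (4) Finally, verify $\flipdist(\emb(G-e),\emb(G))\in\OO(\log n)$: deleting $e$ changes the BC-tree along one BC-path and the SPQR-trees of the blocks on it along one contracted heavy-path segment each; by the pre-split invariant all but $\OO(\log n)$ of the relevant heavy-edge and internal-P-node annotations are identical in $\emb(G-e)$ and $\emb(G)$, and the canonical tie-breaking rule in step (3) is chosen so that the light-edge and non-internal-P-node annotations also agree except on the $\OO(\log n)$ edges/nodes that lie on the affected path or hang off it. The remaining $\OO(\log n)$ disagreements are exactly reflections of heavy-path endpoints and slides of pre-split P-nodes, i.e. $\OO(\log n)$ flips.

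The main obstacle is step (4), specifically making the \emph{tie-breaking} in step (3) robust under the edge insertion: a naive "lexicographically smallest embedding" could, upon inserting $e$, force a cascade of re-orderings far from the affected path (changing one P-node's child order could ripple up through the heavy-path weights and flip the lexicographic choice elsewhere). The fix is to make the canonical choice \emph{local}: define the annotation of each light edge and each non-internal P-node purely from data that lives in the corresponding SPQR-node and its children's root labels (not from a global lexicographic comparison of whole embeddings), so that it is unaffected by changes outside its own subtree. Then the only annotations that can change are those on nodes whose subtree is touched by the contraction, and the heavy-path / pre-split machinery of Theorem~\ref{thm:embclass} already bounds those by $\OO(\log n)$. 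I would also need to check that $\emb(G)$ as defined genuinely lies in $\EmbGood(G)$ — but since good-ness is a property of the heavy-path decomposition and the pre-split invariant, and steps (1)–(2) reproduce exactly the construction used in Theorems~\ref{thm:biconn-embgood-flipdist} and~\ref{thm:good-emb-bc}, this is immediate; canonicity only constrains the otherwise-free annotations, which good-ness does not care about. The statement then follows, and indeed refines Theorem~\ref{thm:embclass} since $\{\emb(G-e)\}$ and $\{\emb(G)\}$ are singletons inside $\EmbGood(G-e)$ and $\EmbGood(G)$ respectively.
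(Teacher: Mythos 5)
Your overall plan matches the paper's: root the BC- and SPQR-trees deterministically from labels, take the biased heavy-path decomposition with pre-splitting, and pin down the remaining annotations by labels. You also correctly identify the main obstacle -- a naive global lexicographic tie-break cascades. But the fix you propose is a sketch at exactly the point where the real technical content lives, and the stability property you state for your local rule is not the one actually needed. You write that the annotation of a dashed light edge $(p,c)$ should depend on local data ``so that it is unaffected by changes outside its own subtree.'' This is both too weak and, in one respect, backwards. It is backwards because the flip-bit of $(p,c)$ must in fact be \emph{sensitive} to a mirror reflection of $\emb(\Gamma(c))$: when the child's skeleton embedding flips (because its own heavy path changed), the flip-bit has to toggle to compensate, or else the entire glued-in subgraph $G_c$ flips in the final embedding, and you incur non-local flips. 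It is too weak because $\Gamma(p)$ is itself a large glued pseudo-$R$ graph $\Gamma(M_r)$ that can be reshaped by changes far inside $T_p$, so you need stability under \emph{local changes within} the embedded skeleton $\Gamma(p)$, not just independence of $T\setminus T_p$.

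The paper delivers exactly these two requirements via the flip-parity function of Section~\ref{sec:flipp} (with the analogous glue-corner and corner flip-parity machinery for the BC level in Sections~\ref{sec:EmbGoodBiconn}--\ref{sec:canonBC}, which your proposal does not address at all). The function $\flipparity(H,e)$ is required to toggle under mirror reflection, to factor through connected components and local paths, and to change on at most $O(1)$ local paths when an edge is deleted; the canonical flip-bit of a compressed-tree edge $(p,c)$ is then $\flipparity(\emb(\Gamma(p)),e_c)\oplus\flipparity(\emb(\Gamma(c)),e_p)$, which simultaneously gives locality (via the last property) and the compensation (via orientation-reversal). Constructing such a function and verifying that each basic pre-split/expose/conceil/reverse/contract operation induces $O(1)$ flip-bit changes (Lemmas~\ref{lem:embspqr-flipbit-changecount}--\ref{lem:reverse-changecount} and the lemmas of Section~\ref{sec:canonical}) is the actual proof; your step (4) cites this bound without supplying the gadget, so the $\OO(\log n)$ claim does not yet follow from your outlined rule.
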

\begin{proof}The proof for biconnected graphs is deferred to Section~\ref{sec:canonical}, and for general graphs, the proof is in Section~\ref{sec:canonBC}.
\end{proof}
So in fact, we can restrict ourselves to a particular \emph{canonical}
embedding of each graph and still not need more than $\OO(\log n)$
flips per edge insertion or deletion.
The choice of canonical embedding is, however, not unique.

These results should be contrasted with the following easy lower bounds.
\begin{theorem}[Lower bound 1]
  For any $n$, there exists a planar graph $G$ with $O(n)$ vertices, and an
  edge $e\in G$ such that $\flipdist(\Emb(G-e),\Emb(G))\in\Omega(n)$.
\end{theorem}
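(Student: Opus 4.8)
The plan is to pin down one ``maximally twisted'' embedding of $G-e$ that is far in flip-distance from \emph{every} embedding of $G$, and to certify that distance with a potential function.

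First, note that one direction of the Hausdorff distance is essentially free: for any $b\in\Emb(G)$ the embedded graph $b-e$ lies in $\Emb(G-e)$ and is flip-adjacent to $b$, so $\min_{a\in\Emb(G-e)}\flipdist(a,b)\le 1$. Hence it is enough to produce a single $a\in\Emb(G-e)$ with $\flipdist(a,b)\in\Omega(n)$ for every $b\in\Emb(G)$; the term $\max_{a'\in\Emb(G-e)}\min_{b\in\Emb(G)}\flipdist(a',b)$ of the Hausdorff distance is then already $\Omega(n)$.

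For the construction I take the \emph{ladder with a closing edge}: vertices $x,y,a_1,\dots,a_n,b_1,\dots,b_n$, two rails $x\,a_1\,a_2\cdots a_n\,y$ and $x\,b_1\,b_2\cdots b_n\,y$, rungs $a_ib_i$ for $1\le i\le n$, and the edge $e:=xy$; so $G$ has $2n+2=O(n)$ vertices and $G-e$ is the bare ladder. For $1\le j\le n-1$ let $C_j$ denote the $4$-cycle $a_ja_{j+1}b_{j+1}b_j$ (``cell $j$''), and for $H\in\Emb$ let $\Phi(H)$ count the indices $j$ such that all four edges of $C_j$ are present in $H$ but $C_j$ is not a face of $H$. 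I would then prove three claims. (i)~There is an $a\in\Emb(G-e)$ with $\Phi(a)=n-1$: embed the ladder with the cells nested, cell $j+1$ carved inside cell $j$ along the shared rung $a_{j+1}b_{j+1}$, with $x$ lying outside $C_1$ and $y$ inside $C_{n-1}$; then no cell is a face. (ii)~$\Phi(b)=0$ for every $b\in\Emb(G)$: the subgraph $G-V(C_j)$ is connected --- this is exactly where the closing edge $e=xy$ is used, joining the two halves of the ladder --- so by the Jordan curve theorem this connected subgraph lies entirely on one side of the simple closed curve $C_j$, the other side is empty, and $C_j$ is a face. (iii)~$\Phi$ is $O(1)$-Lipschitz along every edge of the flip graph $\mathcal{G}$: a flip (reflection or slide) alters the rotation system only at the two vertices of one separation pair, and an edge insertion or deletion only at two vertices, so the face-status of $C_j$ can change for at most the $O(1)$ indices $j$ whose cell meets one of those vertices. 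Putting these together, $\flipdist(a,b)\ge(\Phi(a)-\Phi(b))/O(1)=\Omega(n)$ for all $b\in\Emb(G)$, which is the theorem.

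The step I expect to be the main obstacle is claim (iii), and in particular the danger that a single \emph{reflection} reverses the internal rotation of an arbitrarily large component and thereby flips many cells between ``face'' and ``non-face'' simultaneously. The observation that rescues the Lipschitz bound is that reflecting a component $K$ across its separation pair is a homeomorphism of the plane embedding that sends faces to faces: every cell entirely inside $K$ retains its face-status, every cell entirely outside $K$ is untouched, and only the $O(1)$ cells whose vertex set contains the separation pair can change. The remaining care is routine: the Jordan-curve argument in (ii) (the complement of the $4$-cycle $C_j$ in the plane has exactly two regions, and a connected subgraph disjoint from $C_j$ lies in one of them), and checking that the nested-cells configuration in (i) really is a legitimate plane embedding of the ladder.
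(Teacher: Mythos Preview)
Your proof is correct and, in fact, more rigorous than the paper's. The paper's own proof simply exhibits the wheel graph $W_n$ (and as a cubic alternative exactly your ladder-with-closing-edge, which it calls $G_k$, obtained from a circular ladder by contracting two consecutive rungs) and \emph{asserts} with a picture that the twisted embedding needs $\Theta(n)$ flips before $e$ fits. You use the same ladder construction but actually supply the missing argument, via the potential $\Phi$ counting non-facial cells and the Lipschitz bound. Your homeomorphism observation---that a reflection sends faces to faces inside the flipped piece, so only the $O(1)$ cells touching the separation pair can change status---is exactly what is needed to handle the one nontrivial case, and your Jordan-curve argument for $\Phi(b)=0$ is clean (noting that no chords exist among the four cell vertices, so the empty side really is a face). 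One cosmetic point: your first sentence of (iii), ``a reflection alters the rotation system only at the two vertices of the separation pair'', is literally false---it reverses the rotation at every vertex of the flipped component---but you immediately identify and correctly resolve this, so the final argument stands.
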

\begin{proof}
  The wheel graph $W_n$ on $n\geq5$ vertices is such an example. For
  any edge $e$ on the outer rim, $W_n-e$ has an embedding that
  requires $n-4$ flips before $e$ can be added (see
  Figure~\ref{fig:lowerbound1}, left).
  \begin{figure}[H]
	\center
	\begin{tikzpicture}[
    vertex/.style={
      draw,
      circle,
      fill=white,
      minimum size=1mm,
      inner sep=0pt,
    },
    edge/.style={
      draw,
    },
    dashed edge/.style={
      edge,
      very thin,
      dashed
    },
    solid edge/.style={
      edge,
    }
  ]
  \begin{scope}[shift={(0,0)}]
    \node[vertex] (v0) at (0,0) {};
    \node[vertex] (v1) at (.5,-.125) {};
    \node[vertex] (v2) at (.75,.125) {};
    \node[vertex] (v3) at (.875,-.5) {};
    \node[vertex] (v4) at (1,.5) {};
    \node[vertex] (v5) at (1.125,-1) {};
    \node[vertex] (v6) at (1.25,1) {};
    \foreach \i [remember=\i as \lasti] in {1,...,6} {
      \draw[solid edge] (v0) -- (v\i) {};
      \ifnum\i>1
      \draw[solid edge] (v\lasti) -- (v\i) {};
      \fi
    }
    \draw[dashed edge, red] (v1) to[bend left=10] (v6);
  \end{scope}
  \begin{scope}[shift={(2,0)}]
    \node at (0,0) {$\longrightarrow$};
  \end{scope}
  \begin{scope}[shift={(3.75,0)}]
    \node[vertex] (v0) at (0,0) {};
    \foreach \i in {1,...,6} {
      \draw[solid edge] (v0) -- (60*\i:1) node[vertex] (v\i) {};
    }
    \foreach \i [remember=\i as \lasti] in {1,...,6} {
      \ifnum\i>1
      \draw[solid edge] (v\lasti) -- (v\i) {};
      \fi
    }
    \draw[dashed edge, green] (v1) -- (v6);
  \end{scope}
\end{tikzpicture}
         \hfil
	\begin{tikzpicture}[
    vertex/.style={
      draw,
      circle,
      fill=white,
      minimum size=1mm,
      inner sep=0pt,
    },
    edge/.style={
      draw,
    },
    dashed edge/.style={
      edge,
      very thin,
      dashed
    },
    solid edge/.style={
      edge,
    }
  ]
  \begin{scope}
    \node[vertex] (ab1) at (0,0) {};
    \node[vertex] (a2) at (-.25,.25) {};
    \node[vertex] (b2) at (-.25,-.25) {};
    \node[vertex] (a3) at (.25,.25) {};
    \node[vertex] (b3) at (.25,-.25) {};
    \node[vertex] (a4) at (-.5,.5) {};
    \node[vertex] (b4) at (-.5,-.5) {};
    \node[vertex] (a5) at (.5,.5) {};
    \node[vertex] (b5) at (.5,-.5) {};
    \node[vertex] (ab6) at (1,0) {};
    \foreach \i [remember=\i as \lasti] in {2,...,5} {
      \draw[solid edge] (a\i) -- (b\i) {};
      \ifnum\i>2
      \draw[solid edge] (a\lasti) -- (a\i) {};
      \draw[solid edge] (b\lasti) -- (b\i) {};
      \fi
    }
    \draw[solid edge] (ab1) -- (a2) {};
    \draw[solid edge] (ab1) -- (b2) {};
    \draw[solid edge] (a5) -- (ab6) {};
    \draw[solid edge] (b5) -- (ab6) {};
    \draw[dashed edge, red] (ab6) -- (ab1);
  \end{scope}
  \begin{scope}[shift={(2,0)}]
    \node at (0,0) {$\longrightarrow$};
  \end{scope}
  \begin{scope}[shift={(4,0)}]
    \node[vertex] (ab1) at (60:.75) {};
    \node[vertex] (ab6) at (0:.75) {};
    \foreach \i in {2,...,5} {
      \draw[solid edge] (60*\i:.5) node[vertex] (a\i) {} -- (60*\i:1) node[vertex] (b\i) {};
    }
    \foreach \i [remember=\i as \lasti] in {2,...,5} {
      \ifnum\i>2
      \draw[solid edge] (a\lasti) -- (a\i) {};
      \draw[solid edge] (b\lasti) -- (b\i) {};
      \fi
    }
      \draw[solid edge] (ab1) -- (a2) {};
      \draw[solid edge] (ab1) -- (b2) {};
      \draw[solid edge] (a5) -- (ab6) {};
      \draw[solid edge] (b5) -- (ab6) {};
    \draw[dashed edge, green] (ab6) -- (ab1);
  \end{scope}
\end{tikzpicture}
 	\caption{\label{fig:lowerbound1} $W_7-e$ and $G_6-e$ may each
          need $3$ separation flips to admit $e$.}
  \end{figure}
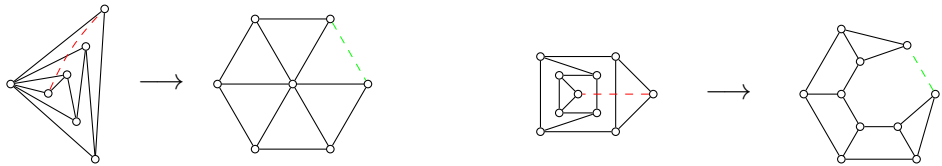

  The wheel graph has a vertex of high degree, but the result still
  holds for cubic graphs (see
  Figure~\ref{fig:lowerbound1}, right). Start with the circular ladder graph $\mathit{CL}_k$
  with $2k$ vertices, contract two consecutive rungs, remove one of the
  duplicate edges, and call the other edge $e$.  The resulting graph $G_k$
  has $n=2k-2$ vertices, and $G_k-e$ has an embedding that
  requires $k-3=n/2-2$ flips before $e$ can be added.
\end{proof}
In other words, if your current embedding is bad, adding an edge may
require $\Omega(n)$ flips.

\begin{proof}[Proof of Theorem~\ref{thm:intro-lowerbound}]
Let $h\geq 2$ be given. 
Take two identical complete binary trees of height $h$, and
  identify the corresponding leaves.  Let $L$ be the set of these
  joined leaves.  The resulting graph $G_h$ has $n=3\cdot 2^h-2$ vertices,
  and for any pair of distinct $x,y\in L$, $G_h\cup(x,y)$ is planar
  (See Figure~\ref{fig:lowerbound2}).

  \begin{sloppypar}
  However, in any plane embedding $H\in\Emb(G_h)$ there is a pair of
  leaves $x,y\in L$ such that adding $(x,y)$ to $H$ requires
  $h-1=\log_2((n+2)/3)-1$ flips.  In other words,
  $\min_{H'\in\Emb(G_h\cup(x,y))}\flipdist(H,H')=\log_2((n+2)/3)$.\qedhere
  \end{sloppypar}
\end{proof}
\vspace{-.2em}
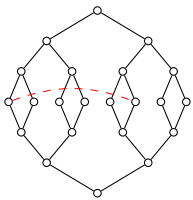
\begin{figure}[htb]
	\begin{minipage}[t]{.4\textwidth}
		\begin{figure}[H]
			\centering \begin{tikzpicture}[
    vertex/.style={
      draw,
      circle,
      fill=white,
      minimum size=1mm,
      inner sep=0pt,
    },
    edge/.style={
      draw,
    },
    dashed edge/.style={
      edge,
      very thin,
      dashed
    },
    solid edge/.style={
      edge,
    }
  ]
  \begin{scope}[scale=1.35]
    \node[vertex] (a1) at (0,.9) {};
    \node[vertex] (b1) at (0,-.9) {};
    \node[vertex] (a2) at (-.5,.6) {};
    \node[vertex] (b2) at (-.5,-.6) {};
    \node[vertex] (a3) at (.5,.6) {};
    \node[vertex] (b3) at (.5,-.6) {};
    \node[vertex] (a4) at (-.75,.3) {};
    \node[vertex] (b4) at (-.75,-.3) {};
    \node[vertex] (a5) at (-.25,.3) {};
    \node[vertex] (b5) at (-.25,-.3) {};
    \node[vertex] (a6) at (.25,.3) {};
    \node[vertex] (b6) at (.25,-.3) {};
    \node[vertex] (a7) at (.75,.3) {};
    \node[vertex] (b7) at (.75,-.3) {};
    \node[vertex] (l8) at (-.875,0) {};
    \node[vertex] (l9) at (-.625,0) {};
    \node[vertex] (l10) at (-.375,0) {};
    \node[vertex] (l11) at (-.125,0) {};
    \node[vertex] (l12) at (.125,0) {};
    \node[vertex] (l13) at (.375,0) {};
    \node[vertex] (l14) at (.625,0) {};
    \node[vertex] (l15) at (.875,0) {};

    \foreach \i in {1,...,7} {
      \pgfmathtruncatemacro{\ileft}{2*\i}
      \pgfmathtruncatemacro{\iright}{2*\i+1}
      \ifnum\i<4
      \draw[solid edge] (a\i) -- (a\ileft) {};
      \draw[solid edge] (a\i) -- (a\iright) {};
      \draw[solid edge] (b\i) -- (b\ileft) {};
      \draw[solid edge] (b\i) -- (b\iright) {};
      \else
      \draw[solid edge] (a\i) -- (l\ileft) {};
      \draw[solid edge] (a\i) -- (l\iright) {};
      \draw[solid edge] (b\i) -- (l\ileft) {};
      \draw[solid edge] (b\i) -- (l\iright) {};
      \fi
    }
    \draw[dashed edge, red] (l8) to[bend left=20] (l13);
  \end{scope}
\end{tikzpicture}
 			\caption{\label{fig:lowerbound2} $G_3$ may need $2$ separation
				flips to admit $e$.}
		\end{figure}
	\end{minipage}\hfill 
	\begin{minipage}[t]{.5\textwidth}
		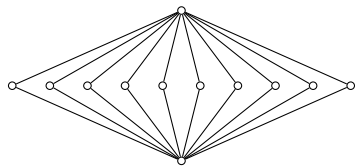
\begin{figure}[H]
			\centering
			\begin{tikzpicture}[
    vertex/.style={
      draw,
      circle,
      fill=white,
      minimum size=1mm,
      inner sep=0pt,
    },
    edge/.style={
      draw,
    },
    dashed edge/.style={
      edge,
      very thin,
      dashed
    },
    solid edge/.style={
      edge,
    }
  ]
  \begin{scope}
    \node[vertex] (a) at (0,1) {};
    \node[vertex] (b) at (0,-1) {};
    \foreach \i in {0,...,9}{
      \pgfmathsetmacro{\x}{.5*\i-2.25}
      \node[vertex] (v\i) at (\x,0) {};
      \draw[solid edge] (a) -- (v\i) -- (b);
    };
  \end{scope}
\end{tikzpicture}
 			\caption{\label{fig:manygood} The graph $K_{2,k}$ has $n=k+2$ vertices and $(k-1)!=(n-3)!\in 2^{\Omega(n\log n)}$ distinct embeddings, all of which are good.}
		\end{figure}
	\end{minipage} 
\end{figure}

In other words, there are planar graphs where for any plane embedding
there is some edge insertion that preserve the planarity of the
underlying graph but where $\Omega(\log n)$ flips are required to get
to an embedding that admits the edge.

\section{Biased dynamic trees}\label{sec:biaseddyntree}
This section is dedicated to extending existing data structures for dynamic trees such that they can handle vertex weights. We need these to exist in order to prove our combinatorial results, and we need to be able to maintain them efficiently in order to prove our algorithmic results. 

Let $T$ be a rooted tree, and suppose we mark each edge in $T$ as
either \emph{solid} or \emph{dashed} in such a way that each node has
at most one solid child edge.  This partitions the nodes of $T$ into
disjoint \emph{solid paths} (a node incident to no solid edges is a
path by itself), and we call such a collection of paths a
\emph{(naive) path decomposition} of $T$.

In their seminal paper on dynamic trees, Sleator and
Tarjan~\cite{DBLP:journals/jcss/SleatorT83} showed how to maintain a
so-called \emph{heavy path decomposition\footnote{A.k.a. heavy-light
    decomposition}} of a tree in worst
case $\OO(\log n)$ time per operation.
Here we extend their definition in the obvious way to trees where the
nodes have nonnegative integer weights (with some restrictions), and
the heavy paths are defined with respect to these weights.
We will support the following ``standard'' update operations:
\begin{description}
\item[link$(v,u)$:] Where $v$ is a root and $u$ is in some other
  tree. Link the two trees by adding the edge $(v,u)$, making $v$ a
  child of $u$.
\item[cut$(v)$:] Where $v$ is not a tree root. Delete the edge
  $(v,p(v))$ between $v$ and its parent, splitting the tree in two.
\item[evert$(v)$:] Change the root of the tree containing node $v$ to
  be $v$.
\item[reweight$(v,i)$:] Where $v$ is a node and $i$ is a nonnegative
  integer (with some restrictions on when it can be zero). Change the
  weight of $v$ to $i$.
\end{description}
Our goal is to have the operation times and the number of changes to
the path decomposition during each of these operations be
\emph{biased} with respect to the weights of the relevant nodes. Informally, we can describe the result as follows:

\begin{lemma}\label{lem:intro-biaseddyntree}
	There is a data structure for node-weighted dynamic trees that supports link$(v,u)$, cut$(v)$, evert$(v)$, and reweight$(v,w_{\textnormal{new}})$ in time proportional to $\log(W/\max\{w,1\})$, where 
	$w=\min\{w(v),w_{\textnormal{new}}\}$ is the (current or subsequent) weight of the node $v$, and $W = \max\{w(T),w(T_{\textnormal{new}})\}$ is the (current or subsequent) sum of node weights in the tree $T \ni v$.
\end{lemma}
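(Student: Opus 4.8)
The plan is to follow Sleator and Tarjan's dynamic-trees framework, but to replace the "heavy-light" threshold (a child is heavy iff its subtree has more than half the parent's subtree size) with a weighted analogue, and to represent each solid path by a \emph{biased} binary search tree à la Bent, Sleator and Tarjan rather than by an ordinary balanced BST. Concretely, each node $v$ carries its own weight $w(v)$ plus the total weight of all subtrees hanging off it via dashed edges; call this its \emph{dashed weight} $d(v)$. A solid child $c$ of $v$ is declared \emph{heavy} exactly when the total weight of the subtree rooted at $c$ is more than half the total weight of the subtree rooted at $v$; as in the unweighted case at most one child qualifies, the heavy edges form vertex-disjoint paths, and — because weight at least halves every time one follows a dashed (light) edge toward the root — every root-to-node path crosses $O(\log W)$ dashed edges, where $W=w(T)$.

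First I would set up the static structure: given the heavy-path decomposition, store each heavy path $P$ as a biased BST keyed by depth, where the weight of the BST-leaf for a node $v$ is $w(v)+d(v)$ restricted to the off-path (dashed) subtrees, so that the BST weight of the whole path equals the weight of the subtree rooted at the top of $P$. By the biased-BST depth bound, the node for $v$ then sits at depth $O(1+\log\frac{w(\text{subtree at top of }P)}{w(v)+d(v)})$ inside its path tree, and concatenating across the $O(\log W)$ distinct heavy paths on the way to the root telescopes to an $O(\log\frac{W}{\max\{w(v),1\}})$ bound on the cost of \texttt{expose}/\texttt{access}. This is the crux: one must verify that the product of the "local" ratios along the $O(\log W)$ heavy paths telescopes, which works precisely because the weight at the top of each successive heavy path is at most the dashed weight seen one level down, i.e. at most half the subtree weight at the point where we leave the previous path. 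I would isolate this as the key structural claim and prove it by a direct charging argument over the light edges.

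Next I would implement the four updates in terms of \texttt{expose}. For \texttt{link}$(v,u)$ and \texttt{cut}$(v)$: \texttt{expose} the relevant nodes, then split or join the biased BSTs; the biased-BST split/join operations themselves run in time proportional to the log of the ratio of total weight to the weight of the split point, which is absorbed into the claimed bound. For \texttt{reweight}$(v,i)$: \texttt{expose} $v$, change its weight, and walk up toward the root fixing any heavy/light status that flipped — the point (exactly as in Sleator–Tarjan) is that a single reweight changes the heavy child at $O(\log W)$ ancestors at most, and each such change is a constant number of biased-BST splits/joins whose cost is again biased by the weight at that ancestor; summing over ancestors telescopes to $O(\log\frac{W}{\max\{w,1\}})$ with $w=\min\{w(v),i\}$ and $W=\max\{w(T),w(T_{\text{new}})\}$. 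For \texttt{evert}$(v)$: as usual, \texttt{expose} the path from the old root, then reverse orientation bits lazily on the path BST and re-root, costing $O(\log W)$.

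I expect the main obstacle to be the bias/telescoping bookkeeping rather than any single hard idea: one has to choose the per-node BST weights so that (a) the subtree-weight of the top of a heavy path equals the total weight of its biased BST, (b) the $\log$-ratios along consecutive heavy paths compose additively into the claimed single $\log$, and (c) the restrictions on when a weight may be zero (presumably: an internal node of a solid path, or a node with a solid child, must have positive weight, so that $\max\{w,1\}$ is the right normalizer and no zero-weight biased-BST leaf causes an unbounded depth) are exactly what the composition needs. A secondary, purely mechanical obstacle is re-deriving the biased-BST split/join cost bounds in the exact form needed; I would cite \cite{DBLP:journals/siamcomp/BentST85} for those and only state the interface. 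The correctness of \texttt{link}/\texttt{cut}/\texttt{evert} as tree operations is identical to \cite{DBLP:journals/jcss/SleatorT83} and needs no new argument; only the timing analysis is new, and it reduces entirely to the structural claim above plus the biased-BST depth bound.
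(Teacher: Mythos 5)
Your proposal matches the paper's proof almost exactly: both define heavy edges by the threshold $w(T_c)>\tfrac{1}{2}w(T_p)$, represent each solid path with a biased search tree in the sense of Bent--Sleator--Tarjan~\cite{DBLP:journals/siamcomp/BentST85} whose leaf weights equal $w(v)$ plus the total weight hanging off $v$ via dashed edges, bound the light depth of a node by $\log_2(W/w(T_v))$ (Lemma~\ref{lem:dyntree-lightdepth}), and telescope the per-path access costs along the $\OO(\log W)$ heavy paths (Lemma~\ref{lem:dyntree-internal-telescope}) to get Lemma~\ref{lem:dyntree-internal-optimes} and hence the claimed bound. The one place you guess rather than derive is the zero-weight treatment: the paper does not impose the positivity restriction you suggest, but instead works with $k$-positive weight functions (Definition~\ref{def:k-positive}: every leaf has positive weight, and every path of $k$ degree-2 nodes contains a positive-weight node), substitutes $1/k$ for each zero weight inside the biased search trees --- which at most doubles the total BST weight, hence costs only $\OO(\log k)=\OO(1)$ extra depth for constant $k$ --- and additionally keeps the two endpoints of each solid path out of the per-path biased tree to make reverse and reweight-root cheap.
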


However, in order to state our results precisely we need to dive a bit
deeper into how these dynamic trees work.

\subsection{Definitions}

Given a family of trees with nonnegative node weights, let $w(v)$ denote the
weight of node $v$, and let $w(T_v)$ denote the total weight of the
nodes in the subtree $T_v$ rooted in $v$ of the tree $T$ that contains
$v$.

\begin{definition}
  Given a child $c$ of a node $p$, we call $c$ and its parent edge
  $(c,p)$ \emph{heavy} if and only if
  $w(T_p)<2w(T_c)$, and \emph{light} otherwise.
\end{definition}
\begin{lemma}
  Each node has at most one heavy child.
\end{lemma}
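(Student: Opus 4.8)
The plan is a one-line counting argument by contradiction. Suppose, for contradiction, that some node $p$ has two distinct heavy children $c_1$ and $c_2$. By the definition of heavy, this means $w(T_p) < 2w(T_{c_1})$ and $w(T_p) < 2w(T_{c_2})$.

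Next I would add these two inequalities and divide by $2$, obtaining $w(T_p) < w(T_{c_1}) + w(T_{c_2})$. The key structural observation is that $T_{c_1}$ and $T_{c_2}$ are vertex-disjoint subtrees, both entirely contained in $T_p$ and both avoiding $p$ itself; hence, since node weights are nonnegative, $w(T_{c_1}) + w(T_{c_2}) \le w(T_p)$. Combining the two displays gives $w(T_p) < w(T_p)$, a contradiction. Therefore $p$ has at most one heavy child.

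There is essentially no obstacle here: the only thing one must be slightly careful about is that the argument uses nonnegativity of the weights (so that the disjoint-subtree bound $w(T_{c_1}) + w(T_{c_2}) \le w(T_p)$ holds even after discarding $p$'s own weight and the weights of any other children), which is guaranteed by the standing assumption that all node weights are nonnegative. No case analysis on whether weights are zero is needed, since the inequalities are strict and the bound from disjointness is unconditional.
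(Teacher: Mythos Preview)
Your proof is correct and essentially identical to the paper's: both assume two heavy children of $p$, use the defining inequality $w(T_p) < 2w(T_{c_i})$ for each, and derive the contradiction $w(T_p) < w(T_{c_1}) + w(T_{c_2}) \le w(T_p)$ from disjointness of the subtrees and nonnegativity of $w(p)$.
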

\begin{proof}
  Suppose for contradiction that $h_1$ and $h_2$ are both heavy
  children of $p$, then
  \begin{align*}
    w(T_p) - w(p) 
    =
    \sum_{c\in\children(p)} w(T_c)
    \geq
    w(T_{h_1}) + w(T_{h_2})
    >
    2\cdot\frac{1}{2}w(T_p)
    = w(T_p)
  \end{align*}
  which is a contradition since $w(p)\ge 0$.
\end{proof}

A \emph{(proper) heavy path decomposition} is a path decomposition
where an edge is solid if and only if it is heavy.  It is sometimes
useful to deviate slightly from this exact decomposition, in
particular during certain updates.  One particularly useful extension
that warrants its own name is the \emph{$u$-exposed heavy path
  decomposition}, which for any node $u$ is defined as the unique path
decomposition such that an edge $e=(v,p(v))$ is solid if and only if
either
\begin{itemize}
\item $e\in r\cdots u$, or
\item $p(v)\not\in r\cdots u$ and $e$ is heavy.
\end{itemize}
Note that a proper heavy path decomposition is always a $u$-exposed
heavy path decomposition for some choice of $u$, but not vice versa.

Given any path decomposition, for each node $v$, let $c(v)$ denote its
solid child (or $\bot$ if $v$ has no solid child), and let $h(v)$
denote its heavy child (or $\bot$ if $v$ has no heavy child).

Define the \emph{light depth} $\ell(v)$ of a node $v$ as the number of light
edges on the $r\cdots v$ path.  A core property of the heavy/light
definition is the following lemma.
\begin{lemma}[Light depth]\label{lem:dyntree-lightdepth}
  In any tree $T_r$ with nonnegative node weights, a node $v$ with
  $w(T_v)>0$ has \emph{light depth}
  $\ell(v)\leq \floor*{\log_2\frac{w(T_r)}{w(T_v)}}$.
\end{lemma}
\begin{proof}
  Let $\ell=\ell(v)$, and let $l_0, l_1, \ldots, l_{\ell-1}$ be the nodes
  on the path from $v$ to the root that are light children, and let $l_\ell=r$.
  Now, for $i\in\set{0,\ldots,\ell-1}$,
  \begin{align*}
    w(T_{l_{i+1}}) \geq w(T_{p(l_i)}) \geq 2 w(T_{l_i})
  \end{align*}
  Thus,
  \begin{align*}
    w(T_r) &= w(T_{l_\ell})
    \geq 2^\ell w(T_{l_0})
    \geq 2^\ell w(T_v)
  \end{align*}
  And therefore $\ell \leq \log_2\frac{w(T_r)}{w(T_v)}$.
\end{proof}
Similarly, we can define the \emph{dashed depth} of a node $v$ as the number of dashed edges on the $r\cdots v$ path.
\begin{lemma}[Dashed depth]\todo{Do we use/need this?}
  In a $u$-exposed heavy path decomposition, the dashed depth of a
  node is at most $1$ more than the light depth.
\end{lemma}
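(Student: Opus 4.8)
The plan is to split the root-to-$v$ path $r\cdots v$ at the point where it leaves the exposed path $r\cdots u$, and count dashed edges on the two pieces separately. Let $w$ be the deepest node lying on both $r\cdots u$ and $r\cdots v$ (i.e. the lowest common ancestor of $u$ and $v$; this is well-defined since $r$ lies on both). Then $r\cdots v$ decomposes as $r\cdots w$ followed by $w\cdots v$, and a basic fact I would use is that the common nodes of $r\cdots u$ and $r\cdots v$ are exactly the nodes of $r\cdots w$.

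First I would observe that $r\cdots w$ is a sub-path of $r\cdots u$, so by the first clause of the definition of the $u$-exposed heavy path decomposition every edge of $r\cdots w$ is solid; this piece contributes no dashed edges, and if $w=v$ the dashed depth is $0$ and we are done. So assume $w\ne v$, let $e_1=(c_1,w)$ be the first edge of $w\cdots v$, and let $e_2,\dots,e_k$ (with $c_k=v$) be the remaining ones, where $c_i=p(c_{i+1})$. The two local facts that drive the argument are: (i) each $c_1,\dots,c_k$ is a proper descendant of $w$ on the $v$-side, hence by the common-nodes fact none of them lies on $r\cdots u$; and (ii) $e_1\notin r\cdots u$ (its child endpoint $c_1$ is not on that path), while $w\in r\cdots u$.

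From (i) and (ii), $e_1$ fails both clauses of the solid condition — it is not an edge of $r\cdots u$, and its parent endpoint $w$ \emph{is} on $r\cdots u$ — so $e_1$ is dashed; this is the single ``extra'' dashed edge allowed by the lemma. For $i\ge 2$ the parent endpoint $c_{i-1}$ of $e_i$ is not on $r\cdots u$ by (i), so the solid condition collapses to ``$e_i$ is heavy'', i.e. $e_i$ is dashed iff it is light. Hence the number of dashed edges on $r\cdots v$ equals $1+\#\{i\ge 2 : e_i\text{ is light}\}$, which is at most $1$ plus the total number of light edges on $r\cdots v$, namely $1+\ell(v)$.

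I do not anticipate a genuine obstacle; the only care needed is in the degenerate cases, all of which are absorbed by the case split on $w=v$ together with the observation that $w\in r\cdots u$ always holds: if $v=r$ or $v$ lies on $r\cdots u$ then $w=v$ and the dashed depth is $0$; and if $w=u$, the edge $e_1=(c_1,u)$ still has its parent endpoint $u\in r\cdots u$, so it is dashed and the same count applies. One should also fix once and for all the convention that $r\cdots u$ contains $r$ and $u$ as nodes and all edges between them, which is what makes clause (ii) correct.
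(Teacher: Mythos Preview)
Your proof is correct and follows essentially the same approach as the paper: both observe that on the segment of $r\cdots v$ off the exposed path $r\cdots u$, an edge is dashed iff it is light, except possibly for the single branching edge at $w$. Your version is in fact more careful than the paper's one-line argument, which asserts that when $\ell(v)\geq 1$ the dashed depth is at most $\ell(v)$ --- a claim that is slightly too strong (take $e_1$ heavy and some $e_i$ with $i\geq 2$ light), though the lemma statement itself, which is all that is needed, remains valid and is exactly what you prove.
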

\begin{proof}
  In a $u$-exposed heavy path decomposition, if the light depth of a
  node is $0$ then its dashed depth is at most $1$, otherwise its
  dashed depth is at most its light depth.
\end{proof}

\subsection{Standard Operations}

The main operations usually used for manipulating the path decomposition are:
\begin{description}
\item[expose$(v)$:] Where $v$ is a node in a proper heavy path
  decomposition of $T$.  Make it $v$-exposed instead.

\item[conceil$(r)$:] Where $r$ is the root of a $v$-exposed heavy path
  decomposition.  Make it proper instead.

\item[reverse$(r)$:] Where $r$ is the root of a $v$-exposed heavy path
  decomposition.  Make $v$ the root instead, and change it into an
  $r$-exposed heavy path decomposition (reverse every edge on $r\cdots
  v$).

\item[link-exposed$(v,u;\ldots)$] Where $v$ is the root of a proper
  heavy path decomposition and $u$ is a node in an $u$-exposed heavy
  path decomposition with root $r\neq v$. Combine $T_{v}$ and $T_{r}$
  into a single new tree by making $u$ the dashed parent of $v$.  The
  remaining parameters are used to initialize the new edge.  The
  resulting tree is $u$-exposed.

\item[cut-exposed$(v)$:] Where $v$ is a node in a $u$-exposed heavy
  path decomposition, where $u=p(v)$.  Split $T$ into two new trees
  $T_r$ and $T_v$ by deleting $(v,u)$.  Here $T_r$ is
  $u$-exposed, and $T_v$ is proper.

\end{description}
And to this we add
\begin{description}
\item[reweight-root$(r,i)$:] Where $r$ is the root of a $u$-exposed or
  proper tree $T_r$. Set the weight of $r$ to $i$.  If the original
  tree was proper, $r$ may gain or lose a solid (heavy) child to
  become proper again.  Otherwise the resulting tree is $u$-exposed
  (but not proper).

\end{description}
The expose and conceil operations are internally implemented in terms
of the following $2$ operations:
\begin{description}
\item[splice$(x)$:] Where $x$ is a light child of $y=p(x)$. If $(x,y)$
  is dashed, make the solid edge $(c(y),y)$ dashed and make $(x,y)$
  solid (set $c(y):=x$).

  An expose$(v)$ works by first making the solid child of $v$ (if any)
  dashed, and then calling splice on each light child on the path from
  $v$ to the root, in that order.

\item[slice$(x)$:] Where $x$ is a light child of $y=p(x)$. If $(x,y)$
  is solid, make $(x,y)$ dashed and make the dashed edge $(h(y),y)$
  solid (set $c(y):=h(y)$).

  A conceil$(r)$ works by first calling slice on each light child on
  the selected path $r\cdots v$ from $r$ to $v$ in that order. Then if
  $v$ has a heavy child, it makes the edge $(h(v),v)$ solid.
\end{description}
Note that during an expose or conceil, the path decomposition may
temporarily not be $u$-exposed for any $u$, according to our
definition.  We could ``fix'' this by reversing the order of
operations, but the number, order, and exact locations of these
operations are going to be important for the following (in particular for Lemma~\ref{lem:dyntree-internal-telescope} and~\ref{lem:dyntree-internal-optimes}):
\begin{lemma}\label{lem:dyntree-internal-changecount}
  The number of edge changes made by expose$(v)$ on a proper tree, or
  by conceil$(r)$ on a $v$-exposed tree is at most $2\ell(v)+1$.  Each
  of link-exposed$(v,r)$, cut-exposed$(v)$, and reweight-root$(v,i)$
  change at most one edge, and reverse$(r)$ changes no edges.
\end{lemma}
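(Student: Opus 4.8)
The plan is to bound each operation separately, tracking precisely which edges can change their solid/dashed status (or be inserted/deleted), with the primitives \textbf{splice} and \textbf{slice} as the unit of accounting. The key structural fact I would establish first is an invariant about $u$-exposed decompositions: an edge $(x,p(x))$ with $p(x)$ on the exposed path $r\cdots u$ is solid if and only if $(x,p(x))$ itself lies on $r\cdots u$, so its status is \emph{independent} of subtree weights; only edges whose parent is off the exposed path have weight-dependent (heavy/light) status, and such edges sit deep inside hanging subtrees. This immediately makes the non-path operations cheap to analyse.

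For expose$(v)$ on a proper tree, recall it first makes the solid child of $v$ (if any) dashed, which is at most $1$ change and touches the edge $(h(v),v)$, which goes \emph{down} from $v$ and is disjoint from the path $v\cdots r$. It then calls splice on each light child lying on the fixed tree path $v\cdots r$, from $v$ upward; the number of these light children is exactly the light depth $\ell(v)$, since each of the $\ell(v)$ light edges on $r\cdots v$ contributes its child endpoint. Each call splice$(x)$ with $y=p(x)$ touches at most two edges: it makes $(c(y),y)$ dashed (when $y$ has a solid child) and makes $(x,y)$ solid. As the underlying tree is fixed throughout, this yields at most $1+2\ell(v)$ changes. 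The analysis of conceil$(r)$ on a $v$-exposed tree is symmetric: slice is called on each of the $\ell(v)$ light children on $r\cdots v$ (all currently solid, being on the exposed path), each changing at most two edges, after which a final step makes $(h(v),v)$ solid if $v$ has a heavy child — at most $1$ more change, again on an edge disjoint from $r\cdots v$. Hence at most $2\ell(v)+1$.

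For link-exposed$(v,u;\ldots)$, the operation adds the single dashed edge $(v,u)$ (it is dashed because it hangs below $u$, the endpoint of the resulting exposed path). No other edge changes: a dashed parent edge neither displaces $u$'s solid child nor gives $v$ a solid child, and since the exposed path $r\cdots u$ lies strictly above $v$, the restriction of the new decomposition to $T_v$ is exactly the proper decomposition $T_v$ already carried; all subtree-weight changes above $u$ are invisible by the invariant. For cut-exposed$(v)$ with $u=p(v)$, the edge $(v,u)$ is dashed for the same reason, and deleting it is the only change. For reweight-root$(r,i)$: if $T_r$ was proper, altering $w(r)$ changes only $w(T_r)$ and hence only the heavy/light status of $r$'s children; by the ``at most one heavy child'' lemma together with monotonicity — increasing $w(r)$ can only demote $r$'s unique heavy child, decreasing it can only promote at most one child — at most one edge at $r$ changes status. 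If $T_r$ was $u$-exposed, $r$ is the first node of the exposed path, so its incident edges are weight-insensitive and nothing changes. Finally, reverse$(r)$ only reverses the orientation of the edges on $r\cdots v$, which are solid before and after, and leaves everything off that path untouched (hanging subtrees, and the weight-dependent status inside them, are unaffected by re-rooting along the exposed path), so no edge changes.

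The step I expect to require the most care is the expose/conceil accounting: one must check that the splice (resp. slice) calls range over exactly the $\ell(v)$ light children of the \emph{fixed} tree path, that each call touches at most two edges even as the decomposition evolves during the sequence (in particular that an earlier splice cannot cause a later splice to touch an additional edge), and that the single extra change at the start of expose — or the end of conceil — is on an edge disjoint from those handled by the splices/slices, so it is counted exactly once. Once the invariant about dashed edges hanging below an exposed path is in place, the remaining four operations follow directly.
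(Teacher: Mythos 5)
Your proof is correct and takes essentially the same approach as the paper, which simply says the bound "follows directly from the description of splice and slice above." Your write-up makes explicit what the paper leaves implicit — in particular that the exposedness definition makes the solid/dashed status of edges hanging off the exposed path weight-independent, which is why link-exposed, cut-exposed, reweight-root, and reverse cannot cascade — but the underlying accounting ($1$ initial change plus $\ell(v)$ splice/slice calls at $\leq 2$ changes each) is identical.
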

\begin{proof}
  Follows directly from the description of splice and slice above.
\end{proof}
\begin{lemma}\label{lem:dyntree-internal-telescope}
  If the cost of changing an edge $(v,p(v))$ where $w(T_v)>0$ from
  solid to dashed or vice versa is $\OO\paren*{ 1 +
    \log\frac{w(T_{p(v)})}{w(T_v)} }$, then the cost of an expose$(v)$
  or conceil$(r)$ where $w(T_v)>0$ is $\OO\paren*{ 1 +
    \log\frac{W}{w(T_v)} }$, where $W$ is the sum of the weights in
  the tree.
\end{lemma}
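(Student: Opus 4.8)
The plan is to use the telescoping structure of the light edges along the root path, exactly as in the proof of Lemma~\ref{lem:dyntree-internal-changecount} combined with Lemma~\ref{lem:dyntree-lightdepth}. First I would recall that, by the description of \textbf{expose} and \textbf{conceil} in terms of \textbf{splice} and \textbf{slice}, every edge change performed by an \textbf{expose}$(v)$ or \textbf{conceil}$(r)$ is either the edge $(v,c(v))$ (or $(v,h(v))$) at the bottom, or an edge of the form $(x,p(x))$ where $x$ is a \emph{light} child lying on the path $r\cdots v$, or the partner solid/heavy edge $(c(y),y)$ resp. $(h(y),y)$ that gets swapped out at such a light child $y=p(x)$. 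In all cases the changed edge $(v',p(v'))$ satisfies $w(T_{p(v')}) \le w(T_r) = W$ and $p(v')$ lies on $r\cdots v$. Crucially, an edge is only swapped at a light node $y$ on $r\cdots v$, and for such a $y$ we have $w(T_y)>0$ (since $w(T_v)>0$ and $v$ is in $T_y$), so the cost hypothesis applies to every changed edge.

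Next I would bound the total cost by summing the per-edge cost over the at most $2\ell(v)+1$ changed edges (Lemma~\ref{lem:dyntree-internal-changecount}). Writing $l_0,\dots,l_{\ell-1}$ for the light children on $r\cdots v$ and $l_\ell = r$, each changed edge is incident (as a child edge) to one of $v, l_0,\dots,l_{\ell-1}$, and the solid/heavy partner swapped out at $l_i$ is a child of $l_i$ whose subtree weight is at most $w(T_{l_i})$. Hence the cost of \textbf{expose}$(v)$ is
\begin{align*}
  \OO\paren*{ 1 + \log\frac{w(T_{p(v)})}{w(T_v)} }
  + \sum_{i=0}^{\ell-1} \OO\paren*{ 1 + \log\frac{w(T_{p(l_i)})}{w(T_{l_i})} }
  + \sum_{i=0}^{\ell-1} \OO\paren*{ 1 + \log\frac{w(T_{l_i})}{w(T_{l_i'})} },
\end{align*}
where $l_i'$ is the partner child at $l_i$; since $w(T_{l_i'}) \ge w(T_v)$ whenever $l_i'$ lies above $v$ on the exposed path (and otherwise $w(T_{l_i'})>0$ and its log-ratio against $w(T_{l_i})$ is still dominated below) the last sum is $\OO(\ell) + \OO(\log (W/w(T_v)))$. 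The first two groups telescope: using $w(T_{p(l_i)}) \ge w(T_{l_{i+1}})$ is not needed directly; instead, since $w(T_{p(l_i)}) \le w(T_r) = W$ for every $i$ and $2^{i} w(T_v) \le 2^i w(T_{l_0}) \le w(T_{l_i})$ by the light-depth argument (Lemma~\ref{lem:dyntree-lightdepth} applied to the sub-path), we get $\log\frac{w(T_{p(l_i)})}{w(T_{l_i})} \le \log\frac{W}{2^i w(T_v)} = \log\frac{W}{w(T_v)} - i$, and summing over $i = 0,\dots,\ell-1$ gives $\OO(\ell \cdot \log\frac{W}{w(T_v)})$ naively — so to get the sharper $\OO(1 + \log\frac{W}{w(T_v)})$ bound I would instead telescope the \emph{numerators against the next subtree}: a light edge $(l_i, p(l_i))$ changed from solid to dashed (or back) contributes $\log\frac{w(T_{p(l_i)})}{w(T_{l_i})}$, and since $p(l_i)$ is on the path between $l_i$ and $l_{i+1}$ (inclusive of $l_{i+1}$ itself in the degenerate case) we have $w(T_{p(l_i)}) \le w(T_{l_{i+1}})$, so $\sum_i \log\frac{w(T_{p(l_i)})}{w(T_{l_i})} \le \sum_i \log\frac{w(T_{l_{i+1}})}{w(T_{l_i})} = \log\frac{w(T_{l_\ell})}{w(T_{l_0})} \le \log\frac{W}{w(T_v)}$, a genuine telescoping sum. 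The additive $\OO(1)$ per changed edge contributes $\OO(\ell) = \OO(\log\frac{W}{w(T_v)})$ by Lemma~\ref{lem:dyntree-lightdepth}, and the bottom edge $(v,c(v))$ contributes $\OO(1 + \log\frac{w(T_v)}{w(T_{c(v)})}) = \OO(1 + \log\frac{W}{w(T_v)})$ since $c(v)$ is heavy so $w(T_v) < 2 w(T_{c(v)})$, i.e. that log-ratio is $\OO(1)$. Combining, the total is $\OO(1 + \log\frac{W}{w(T_v)})$.

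The main obstacle, and the point that needs the most care, is handling the \emph{swapped-out partner edges} at each light node $l_i$: these are edges $(c(l_i), l_i)$ or $(h(l_i), l_i)$ whose child subtree weight I have no a~priori lower bound on in terms of $w(T_v)$, so the cost $\log\frac{w(T_{l_i})}{w(T_{c(l_i)})}$ could be large for a single such edge. The resolution is that $c(l_i)$ (resp. $h(l_i)$) is itself a child of $l_i$, hence $w(T_{c(l_i)}) < 2\, w(T_{l_i})$ only in the heavy case — in the dashed/light case it can be much smaller. I would handle this by charging such a change using the fact that after the operation this edge becomes dashed/solid respectively and the relevant ratio for its \emph{new} state, or more simply by observing that the light-depth telescoping can be run with $l_i'$ inserted between $l_i$ and $l_{i+1}$: the path $l_i \to l_i' $ (down into the subtree) followed by $l_i \to l_{i+1}$ (up) still has all intermediate subtree weights between $w(T_v)$ and $W$ if $l_i'$ also lies on the eventual exposed path, and when it does not, $x = l_i$ was already light-to-light so the swapped solid edge is exactly the heavy edge we just displaced, which is $(h(l_i), l_i)$ with $w(T_{h(l_i)})$ satisfying $w(T_{l_i}) < 2 w(T_{h(l_i)})$ by heaviness, making that term $\OO(1)$. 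This case analysis — light-into-light versus light-onto-heavy swap — is the only genuinely fiddly part; once it is dispatched the whole bound collapses to the single telescoping sum above plus $\OO(\ell)$ additive terms, and the symmetric argument for \textbf{conceil}$(r)$ is identical with the roles of solid/dashed and $c(\cdot)/h(\cdot)$ exchanged.
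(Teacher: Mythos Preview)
Your proposal is essentially correct and follows the same telescoping strategy as the paper, but you make the argument considerably harder for yourself than necessary.

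The ``main obstacle'' you identify --- bounding the cost of the swapped-out partner edges --- dissolves immediately once you observe that during \textbf{expose}$(v)$ you start from a \emph{proper} heavy path decomposition, so every edge that is currently solid is heavy. Hence every edge that goes from solid to dashed (the solid child of $v$, and each $c(p(l_i))$ that gets displaced by a splice) is a \emph{heavy} edge, and by definition of heavy $w(T_{p(v')}) < 2w(T_{v'})$ for each such edge $(v',p(v'))$, giving cost $\OO(1)$ per edge and $\OO(\ell(v)+1)$ in total. This is exactly what the paper does in the first two sentences of its proof. Your case analysis (``light-into-light versus light-onto-heavy swap'') and the attempt to argue via ``$w(T_{l_i'}) \ge w(T_v)$ whenever $l_i'$ lies above $v$ on the exposed path'' are unnecessary --- in fact the swapped-out child is never on the $r\cdots v$ path, so that case is vacuous.

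Once the solid$\to$dashed edges are handled as $\OO(1)$ each, all that remains is your telescoping sum over the $\ell(v)$ light edges that go dashed$\to$solid, which you do correctly: $\sum_i \log\frac{w(T_{p(l_i)})}{w(T_{l_i})} \le \sum_i \log\frac{w(T_{l_{i+1}})}{w(T_{l_i})} = \log\frac{W}{w(T_{l_0})} \le \log\frac{W}{w(T_v)}$, plus the additive $\OO(\ell(v))$ bounded via Lemma~\ref{lem:dyntree-lightdepth}. The \textbf{conceil}$(r)$ case is symmetric (there the dashed$\to$solid edges are the heavy ones, and the solid$\to$dashed light edges on $r\cdots v$ telescope).
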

\begin{proof}
  We consider only expose$(v)$, as conceil$(r)$ is completely
  symmentric.  If $v$ is a heavy child, then
  $\OO\paren*{1+\log\frac{w(T_{p(v)})}{w(T_v)}}=\OO(1)$, so the total
  cost incurred by expose$(v)$ for changing the at most $\ell(v)+1$
  (heavy) edges from solid to dashed is $\OO(\ell(v)+1)$. All the
  (light) edges that change from dashed to solid lie on the $r\ldots
  v$ path.  In particular, let $\ell=\ell(v)$ and let
  $(v_1,p(v_1)),\ldots,(v_\ell,p(v_\ell))$ be the edges that change from
  dashed to solid, and let $v_{\ell+1}=r$, then
  \begin{align*}
    0 < w(T_v) \leq w(T_{v_1}) < w(T_{p(v_1)}) \leq \cdots \leq
    w(T_{v_\ell}) < w(T_{p(v_\ell)}) \leq w(T_{v_{\ell+1}}) = W
  \end{align*}
  so
  \begin{align*}
    0 \leq
    \sum_{i=1}^{\ell} \log\frac{w(T_{p(v_i)})}{w(T_{v_i})}
    \leq
    \sum_{i=1}^{\ell} \log\frac{w(T_{v_{i+1}})}{w(T_{v_i})}
    =
    \log\frac{w(T_{v_{\ell+1}})}{w(T_{v_1})}
    \leq
    \log\frac{W}{w(T_v)}
  \end{align*}
  And thus the cost of changing the dashed edges to solid is
  $\OO\paren*{\ell(v)+\log\frac{W}{w(T_v)}}$.  Adding the costs for
  the heavy and light edges and applying
  Lemma~\ref{lem:dyntree-lightdepth} then gives the desired result.
\end{proof}

\begin{definition}\label{def:k-positive}
  We say that a nonnegative weight function $w$ is
  \emph{$k$-positive} iff
  \begin{enumerate}
  \item Every node $v$ of degree $1$ has $w(v)>0$.
  \item Every path $u\cdots v$ consisting of at least $k$ nodes of
    degree $2$ in $T$ contains a node $c\in u\cdots v$ with $w(c)>0$.
  \end{enumerate}
\end{definition}
\begin{lemma}\label{lem:dyntree-internal-optimes}
  \sloppy We can implement biased dynamic trees with $k$-positive
  integer weights such that: expose$(v)$, conceil$(r)$,
  link-exposed$(v,u;\ldots)$, and cut-exposed$(v)$ each take worst
  case $\OO\paren*{ 1 + \log\frac{W}{\max\set{w(v),\frac{1}{k}}} }$
  time; and reverse$(r)$, and reweight-root$(r,i)$ takes worst case
  constant time.
\end{lemma}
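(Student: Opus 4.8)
The goal is to show that each of the four "path-restructuring" operations (expose, conceil, link-exposed, cut-exposed) runs in $\OO(1 + \log\frac{W}{\max\{w(v),1/k\}})$ time, while reverse and reweight-root run in constant time. Let me think about how to structure this.

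The plan is to combine the change-count bound of Lemma~\ref{lem:dyntree-internal-changecount}, the telescoping argument of Lemma~\ref{lem:dyntree-internal-telescope}, and the standard implementation of a naive path decomposition via biased search trees (one biased binary tree per solid path, with each path-node of the decomposition appearing as a leaf, weighted by $w(T_v)$ minus the weight of its solid subpath-tail, or more precisely so that a node's biased depth is $\OO(1 + \log\frac{w(\text{path})}{w(T_v) - w(T_{c(v)})})$). First I would fix the biased-search-tree representation and record the single invariant I need: accessing, splitting at, or joining at a path-node $v$ costs $\OO(1 + \log\frac{w(\text{solid path of }v)}{w(T_v)-w(T_{c(v)})})$, which is dominated by $\OO(1 + \log\frac{W}{w(T_v)-w(T_{c(v)})})$. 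This is exactly the Bent--Sleator--Tarjan biased search tree guarantee; I will cite~\cite{DBLP:journals/siamcomp/BentST85} and just verify the weight assignment is admissible (weights nonnegative, and the $k$-positivity of $w$ is what guarantees that within any run of weight-zero degree-2 nodes of length $\geq k$ there is a positive weight, so that $w(T_v)-w(T_{c(v)}) \geq 1/k$ never degenerates — more carefully, I will charge a block of up to $k$ consecutive zero-weight path-nodes to the next positive one, giving the $\max\{w(v),1/k\}$ in the bound).

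Next I would handle the two easy operations. reverse$(r)$ changes no edges (Lemma~\ref{lem:dyntree-internal-changecount}) and in the biased-tree representation amounts to flipping a global "reversed" bit on the single biased tree for the exposed path $r\cdots v$, hence $\OO(1)$. reweight-root$(r,i)$ changes at most one edge and only touches $r$, which sits at one end of its solid path; updating its leaf weight and re-balancing locally at the end of a biased search tree is $\OO(1)$ amortized and can be made worst-case $\OO(1)$ since only the extreme leaf changes — I'd point to the "finger at the end" property of biased search trees. For link-exposed and cut-exposed: each changes at most one edge, and the work is one join (resp. one split) of biased trees at the node $v$, plus an access to $v$; by the invariant this is $\OO(1 + \log\frac{W}{w(T_v)-w(T_{c(v)})})$, and since after (or before) the operation $v$ becomes a path endpoint with $c(v)=\bot$ on the relevant side, $w(T_v)-w(T_{c(v)})$ is just $w(T_v)\geq w(v)$, giving the claimed bound with $\max\{w(v),1/k\}$ after the zero-weight charging.

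The main work, and the main obstacle, is expose$(v)$ (and symmetrically conceil). Here I would decompose the cost into (i) the biased-tree surgery cost and (ii) the number/placement of splice operations. By Lemma~\ref{lem:dyntree-internal-changecount} there are at most $2\ell(v)+1$ edge changes, and by the telescoping bound of Lemma~\ref{lem:dyntree-internal-telescope} — provided I can show each individual edge change $(u,p(u))$ costs $\OO(1+\log\frac{w(T_{p(u)})}{w(T_u)})$ — the total is $\OO(1 + \log\frac{W}{w(T_v)})$. So the crux reduces to: \emph{implement a single splice (and a single slice) so that changing edge $(u,p(u))$ costs $\OO(1+\log\frac{w(T_{p(u)})}{w(T_u)})$ worst-case.} A splice at $u$ does a constant number of biased-tree splits and joins, all located at $u$ or at $p(u)$'s current solid child, on biased trees whose total weight is at most $w(T_{p(u)})$, and the "finger" into these trees is at the node whose subtree weight is $\Theta(w(T_u))$; the biased search tree split/join bounds then give exactly $\OO(1+\log\frac{w(T_{p(u)})}{w(T_u)})$. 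I expect the fiddly part to be checking that the weight parameters fed to the biased trees stay consistent under these splits/joins (the leaf weight of a path-node $v$ must track $w(T_v)-w(T_{c(v)})$, which changes exactly when $c(v)$ changes, i.e. exactly at a splice/slice), and that reading off $w(T_u)$ and $w(T_{p(u)})$ to decide heavy/light can be done within budget using subtree-weight aggregates stored in the biased trees and a constant amount of auxiliary bookkeeping per path. Finally, I would assemble: expose costs $\sum \OO(1+\log\frac{w(T_{p(u_i)})}{w(T_{u_i})})$ over the changed edges, which telescopes to $\OO(1+\log\frac{W}{w(T_v)})$ by Lemma~\ref{lem:dyntree-internal-telescope}, and then replace $w(T_v)\geq w(v)$ and fold in the $1/k$ via the zero-weight charging to reach $\OO(1+\log\frac{W}{\max\{w(v),1/k\}})$; conceil is symmetric, completing the proof.
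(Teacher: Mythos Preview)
Your overall strategy---biased search trees per solid path, verifying each splice/slice costs $\OO(1+\log\frac{w(T_{p(u)})}{w(T_u)})$, and telescoping via Lemma~\ref{lem:dyntree-internal-telescope}---is exactly the paper's. The paper's handling of zero weights is slightly cleaner than your charging scheme: it simply replaces every zero weight by $\frac{1}{k}$ before feeding weights to the biased tree, observing that $k$-positivity guarantees this at most doubles the total weight of any solid path, so every biased-tree depth increases by at most a constant; the actual integer weights are tracked separately for determining light edges.

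There is, however, a gap in your treatment of reweight-root$(r,i)$. Your claim that changing an extreme leaf's weight in a globally biased search tree is worst-case $\OO(1)$ is not justified---a large weight change can force the leaf to a different depth, which is not a local rebalancing. The paper avoids this altogether by keeping the root $r$ and the bottom node $b$ of each solid path \emph{outside} the biased search tree (the biased tree spans only the interior nodes $u_1,\ldots,u_t$). With $r$ excluded, reweight-root$(r,i)$ never touches the biased tree and is trivially $\OO(1)$; reverse$(r)$ is handled by a lazy reversal bit on the biased tree plus swapping the two external endpoint pointers. This endpoint-exclusion trick is the one concrete idea you are missing.
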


The proof of Lemma~\ref{lem:dyntree-internal-optimes} is deferred to  Section~\ref{sec:implementation}.

\begin{lemma}[{precise formulation of Lemma~\ref{lem:intro-biaseddyntree}}]\label{lem:dyntree-external-optimes}
  \sloppy Our algorithm maintains heavy path decompositions for a
  dynamic collection of rooted trees with $k$-positive integer node
  weights, in worst case $\OO\paren*{ 1 +
    \log\frac{W}{\max\set{w(v),\frac{1}{k}}} }$ time per operation for
  link$(v,u)$ and cut$(v)$, worst case $\OO\paren*{ 1
    +\log\frac{W}{\max\set{w(v),\frac{1}{k}}} + \log\frac{W}{\max\set{w(r),\frac{1}{k}}}
  }$ time per operation for evert$(v)$, and worst case $\OO\paren*{ 1
    + \log\frac{W}{\max\set{w(v),\frac{1}{k}}} +
    \log\frac{W'}{\max\set{i,\frac{1}{k}}} }$ time per operation for
  reweight$(v,i)$, where $W$ and $W'$ are the sums of the weights in
  the trees involved before and after the operation, and $r$ is the
  root of the tree containing $v$ before the operation.
\end{lemma}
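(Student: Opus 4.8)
The plan is to derive Lemma~\ref{lem:dyntree-external-optimes} from the internal operations of Lemma~\ref{lem:dyntree-internal-optimes} by expressing each external operation as a short scripted sequence of internal operations, and then summing (telescoping) the cost bounds. I will treat the four external operations in turn.

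First, \textbf{cut$(v)$}: call expose$(v)$ to make the tree $v$-exposed (from a proper decomposition), then cut-exposed$(v)$ to remove the edge $(v,p(v))$. The first step costs $\OO(1+\log\frac{W}{\max\{w(v),1/k\}})$ by Lemma~\ref{lem:dyntree-internal-optimes}; after expose$(v)$ the edge $(v,p(v))$ is solid, so cut-exposed$(v)$ applies, and since $v$ is now the endpoint of the exposed path, the cut-exposed cost is also $\OO(1+\log\frac{W}{\max\{w(v),1/k\}})$. One subtlety: the two fragments left after the cut need to be restored to proper heavy path decompositions; by the definition of cut-exposed, $T_v$ comes out proper, and $T_r$ comes out $p(v)$-exposed, so a final conceil on $T_r$ is needed. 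Its cost is again $\OO(1+\log\frac{W}{w(v)})$ since $w(T_{p(v)})\le W$ and the exposed node is an ancestor of $v$; alternatively one can observe $p(v)$ is on the exposed path and bound by the light depth of $p(v)$, which is at most that of $v$. For \textbf{link$(v,u)$}: $v$ is a root of a proper tree; call expose$(u)$ on $u$'s tree, then link-exposed$(v,u;\ldots)$ making $u$ the dashed parent of $v$, then conceil to restore properness. The costs telescope to $\OO(1+\log\frac{W}{\max\{w(v),1/k\}})$, where here $W$ is the combined weight; I need to be slightly careful that the relevant "small weight" in the bound is $w(v)$, which holds because $v$ is the deepest newly-spliced node and the light edges traversed during the subsequent conceil all lie above $v$, so Lemma~\ref{lem:dyntree-internal-telescope}-style telescoping gives the claimed form.

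Next, \textbf{evert$(v)$}: expose$(v)$, then reverse$(r)$ to make $v$ the new root with an $r$-exposed decomposition, then conceil to make it proper again. The expose and conceil each contribute $\OO(1+\log\frac{W}{\max\{w(v),1/k\}})$, and reverse is $\OO(1)$; but after reverse the exposed path runs $v\cdots r$, so the final conceil is a conceil of an $r$-exposed tree and costs $\OO(1+\log\frac{W}{\max\{w(r),1/k\}})$, giving the stated two-term bound. Finally, \textbf{reweight$(v,i)$}: expose$(v)$ (making $v$ the endpoint of the exposed path and hence the root of no solid child below it is irrelevant), then reweight-root-style update at $v$; but reweight-root requires $v$ to be the \emph{tree} root. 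To handle a general node I first expose$(v)$, then evert$(v)$ (or equivalently reverse) so that $v$ becomes the tree root of a $r$-exposed tree, then reweight-root$(v,i)$ in $\OO(1)$, then conceil to restore properness of the (now reweighted) tree, then evert back to the original root $r$. Summing: the two everts and conceils contribute $\OO(1+\log\frac{W}{\max\{w(v),1/k\}})$ each using the \emph{old} weight, and after the reweight the final conceil/evert-back uses the \emph{new} configuration, contributing $\OO(1+\log\frac{W'}{\max\{i,1/k\}})$ where $W'$ is the new total weight; this matches the stated bound. (One must check $k$-positivity is preserved: changing a weight cannot create a degree-$1$ node with weight $0$ nor a long degree-$2$ run without a positive node unless the \emph{new} value $i$ would violate it, which is exactly the restriction "$i$ nonnegative with restrictions on when it can be zero" alluded to in the operation description.)

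The main obstacle I expect is not any single operation but the \emph{bookkeeping of which weight controls the logarithm} at each internal step. The internal bound in Lemma~\ref{lem:dyntree-internal-optimes} is stated in terms of $w(v)$ of the spliced/cut node, but in a scripted sequence the intermediate trees are $u$-exposed for shifting $u$, and I must argue that every expose/conceil in the script is invoked at a node whose weight is $\ge w(v)$ (for the "before" part of the bound) or $\ge i$ (for the "after" part), so that no term larger than the claimed $\log\frac{W}{\max\{w(v),1/k\}}$ (resp.\ with $W',i$) sneaks in. This is where the telescoping Lemma~\ref{lem:dyntree-internal-telescope} does the real work: along the $r\cdots v$ path the subtree weights are monotone, so conceil/expose costs anchored at an ancestor of $v$ are dominated by the cost anchored at $v$ itself. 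A secondary, purely technical point is confirming that each external operation leaves every tree in a \emph{proper} heavy path decomposition (the precondition for the next external call), which is exactly why each script ends with a conceil (and, for evert and reweight, an evert-back).
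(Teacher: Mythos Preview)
Your approach is exactly the paper's: its entire proof is the single sentence ``Follows directly from the obvious implementation of these operations in terms of those in Lemma~\ref{lem:dyntree-internal-optimes},'' and you have spelled out that obvious implementation in more detail than the authors did.

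Two small scripting corrections are worth noting. For cut$(v)$, the primitive cut-exposed$(v)$ requires the tree to be $p(v)$-exposed (not $v$-exposed), so the first step should be expose$(p(v))$ rather than expose$(v)$; this does not affect the bound since $p(v)$ is an ancestor of $v$. More substantively, your reweight script does a conceil and then an ``evert back to $r$,'' which incurs an extra $\log\frac{W'}{\max\{w(r),1/k\}}$ term that is \emph{not} in the stated bound. The fix is to avoid ever leaving the exposed path: after expose$(v)$ and reverse$(r)$ (so the tree is rooted at $v$ and $r$-exposed), do reweight-root$(v,i)$, then reverse$(v)$ again (so the tree is back to being rooted at $r$ and $v$-exposed), and only then conceil$(r)$. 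Both reverses are $\OO(1)$, and the final conceil costs $\OO(1+\log\frac{W'}{\max\{i,1/k\}})$, giving exactly the claimed bound.
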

\begin{proof}
  Follows directly from the obvious implementation of these operations
  in terms of those in Lemma~\ref{lem:dyntree-internal-optimes}.
\end{proof}
What is more important to us, however, is the following Lemma which allows us to count both the number of separation flips in the SPQR trees (each such change costs $1$), 
and the sum of those changes in the BC-trees, to get a total change cost of $O(\log n)$.
\begin{lemma}\label{lem:dyntree-external-cost}
  If the cost of changing an edge $(v,p(v))$ from solid to dashed or
  vice versa is $\OO\paren*{ 1 + \log\frac{w(T_{p(v)})}{w(T_v)} }$,
  then our algorithm maintains heavy path decompositions for a dynamic
  collection of rooted trees with $k$-positive integer node weights,
  in worst case $\OO\paren*{ 1 + \log\frac{W}{w(T_v)} }$ cost per
  operation for link$(v,u)$ and cut$(v)$, worst case $\OO\paren*{ 1 +
    \log\frac{W}{w(T_v)} + \log\frac{W}{w(T'_r)} }$ cost per operation
  for evert$(v)$, and worst case $\OO\paren*{ 1 + \log\frac{W}{w(T_v)}
    + \log\frac{W'}{w(T'_v)} }$ cost per operation for
  reweight$(v,i)$, where $W$ and $W'$ are the sums of the weights in
  the trees $T$,$T'$ involved before and after the operation, and $r$
  is the root of the tree containing $v$ before the operation.
\end{lemma}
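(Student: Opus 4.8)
The plan is to realize link, cut, evert, and reweight as short fixed sequences of the internal operations expose, conceil, reverse, link-exposed, cut-exposed, and reweight-root — the same ``obvious implementation'' that yields the time bounds in Lemma~\ref{lem:dyntree-external-optimes} — and then rerun the accounting counting \emph{cost} instead of \emph{time}. Concretely: evert$(v)$ is expose$(v)$; reverse$(r)$; conceil$(v)$; link$(v,u)$ is expose$(u)$; link-exposed$(v,u;\ldots)$; conceil$(r)$; cut$(v)$ is expose$(p(v))$; cut-exposed$(v)$; conceil$(r)$ on the surviving tree (the split-off $T_v$ is already proper); and reweight$(v,i)$ is expose$(v)$; reverse$(r)$; reweight-root$(v,i)$; reverse$(v)$; conceil$(r)$. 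Apart from the one or two expose/conceil calls, every step changes no edge (reverse) or at most one edge (link-exposed, cut-exposed, reweight-root, by Lemma~\ref{lem:dyntree-internal-changecount}).

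For the cost bound we combine two ingredients. First, the hypothesis of this lemma — that changing a single edge $(c,a)$ costs $\OO(1+\log\frac{w(T_a)}{w(T_c)})$ — is precisely the hypothesis of Lemma~\ref{lem:dyntree-internal-telescope}, so each expose$(v)$ or conceil$(r)$ appearing above costs $\OO(1+\log\frac{W}{w(T_v)})$, where $W$ is the total weight of the tree it acts on and $w(T_v)$ is the subtree weight of the exposed (resp.\ concealed-toward) node. Second, for each single-edge-change step the unique changed edge $(c,a)$ has $w(T_c)\ge w(T_v)$ (or a larger relevant subtree weight) while $w(T_a)\le\max\{W,W'\}$, so by the hypothesis its cost is absorbed into $\OO(1+\log\frac{\max\{W,W'\}}{w(T_v)})$. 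Adding the $\OO(1)$/single-edge contributions to the telescoped expose/conceil costs gives, per operation, a sum of one or two terms of the claimed shape: for evert$(v)$, expose$(v)$ in the original tree gives $\OO(1+\log\frac{W}{w(T_v)})$, reverse is free, and conceil$(v)$ in the everted tree along $v\cdots r$ gives $\OO(1+\log\frac{W}{w(T'_r)})$ (with $W$ unchanged by evert); for reweight$(v,i)$ the expose is charged before the weight change and the conceil after, yielding $\OO(1+\log\frac{W}{w(T_v)}+\log\frac{W'}{w(T'_v)})$, the reweight-root step costing only $\OO(1)$ since $v$ is momentarily a root.

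The step I expect to be the main obstacle is checking that, after simplification, the denominators and the $W$'s land exactly where the statement says — in particular that each telescoped expose/conceil cost is genuinely $\OO(1+\log\frac{W}{w(T_v)})$ rather than only $\OO(1+\log\frac{W}{w(T_u)})$ for some possibly tiny subtree $w(T_u)$. For link$(v,u)$ and cut$(v)$ this needs monotonicity of subtree weights along a root-path: for link, after link-exposed the exposed node $u$ satisfies $w(T_v)\le w(T_u)\le W$, so the conceil cost $\OO(1+\log\frac{W}{w(T_u)})$ is at most $\OO(1+\log\frac{W}{w(T_v)})$, and one must pin the implementation (as in the proof of Lemma~\ref{lem:dyntree-external-optimes}, inserting an auxiliary reverse if needed) so that the preceding expose is also charged against $w(T_v)$; symmetrically for cut one charges against $w(T_v)$ rather than $w(T_{p(v)})-w(T_v)$. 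Once the implementation is fixed so each telescoped term carries the correct denominator, adding the (at most two) telescopes — exactly as in the proof of Lemma~\ref{lem:dyntree-internal-telescope}, via Lemma~\ref{lem:dyntree-lightdepth} — yields the stated bounds.
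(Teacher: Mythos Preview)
Your proposal is correct and takes essentially the same approach as the paper: the paper's proof is the two-line statement that the bound ``follows directly from the obvious implementation of these operations in terms of those in Lemma~\ref{lem:dyntree-internal-optimes}, together with the cost of these operations given by Lemma~\ref{lem:dyntree-internal-telescope},'' which is exactly the argument you spell out. Your write-up is in fact more careful than the paper's, in that you explicitly flag the issue of matching the denominators (e.g.\ that the expose in link$(v,u)$ is charged against $w(T_u)$ rather than $w(T_v)$, and one needs $w(T_v)\le w(T_u)$ after the link-exposed step to absorb this); the paper leaves all of that implicit.
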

\begin{proof}
  Follows directly from the obvious implementation of these operations
  in terms of those in Lemma~\ref{lem:dyntree-internal-optimes},
  together with the cost of these operations given by
  Lemma~\ref{lem:dyntree-internal-telescope}.
\end{proof}
Note here there is no dependency on $k$, and that the denominators in
each case have changed from the weights of single nodes to the weights
of whole subtrees.

\subsection{Heavy paths in a weighted tree decomposition}\label{sec:heavyweighted}

We want to use our algorithm for biased dynamic trees on BC and SPQR
trees where the underlying graph have positive vertex weights.  These
are both special cases of \emph{tree decompositions}\cite{Halin1976,ROBERTSON198449}. A
tree decomposition for a connected graph $G$, is a pair $(T,\bag)$
where $T$ is a tree and each node $u\in V[T]$ is associated with a
\emph{bag} $\bag(u)\subseteq V[G]$, such that:
\begin{enumerate}[label={T\arabic*.}, ref={T\arabic*}]
\item\label{it:treedecomp-allvertices} $\bigcup_{v\in V[T]}\bag(v)=V[G]$.
\item\label{it:treedecomp-connected} For every path $u\cdots v$ in
  $T$, if $c\in V[u\cdots v]$, then
  $\bag(u)\cap\bag(v)\subseteq\bag(c)$.
\item\label{it:treedecomp-alledges} For every $(x,y)\in E[G]$ there exists $u\in V[T]$ such that $\set{x,y}\subseteq\bag(u)$.
\end{enumerate}
For an edge $(u,v)$ we use the notation $\bag((u,v)) = \bag(u)\cap \bag(v)$.

In principle, the tree in a tree decomposition can be arbitrarily
large, because an edge can be subdivided any number of times.
The traditional definitions of BC trees and SPQR trees 
can be described as the unique minimal tree
decompositions with certain properties, which overcomes
this. Our ``relaxed" pre-split trees will not necessarily be minimal, but will
have the property that there is a constant $a_{\max}$ (called the
\emph{adhesion}), such that for every edge $(u,v)$:
\begin{enumerate}[label={R\arabic*.}, ref={R\arabic*}]
\item\label{it:treedecomp-restrict-adh}
  $1\leq\abs{\bag(u)\cap\bag(v)}\leq a_{\max}$. \hfill(bounded adhesion)
\item\label{it:treedecomp-restrict-leaf} If $u$ is a leaf, then
  $\bag(u)\setminus\bag(v)\neq\emptyset$. \hfill(bounded leaves)
\item\label{it:treedecomp-restrict-path} If both $u$ and $v$ have degree $2$
  then $\bag(u)\neq\bag(v)$. \hfill(bounded paths)
\end{enumerate}
For each vertex $x\in V[G]$ let $b(x)$ be the node $u$ closest to the
current root such that $x\in\bag(u)$.  Then for each tree node $v$ we can define
\begin{align*}
  b^{-1}(v) &:= \set{x\in \bag(v)\suchthat b(x)=v}
  \\
  w(v) &:= \sum_{x\in b^{-1}(v)}w(x)
\end{align*}

\begin{lemma}\label{lem:treedecomp-leafweight}
  For every leaf $u$, $w(u)>0$.
\end{lemma}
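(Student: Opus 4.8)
The plan is to show that for every leaf $u$ of $T$ there is a graph vertex $x\in\bag(u)$ with $b(x)=u$; since the underlying graph $G$ is assumed to have strictly positive vertex weights, this single vertex already forces $w(u)=\sum_{x\in b^{-1}(u)}w(x)\geq w(x)>0$, which is what we want. So the whole task reduces to producing such an $x$.

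The one structural fact I would establish first is the classical observation that for any $x\in V[G]$ the set $T_x:=\set{v\in V[T]\suchthat x\in\bag(v)}$ induces a connected subtree of $T$. This is immediate from property~\ref{it:treedecomp-connected}: if $x\in\bag(v)\cap\bag(v')$ then every node $c$ on the $v\cdots v'$ path satisfies $x\in\bag(v)\cap\bag(v')\subseteq\bag(c)$, so $c\in T_x$. A consequence I will use is that $b(x)$, the node of $T_x$ closest to the root, is well-defined and is the unique node of $T_x$ that is an ancestor of every node of $T_x$.

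Next I would split into two cases depending on whether the leaf $u$ is the root of its tree. If $u$ is the root, then $b(x)=u$ holds trivially for every $x\in\bag(u)$, so it suffices to know $\bag(u)\neq\emptyset$: this follows from~\ref{it:treedecomp-restrict-leaf} applied to the edge joining $u$ to its unique child (if it has one), or from~\ref{it:treedecomp-allvertices} in the degenerate case where $T$ is a single node. If $u$ is not the root, let $v=p(u)$ be its unique neighbour; property~\ref{it:treedecomp-restrict-leaf} applied to the edge $(u,v)$ yields some $x\in\bag(u)\setminus\bag(v)$. Then $v\notin T_x$, and since $T_x$ is connected, contains $u$, and $v$ is the only neighbour of $u$ in $T$, we must have $T_x=\set{u}$; hence $b(x)=u$, i.e. $x\in b^{-1}(u)$, and $w(u)\geq w(x)>0$.

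I do not expect a genuine obstacle here: the argument is short once the connected-subtree fact is in hand. The only point that needs a moment of care is checking that $b(x)$ is actually pinned down to $u$ rather than to some ancestor of $u$ — and this is precisely the role of the ``bounded leaves'' restriction~\ref{it:treedecomp-restrict-leaf} — together with remembering to dispose of the corner case where the leaf is also the root. This lemma is a sanity check feeding into the positivity hypotheses (e.g. condition~1 of Definition~\ref{def:k-positive}) required to run the biased dynamic tree machinery on these tree decompositions.
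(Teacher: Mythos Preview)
Your proof is correct and follows essentially the same approach as the paper: both use property~\ref{it:treedecomp-restrict-leaf} to find some $x\in\bag(u)\setminus\bag(v)$ and conclude $x\in b^{-1}(u)$. The paper's version is simply the one-line compression $\emptyset\neq\bag(u)\setminus\bag(v)\subseteq b^{-1}(u)$, whereas you spell out the connected-subtree fact and handle the root-leaf corner case explicitly; neither of these additions is needed in the paper's setting (where leaves are assumed to have a neighbor), but they do no harm.
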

\begin{proof}
  If $u$ is a leaf in $T$ with neighbor $v$, then by
  property~\ref{it:treedecomp-restrict-leaf} we have
  $\emptyset\neq\bag(u)\setminus\bag(v)\subseteq b^{-1}(u)$ and thus
  $w(u)>0$.
\end{proof}

\begin{lemma}\label{lem:treedecomp-pathweight}
  Let $a_{\max}$ be the adhesion of the tree decomposition, then any
  path consisting of $2a_{\max}+2$ nodes of degree $2$ in $T$ contains
  a node $v$ with $w(v)>0$.
\end{lemma}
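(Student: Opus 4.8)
The plan is to argue by contradiction: suppose $P = u_1 u_2 \cdots u_m$ is a path of $m = 2a_{\max}+2$ nodes of degree $2$ in $T$ with $w(u_i)=0$ for all $i$. Since the underlying graph $G$ has positive vertex weights, $w(u_i)=0$ is equivalent to $b^{-1}(u_i)=\emptyset$, i.e.\ for every $x\in\bag(u_i)$ there is a node strictly closer to the root whose bag also contains $x$. First I would record a mini-fact: for any vertex $x$ and \emph{any} node $v$ with $x\in\bag(v)$, the node $b(x)$ is an ancestor of $v$. Indeed, let $a$ be the least common ancestor of $v$ and $b(x)$; it lies on the tree path between them, so $x\in\bag(a)$ by~\ref{it:treedecomp-connected}, and $a$ is at least as close to the root as $b(x)$, so by the (easily checked) uniqueness of the closest bag-holder we get $a=b(x)$. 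In particular no $u_i$ is the root, since then $b^{-1}(u_i)=\bag(u_i)$, which is nonempty by~\ref{it:treedecomp-restrict-adh}, forcing $w(u_i)>0$.

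Next I would pin down the rooted structure of $P$. Each internal $u_i$ has both its neighbours on $P$, so deleting the (at most) two edges joining $P$ to the rest of $T$ leaves $P$ as its own component; since the root is not on $P$, it lies on exactly one side, and after possibly reversing the indexing we may assume $\mathrm{parent}(u_i)=u_{i-1}$ for all $i\in\{1,\dots,m\}$, where $u_0:=\mathrm{parent}(u_1)$ is a node off $P$ and $u_{m+1}$ (the other neighbour of $u_m$) is a child of $u_m$ off $P$; these exist and are distinct from the $u_i$ because $T$ is a tree. Now for each $i$, combining $b^{-1}(u_i)=\emptyset$ with the mini-fact: every $x\in\bag(u_i)$ has $b(x)$ a proper ancestor of $u_i$, so the path $u_i\cdots b(x)$ passes through $u_{i-1}$, whence $x\in\bag(u_{i-1})$ by~\ref{it:treedecomp-connected}. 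Therefore $\bag(u_m)\subseteq\bag(u_{m-1})\subseteq\cdots\subseteq\bag(u_1)\subseteq\bag(u_0)$.

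Finally I would make these inclusions strict and count sizes. For $i\in\{2,\dots,m\}$ both $u_{i-1}$ and $u_i$ have degree $2$, so $\bag(u_{i-1})\neq\bag(u_i)$ by~\ref{it:treedecomp-restrict-path}, and together with $\bag(u_i)\subseteq\bag(u_{i-1})$ this gives $\abs{\bag(u_i)}\le\abs{\bag(u_{i-1})}-1$. Meanwhile $\bag(u_1)=\bag(u_1)\cap\bag(u_0)$ has size at most $a_{\max}$, and $\bag(u_m)\supseteq\bag(u_m)\cap\bag(u_{m+1})$ has size at least $1$, both by~\ref{it:treedecomp-restrict-adh}. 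Chaining the size drops,
\[
  a_{\max}\;\ge\;\abs{\bag(u_1)}\;\ge\;\abs{\bag(u_m)}+(m-1)\;\ge\;m\;=\;2a_{\max}+2,
\]
impossible since $a_{\max}\ge1$. This contradiction proves the lemma; in fact the same argument already works for any path of $a_{\max}+1$ degree-$2$ nodes, so the stated bound has room to spare.

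The one genuinely delicate point is the mini-fact that $b(x)$ is always an ancestor of every node whose bag contains $x$, together with the uniqueness of $b(x)$; both follow from~\ref{it:treedecomp-connected} via a least-common-ancestor argument, and everything else is bookkeeping with~\ref{it:treedecomp-restrict-adh} and~\ref{it:treedecomp-restrict-path}. I would also take care to verify the degenerate possibility that $P$ meets the root (handled above) and that $m=2a_{\max}+2\ge4$, so no small-path corner cases arise.
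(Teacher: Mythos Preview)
Your proof is correct and, in fact, cleaner and tighter than the paper's. Both argue by contradiction and both start by observing that $w(u_i)=0$ forces every $x\in\bag(u_i)$ to appear in a bag strictly closer to the root. From there the arguments diverge.

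The paper does not commit to an orientation of the path. Instead it shows the weaker containment $\bag(u_i)\subseteq(\bag(u_0)\cap\bag(u_1))\cup(\bag(u_k)\cap\bag(u_{k+1}))$ and then runs a \emph{two-sided} monotonicity count: along the path, $|\bag(u_0)\cap\bag(u_i)|$ is non-increasing and $|\bag(u_i)\cap\bag(u_{k+1})|$ is non-decreasing, and property~\ref{it:treedecomp-restrict-path} forces at least one of the two to change strictly at each of the $2a_{\max}+1$ internal edges. Since each sequence can drop (resp.\ rise) at most $a_{\max}$ times, this yields the contradiction---but only just, which is why the factor of $2$ appears in the bound.

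You instead fix the orientation so that $u_0$ is on the root side and obtain the \emph{one-sided} chain of inclusions $\bag(u_m)\subsetneq\cdots\subsetneq\bag(u_1)\subseteq\bag(u_0)$. This is a strictly stronger conclusion (the paper's containment follows from yours), and it immediately gives $a_{\max}\geq m$, proving the lemma already for paths of length $a_{\max}+1$ as you observe. Your approach buys a sharper constant and a shorter argument; the paper's approach avoids the case analysis on where the root sits relative to the path, at the cost of the looser bound. Since the lemma is only used to verify that the weights are $k$-positive for \emph{some} constant $k$, either constant suffices for the application.
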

\begin{proof}
  Let $k=2a_{\max}+2$, and consider any path $u_1\cdots u_k$ of $k$
  nodes of degree $2$ in $T$.  Assume for contradiction that
  $w(u_1)=\cdots=w(u_k)=0$.  Let $u_0$ and $u_{k+1}$ be the neighbors of
  $u_0$ and $u_k$ that are not on $u_1\cdots u_k$.
  Since $w(u_i)=0$ for all $i\in\set{1,\ldots,k}$, we have $\bag(u_i)
  \subseteq (\bag(u_0)\cap\bag(u_1)) \cup
  (\bag(u_k)\cap\bag(u_{k+1}))$.
  For any $i\in\set{1,\ldots,k-1}$ consider the edge $(u_i,u_{i+1})$.
  Since (by property~\ref{it:treedecomp-connected}) the set of nodes
  whose bags contain a given vertex is connected,
  $\abs{\bag(u_0)\cap\bag(u_i)} \geq \abs{\bag(u_0)\cap\bag(u_{i+1})}$
  and $\abs{\bag(u_i)\cap\bag(u_{k+1})} \leq
  \abs{\bag(u_{i+1})\cap\bag(u_{k+1})}$.  And by
  property~\ref{it:treedecomp-restrict-path}, we have
  $\bag(u_i)\neq\bag(u_{i+1})$, so at least one of those inequalities
  must be strict.  But there are $k-1=2a_{\max}+1$ such edges and each
  inequality can be strict for at most $a_{\max}$ of them, so we have
  our contradiction, and have shown that for at least one $u\in
  u_1\cdots u_k$ we must have $w(u)>0$.
\end{proof}

Thus, this system of weights is $(2a_{\max}\!\!+\!\!2)$-positive, and
we can use them to define our biased dynamic trees.  Furthermore, we
have the following useful property
\begin{lemma}\label{lem:treedecomp-lightdepth}
  For any vertex $x\in V[G]$, and any node $v\in V[T]$ that is an
  ancestor to $b(x)$, the light depth of $v$ is at most
  $\floor*{\log_2\frac{w(T_r)}{w(x)}}$.
\end{lemma}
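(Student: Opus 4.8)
The plan is to derive this directly from the abstract light-depth bound, Lemma~\ref{lem:dyntree-lightdepth}, which states that in any rooted tree with nonnegative node weights, every node $v$ with $w(T_v)>0$ satisfies $\ell(v)\le\floor*{\log_2\frac{w(T_r)}{w(T_v)}}$. So the whole task reduces to showing that $w(T_v)\ge w(x)$ whenever $v$ is an ancestor of $b(x)$; after that, monotonicity of $t\mapsto\log_2(w(T_r)/t)$ together with monotonicity of the floor gives $\ell(v)\le\floor*{\log_2\frac{w(T_r)}{w(T_v)}}\le\floor*{\log_2\frac{w(T_r)}{w(x)}}$, which is the claim.

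For the inequality $w(T_v)\ge w(x)$: first observe that $x\in b^{-1}(b(x))$ essentially by definition. Indeed $b(x)$ is the node closest to the current root whose bag contains $x$, so $x\in\bag(b(x))$ and the node $b(x)$ is trivially equal to $b(x)$, hence $x$ is one of the vertices over which the sum defining $w(b(x))$ ranges; since all vertex weights are nonnegative, $w(b(x))\ge w(x)$. Next, since $v$ is an ancestor of $b(x)$, the node $b(x)$ lies in the subtree $T_v$, so $w(T_v)\ge w(b(x))\ge w(x)$, as needed.

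I do not expect a genuine obstacle here. The only points that need a line of care are the degenerate case $w(x)=0$ (where the claimed bound reads $\floor*{\log_2(\cdot)}=+\infty$ and holds vacuously, so one may assume $w(x)>0$, which then also forces $w(T_v)>0$ so that Lemma~\ref{lem:dyntree-lightdepth} applies and the light depth is well-defined in the first place), and being explicit that $v$ being an \emph{ancestor} of $b(x)$ is precisely what places $b(x)$ inside the subtree $T_v$. Everything else is monotonicity.
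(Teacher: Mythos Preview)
Your proof is correct and follows essentially the same approach as the paper: invoke Lemma~\ref{lem:dyntree-lightdepth}, then use $b(x)\in V[T_v]$ to get $w(x)\le w(T_v)$ and conclude by monotonicity. Your treatment is slightly more careful about the degenerate case $w(x)=0$, though in the paper's setting the underlying vertex weights are positive, so this case does not actually arise.
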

\begin{proof}
  By Lemma~\ref{lem:dyntree-lightdepth} the light depth is at most
  $\floor*{\log_2\frac{w(T_r)}{w(T_v)}}$, and since $b(x)\in
  V[T_v]$ we have $w(x)\leq w(T_v)$ and thus
  $\log_2\frac{w(T_r)}{w(T_v)} \leq
  \log_2\frac{w(T_r)}{w(x)}$.
\end{proof}

Combining Lemma~\ref{lem:treedecomp-lightdepth} with
Lemma~\ref{lem:dyntree-internal-changecount}, we get the following Lemma that counts the number of heavy/light changes in the SPQR tree based on the weights from the BC tree:
\begin{lemma}\label{lem:treedecomp-internal-changecount}
  For any vertex $x\in V[G]$, and any node $v\in V[T]$ that is an
  ancestor to $b(x)$, the number of edge changes made to the heavy
  path decomposition by expose$(v)$ on a proper tree, or by
  conceil$(r)$ on a $v$-exposed tree is at most
  $1+2\floor*{\log_2\frac{w(T_r)}{w(x)}}$.  Each of
  link-exposed$(v,r)$, cut-exposed$(v)$, and reweight-root$(v,i)$
  change at most one edge, and reverse$(r)$ changes no edges.
\end{lemma}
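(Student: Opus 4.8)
The plan is to combine the two ingredients that are already in place: Lemma~\ref{lem:treedecomp-lightdepth}, which bounds the light depth of any ancestor $v$ of $b(x)$ by $\floor*{\log_2\frac{w(T_r)}{w(x)}}$, and Lemma~\ref{lem:dyntree-internal-changecount}, which bounds the number of edge changes made by $\operatorname{expose}(v)$ on a proper tree (and by $\operatorname{conceil}(r)$ on a $v$-exposed tree) by $2\ell(v)+1$, while link-exposed, cut-exposed, reweight-root change at most one edge and reverse changes none.

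First I would invoke Lemma~\ref{lem:treedecomp-lightdepth} for the given vertex $x$ and the node $v$, which is assumed to be an ancestor of $b(x)$, to obtain $\ell(v)\leq\floor*{\log_2\frac{w(T_r)}{w(x)}}$. Substituting this bound into the estimate $2\ell(v)+1$ from Lemma~\ref{lem:dyntree-internal-changecount} immediately yields that $\operatorname{expose}(v)$ (resp.\ $\operatorname{conceil}(r)$) makes at most $1+2\floor*{\log_2\frac{w(T_r)}{w(x)}}$ edge changes. The statements about link-exposed, cut-exposed, reweight-root, and reverse are carried over verbatim from Lemma~\ref{lem:dyntree-internal-changecount}, since those operation bounds do not depend on the light depth at all. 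One subtlety worth a sentence is that for $\operatorname{conceil}(r)$ the relevant node whose light depth controls the count is the exposed endpoint $v$ (the tree is $v$-exposed), so the hypothesis that $v$ is an ancestor of $b(x)$ is exactly what is needed to apply Lemma~\ref{lem:treedecomp-lightdepth} in that case as well; and we should note $w(x)>0$ so that Lemma~\ref{lem:treedecomp-lightdepth}'s hypothesis $w(T_v)>0$ is met via $w(x)\le w(T_v)$.

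Since this is a direct composition of two previously established lemmas with no new combinatorial content, there is no real obstacle; the proof is a two-line application. If anything, the only thing to be careful about is bookkeeping — making sure the floor is placed correctly (it can, since the light depth is an integer bounded by a real quantity, so it is bounded by the floor) and that the "at most one edge" / "no edges" clauses are stated for the same set of operations as in Lemma~\ref{lem:dyntree-internal-changecount}. So the proof I would write is essentially: ``Combine Lemma~\ref{lem:treedecomp-lightdepth} with Lemma~\ref{lem:dyntree-internal-changecount}.''
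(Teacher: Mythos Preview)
Your proposal is correct and matches the paper's own proof essentially verbatim: the paper simply says ``By Lemma~\ref{lem:dyntree-internal-changecount} the number of changes is at most $2\ell(v)+1$, and by Lemma~\ref{lem:treedecomp-lightdepth} this is the desired result.'' Your additional remarks about $w(x)>0$ and the placement of the floor are fine but not needed beyond what the two cited lemmas already guarantee.
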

\begin{proof}
  By Lemma~\ref{lem:dyntree-internal-changecount} the number of
  changes is at most $2\ell(v)+1$, and by
  Lemma~\ref{lem:treedecomp-lightdepth} this is the desired result.
\end{proof}

In fact, by combining with Lemma~\ref{lem:dyntree-internal-telescope} we get the more general
\begin{observation}\label{obs:treedecomp-internal-telescope}
  If the cost of changing an edge $(v,p(v))$ from solid to dashed or
  vice versa is $\OO\paren*{ 1 + \log\frac{w(T_{p(v)})}{w(T_v)} }$,
  then for any vertex $x\in V[G]$ and any node $v\in V[T]$ that is an
  ancestor to $b(x)$, the cost of an expose$(v)$ or conceil$(r)$ is
  $\OO\paren*{ 1 + \log\frac{W}{w(x)} }$, where $W$ is the sum of the
  weights in the tree.
\end{observation}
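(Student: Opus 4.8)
The plan is to reduce the statement directly to Lemma~\ref{lem:dyntree-internal-telescope}, with the only real work being to certify that $w(T_v)\ge w(x)>0$ so that Lemma~\ref{lem:dyntree-internal-telescope} is applicable and its bound can be weakened to the desired one. First I would record the containment $b(x)\in V[T_v]$: since $v$ is an ancestor of $b(x)$ (or equals it), $b(x)$ lies in the subtree $T_v$. Next, because $b(x)$ is by definition the node closest to the root whose bag contains $x$, we have $x\in b^{-1}(b(x))$, and hence $w(x)\le w(b(x))\le w(T_v)$. Since we are working with tree decompositions of a graph whose vertices carry strictly positive weights, $w(x)>0$, so in particular $w(T_v)>0$. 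This is exactly the chain of inequalities already used in the proof of Lemma~\ref{lem:treedecomp-lightdepth}.

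With $w(T_v)>0$ established, Lemma~\ref{lem:dyntree-internal-telescope} applies under the hypothesised per-edge cost $\OO\paren*{1+\log\frac{w(T_{p(v)})}{w(T_v)}}$ and yields that an expose$(v)$ or conceil$(r)$ costs $\OO\paren*{1+\log\frac{W}{w(T_v)}}$, where $W$ is the total weight of the tree. Finally, from $w(T_v)\ge w(x)$ and monotonicity of the logarithm we get $\log\frac{W}{w(T_v)}\le\log\frac{W}{w(x)}$, so the cost is $\OO\paren*{1+\log\frac{W}{w(x)}}$, as claimed.

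I do not expect any genuine obstacle here; the observation is a packaging of Lemma~\ref{lem:dyntree-internal-telescope} in the same way that Lemma~\ref{lem:treedecomp-internal-changecount} packages Lemma~\ref{lem:dyntree-internal-changecount} via Lemma~\ref{lem:treedecomp-lightdepth}. The single point requiring care is the applicability condition $w(T_v)>0$ of Lemma~\ref{lem:dyntree-internal-telescope}, which is guaranteed precisely by positivity of vertex weights together with $b(x)\in V[T_v]$.
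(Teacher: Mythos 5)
Your proof is correct and matches the paper's intent exactly: the paper presents this observation as an immediate combination of Lemma~\ref{lem:dyntree-internal-telescope} with the inequality $w(T_v)\ge w(x)>0$ that underlies Lemma~\ref{lem:treedecomp-lightdepth}, and you spell out precisely that chain, including the positivity check needed to invoke Lemma~\ref{lem:dyntree-internal-telescope}. No gap.
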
\todo{We probably need to refer back to this later...}
The observation above is useful, because it lets us sum the changes from the SPQR tree as we work in the BC tree. 

Furthermore, we have the following useful lemma, that will allow us to find critical paths in SPQR-trees:
\begin{lemma}\label{lem:subcritic}
Given a node $v$ and a vertex $y\in \bag(v)$, with $v$ exposed, we can in constant time find the node $v'$ closest to the root such that $y\in\bag(v')$, and, symmetrically, 
given a vertex $x\in\bag(r)$, find the node $r'$ furthest from the root such that $x\in\bag(r')$.
\end{lemma}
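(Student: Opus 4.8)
The plan is to reduce both halves of the statement to the connectivity axiom~\ref{it:treedecomp-connected}, and then read off the answer from bookkeeping that the biased dynamic tree of Section~\ref{sec:biaseddyntree} keeps on each solid path. First I would record two consequences of property~\ref{it:treedecomp-connected}. Writing $S_z:=\set{u\in V[T]\suchthat z\in\bag(u)}$ for a vertex $z\in V[G]$: (i) if $u,w\in S_z$ and $c$ lies on the $u\cdots w$ path then $z\in\bag(u)\cap\bag(w)\subseteq\bag(c)$, so $c\in S_z$; hence $S_z$ is a connected subtree of $T$, and so is its intersection with any other subtree of $T$. (ii) The node $b(z)$ --- the node of $S_z$ closest to the current root --- is an ancestor of every node of $S_z$: if $a$ is the least common ancestor of $b(z)$ and some $u\in S_z$, then $a$ lies on the path $b(z)\cdots u\subseteq S_z$, so $a\in S_z$, and since $a$ is an ancestor of $b(z)$ while $b(z)$ is closest to the root, $a=b(z)$.

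For the first statement, apply (ii) with $z=y$: since $y\in\bag(v)$ we have $v\in S_y$, so $v':=b(y)$ is an ancestor of $v$, and as $v$ is exposed the whole path $r\cdots v$ is solid, so $v'$ is the topmost node of \emph{this} solid path whose bag contains $y$; by (i) the nodes of $r\cdots v$ containing $y$ are exactly the contiguous run $v'\cdots v$. Symmetrically, in the second statement $x\in\bag(r)$ gives $r\in S_x$, and by (i) the nodes of the exposed solid path $r\cdots v$ containing $x$ form a contiguous run $r\cdots r'$ starting at the root, whose bottom $r'$ is the node furthest from the root (along the exposed path) with $x$ in its bag. To answer either query in constant time I would have each solid path store, for every vertex occurring in the bounded-size adhesion (property~\ref{it:treedecomp-restrict-adh}) at one of its two ends, a pointer to the far end of that vertex's run along the path; then $v'$ is the far endpoint recorded for $y$ at $v$, and $r'$ is the far endpoint recorded for $x$ at $r$.

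The correctness argument above is short, so the hard part is the implementation claim: verifying that these per-end pointers can be maintained through the path joins and splits performed by splice and slice --- and through the extra node-splitting operations used for pre-split BC/SPQR trees --- without exceeding the running-time bounds already established. This should go through because each splice, slice, or split touches only a constant number of solid-path endpoints, each carrying at most $a_{\max}$ adhesion vertices by property~\ref{it:treedecomp-restrict-adh} (and there are at most $2\ell(v)+1$ such changes per expose or conceil by Lemma~\ref{lem:dyntree-internal-changecount}), so only $\OO(1)$ of these pointers need recomputation per edge change, and each recomputation is itself a constant-time lookup of a neighbouring path's endpoint pointers. I would defer the detailed verification to Section~\ref{sec:implementation}.
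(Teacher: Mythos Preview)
Your proposal is correct and takes essentially the same approach as the paper: both rely on the fact that, by bounded adhesion and property~\ref{it:treedecomp-connected}, only the at most $a_{\max}$ vertices in $\bag((v,p(v)))$ can appear in any other bag along the exposed solid path, and both propose storing for each such vertex its extremal occurrence on the path. You supply more justification (the connectivity argument and the maintenance sketch) than the paper's three-line proof, which simply asserts the storage is possible; one small omission is the trivial case $y\in\bag(v)\setminus\bag((v,p(v)))$, where $v'=v$ immediately.
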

\begin{proof}
Because of bounded adhesion, the solid path $r\cdots v$ only has a constant number of vertices that are shared between $\bag(v)$ and other nodes on the path. Thus, it can for each such vertex store its root-nearest occurrence. The second case is symmetric. 
\end{proof}

Our definition of $b(x)$ and $w(v)$ presents a problem in connection
with the reverse$(r)$ operation. In particular, we can't store $b(x)$
or $w(v)$ explicitly, since too many of these values may change.
Instead, for each vertex $x$, we store an \emph{arbitrary} node
$\hat{b}(x)$ such that $x\in\bag(\hat{b}(x))$.  And we explicitly maintain a weight $\hat{w}(v) = \sum_{x\in\bag(v), \hat{b}(x)=v}w(x)$ for each node $v$. The relationship between $w(v)$ and $\hat{w}(v)$ is that
\begin{align*}
  w(v) =
  \begin{cases}
    \hat{w}(v) & \text{if $v$ is the root}
    \\
    \hat{w}(v) - \sum_{\substack{
        x\in\bag((v,p(v))),\\
        \hat{b}(x)\in T_v
    }} w(x) & \text{otherwise}
  \end{cases}
\end{align*}
Thus, we can compute $w(v)$ from $\hat{w}(v)$ in the time that it
takes to determine if $v$ is an ancestor to $\hat{b}(x)$ for each of
the at most $a_{\max}$ values of $x\in\bag((v,p(v)))$.  

\begin{lemma}\label{lem:treedecomp-external-optimes}
  \sloppy Our algorithm maintains heavy path decompositions for a
  dynamic collection of weighted and rooted tree-decompositions of
  bounded adhesion $a_{\max}$, in worst case $\OO\paren*{ \paren*{1 +
      \log\frac{W}{\max\set{w(v),\frac{1}{k}}} }\log^2n }$ time per
  operation for link$(v,u)$ and cut$(v)$, worst case $\OO\paren*{
    \paren*{1 +\log\frac{W}{\max\set{w(v),\frac{1}{k}}} +
      \log\frac{W}{\max\set{w(r),\frac{1}{k}}} }\log^2n }$ time per
  operation for evert$(v)$, and worst case $\OO\paren*{ \paren*{1 +
      \log\frac{W}{\max\set{w(v),\frac{1}{k}}} +
      \log\frac{W'}{\max\set{i,\frac{1}{k}}} }\log^2n }$ time per
  operation for reweight$(v,i)$, where $k=2a_{\max}+2$, $W$ and $W'$
  are the sums of the weights in the trees involved before and after
  the operation, and $r$ is the root of the tree containing $v$ before
  the operation.
\end{lemma}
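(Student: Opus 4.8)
The plan is to derive this directly from Lemma~\ref{lem:dyntree-external-optimes} (the "pure tree" version of the timing bound) by accounting for the extra cost incurred because the weights $w(v)$ of the tree-decomposition are not stored explicitly. Concretely, Lemma~\ref{lem:dyntree-external-optimes} already gives the stated bounds with the $\log^2 n$ factor replaced by $1$, provided each primitive operation on the biased dynamic tree can query and update the node weights $w(v)$ in constant time. The only thing that changes here is that, as explained in the paragraph preceding the statement, we instead store $\hat w(v)$ explicitly together with an arbitrary witness node $\hat b(x)$ for each vertex $x$, and must reconstruct $w(v)$ on demand via the displayed case formula. So the entire proof reduces to: (i) each reconstruction of a single value $w(v)$ from $\hat w(v)$ costs $\OO(\log^2 n)$ time, and (ii) the biased dynamic tree primitives of Lemma~\ref{lem:dyntree-internal-optimes} each perform only $\OO(1+\log\frac{W}{\max\{w(v),1/k\}})$ such weight reconstructions and $\OO(1)$ weight updates (amortised into the same bound), so multiplying through gives the claimed $\log^2 n$ overhead.

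First I would establish (i): to compute $w(v)$ from $\hat w(v)$ we need, for each of the at most $a_{\max}=\OO(1)$ vertices $x\in\bag((v,p(v)))$, to decide whether the stored witness $\hat b(x)$ lies in the subtree $T_v$. An ancestor/subtree query between two nodes of a biased dynamic tree (equivalently, in the underlying heavy-path-decomposed forest) can be answered in $\OO(\log^2 n)$ time — $\OO(\log n)$ biased search trees on the root-to-node path, each of depth $\OO(\log n)$; this is the standard cost of a path query in Sleator–Tarjan style dynamic trees once biased search trees are used for the individual heavy paths. (If one is willing to invoke the sharper biased bound this is $\OO(\log n)$ amortised, but $\OO(\log^2 n)$ worst case suffices and is all we claim.) Since $a_{\max}$ is a constant, one reconstruction of $w(v)$ therefore costs $\OO(\log^2 n)$. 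Maintaining $\hat w(v)$ under the updates is easy: $\hat w(v)$ changes only when a vertex's witness $\hat b(x)$ is reassigned (which we do only $\OO(1)$ times per primitive, e.g. when a node is split, merged, or its bag is modified) or when $\operatorname{reweight}$ changes an underlying vertex weight, and each such change touches $\OO(1)$ nodes and costs $\OO(1)$ per node; these are dominated by the bound above.

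Next I would establish (ii) by walking through the implementations behind Lemma~\ref{lem:dyntree-external-optimes}: $\operatorname{expose}$, $\operatorname{conceil}$, $\operatorname{link-exposed}$, $\operatorname{cut-exposed}$, and their compositions into $\operatorname{link}$, $\operatorname{cut}$, $\operatorname{evert}$, $\operatorname{reweight}$. In Lemma~\ref{lem:dyntree-internal-optimes} each of these already runs in time $\OO(1+\log\frac{W}{\max\{w(v),1/k\}})$ assuming $\OO(1)$-time access to weights; in particular it consults or updates $w(\cdot)$ only $\OO(1+\log\frac{W}{\max\{w(v),1/k\}})$ times, once per heavy-path boundary it crosses or restructures (each $\operatorname{splice}$/$\operatorname{slice}$, each join/split of a biased search tree, touches $\OO(1)$ weights). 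Replacing every such constant-time weight access by an $\OO(\log^2 n)$-time reconstruction, and noting weight updates are $\OO(1)$ after the reconstruction, multiplies the running time of each primitive by at most $\OO(\log^2 n)$; the additive $\OO(1)$ term stays $\OO(1)$ (or is absorbed into $\OO(\log^2 n)$ trivially). Composing, $\operatorname{link}$ and $\operatorname{cut}$ inherit $\OO((1+\log\frac{W}{\max\{w(v),1/k\}})\log^2 n)$, $\operatorname{evert}(v)$ — which is an $\operatorname{expose}$ up to $v$, a $\operatorname{reverse}$, and a $\operatorname{conceil}$ — inherits the sum of the two light-depth terms at $v$ and at the old root $r$, and $\operatorname{reweight}(v,i)$ inherits the term at $v$ before and at $i$ after, exactly as stated.

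The main obstacle is making the bookkeeping in (ii) airtight: one must be sure that the internal routines of Lemma~\ref{lem:dyntree-internal-optimes} really do touch $w(\cdot)$ only $\OO(1)$ times per heavy-path boundary crossed (and never, say, $\Theta(\log n)$ times inside a single biased-search-tree rebalance), since otherwise the overhead could blow up beyond $\log^2 n$; this requires opening the black box of the biased dynamic tree implementation. Everything else — the $\OO(\log^2 n)$ cost of an ancestor query, constancy of $a_{\max}$, and the composition of primitives into $\operatorname{link}/\operatorname{cut}/\operatorname{evert}/\operatorname{reweight}$ — is routine given the earlier lemmas.
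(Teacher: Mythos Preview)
Your approach is essentially the same as the paper's: reduce to Lemma~\ref{lem:dyntree-external-optimes} and pay an overhead for reconstructing $w(v)$ from the stored $\hat w(v)$ via ancestor queries on the at most $a_{\max}$ vertices in $\bag((v,p(v)))$. The paper's own proof is a terse paragraph making exactly this point.

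There are two minor accounting differences worth noting. First, the paper claims the ancestor test ``is $v$ an ancestor of $\hat b(x)$'' costs $\OO(\log n)$, not $\OO(\log^2 n)$; this is because the depths in the biased search trees along the root path telescope (the whole point of biased balancing), so a passive root-to-node traversal is $\OO(\log n)$ worst case. Your weaker $\OO(\log^2 n)$ bound is of course sufficient, since you then charge only $\OO(1)$ reconstructions per splice/slice, arriving at the same overall $\log^2 n$ overhead by a different split of the product. Second, your remark that reassigning a witness $\hat b(x)$ ``touches $\OO(1)$ nodes and costs $\OO(1)$ per node'' slightly undercounts: the paper observes that each such reassignment (done for the $\leq a_{\max}$ vertices in the cut bag during \texttt{link-exposed}/\texttt{cut-exposed}) must propagate through $\OO(\log n)$ places where the aggregate $\hat w$ is stored in the biased search trees, so the cost is $\OO(a_{\max}\log n)$ per operation---still dominated, and crucially it does not change any heavy/light status so no cascading restructuring is triggered.

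Your flagged ``main obstacle'' (verifying that the biased-tree internals consult $w(\cdot)$ only $\OO(1)$ times per boundary rather than $\Theta(\log n)$ times per rebalance) is a real concern; the resolution is that the biased search trees are balanced using the \emph{stored} $\hat w$ values, which are maintained as ordinary aggregates, and the on-demand computation of $w(v)$ is needed only at the $\OO(\ell(v))$ points where a heavy/light decision is made. The paper's proof is equally informal on this point, so your level of detail is appropriate.
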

\begin{proof}[Proof of Lemma~\ref{lem:treedecomp-external-optimes}]
  The time for determining if $v$ is an ancestor to $\hat{b}(x)$ is
  $\OO(\log n)$, so we can compute $w(v)$ in $\OO(\log n)$ time
  whenever we need it.  This 
  costs a
  factor of
  $\OO(\log^2n)$ in the running time for maintaining such a tree
  decomposition.
  Since $\hat{b}(x)$ and $\hat{w}(v)$ do not depend on the
  choice of root or the solid paths, we can trivially maintain them
  during the expose, conceil, and reverse operations.  The operations
  link-exposed$(v,u)$ and cut-exposed$(v)$ do require that we update
  $\hat{b}(x)$ for each of the at most $a_{\max}$ vertices
  $x\in\bag((u,v))$, and this again requires changing the at most
  $\OO(a_{\max}\log n)$ places in the data structure where $w(x)$
  contributes to the stored weight $\hat{w}(\cdot)$, but by definition
  this does not change any edges from heavy to light or vice versa, and
  thus does not change the heavy path decomposition.
\end{proof}
 
\section{Dynamic SPQR trees}\label{sec:dynSPQR}

We now proceed to give a data structure for maintaining an implicit representation of an SPQR-tree of a weighted biconnected graph subject to edge insertions and undo edge insertion.

When an edge is inserted, all changes to the SPQR-tree lie on a path. In the most complicated case, this path is nontrivial, and everything along it becomes triconnected, and we have to update the SPQR-tree by substituting the path with a node representing its contraction. Thus, to handle such an edge-insertion, we need to find that SPQR-path and contract it. 

Our approach goes via maintaining a path decomposition of the SPQR-tree, and regarding all solid paths as \emph{precontracted} in a way that admits undo. Intuitively, the precontractions in question approximate insertions of imaginary edges. 
However, simply contracting a path in the SPQR-tree may differ in more than a constant number of places from what could be the SPQR-tree of any graph. 
Thus, we introduce the notion of \emph{relaxed SPQR trees}, where internal nodes on heavy paths are \emph{pre-split} in the way they would be if an edge would cover them.
Once all solid paths are pre-contracted, the actual insertion of an edge covering the solid path, or its undo, can be implemented by toggling a bit for the heavy path.
We show that we can maintain a path decomposition with pre-contracted solid paths with only $O(\log n)$ changes per insertion or undo, where here changes include pre-splitting and its reverse operation. 

Now, if the insertion of an edge in the biconnected graph causes a path in the SPQR-tree to be contracted, we need to find the endpoints of that path. Recall that either endpoint of the inserted edge may reside in several SPQR-nodes. Nonetheless, if there is no node containing both of them, there is a unique shortest path connecting them, their \emph{critical path}, and we give an algorithm for finding this path. 
In the other cases, where there exist SPQR-nodes that contain both endpoints of the inserted edge, we give an algorithm for finding the at most $3$ such nodes.

We present our construction in the following order: 
In Section~\ref{sec:defSPQR}, we recall the standard definition of SPQR-tree, and present our notion of a \emph{relaxed} SPQR-tree that allows for pre-split nodes. 
In Section~\ref{sec:dynspqr-changes}, we describe how the SPQR-tree may change reflecting the insertion of an edge. In Section~\ref{sec:heavyweighted}, we show how to maintain path decomposition in a weighted SPQR-tree; to state our findings in their fullest generality, we generalise SPQR-trees to tree-decompositions of bounded adhesion, and show how to maintain path decompositions of those. 
In Section~\ref{sec:dynspqr-presplit}, we give the details of pre-splitting of SPQR-nodes. 
In Section~\ref{sec:critical}, we show how to find that path in the SPQR-tree that needs to be modified in case an edge is inserted. 
Finally, in Section~\ref{sec:dynspqr-precontract}, we account for the pre-contracting of solid paths in the decomposition.

\subsection{Strict and relaxed SPQR trees}\label{sec:defSPQR}
\begin{definition}[Hopcroft and Tarjan~{\cite[p. 6]{DBLP:journals/siamcomp/HopcroftT73}}]\label{def:separationpair}
        Let $\set{a,b}$ be a pair of vertices in a biconnected multigraph $G$. Suppose the edges of $G$ are divided into equivalence classes $E_1,E_2,\ldots,E_k$, such that two edges which lie on a common path not containing any vertex of $\set{a,b}$ except as an end-point are in the same class. The classes $E_i$ are called the \emph{separation classes} of $G$ with respect to $\set{a,b}$. If there are at least two separation classes, then $\set{a,b}$ is a \emph{separation pair} of $G$ unless (i) there are exactly two separation classes, and one class consists of a single edge\footnote{So in a triconnected graph, the endpoints of an edge do not constitute a separation pair.}, or (ii) there are exactly three classes, each consisting of a single edge\footnote{So the graph consisting of two vertices connected by $3$ parallel edges is triconnected.}
\end{definition}

\begin{definition}[\cite{DBLP:conf/esa/HolmIKLR18}]\label{def:SPQR}
	The \emph{(strict) SPQR-tree} for a biconnected multigraph $G=(V,E)$ with at least $3$ edges is a tree with nodes labeled S, P, or R, where each node $x$ has an associated \emph{skeleton graph} $\Gamma(x)$ with the following properties:
	\begin{itemize}
		\item For every node $x$ in the SPQR-tree, $V(\Gamma(x))\subseteq V$.
		\item For every edge $e\in E$ there is a unique node $x=b(e)$ in the SPQR-tree such that $e\in E(\Gamma(x))$.
		\item For every edge $(x,y)$ in the SPQR-tree, $V(\Gamma(x))\cap V(\Gamma(y))$ is a separation pair $\set{a,b}$ in $G$, and~there is a \emph{virtual edge} $ab$ in each of $\Gamma(x)$ and $\Gamma(y)$ that corresponds to $(x,y)$.
		\item For every node $x$ in the SPQR-tree, every edge in $\Gamma(x)$ is either in $E$ or a virtual edge.
		\item If $x$ is an S-node, $\Gamma(x)$ is a simple cycle with at least
		$3$ edges.
		\item If $x$ is a P-node, $\Gamma(x)$ consists of a pair of vertices with at least $3$ parallel edges.
		\item If $x$ is an R-node, $\Gamma(x)$ is a simple 		triconnected graph.
		\item No two S-nodes are neighbors, and no two P-nodes are neighbors.
	\end{itemize}
\end{definition}
The SPQR-tree for a biconnected graph is unique (see e.g.~\cite{DBLP:journals/algorithmica/BattistaT96}). 
The (skeleton graphs associated with) the SPQR-nodes are sometimes referred to as $G$'s triconnected components. 

In this paper, we use the term \emph{relaxed SPQR tree} to denote a tree that satisfies all but the last of the condition. Unlike the strict SPQR-tree, the relaxed SPQR-tree is never unique. 

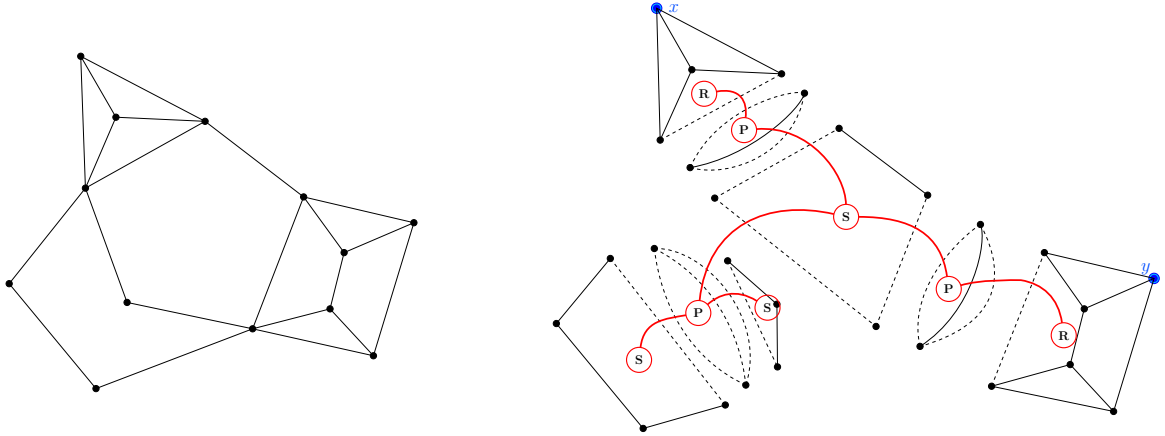
\begin{figure}
	\begin{adjustbox}{max width=\textwidth}
		\begin{tikzpicture}[y=0.80pt, x=0.80pt, yscale=-1.000000, xscale=1.000000, inner sep=0pt, outer sep=0pt,
  edge/.style={
    draw=black,
    line join=miter,
    line cap=butt,
    even odd rule,
    line width=0.800pt,
  },
  vertex/.style={
    fill=black,
    line join=miter,
    line cap=butt,
    line width=0.800pt,
    draw,
    circle,
    minimum size=3mm,
  },
]
  \clip (0,0) rectangle (297mm, 210mm);
  \begin{scope}[
      shift={(0,-308.26772)},
    ]
    \coordinate (p9) at (474,506);
    \coordinate (p10) at (642,635);
    \coordinate (p11) at (555,860);
    \coordinate (p12) at (341,815);
    \coordinate (p13) at (270,620);
    \coordinate (p14) at (262,395);
    \coordinate (p15) at (322,499);
    \coordinate (p16) at (140,783);
    \coordinate (p17) at (288,962);
    \coordinate (p22) at (830,679);
    \coordinate (p23) at (761,906);
    \coordinate (p24) at (687,826);
    \coordinate (p26) at (711,730);

    \path[edge] (p9) -- (p10) -- (p11) -- (p12) -- (p13) -- cycle;
    \path[edge] (p14) -- (p9) -- (p15) -- (p13) -- cycle;
    \path[edge] (p15) -- (p14);
    \path[edge] (p13) -- (p16) -- (p17) -- (p11);
    \path[edge] (p10) -- (p22) -- (p23) -- (p11) -- (p24) -- (p23);
    \path[edge] (p24) -- (p26) -- (p22);
    \path[edge] (p10) -- (p26);

    \node[vertex] at (p9) {};
    \node[vertex] at (p10) {};
    \node[vertex] at (p11) {};
    \node[vertex] at (p12) {};
    \node[vertex] at (p13) {};
    \node[vertex] at (p14) {};
    \node[vertex] at (p15) {};
    \node[vertex] at (p16) {};
    \node[vertex] at (p17) {};
    \node[vertex] at (p22) {};
    \node[vertex] at (p23) {};
    \node[vertex] at (p24) {};
    \node[vertex] at (p26) {};
  \end{scope}

\end{tikzpicture}

 		\begin{tikzpicture}[y=0.80pt, x=0.80pt, yscale=-1.000000, xscale=1.000000, inner sep=0pt, outer sep=0pt,
  graph edge/.style={
    draw=black,
    line join=miter,
    line cap=butt,
    even odd rule,
    line width=0.800pt,
  },
  virtual edge/.style={
    draw=black,
    dash pattern=on 4.80pt off 4.80pt,
    line join=miter,
    line cap=butt,
    miter limit=4.00,
    even odd rule,
    line width=0.800pt,
  },
  tree edge/.style={
    draw=red,
    line join=miter,
    line cap=butt,
    miter limit=4.00,
    even odd rule,
    line width=2.400pt,
  },
  font={\LARGE\bf},
]
  \clip (0,-6) rectangle (297mm, 210mm);
  \begin{scope}[
      shift={(0,-308.26772)},
      tree node/.style={
        draw=red,
        line cap=round,
        miter limit=4.00,
        line width=1.545pt,
        circle,
        minimum size=12mm,
      },
      vertex/.style={
        draw=black,
        fill=black,
        minimum size=3mm,
        circle,
      }
    ]
    \coordinate (p1) at (372,824);
    \coordinate (p9) at (183,313);
    \coordinate (p10) at (396,425);
    \coordinate (p11) at (243,418);
    \coordinate (p12) at (189,538);
    \coordinate (p13) at (104,740);
    \coordinate (p14) at (12,851);
    \coordinate (p15) at (160,1030);
    \coordinate (p16) at (300,990);
    \coordinate (p21) at (844,730);
    \coordinate (p22) at (1031,774);
    \coordinate (p23) at (962,1001);
    \coordinate (p24) at (754,958);
    \coordinate (p25) at (888,921);
    \coordinate (p27) at (912,826);
    \coordinate (p33) at (720,690);
    \coordinate (p34) at (632,890);
    \coordinate (p35) at (724,831);
    \coordinate (p36) at (747,710);
    \coordinate (p37) at (735,682);
    \coordinate (p38) at (607,780);
    \coordinate (p39) at (557,856);
    \coordinate (p40) at (645,632);
    \coordinate (p41) at (494,518);
    \coordinate (p42) at (282,637);
    \coordinate (p43) at (240,585);
    \coordinate (p44) at (343,567);
    \coordinate (p45) at (429,485);
    \coordinate (p46) at (435,458);
    \coordinate (p47) at (287,479);
    \coordinate (p48) at (419,450);
    \coordinate (p57) at (304,744);
    \coordinate (p58) at (388,818);
    \coordinate (p59) at (389,925);
    \coordinate (p63) at (179,723);
    \coordinate (p64) at (275,772);
    \coordinate (p65) at (325,862);
    \coordinate (p66) at (335,956);
    \coordinate (p67) at (242,913);
    \coordinate (p68) at (191,823);
    \coordinate (p72) at (786,777);
    \coordinate (p73) at (758,855);
    \coordinate (p74) at (309,604);
    \coordinate (p75) at (419,560);
    \coordinate (p77) at (618,669);
    \coordinate (p78) at (668,699);
    \coordinate (p81) at (444,512);
    \coordinate (p82) at (505,596);
    \coordinate (p85) at (326,448);
    \coordinate (p86) at (338,472);
    \coordinate (p89) at (734,776);
    \coordinate (p90) at (804,777);
    \coordinate (p91) at (841,783);
    \coordinate (p92) at (868,807);
    \coordinate (p95) at (270,681);
    \coordinate (p96) at (374,639);
    \coordinate (p99) at (198,843);
    \coordinate (p100) at (161,845);
    \coordinate (p103) at (288,802);
    \coordinate (p104) at (318,790);
    \coordinate (p106) at (153,913);
    \coordinate (p107) at (254,833);
    \coordinate (p108) at (506,669);
    \coordinate (p109) at (681,792);
    \coordinate (p110) at (877,871);
    \coordinate (p111) at (332,521);
    \coordinate (p112) at (264,459);
    \coordinate (p113) at (265,720);
    \coordinate (p114) at (372,842);

    \node[vertex,blue,minimum size=5mm,draw,fill=DeepSkyBlue,label=right:{\Huge\textcolor{blue!60!DeepSkyBlue}{~$x$}}] at (p9) {};
    \node[vertex,blue] at (p9) {};
    \node[vertex] at (p10) {};
    \node[vertex] at (p11) {};
    \node[vertex] at (p12) {};
    \node[vertex] at (p13) {};
    \node[vertex] at (p14) {};
    \node[vertex] at (p15) {};
    \node[vertex] at (p16) {};
    \node[vertex] at (p21) {};
    \node[vertex,blue,minimum size=5mm,draw,fill=DeepSkyBlue,label=above left:{\Huge\textcolor{blue!60!DeepSkyBlue}{~$y$}}] at (p22) {};
    \node[vertex,blue] at (p22) {};
    \node[vertex] at (p23) {};
    \node[vertex] at (p24) {};
    \node[vertex] at (p25) {};
    \node[vertex] at (p27) {};
    \node[vertex] at (p34) {};
    \node[vertex] at (p37) {};
    \node[vertex] at (p39) {};
    \node[vertex] at (p40) {};
    \node[vertex] at (p41) {};
    \node[vertex] at (p42) {};
    \node[vertex] at (p43) {};
    \node[vertex] at (p46) {};
    \node[vertex] at (p57) {};
    \node[vertex] at (p58) {};
    \node[vertex] at (p59) {};
    \node[vertex] at (p63) {};
    \node[vertex] at (p66) {};

    \path[graph edge] (p9) -- (p10) -- (p11) -- (p12) -- cycle;
    \path[graph edge] (p11) -- (p9);
    \path[graph edge] (p13) -- (p14) -- (p15) -- (p16);
    \path[graph edge] (p21) -- (p22) -- (p23) -- (p24) -- (p25) -- (p23);
    \path[graph edge] (p25) -- (p27) -- (p22);
    \path[graph edge] (p21) -- (p27);
    \path[virtual edge] (p12) -- (p10);
    \path[virtual edge] (p24) -- (p21);
    \path[graph edge] (p34) .. controls (p35) and (p36) .. (p37);
    \path[virtual edge] (p34) .. controls (p38) and (p33) .. (p37);
    \path[virtual edge] (p39) -- (p40);
    \path[graph edge] (p41) -- (p40);
    \path[virtual edge] (p42) -- (p41);
    \path[graph edge] (p43) .. controls (p44) and (p45) .. (p46);
    \path[virtual edge] (p43) .. controls (p47) and (p48) .. (p46);
    \path[virtual edge] (p42) -- (p39);
    \path[graph edge] (p57) -- (p58) -- (p59);
    \path[virtual edge] (p57) -- (p59);
    \path[virtual edge] (p63) .. controls (p64) and (p65) .. (p66);
    \path[virtual edge] (p66) .. controls (p67) and (p68) .. (p63);
    \path[virtual edge] (p13) -- (p16);
    \path[virtual edge] (p37) .. controls (p72) and (p73) .. (p34);
    \path[virtual edge] (p43) .. controls (p74) and (p75) .. (p46);
    \path[virtual edge] (p63) .. controls (p113) and (p114) .. (p66);

    \node[tree node] (v106) at (p106) {S};
    \node[tree node] (v107) at (p107) {P};
    \node[tree node] (v1) at (p1) {S};
    \node[tree node] (v108) at (p108) {S};
    \node[tree node] (v109) at (p109) {P};
    \node[tree node] (v110) at (p110) {R};
    \node[tree node] (v111) at (p111) {P};
    \node[tree node] (v112) at (p112) {R};

    \path[tree edge] (v108) .. controls (p77) and (p78) .. (v109);
    \path[tree edge] (v111) .. controls (p81) and (p82) .. (v108);
    \path[tree edge] (v112) .. controls (p85) and (p86) .. (v111);
    \path[tree edge] (v109) .. controls (p89) and (p72) .. (p90)
                            .. controls (p91) and (p92) .. (v110);
    \path[tree edge] (v107) .. controls (p95) and (p96) .. (v108);
    \path[tree edge] (v107) .. controls (p99) and (p100) .. (v106);
    \path[tree edge] (v107) .. controls (p103) and (p104) .. (v1);

\end{scope}

\end{tikzpicture}

 	\end{adjustbox}
	\caption{A biconnected graph and its SPQR-tree. 		Note that adding the edge \textcolor{DeepSkyBlue!50!blue}{$xy$} would collapse a path of SPQR-nodes into one. Deletion can thus result in the opposite transformation.~\cite{DBLP:conf/esa/HolmIKLR18}}
	\label{fig:SPQR}
\end{figure}

\subsection{Changes during insert}\label{sec:dynspqr-changes}
Given a (strict) SPQR tree, we would like to be able to maintain it as
the underlying graph changes.  When adding edge $(x,y)$, the change to
the SPQR tree happens along the \emph{critical path}, denoted $m(x,y)$
defined as:
\begin{enumerate}
\item\label{it:m-def-path} If no SPQR node contains both $x$ and $y$,
  then $m(x,y)$ is the unique shortest path $u\cdots v$ in the SPQR
  tree such that $u$ contains $x$ and $v$ contains $y$.\footnote{Note that such a path exists and is unique: Take any SPQR-node containing $x$ and any SPQR-node containing $y$, and consider the tree path $p$ connecting them. It can be subdivided into three segments: the first segment contains $x$ in all SPQR-nodes and separation pairs, the last segment contains $y$ in all SPQR-nodes and separation pairs, and the middle segment contains neither $x$ nor $y$. Extending the middle segment by its neighbours on $p$ yields $m(x,y)$. Since the middle segment contains a compact representation of all separation pairs that separate $x$ from $y$, $m(x,y)$ is unique.} \item\label{it:m-def-single} If exactly one SPQR node $v$ contains
  both $x$ and $y$, then $m(x,y)=v$.  \item\label{it:m-def-pair} If exactly two SPQR nodes $u,v$ contain
  both $x$ and $y$, then they are adjacent and $m(x,y)=u\cdots
  v$. \item\label{it:m-def-P} If more than two SPQR nodes contain both $x$
  and $y$, then $m(x,y)$ is the unique $P$ node containing both $x$
  and $y$.
\end{enumerate}
Note that the path is considered to be undirected, so
$m(x,y)=m(y,x)$.

This notion of a critical path is implicit in \cite{DBLP:conf/focs/BattistaT89}, where $m(x,y)$ is either the path connecting their representatives $\mu (x)$ and $\mu (y)$ of $x$ and $y$, or, in the case where one of them (say, $z$) is a descendant of the representative of the other (say, $z'$), the path connecting $\mu(z)$ to $\mu$, where $\mu$ is the lowest SPQR-node containing $x$ and $y$ in itself or its descendants.

The actual change to the SPQR-tree caused by inserting $(x,y)$ can be
described in terms of $m(x,y)$, as follows:
\begin{enumerate}
\item If $m(x,y)$ has at least two edges, or has a single edge that
  does not correspond to the separation pair $\set{x,y}$
  (Case~\ref{it:m-def-path} above):
  \begin{itemize}
  \item Each $S$ node     $s$ that is at the end of the path is split into
    at most $3$ pieces wrt. the vertex in $\Gamma(s)\cap\set{x,y}$ and
    the closest edge on the path.  \item Each $S$ node that is internal to $m(x,y)$ is split into at
    most $3$ pieces wrt. its neighboring edges on the path.  \item Each $P$ node that is internal to $m(x,y)$ is split into at
    most $2$ pieces wrt. its neighboring edges on the path.  \item And then the remainder of $m(x,y)$ is contracted into a single $R$
    node.  \end{itemize}
\item If $m(x,y)$ is a single $S$ node $s$ (part of
  Case~\ref{it:m-def-single} above): The node $s$ is split into two
  $S$ nodes $s_1,s_2$, with the edge $(s_1,s_2)$ corresponding to the
  separation pair $\set{x,y}$ and we proceed as if the path had a
  single edge.
\item If $m(x,y)$ is a single edge $(v_1,v_2)$ that corresponds to the
  separation pair $\set{x,y}$ (Case~\ref{it:m-def-pair}): The edge is
  subdivided by adding a new $P$ node $p$ with
  $V(\Gamma(p))=\set{x,y}$, and we proceed as if the path consisted of
  the single node $p$.
\item If $m(x,y)$ is a single $R$ node $v$ and $\Gamma(v)$ already
  contains an edge $e=(x,y)$ (part of Case~\ref{it:m-def-single}): A
  new leaf $P$ node $v'$ is added as neighbor of $v$, turning $e$ into
  a virtual edge. Then the new $(x,y)$ edge is added to $\Gamma(v')$,
  and we proceed as if the path consisted only of $v'$.
\item Finally, if $m(x,y)$ is a single $P$ or $R$ node $v$ (rest of
  Case~\ref{it:m-def-single} and Case~\ref{it:m-def-P}): The edge is
  simply added to $\Gamma(v)$.
\end{enumerate}

\subsection{Critical paths in SPQR-trees.}\label{sec:critical}

We will be working a lot with paths in $T$ defined by a pair of
vertices in $x,y\in V[G]$, and it will be useful to have the following
query operation.
\begin{description}
\item[common-path$(x,y)$:] Return the unique subpath $u\cdots v$ of
  $b(x)\cdots b(y)$ that is either
  \begin{itemize}
  \item the longest subpath of $b(x)\cdots b(y)$ such that
    $\set{x,y}\subseteq\bag(u)\cap\bag(v)$; or
  \item the unique shortest path in $T$ such that $x\in\bag(u)$ and
    $y\in\bag(v)$.
  \end{itemize}
\end{description}

Now, we can use Lemma~\ref{lem:subcritic} as a subroutine for finding and exposing $m(x,y)$:
\begin{lemma}\todo{We need to refer to this somewhere}
Given vertices $x,y$ with $x\in\bag(r)$ and $y\in \bag(u)$, where $u$ is exposed, $m(x,y)$ can be found in constant time.
\end{lemma}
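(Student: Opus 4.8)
The plan is to localise the whole computation to a constant-radius neighbourhood of the solid path $r\cdots u$. First, apply Lemma~\ref{lem:subcritic} to obtain, in constant time, the node $r'$ on $r\cdots u$ furthest from the root with $x\in\bag(r')$ and the node $v'$ on $r\cdots u$ closest to the root with $y\in\bag(v')$. By property~\ref{it:treedecomp-connected}, for a fixed vertex the nodes whose bag contains it form a connected subtree; since $x\in\bag(r)$ and $y\in\bag(u)$, along the solid path the nodes containing $x$ form an initial segment $r\cdots r'$ and the nodes containing $y$ a final segment $v'\cdots u$.

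Second, I would observe that any node $z$ of $T$ with $\{x,y\}\subseteq\bag(z)$ forces some node on $r\cdots u$ to contain both $x$ and $y$: in a tree the path $r\cdots u$ is contained in $(r\cdots z)\cup(z\cdots u)$, and the node $p_j$ where these two subpaths meet lies on $r\cdots z$ (so $x\in\bag(p_j)$) and on $z\cdots u$ (so $y\in\bag(p_j)$). Hence, if $v'$ is strictly below $r'$ on the solid path, no node of $T$ contains both $x$ and $y$; we are in case~\ref{it:m-def-path}, and $m(x,y)$ is exactly the solid subpath between the already-found $r'$ and $v'$, returned in $O(1)$. Otherwise $v'$ is an ancestor of, or equal to, $r'$, and every node of the solid subpath $v'\cdots r'$ contains both $x$ and $y$.

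Third, in that remaining case I would exploit the structure of SPQR skeletons (Definition~\ref{def:SPQR}): two tree-neighbours that both contain $x$ and $y$ share exactly the separation pair $\{x,y\}$ as the label of their common virtual edge, and among S-, P-, R-skeletons the only one that can carry two virtual edges with that same label is a P-node whose skeleton is exactly $\{x,y\}$ (an S-cycle and an R-graph are simple, so have at most one $xy$-edge). By the last condition of Definition~\ref{def:SPQR} such a P-node is unique in a strict SPQR tree (in a relaxed/pre-split tree it blows up to only a constant-length chain of P-nodes on $\{x,y\}$, by the form of the pre-split operation). Consequently the set $C$ of all nodes of $T$ containing both $x$ and $y$ is either a single node, a pair of adjacent nodes, or a ``star'' consisting of this distinguished P-node together with all of its tree-neighbours; in every case $C$ has constant size and meets the solid path in only $O(1)$ consecutive nodes. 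Starting at $v'$ and following the solid-child pointer $c(\cdot)$ a constant number of times thus visits every solid-path node of $C$ and reaches $r'$; combining this with $O(1)$ queries to the implicit SPQR-tree — whether $(x,y)$ is a virtual edge of a given skeleton, and the number of virtual edges of the distinguished P-node — lets us decide which of cases~\ref{it:m-def-single}, \ref{it:m-def-pair}, \ref{it:m-def-P} we are in and return $m(x,y)$ (the P-node itself, the two-node path, or the single node) in constant time.

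The main obstacle is the structural claim that $C$ is pinned to a constant-size star around an $\{x,y\}$-labelled P-node — equivalently, that only $O(1)$ consecutive solid-path nodes can simultaneously contain $x$ and $y$; without it there is no constant-time handle on $C$. The rest is a careful but routine case distinction, the only subtlety being the two ends of the solid path, where $v'=r'$ may itself be the distinguished P-node with tree-neighbours leaving the solid path — which is why the last step inspects that P-node's virtual-edge count rather than stopping as soon as a single common node is found.
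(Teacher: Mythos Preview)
Your proposal is correct and follows essentially the same approach as the paper: invoke Lemma~\ref{lem:subcritic} to obtain $r'$ and $v'$, then perform a constant-size case analysis on the solid subpath between them. The paper's own proof is much terser---it simply asserts that when the two segments overlap, ``$v'$ is at most $2$ edges closer to the root than $r'$'' and a $P$-node sits between them---whereas you carefully justify this bound from the simplicity of $S$- and $R$-skeletons and the no-adjacent-$P$-nodes rule. Your additional observation that any node containing both $x$ and $y$ forces some \emph{solid-path} node to do so (step~2), and your explicit handling of the ``star'' case where the $P$-node has off-path neighbours relevant for distinguishing case~\ref{it:m-def-pair} from case~\ref{it:m-def-P}, are both sound refinements that the paper leaves implicit.
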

\begin{proof}
Let $r'$ and $v'$ denote the bags found in Lemma~\ref{lem:subcritic}. They form the endpoints of the common path of $x$ and $y$. 
What remains is a simple case analysis. When $v'$ is further from the root than $r'$, or $r'$ is further from the root than $v'$ and no $P$-node lies between them, then $v'$ and $r'$ are the endpoints of the critical path. Otherwise, $v'$ is at most $2$ edges closer to the root han $r'$, and between them, there exists a $P$-node containing $x$ and $y$; this $P$-node is $m(x,y)$.
\end{proof}

\begin{lemma}\label{lem:spqr-mpath-changecount}\todo{We need to refer to this somewhere}
  Let $G$ be a vertex weighted biconnected graph with vertices
  $x,y,y'\in V[G]$. Let $T$ be the SPQR tree for $G$.  Let
  $m_T(x,y)=r\cdots u$ where $x\in\bag(r)$ and $y\in\bag(u)$, and let
  $m_T(x,y')=r'\cdots u'$ where $x\in\bag(r')$ and $y'\in\bag(u')$ and
  $r'$ is on the path $r\cdots u'$.  Then converting a $u$-exposed
  heavy path decomposition of $T$ with root $r$ into a $u'$-exposed
  heavy path decomposition of $T$ with root $r'$ can be done by:
  conceil$(r)$, expose$(r')$, reverse$(r)$, conceil$(r')$,
  expose$(u')$. During this sequence, $\OO\paren*{ 1 +
    \log\frac{w(T)}{w(y)} + \log\frac{w(T)}{w(y')} }$ edges change
  from solid to dashed or vice versa.
\end{lemma}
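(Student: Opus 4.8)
The plan is to (i) verify that the stated sequence of five operations is legal---each operation's precondition being met by the output of the previous one---and produces a $u'$-exposed heavy path decomposition rooted at $r'$; and then (ii) bound the total number of solid/dashed edge changes by summing the per-operation bounds of Lemma~\ref{lem:dyntree-internal-changecount}, controlling each light depth that appears via Lemma~\ref{lem:treedecomp-lightdepth}.

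For (i): starting from a $u$-exposed tree rooted at $r$, conceil$(r)$ makes it proper (still rooted at $r$); since $r'$ lies on $r\cdots u'$ it is a descendant of $r$, so expose$(r')$ is well-defined and makes the tree $r'$-exposed with root $r$; reverse$(r)$ turns this into an $r$-exposed tree rooted at $r'$; conceil$(r')$ makes it proper with root $r'$; and expose$(u')$ makes it $u'$-exposed with root $r'$. For (ii), by Lemma~\ref{lem:dyntree-internal-changecount}, conceil$(r)$ changes at most $2\ell_T(u)+1$ edges, expose$(r')$ at most $2\ell_T(r')+1$, reverse$(r)$ none, conceil$(r')$ at most $2\ell_{T'}(r)+1$, and expose$(u')$ at most $2\ell_{T'}(u')+1$, where $\ell_T$ and $\ell_{T'}$ denote light depth in the tree rooted at $r$, respectively at $r'$. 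So it suffices to prove $\ell_T(u),\ell_{T'}(r)\le\floor*{\log_2\frac{w(T)}{w(y)}}$ and $\ell_T(r'),\ell_{T'}(u')\le\floor*{\log_2\frac{w(T)}{w(y')}}$; summing these four bounds and the additive constants then gives the claimed $\OO\paren*{1+\log\frac{w(T)}{w(y)}+\log\frac{w(T)}{w(y')}}$.

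Three of these bounds come from Lemma~\ref{lem:treedecomp-lightdepth} once the right ancestor is identified. Since $m_T(x,y)=r\cdots u$ is a critical path and the root $r$ contains $x$, the far endpoint $u$ must be the node closest to the root whose bag contains $y$, i.e.\ $u=b(y)$ (otherwise a strict ancestor of $u$ on $r\cdots u$ would be an internal node of $m_T(x,y)$ containing $y$, contradicting the definition of the critical path); Lemma~\ref{lem:treedecomp-lightdepth} with $v=b(y)$ and vertex $y$ then gives $\ell_T(u)\le\floor*{\log_2\frac{w(T)}{w(y)}}$, and by the same reasoning after rerooting, $u'$ is the node closest to $r'$ whose bag contains $y'$, giving $\ell_{T'}(u')\le\floor*{\log_2\frac{w(T)}{w(y')}}$. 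For $\ell_T(r')$: the node $r'$ lies on $r\cdots u'$ and does not contain $y'$, whereas the topmost node on that path whose bag contains $y'$ does; hence $r'$ is a strict ancestor of that node, and Lemma~\ref{lem:treedecomp-lightdepth} gives $\ell_T(r')\le\floor*{\log_2\frac{w(T)}{w(y')}}$.

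The remaining bound, $\ell_{T'}(r)\le\floor*{\log_2\frac{w(T)}{w(y)}}$, is where I expect the work to lie, since after rerooting at $r'$ the node $r$ is not obviously an ancestor of any $b(\cdot)$. The plan is to show that in $T$ the node $r$ separates $r'$ from the connected set $C_y$ of nodes whose bag contains $y$; then $C_y$ lies entirely in the subtree $T'_r$ of $r$ in the tree rooted at $r'$, so $w(y)\le w(T'_r)$, and Lemma~\ref{lem:dyntree-lightdepth} yields $\ell_{T'}(r)\le\floor*{\log_2\frac{w(T)}{w(T'_r)}}\le\floor*{\log_2\frac{w(T)}{w(y)}}$. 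To get the separation: in the tree rooted at $r$, the child of $r$ towards $u$ does not contain $x$ (as $r$ is the lowest node of $m_T(x,y)$ containing $x$), while the child of $r$ towards $r'$ does contain $x$ (the path $r\cdots r'$ stays inside the connected set of $x$-containing nodes), so these two children differ; therefore $\operatorname{lca}_T(r',u)=r$, and since $C_y$ lies within the subtree of $u=b(y)$, every path from $r'$ into $C_y$ passes through $r$. Finally I would dispatch the degenerate cases in which $m_T(x,y)$ or $m_T(x,y')$ is a single node or a single edge: there the relevant light depths are $\OO(1)$ and the relevant operations are (near-)no-ops, so the bound holds trivially.
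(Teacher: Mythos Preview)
Your plan matches the paper's proof: bound each of the five operations via Lemma~\ref{lem:dyntree-internal-changecount} and control the four light depths via Lemma~\ref{lem:treedecomp-lightdepth}; your explicit separation argument for $\ell_{T'}(r)$ in the generic case is correct and in fact more detailed than the paper's ``by completely symmetric arguments''. The gap is in your degenerate-case dispatch. You assert that when $m_T(x,y)$ or $m_T(x,y')$ is a single node or edge ``the relevant light depths are $\OO(1)$'', but this is only half right: degeneracy of $m_T(x,y)$ forces $\ell_T(u)\le 1$ yet says nothing about $\ell_{T'}(r)$, and your separation argument for $\ell_{T'}(r)$ fails precisely here since it needs a child of $r$ towards some $u\neq r$ whose bag omits $x$, which exists only in Case~\ref{it:m-def-path} of the critical-path definition. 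Symmetrically, degeneracy of $m_T(x,y')$ controls $\ell_{T'}(u')$ but not $\ell_T(r')$, and your premise that $y'\notin\bag(r')$ is exactly what fails there.

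The missing ingredient, which the paper invokes explicitly for expose$(r')$ and tacitly reuses for conceil$(r')$, is the SPQR-specific fact that no three consecutive tree edges carry the same separation pair. It implies that the set of nodes containing both $x$ and $y$ is a single node, a single edge, or a star centered at a $P$-node; and since $r$ is an endpoint of $m_T(x,y)$, in each degenerate case $r$ is that center or one of the at most two nodes. Because every node on $r'\cdots r$ contains $x$, the $r'$-closest node containing $y$ lies on $r'\cdots r$ and in that small set, hence is $r$ or its parent in the $r'$-rooted tree. Thus $\ell_{T'}(r)\le\lfloor\log_2\tfrac{w(T)}{w(y)}\rfloor+1$ by Lemma~\ref{lem:treedecomp-lightdepth}, and the same reasoning with $(y,r,r')\leftrightarrow(y',r',r)$ patches $\ell_T(r')$ when $y'\in\bag(r')$.
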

\begin{proof}
  By Lemma~\ref{lem:dyntree-internal-changecount} conceil$(r)$ changes
  at most $2\ell(u)+1$ edges. By definition of $m_T(x,y)$, if
  $y\in\bag(r)$ then $u$ and $r$ are neighbors, so $\ell(u)\leq 1$;
  otherwise $u=b(y)$ and by Lemma~\ref{lem:treedecomp-lightdepth} we
  have $\ell(u)\leq\floor*{\log_2\frac{w(T)}{w(y)}}$. Thus, the number
  of edge changes made by conceil$(r)$ is at most
  \begin{align*}
    2\max\set*{ 1, \floor*{\log_2\frac{w(T)}{w(y)}} }+1\leq
    2\floor*{\log_2\frac{w(T)}{w(y)}}+3
  \end{align*}
  Similarly, by Lemma~\ref{lem:dyntree-internal-changecount}
  expose$(r')$ changes at most $2\ell(r')+1$ edges.  If
  $y'\not\in\bag(r')$ then $r'$ is an ancestor to $b(y')$, so by
  Lemma~\ref{lem:treedecomp-lightdepth} we have
  $\ell(r')\leq\floor*{\log_2\frac{w(T)}{w(y')}}$.  Otherwise, by
  definition of $m_T(x,y)$, $\set{x,y'}\subseteq\bag(r')\cap\bag(y')$,
  and since no $3$ consecutive edges in a (strict) SPQR tree can
  correspond to the same separation pair,
  $b(y')\in\set{r',p(r')}$, so $\ell(r')\leq\ell(b(y'))+1$ and so by
  Lemma~\ref{lem:treedecomp-lightdepth}
  $\ell(r')\leq\floor*{\log_2\frac{w(T)}{w(y')}}+1$. Thus, the number
  of edge changes made by expose$(r')$ is at most
  \begin{align*}
    2\max\set*{ \floor*{\log_2\frac{w(T)}{w(y')}},
      \floor*{\log_2\frac{w(T)}{w(y')}}+1 }+1 =
    2\floor*{\log_2\frac{w(T)}{w(y')}}+3
  \end{align*}
  The reverse$(r)$ does not change any edges, and by completely
  symmetric arguments, conceil$(r')$ changes at most
  $2\floor*{\log_2\frac{w(T)}{w(y)}}+3$ edges, and expose$(u')$
  changes at most $2\floor*{\log_2\frac{w(T)}{w(y')}}+3$ edges.
  The total number of edge changes is thus at most
  \begin{align*}
    2\paren*{2\floor*{\log_2\frac{w(T)}{w(y)}}+3}
    +{}&
    2\paren*{2\floor*{\log_2\frac{w(T)}{w(y')}}+3}
    \\
    &=
    4\floor*{\log_2\frac{w(T)}{w(y)}} +
    4\floor*{\log_2\frac{w(T)}{w(y')}} + 12
    \\
    &\in \OO\paren*{1 + \log_2\frac{w(T)}{w(y)} + \log_2\frac{w(T)}{w(y')}}
    \qedhere
  \end{align*}
\end{proof}

\subsection{Pre-splitting}\label{sec:dynspqr-presplit}

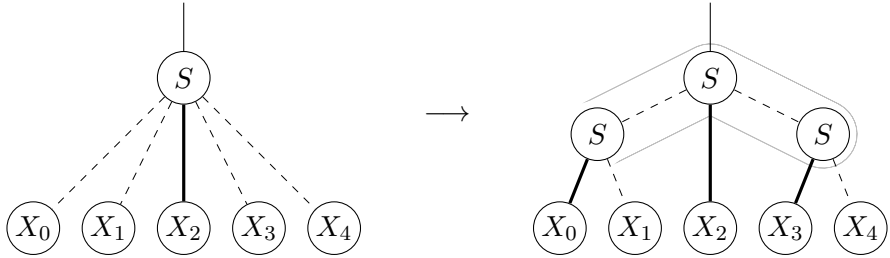
\begin{figure}[H]
  \center
  \begin{tikzpicture}[
    spqr-node/.style={
      draw,
      circle,
      fill=white,
      minimum size=7mm,
      inner sep=0pt,
    },
    spqr-edge/.style={
      draw,
    },
    dashed spqr-edge/.style={
      spqr-edge,
      very thin,
      dashed
    },
    solid spqr-edge/.style={
      spqr-edge,
      very thick,
    }
  ]
  \begin{scope}
    \coordinate (p) at (2,3);
    \node[spqr-node] (s0) at (2,2) {$S$};
    \node[spqr-node] (l0) at (0,0) {$X_0$};
    \node[spqr-node] (l1) at (1,0) {$X_1$};
    \node[spqr-node] (l2) at (2,0) {$X_2$};
    \node[spqr-node] (l3) at (3,0) {$X_3$};
    \node[spqr-node] (l4) at (4,0) {$X_4$};
    \draw[spqr-edge] (p) -- (s0);
    \draw[dashed spqr-edge] (s0) -- (l0);
    \draw[dashed spqr-edge] (s0) -- (l1);
    \draw[solid spqr-edge] (s0) -- (l2);
    \draw[dashed spqr-edge] (s0) -- (l3);
    \draw[dashed spqr-edge] (s0) -- (l4);
  \end{scope}
  \begin{scope}[shift={(5.5,0)}]
    \node at (0,1.5) {$\longrightarrow$};
  \end{scope}
  \begin{scope}[shift={(7,0)}]
    \coordinate (p) at (2,3);
    \node[spqr-node] (s0) at (2,2) {$S$};
    \node[spqr-node] (s1) at (0.5,1.25) {$S$};
    \node[spqr-node] (s2) at (3.5,1.25) {$S$};
    \node[spqr-node] (l0) at (0,0) {$X_0$};
    \node[spqr-node] (l1) at (1,0) {$X_1$};
    \node[spqr-node] (l2) at (2,0) {$X_2$};
    \node[spqr-node] (l3) at (3,0) {$X_3$};
    \node[spqr-node] (l4) at (4,0) {$X_4$};
    \draw[spqr-edge] (p) -- (s0);
    \draw[dashed spqr-edge] (s0) -- (s1);
    \draw[dashed spqr-edge] (s0) -- (s2);
    \draw[solid spqr-edge] (s1) -- (l0);
    \draw[dashed spqr-edge] (s1) -- (l1);
    \draw[solid spqr-edge] (s0) -- (l2);
    \draw[solid spqr-edge] (s2) -- (l3);
    \draw[dashed spqr-edge] (s2) -- (l4);
    \begin{pgfonlayer}{background}
      \filldraw[line width=26,line join=round,gray!50](s0.center)--(s1.center)--(s0.center)--(s2.center)--cycle;
      \filldraw[line width=25,line join=round,white](s0.center)--(s1.center)--(s0.center)--(s2.center)--cycle;
    \end{pgfonlayer}
  \end{scope}
\end{tikzpicture}
   \caption{\label{fig:pre-splitting-S}Pre-splitting an $S$ node
    creates a primary and at most $2$ secondary child nodes. The
    primary inherits the solid child, the secondaries partition the
    remaining children and choose their heaviest child as solid.}
\end{figure}

We want to maintain an SPQR tree efficiently as the underlying graph
changes.  However, the strict SPQR tree can change by up to $\OO(n)$
splits and the contraction of a path with up to $\OO(n)$ nodes during
a single edge insertion in the underlying graph, so we can not hope to
maintain an explicit representation of that.  In this section we
define a \emph{pre-split} SPQR tree, which is just a particular
relaxed SPQR tree with a ($u$-exposed or proper) heavy path
decomposition, that we have better control over. In particular, we
ensure that an edge insertion only takes $\OO(\log n)$ splits,
followed by the contraction of a solid path. In
Section~\ref{sec:dynspqr-precontract} we show how we can
handle this contraction efficiently.  The main difficulty in handling
the splits is that the expose/conceil/etc. operations will actually be
changing the tree we are working on, by splitting and merging certain
nodes.  For the definition and analysis of our pre-split SPQR tree, we
will therefore start with a ($u$-exposed or proper) heavy path
decomposition of the strict SPQR tree.
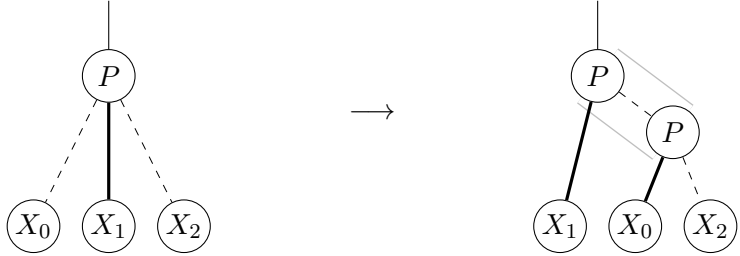
\begin{figure}[H]
	\center
	\begin{tikzpicture}[
    spqr-node/.style={
      draw,
      circle,
      fill=white,
      minimum size=7mm,
      inner sep=0pt,
    },
    spqr-edge/.style={
      draw,
    },
    dashed spqr-edge/.style={
      spqr-edge,
      very thin,
      dashed
    },
    solid spqr-edge/.style={
      spqr-edge,
      very thick,
    }
  ]
  \begin{scope}
    \coordinate (p) at (1,3);
    \node[spqr-node] (p0) at (1,2) {$P$};
    \node[spqr-node] (l0) at (0,0) {$X_0$};
    \node[spqr-node] (l1) at (1,0) {$X_1$};
    \node[spqr-node] (l2) at (2,0) {$X_2$};
    \draw[spqr-edge] (p) -- (p0);
    \draw[dashed spqr-edge] (p0) -- (l0);
    \draw[solid spqr-edge] (p0) -- (l1);
    \draw[dashed spqr-edge] (p0) -- (l2);
  \end{scope}
  \begin{scope}[shift={(4.5,0)}]
    \node at (0,1.5) {$\longrightarrow$};
  \end{scope}
  \begin{scope}[shift={(7,0)}]
    \coordinate (p) at (.5,3);
    \node[spqr-node] (p0) at (.5,2) {$P$};
    \node[spqr-node] (p1) at (1.5,1.25) {$P$};
    \node[spqr-node] (l1) at (0,0) {$X_1$};
    \node[spqr-node] (l0) at (1,0) {$X_0$};
    \node[spqr-node] (l2) at (2,0) {$X_2$};
    \draw[spqr-edge] (p) -- (p0);
    \draw[solid spqr-edge] (p0) -- (l1);
    \draw[dashed spqr-edge] (p0) -- (p1);
    \draw[solid spqr-edge] (p1) -- (l0);
    \draw[dashed spqr-edge] (p1) -- (l2);
    \begin{pgfonlayer}{background}
      \filldraw[line width=26,line join=round,gray!50](p0.center)--(p1.center)--cycle;
      \filldraw[line width=25,line join=round,white](p0.center)--(p1.center)--cycle;
    \end{pgfonlayer}
  \end{scope}
\end{tikzpicture}
 	\caption{\label{fig:pre-splitting-P}Pre-splitting a $P$ node creates
		a primary and at most $1$ secondary child node. The primary
		inherits the solid child, the secondary inherits the remaining
		children and chooses its heaviest child as solid.}
\end{figure}
\begin{definition}
  A node $v$ in a rooted relaxed SPQR tree is \emph{primary} if either
  $v$ is the root; or $v$ is an $R$ node; or $v$ is an $S$ or $P$ node
  whose parent $p(v)$ has different type. All other nodes are called
  \emph{secondary}.
\end{definition}
\noindent In any relaxed SPQR tree $T$ there is a one-to-one
correspondence between the primary nodes in $T$ and the nodes in the
corresponding strict SPQR tree.

\begin{definition}\label{def:ps-selected}
  For each node $u$, we choose a set $s(u)\subseteq\bag(u)$ of
  \emph{selected} vertices, and use them to define $1$ or $2$
  \emph{child groups} of $u$ as disjoint subsets of the edges of
  $\Gamma(u)$, as follows:
  \begin{itemize}

  \item If $u$ is a primary $S$ node with both a parent edge $e_p$ and
    a solid child edge $e_c$, then $s(u)=\emptyset$, and the two child
    groups are simply the (possibly empty) ordered sets of edges on
    the clockwise and counterclockwise paths in $\Gamma(u)$ between the 
    virtual edges corresponding to 
    $e_p$ and $e_c$ (excluding $e_p$ and $e_c$).

  \item For the at most one primary non-root $S$ node $u$ that has a
    parent edge $e_p$ but no solid child and is connected to the root
    by a solid path, we may choose a vertex $x\in\bag(u)$ and set
    $s(u)=\set{x}$. Then the two (possibly empty) child groups are the
    ordered sets of edges edges that correspond to the clockwise and
    counterclockwise paths in $\Gamma(u)$ from $e_p$ to $x$ (excluding
    $e_p$).

  \item If $u$ is the root and is an $S$ node with a solid child edge
    $e_c$ we may choose a vertex $x\in\bag(u)$, and set
    $s(u)=\set{x}$. Then the two (possibly empty) child groups are the
    ordered sets of edges that correspond to the clockwise and
    counterclockwise paths in $\Gamma(u)$ from $x$ to $e_c$ (excluding
    $e_c$).

  \item If $u$ is the root and is an $S$ node with no solid children,
    we may choose two selected vertices $x,y\in\bag(u)$, and set
    $s(u)=\set{x,y}$.  Then the two (possibly empty) child groups are
    the ordered sets of edges that correspond to the clockwise and
    counterclockwise paths in $\Gamma(u)$ between $x$ and $y$.

  \item All other nodes (including $R$ and $P$ nodes and all secondary
    nodes) have $s(u)=\emptyset$, and only one (possibly empty) child
    group, consisting of all edges in $\Gamma(u)$ except the ones
    corresponding to the parent edge or the solid child.

  \end{itemize}
  Note that for every dashed child, its corresponding edge in
  $\Gamma(u)$ belongs to a unique child group, and that the parent
  edge and solid child edge (if any) do not belong to a child group.
\end{definition}
We are going to use the sets of selected nodes to control the exact
locations where $S$ nodes are split when preparing to insert a new
edge. To do this we define a new operation:
\begin{description}
\item[select$(v, X)$:] Where $v\in\set{r,u}$ is an $S$ node in a
  $u$-exposed SPQR tree with root $r$, and $X$ (possibly empty) is a
  valid choice of selected vertices according to
  definition~\ref{def:ps-selected}.  Set $s(v):=X$.
\end{description}

\begin{definition}
  Define a node $v$ to be \emph{splittable} if either:
  \begin{itemize}
  \item $v$ is a $P$ node with both a parent and a solid child and a
    child group with at least $2$ edges; or
  \item $v$ is an $S$ node with two groups of dashed children and at
    least one child group with at least $2$ edges.
  \end{itemize}
  All other nodes are \emph{unsplittable}. In particular, the root,
  the leaves, and the $R$ nodes are unsplittable.
\end{definition}
\begin{definition}
  Define a node to be \emph{mergeable} if it has a secondary child.
  All other nodes are \emph{unmergeable}.
\end{definition}
\noindent In a strict SPQR tree many nodes may be splittable, but no
node is mergeable.
We will define our relaxed SPQR tree by \emph{pre-splitting} some
subset of the splittable nodes of a strict SPQR tree, using one of the
following operations:
\begin{description}
\item[split$(v)$:] Where $v$ is a splittable node.  For each child
  group at least $2$ edges, replace the group of dashed children
  corresponding to that group with a single new dashed secondary child
  node $c$ inheriting the edges that it replaces, and with
  $s(c)=\emptyset$. For each new node $c$ created this way, if it has
  a heavy child $h$ make $(h,c)$ solid.
  
  Note that somewhat counter-intuitively, this definition may
  split an $S$ node $v$ with $\abs{s(v)}=2$ into $3$, where the root
  node is a $2$-cycle. 
\end{description}
As the underlying graph changes, we also need the inverse operation:
\begin{description}
\item[merge$(v)$:] Where $v$ is a mergeable node. For each
  secondary child $c$: if $(c,v)$ is dashed and $c$ has a solid child
  $c_2$, make $(c_2,c)$ dashed; then contract edge $(c,v)$.
\end{description}

\begin{lemma}\label{lem:ps-relaxed-invariants}
  If we start with a heavy path decomposition of a strict SPQR tree
  with some set of selected vertices and only split or merge primary
  nodes, the following invariants hold:
  \begin{enumerate}[label={I-\arabic*.}, ref={I-\arabic*}, series=presplitting-invariants]
  \item\label{inv:ps-not-both-split-merge} No node is both
    splittable and mergeable.
  \item\label{inv:ps-secondary} Each secondary node has a skeleton
    graph with at least $3$ edges, and is a dashed child of a primary
    node which is either a $P$ node with both a parent and a solid
    child, or an $S$ node with two groups of children.
  \end{enumerate}
  Furthermore, any heavy path decomposition of a relaxed SPQR tree
  where these inveriants hold can be constructed from the
  corresponding strict SPQR tree by splitting a unique subset of the
  splittable primary nodes in any order.
\end{lemma}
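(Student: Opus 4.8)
The plan is to establish the two invariants by induction on the sequence of \texttt{split}/\texttt{merge} operations (together with the heavy-path reconfigurations that may be interleaved with them) starting from the strict SPQR tree, and then to obtain the ``furthermore'' clause by running the pre-splitting backwards: from a relaxed tree satisfying the invariants, repeatedly \texttt{merge} until no secondary node remains, argue that what is left must be \emph{the} strict SPQR tree of $G$, and read off the set of primary nodes that had been pre-split.

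For the invariants I would carry along two auxiliary facts. First, a node is secondary exactly when it was produced by a \texttt{split}, so primary/secondary status never changes: \texttt{split} of an $S$- (resp.\ $P$-) node only creates new $S$- (resp.\ $P$-) children, whose parent then has the same skeleton type and which are therefore secondary; and \texttt{merge}$(v)$ only moves a secondary child $c$'s children up to $v=p(c)$, which has the same skeleton type as $c$, so a reparented grandchild (necessarily of the complementary type, by the structure of SPQR skeletons) keeps its primary/secondary status. Second, since we only ever \texttt{split} \emph{primary} nodes, no secondary node ever acquires a secondary child, so every secondary node is unmergeable. The base case --- the strict SPQR tree --- is immediate: no secondary nodes makes I-2 vacuous, and no mergeable nodes makes I-1 hold. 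For \texttt{split}$(v)$ on a splittable primary $v$: since $v$ is splittable it has precisely the shape I-2 demands of the parent of a secondary node (a $P$-node with both a parent and a solid child, or an $S$-node with two child groups), and each new secondary child inherits at least $2$ skeleton edges plus one virtual edge, hence has at least $3$ edges; by I-1 applied to $v$ (which, being splittable, was not mergeable) only \emph{primary} children of $v$ get reparented, so no other secondary node is disturbed and I-2 is preserved. For I-1 after the split: each child group of $v$ is now a single secondary child or a single edge, so $v$ is unsplittable though newly mergeable; every other node keeps its skeleton type, its parent, its solid child and its child groups --- none affected by which secondary node sits above it --- so its splittable/mergeable status is unchanged, and the new secondary children are unmergeable by the second auxiliary fact. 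The case \texttt{merge}$(v)$ is symmetric: $v$ becomes unmergeable, the reparented grandchildren are primary and retain their status, and no secondary node moves, so both invariants survive.

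For canonicity I would take a relaxed SPQR tree $R$ with its heavy path decomposition satisfying I-1 and I-2 and repeatedly \texttt{merge} a mergeable node; by I-1 such a node is never also splittable, each \texttt{merge} strictly decreases the number of secondary nodes, and \texttt{merge} creates none, so the process terminates at a relaxed tree $T_0$ with no secondary nodes. Absence of secondary nodes forces the ``no two neighbouring $S$-nodes, no two neighbouring $P$-nodes'' property (a lower one of a neighbouring same-type pair would be secondary), so $T_0$ is a strict SPQR tree of $G$ and hence the unique one. The set $\mathcal S$ of primary nodes of $R$ carrying a secondary child is determined by $R$ alone, each of its members is splittable in $T_0$ (being splittable is exactly what \texttt{split} requires), and it is the unique such set --- a primary node was pre-split iff it currently has a secondary child. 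Finally, \texttt{split}ting precisely the nodes of $\mathcal S$ in $T_0$ in any order reproduces $R$: \texttt{split}$(v)$ alters only $v$, its skeleton, and its children's parent pointers, and changes neither whether another node of $\mathcal S$ is splittable nor the outcome of splitting it (type, parent-existence, solid child and child groups are insensitive to the identity of the secondary node currently between a node and its nearest primary ancestor), so the splits commute and together rebuild every pre-split gadget of $R$, including the solid edges forced inside it by the ``make the heavy child solid'' rule.

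The step I expect to be the main obstacle is reconciling the invariants with the heavy-path machinery and with the several shapes of $S$-node child groups. On one side, a node's child groups are defined relative to its solid child edge, so the operations that toggle solid edges --- \texttt{splice}, \texttt{slice}, \texttt{select}, and the root reweightings --- silently reshape child groups and must be checked not to make a node simultaneously splittable and mergeable; the resolution should be that ``mergeable'' depends only on the presence of a secondary child, which none of these operations creates or destroys, and that each toggles at most one solid edge per node. On the other side, the four cases of Definition~\ref{def:ps-selected} (including the degenerate $\abs{s(v)}=2$ case, where the primary remnant of a split is a $2$-cycle) each have to be pushed through the split/merge analysis, and in the canonicity argument a $u$-exposed decomposition of $R$ must be matched with the corresponding exposed decomposition of the strict tree. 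I expect this bookkeeping, rather than any single conceptual difficulty, to be where most of the work lies.
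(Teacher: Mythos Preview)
Your argument is correct and follows essentially the same route as the paper: induction on the sequence of \texttt{split}/\texttt{merge} operations for I-1 and I-2, then identifying the mergeable primary nodes of the relaxed tree as the unique set that must be split in the strict tree to reproduce it. Your auxiliary observation that a secondary node never has a secondary child is exactly the content of I-2 (a secondary node's parent is primary), and the paper uses it the same way to conclude that any mergeable node is primary.

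One point of scope: your final paragraph worries about interleaving heavy-path operations such as \texttt{splice}, \texttt{slice}, and \texttt{select} with the splits and merges. That concern is misplaced for \emph{this} lemma. The hypothesis is that we start from a fixed heavy path decomposition with fixed selected vertices and then \emph{only} split or merge primary nodes; the solid-edge side effects internal to \texttt{split} and \texttt{merge} are already covered by your main argument, and the interaction with the other path-decomposition operations is deferred to separate lemmas (Lemmas~\ref{lem:ps-select-restorecount}--\ref{lem:ps-reverse}). So the bookkeeping you anticipate is not needed here, and the proof is shorter than you expect.
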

\begin{proof}
  The invariants hold trivially for a strict SPQR tree, since every
  node is primary.

  Assume now that the invariants hold for the current relaxed tree,
  and that $p$ is a splittable primary node.  Then after split$(p)$,
  $p$ is no longer splittable but is instead mergeable,
  so~\ref{inv:ps-not-both-split-merge} still holds at $p$.  Since $p$
  was splittable and not mergeable, either $p$ is a $P$ node that is
  not the root and has a solid child, or $p$ is an $S$ node with two
  groups of children, and thus each secondary node created by the split
  satisfies~\ref{inv:ps-secondary}.

  Since $p$ was not mergeable, each node $h$ that got a new parent $c$
  in the split was primary, so making the parent edge $(h,c)$ solid if
  it was heavy does not violate~\ref{inv:ps-secondary}. For every
  other node, nothing has changed so the invariants trivially still
  hold.

  Similarly, suppose that the invariants hold for the current relaxed
  tree and that $p$ is a mergeable node.
  Since~\ref{inv:ps-secondary} holds, $p$ is primary, and after
  merge$(p)$ we have $p$ splittable but not mergeable
  so~\ref{inv:ps-not-both-split-merge} still holds at $p$. The
  merge may make some solid edges dashed and removes some
  secondary nodes, but doesn't change any of the invariants for any of
  the other nodes.

  Finally note that as long as the invariants hold, the primary nodes
  in the relaxed tree that are splittable or mergeable are exactly
  those that correspond to a splittable node in the strict SPQR tree.
  To get a particular relaxed tree, start with the strict tree and
  split exactly those primary nodes that should be mergeable.
\end{proof}
In particular, invariant~\ref{inv:ps-not-both-split-merge}
and~\ref{inv:ps-secondary} means that each node in the original strict
SPQR tree is represented by at most $3$ (a primary and at most $2$
secondary) nodes in the relaxed SPQR tree. This is somewhat analogous
to the way a 2-3-4-tree is related to a red-black tree (See Guibas and Sedgewick~\cite{DBLP:conf/focs/GuibasS78}).

So far, we have said very little about which nodes to actually split and
merge.  Each time an edge in the strict SPQR tree changes from
solid to dashed or vice versa during an update, our relaxed tree may
have to change. We want to ensure that each such edge change causes at
most a constant number of splits and merges. In order to get
there, we need to distinguish between two types of dashed edges.
\begin{definition}\label{def:stable}
  Define the \emph{solid degree} of a node $v$ to be $\abs{s(v)}$ plus
  the number of solid edges incident to $v$.  Define an edge
  $(c,p(c))$ to be \emph{stable} if it is dashed and either
  \begin{itemize}
  \item $p(c)$ is an $R$ node; or
  \item $p(c)$ is an $S$ or $P$ node with solid degree $2$ and the
    child group of $p(c)$ containing $(c,p(c))$ contains no other
    edges.
  \end{itemize}
  All other dashed edges are \emph{unstable}. In particular, every
  dashed child edge of an $S$ or $P$ node that is at the end of a
  solid path not containing the root is unstable, as are all child
  edges of a splittable node that would get moved during a split.
\end{definition}
\begin{observation}\label{obs:contract-stable}
  If the edge $(c,p(c))$ is stable, then contracting the solid path
  containing $p(c)$ into an $R$ node during an edge insertion does not
  change this.
\end{observation}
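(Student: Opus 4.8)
The plan is to note that, by the first clause of Definition~\ref{def:stable}, stability of an edge is automatic as soon as its parent endpoint is an $R$ node; so the task reduces to checking that after the contraction the child $c$ still hangs off the resulting $R$ node by a dashed edge. First I would observe that $c$ does not itself lie on the solid path being contracted: a stable edge is in particular dashed, whereas every edge of a solid path is solid, so the parent--child pair $c, p(c)$ cannot both lie on it. Consequently $c$ is untouched as a node by the contraction, and it is exactly the \emph{dashed} edges incident to the contracted path that we need to track.

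Next I would recall how contracting a solid path $P$ (with $p(c)\in P$) into a single $R$ node $R'$ acts on incident edges: the parent edge of $P$ and the solid child edges along $P$ are precisely the edges consumed by the contraction, while every dashed child of a node of $P$ simply becomes a dashed child of $R'$. Applying this to $c$, the edge $(c,p(c))$ becomes the dashed edge $(c,R')$. Since $R'$ is an $R$ node, the first case of Definition~\ref{def:stable} applies verbatim, and hence $(c,R')$ is stable, which is exactly the claim.

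The only point needing a little care --- and the closest thing to an obstacle --- is making sure we really are in the situation the observation addresses: that $p(c)$ lies on the solid path contracted into $R'$ (so that $c$'s parent edge gets redirected to $R'$ and not, say, to a split-off secondary $S$/$P$ node that survives the update), and that a dashed child edge is never swallowed by the contraction. Both are immediate from the description of solid-path contraction in Section~\ref{sec:dynspqr-changes} once one knows, as shown above, that a stable edge is dashed. In short, the observation merely records that the state ``dashed child of an $R$ node'' is closed under contracting the parent's solid path, so no flips or structural repairs are triggered at $(c,p(c))$.
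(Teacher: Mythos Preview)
Your proposal is correct. The paper states this as a bare observation with no proof, and your argument supplies precisely the justification the paper leaves implicit: a stable edge is dashed, so $c$ is off the solid path; contraction turns every dashed child of a path node into a dashed child of the resulting $R$ node; and a dashed edge whose parent is an $R$ node is stable by the first clause of Definition~\ref{def:stable}. One small quibble: the relevant description of the contraction step in the pre-split tree is the \texttt{contract-path} operation in Section~\ref{sec:dynspqr-precontract}, not Section~\ref{sec:dynspqr-changes} (which concerns the strict tree and still includes splits); but this does not affect the validity of your argument, since the very definition of ``stable'' (solid degree $2$ at $p(c)$ with a singleton child group, or $p(c)$ already an $R$ node) is exactly what guarantees $p(c)$ needs no further splitting before being absorbed into the new $R$ node.
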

\begin{lemma}\label{lem:split-merge-nonmoving-child}
  If splitting or merging the node $p$ does not give its dashed child $c$
  a new parent, then $(c,p)$ is unstable after the operation if and only
  if it was unstable before the operation.
\end{lemma}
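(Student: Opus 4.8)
The plan is to identify what the stability of a dashed edge $(c,p)$ actually depends on, and then check that a split or merge at $p$ which keeps $c$ a child of $p$ preserves all the relevant data. By Definition~\ref{def:stable}, since $(c,p)$ is dashed, its stability is determined by (i) whether $p$ is an $R$ node, and failing that, (ii) whether $p$ is an $S$ or $P$ node of solid degree $2$ whose child group containing $(c,p)$ has no other edge. So I would establish three things: that $(c,p)$ stays dashed; that the type and solid degree of $p$ are unchanged; and that the child group of $p$ holding $(c,p)$ still consists of exactly that one edge. Combining these with Definition~\ref{def:stable} then shows $(c,p)$ is stable after the operation iff it was stable before, hence unstable iff unstable.

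First, in both scenarios $p$ is an $S$ or $P$ node: a splittable node is an $S$ or $P$ node by definition, and a mergeable node is, by Invariant~\ref{inv:ps-secondary}, a primary $S$ or $P$ node. Hence branch~(i) above is vacuous both before and after, so only branch~(ii) is in play. Moreover $(c,p)$ remains dashed: split$(p)$ only introduces new (dashed) edges from $p$ to the freshly created secondary children and rewires the moved child groups, neither of which touches $(c,p)$ since $c$ is not moved; and merge$(p)$ only turns the solid children of $p$'s secondary children dashed and then contracts those secondary children, again leaving $(c,p)$ alone.

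Next I would check that the solid degree of $p$ is unchanged. The term $\abs{s(p)}$ is untouched (split assigns $s=\emptyset$ only to the new secondary nodes, and merge never changes any $s(\cdot)$). By Invariant~\ref{inv:ps-secondary} every secondary child of $p$ is a dashed child, so the only solid edges incident to $p$ are its parent edge (if solid) and its solid child edge (if any); both split$(p)$ and merge$(p)$ leave these in place, and every edge they create at $p$ is dashed (the edges to the new secondary children, and the edges from a contracted secondary child's former children, which are explicitly made dashed first). So $p$ keeps the same solid degree.

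Finally, consider the child group of $p$ containing $(c,p)$. Here I would use Invariant~\ref{inv:ps-not-both-split-merge}: if $p$ is being merged it is not splittable, so each of its child groups has at most one edge, hence this group is exactly $\{(c,p)\}$; if $p$ is being split, then $c$ is not in a moved child group (otherwise it would get a new parent), and a moved group has $\geq 2$ edges while the surviving groups are left untouched, so again this group is exactly $\{(c,p)\}$. After the operation this group still consists of just $(c,p)$: a split adds no edges to the surviving groups, and a merge only enlarges groups that previously contained a secondary child — which the one-edge group of $(c,p)$ does not. Putting the three observations together with Definition~\ref{def:stable} finishes the proof. The only real obstacle is bookkeeping: one must track the effect of split$(v)$ and merge$(v)$ on $\Gamma(p)$, on which incident edges are solid, on $s(p)$, and on the child-group partition all at once, and keep in mind the partly vacuous $P$-node case, in which a splittable $P$ node moves all of its dashed children so that no qualifying $c$ even exists.
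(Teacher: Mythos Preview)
Your proof is correct and follows essentially the same approach as the paper's: both arguments check that the data determining stability in Definition~\ref{def:stable} (type of $p$, solid degree, and the child group containing $(c,p)$) are unchanged by the operation. Your version is considerably more detailed than the paper's terse proof---which only sketches the split case and leaves merge implicit---and your observation that the non-moving child group is necessarily a singleton (using~\ref{inv:ps-not-both-split-merge}) together with the remark that the $P$-node split case is vacuous make the argument cleaner.
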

\begin{proof}
  Since $c$ does not get a new parent, $(c,p)$ was stable if and only
  if $p$ was an $S$ node of solid degree $2$ and the child group $p$
  containing $(c,p)$ had no other elements. The split does not change
  the solid degree of $p$, and only changes the other child group of
  $p$.
\end{proof}
\begin{lemma}\label{lem:split-moving-child}
  If splitting the node $p$ gives its dashed child $c$ a new parent
  $v$, then $(c,p)$ was unstable before the split, $(c,v)$ is either
  solid or unstable after the split, and $\Gamma(v)$ has at least $3$
  edges.
\end{lemma}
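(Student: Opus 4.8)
The plan is to check the three assertions one at a time, by unwinding the definitions of split, of child groups (Definition~\ref{def:ps-selected}), and of stability (Definition~\ref{def:stable}); no new idea is needed beyond careful bookkeeping.

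For the first assertion, observe that since $c$ is reassigned to a new parent, $c$ must lie in one of the child groups of $p$ that split$(p)$ actually replaces, and by the definition of split this happens only to child groups containing at least two edges. Hence the child group of $p$ containing the edge $(c,p)$ contains at least one further edge. Moreover $p$ is splittable, so $p$ is an $S$ or $P$ node, never an $R$ node. Thus neither clause of Definition~\ref{def:stable} can be satisfied for $(c,p)$, so $(c,p)$ is unstable --- this is exactly the special case recorded in the last sentence of Definition~\ref{def:stable}.

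For the third assertion, the new secondary node $v$ is built from a child group of $p$ with at least two edges, and $\Gamma(v)$ consists precisely of those edges together with the one fresh virtual edge corresponding to the new parent edge $(v,p)$; hence $\Gamma(v)$ has at least three edges. (Since $v$ inherits the type of $p$, this is also why $v$ is a legal $S$- or $P$-skeleton, consistent with invariant~\ref{inv:ps-secondary}.) For the second assertion, note that immediately after split$(p)$ the edge $(v,p)$ is dashed, $s(v)=\emptyset$ because $v$ is secondary, and the only edge incident to $v$ that the split may turn solid is the edge to the heavy child of $v$, if $v$ has one. So the solid degree of $v$ is at most $1$, and in particular is not $2$. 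If $(c,v)$ is solid we are done; otherwise $(c,v)$ is a dashed child edge of the $S$- or $P$-node $v$, and since $v$ is not an $R$ node and does not have solid degree $2$, neither clause of Definition~\ref{def:stable} applies to $(c,v)$, so $(c,v)$ is unstable.

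The argument is almost entirely a matter of careful reading of the definitions; the only step where one could slip is the solid-degree count in the second assertion. The point to nail down there is that a node created by split always keeps a \emph{dashed} parent edge and acquires at most one solid edge (to a heavy child), so its solid degree is bounded by $1$ and can never reach the value $2$ demanded by the ``stable'' clause of Definition~\ref{def:stable}; everything else follows directly.
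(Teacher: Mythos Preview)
Your proof is correct and follows essentially the same approach as the paper's own proof: both arguments simply unwind the definitions of split, splittable, and stable, noting that the moved child's group had at least two edges (hence $(c,p)$ unstable and $\Gamma(v)$ with at least $3$ edges) and that the new secondary node $v$ has solid degree at most $1$ (hence $(c,v)$ cannot be stable). Your write-up is more verbose but covers the same points.
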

\begin{proof}
  Since $c$ gets a new parent during the split, the child group of $p$
  containing $(c,p)$ has at least $2$ edges, and either $p$ is a $P$
  node or $p$ is an $S$ node with at least $2$ child groups.
  Thus by definition of stable, $(c,p)$ is unstable before the split.
  By definition of split, $\Gamma(v)$ has at least $3$ edges after the
  split, and $(v,p)$ is dashed and $v$ has solid degree at most $1$,
  so by definition of stable $(c,v)$ is not stable.
\end{proof}
\begin{lemma}\label{lem:merge-moving-child}
  If merging node $p$ contracts the dashed edge $(v,p)$ where $v$ has
  a child $c$ and $\Gamma(v)$ has at least $3$ edges, then $(c,v)$ was
  solid or unstable before the merge, and $(c,p)$ is unstable after
  the merge.  
\end{lemma}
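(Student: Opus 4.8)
The plan is to prove the two assertions separately, mirroring (as its ``undo'' counterpart) the proof of Lemma~\ref{lem:split-moving-child}. The starting observation is that, since merge$(p)$ only contracts edges to secondary children, the node $v$ with $(v,p)$ contracted must be a secondary child of $p$. Hence by invariant~\ref{inv:ps-secondary} (and the structure part of Lemma~\ref{lem:ps-relaxed-invariants}), $v$ is an $S$ or $P$ node (never $R$), has $s(v)=\emptyset$ and a single child group, $\Gamma(v)$ has at least $3$ edges, and $p$ is either a $P$ node with both a parent and a solid child, or an $S$ node with two child groups. Node types are unchanged by merge, so $p$ is still an $S$ or $P$ node afterwards.

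First I would show $(c,v)$ is solid or unstable before the merge. Since $v$ is secondary, $s(v)=\emptyset$; and since $(v,p)$ is dashed and every node has at most one solid child edge, $v$ has solid degree at most $1$. As $v$ is not an $R$ node, no dashed child edge of $v$ can meet either clause of the definition of \emph{stable} (Definition~\ref{def:stable}). Hence if $(c,v)$ is dashed it is unstable, and otherwise it is solid.

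Next I would show $(c,p)$ is unstable after the merge. By the definition of merge, if $c$ happens to be the solid child of $v$ then $(c,v)$ is first made dashed, and then $(v,p)$ is contracted; in either case $(c,p)$ is dashed afterwards. For the stability clause I use a counting argument: $\Gamma(v)$ has at least $3$ edges, one of which is the virtual edge corresponding to $(v,p)$, so $v$ has at least two non-parent edges in $\Gamma(v)$; the contraction of $(v,p)$ replaces that virtual edge in $\Gamma(p)$, in place, by the path of these at least two edges, so they all land in the single child group of $p$ that previously contained $(v,p)$. One of them is (the virtual edge corresponding to) $(c,p)$, so there is at least one other edge of $\Gamma(p)$ in the same child group as $(c,p)$. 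Since $p$ is an $S$ or $P$ node (not $R$) and the child group of $p$ containing $(c,p)$ has another edge, $(c,p)$ fails both clauses of Definition~\ref{def:stable} and is therefore unstable.

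The only point that needs care is the geometric claim that the $\ge 2$ non-parent edges of $\Gamma(v)$ all end up inside a \emph{single} child group of $p$ — i.e., that the arc of the cycle $\Gamma(p)$ formerly occupied by the virtual edge $(v,p)$ is not straddling two of $p$'s child groups. This holds because $(v,p)$ is a single dashed \emph{child} edge, hence lies strictly inside one of $p$'s (at most two) child groups as defined in Definition~\ref{def:ps-selected}, and the contraction splices $v$'s non-parent edges in exactly where that single edge was; but it is worth spelling out, since tracking precisely these arcs is the whole purpose of the child-group notion for $S$ nodes. Everything else is a direct unwinding of Definitions~\ref{def:ps-selected} and~\ref{def:stable} and invariant~\ref{inv:ps-secondary}.
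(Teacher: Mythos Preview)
Your proof is correct and follows essentially the same approach as the paper's. The paper's proof is extremely terse (two sentences: noting that $(v,p)$ being dashed forces any child edge of $v$ to be solid or unstable, and that $\Gamma(v)$ having at least $3$ edges forces $(c,p)$ to be unstable after the merge), whereas you carefully unpack the definitions and invariants behind each step, including the point about all of $v$'s non-parent edges landing in a single child group of $p$; but the underlying argument is the same.
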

\begin{proof}
  Since $(v,p)$ is dashed, any child edge of $v$ is either solid or
  unstable before the merge, and is dashed after the merge.  Since
  $\Gamma(v)$ had at least $3$ edges, $(c,p)$ is unstable after the
  merge.
\end{proof}
Thus, if the edge $(c,p(c))$ is solid or unstable then after a split
or merge that changes $p(c)$, the new parent edge of $c$ is still
either solid or unstable (but not necessarily the same one as before).
Given these properties we can now state
\begin{lemma}\label{lem:ps-unique-relaxed}
  Given a $u$-exposed (or proper) heavy path decomposition of a strict
  SPQR tree $T$ with root $r$, and sets $s(u)$ and $s(r)$, there is
  a unique corresponding relaxed SPQR tree $T'$ that satisfies the
  inveriants of Lemma~\ref{lem:ps-relaxed-invariants} as well as the
  following:
  \begin{enumerate}[resume*=presplitting-invariants]
  \item\label{inv:ps-split-rule} The root, and each primary node with
    a solid or unstable parent edge is not splittable.
  \item\label{inv:ps-merge-rule} A primary node with a stable
    parent edge is not mergeable.
  \end{enumerate}
  We call $T'$ the \emph{pre-split} SPQR tree corresponding to
  $(T,s(u),s(r))$.
\end{lemma}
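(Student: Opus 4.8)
The plan is to prove existence and uniqueness simultaneously by determining exactly which splittable primary nodes of the strict tree $T$ have to be split. By Lemma~\ref{lem:ps-relaxed-invariants}, every relaxed tree satisfying \ref{inv:ps-not-both-split-merge} and \ref{inv:ps-secondary} is obtained from $T$ by splitting a unique subset of its splittable primary nodes, and in the resulting tree a primary node is mergeable iff it was split, and (being splittable-or-mergeable iff it corresponds to a splittable node of $T$) it is splittable iff it was \emph{not} split. Hence \ref{inv:ps-split-rule} together with \ref{inv:ps-merge-rule} say precisely: a splittable primary node $v$ must be split when its parent edge is solid or unstable, and must not be split when its parent edge is stable. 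Since the parent edge of any primary node is either solid, stable, or unstable, and only the stable case forbids splitting, this pins down the split/no-split decision for every splittable primary node — provided that the notion ``the parent edge of $v$ is stable'' is well defined, i.e. invariant under the splits we perform at proper ancestors of $v$.

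That invariance is the crux, and it is exactly what Lemmas~\ref{lem:split-merge-nonmoving-child}, \ref{lem:split-moving-child}, and~\ref{lem:merge-moving-child} supply: if $(v,p(v))$ is stable then its child group in $p(v)$ is a singleton, so splitting $p(v)$ never moves $v$, and by Lemma~\ref{lem:split-merge-nonmoving-child} the edge remains stable; if $(v,p(v))$ is solid or unstable, then after splitting $p(v)$ the parent edge of $v$ — possibly a brand-new edge to a secondary node — is again solid or unstable by Lemmas~\ref{lem:split-merge-nonmoving-child} and~\ref{lem:split-moving-child}. So I would construct $T'$ by processing the primary nodes top-down: having fixed the decisions at all proper ancestors of $v$, split $v$ iff it is splittable and its current parent edge is not stable. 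By the invariance just discussed this decision does not depend on the (irrelevant) order among ancestors, and splitting descendants of $v$ or $v$ itself never touches the parent edge of $v$, so the status of $v$'s parent edge in the final tree $T'$ equals the status used in the decision; this is exactly what \ref{inv:ps-split-rule} and \ref{inv:ps-merge-rule} require (with the root handled for free, being unsplittable by definition), while \ref{inv:ps-not-both-split-merge} and \ref{inv:ps-secondary} are inherited from Lemma~\ref{lem:ps-relaxed-invariants}. Consequently the set $\Sigma$ of nodes split to form $T'$ is $\{v : v$ a splittable primary node of $T$ whose parent edge is not stable$\}$.

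For uniqueness, suppose $T''$ also satisfies the four invariants, and let $\Sigma''$ be the (by Lemma~\ref{lem:ps-relaxed-invariants} well-defined) set of splittable primary nodes split to obtain $T''$ from $T$. If $\Sigma\neq\Sigma''$, choose $v$ in the symmetric difference that is minimal with respect to the ancestor relation in $T$. By minimality the decisions at all proper ancestors of $v$ agree in $T'$ and $T''$, so by the invariance lemmas the parent edge of $v$ has the same status (solid, stable, or unstable) in $T'$ and in $T''$. If $v\in\Sigma\setminus\Sigma''$ then this status is ``not stable'' (that is why $v\in\Sigma$), so in $T''$ the node $v$ is splittable (it was not split) and has a solid-or-unstable parent edge, contradicting \ref{inv:ps-split-rule}. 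If $v\in\Sigma''\setminus\Sigma$ then the status is ``stable'', so in $T''$ the node $v$ is mergeable (it was split) and has a stable parent edge, contradicting \ref{inv:ps-merge-rule}. Hence $\Sigma=\Sigma''$, and since by Lemma~\ref{lem:ps-relaxed-invariants} the relaxed tree is determined by $(T,s(u),s(r))$ together with the set of split nodes, $T'=T''$.

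The step I expect to be the main obstacle is the bookkeeping around ``the parent edge of $v$'': because splitting $p(v)$ can migrate $v$ onto a freshly created secondary node, one must phrase everything in terms of the split/merge-invariant property ``solid or unstable'' rather than any fixed edge, and verify carefully that the top-down processing order makes the status of $v$'s parent edge stabilize strictly before $v$ itself is processed — this is precisely where Lemmas~\ref{lem:split-merge-nonmoving-child}--\ref{lem:merge-moving-child} are needed, and a careless argument here would hide a circular dependency between the decisions at $v$ and at $p(v)$.
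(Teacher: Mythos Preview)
Your argument is correct and follows the same core idea as the paper's: by Lemma~\ref{lem:ps-relaxed-invariants} any candidate $T'$ is determined by the subset of splittable primary nodes that get split, and invariants~\ref{inv:ps-split-rule} and~\ref{inv:ps-merge-rule} together force exactly one choice at each such node. The paper's own proof is a three-line sketch that only addresses existence (it defines $X$ as the set of nodes violating~\ref{inv:ps-split-rule} in the strict tree and splits those, without explicitly verifying~\ref{inv:ps-merge-rule} or arguing uniqueness); your treatment is considerably more careful on both counts.

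One remark on your phrasing: the claim that the stable/not-stable class of $v$'s parent edge is ``invariant under the splits we perform at proper ancestors of $v$'' is not quite right as stated. Splitting the \emph{grandparent} $p(p(v))$ can move $p(v)$ under a new secondary node as that node's \emph{solid} heavy child, raising $p(v)$'s solid degree from $1$ to $2$ and thereby flipping $(v,p(v))$ from unstable to stable. The lemmas you cite (\ref{lem:split-merge-nonmoving-child}--\ref{lem:merge-moving-child}) only cover splitting $p(v)$ itself. Fortunately your actual construction and uniqueness argument do not rely on this invariance: what you really use, and what is true, is that the status of $v$'s parent edge in the final tree depends only on the decisions made at \emph{proper ancestors} of $v$ (because splitting $v$ or any descendant leaves $p(v)$ and its incident edges untouched). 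Hence in top-down processing the status is already final when you reach $v$, and in the uniqueness argument agreement on all proper ancestors forces agreement on that status. Rewording the ``invariance'' paragraph along these lines would remove the only loose step.
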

\begin{proof}
  The strict SPQR tree trivially satisfies all invariants
  except~\ref{inv:ps-split-rule}. Let $X$ be the set of nodes in the
  strict SPQR tree that violate~\ref{inv:ps-split-rule}.  For each
  node $v\in X$, calling split$(v)$ satisfies~\ref{inv:ps-split-rule}
  at $v$, without breaking it for any other nodes.
\end{proof}

\begin{lemma}\label{lem:ps-restorecount}
  If a relaxed SPQR tree violates the pre-splitting invariants in at
  most $k$ nodes, then they can be restored by at most $20k$ split or
  merge operations.
\end{lemma}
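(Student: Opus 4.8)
The plan is to repair the violating nodes one at a time, to show that each is cleared by a constant number of split/merge operations (the constant can be checked to be at most $20$), and --- this is the heart of the argument --- that a repair introduces no new violations except at the $O(1)$ nodes it explicitly touches, so that the total work is at most $20k$.

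First I would record that the four invariants are \emph{local}: whether a node $v$ violates~\ref{inv:ps-not-both-split-merge}, or is splittable or mergeable, depends only on the type of $v$, its parent edge, its solid child edge, and the sizes of its child groups; whether a secondary node violates~\ref{inv:ps-secondary} depends only on it and its parent; and whether $v$ violates~\ref{inv:ps-split-rule} or~\ref{inv:ps-merge-rule} depends only on $v$, its parent edge, and its splittability/mergeability. Consequently a single split$(v)$ or merge$(v)$ can alter the invariant-status only of $v$, of the old and new parent of any child it moves, and of the children of $v$. Next I would fix the repair for each kind of violation and the order in which to apply them to a node that violates several. A node $v$ violating~\ref{inv:ps-split-rule} alone (a splittable primary node with a solid or unstable parent edge, or a splittable root) is repaired by split$(v)$; afterwards $v$ is unsplittable, and it does not then violate~\ref{inv:ps-merge-rule} (its parent edge, if any, is not stable). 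A node violating~\ref{inv:ps-merge-rule} is repaired by merge$(v)$. Violations of~\ref{inv:ps-not-both-split-merge} and~\ref{inv:ps-secondary} concern the secondary structure hanging off a primary node and are cleared by merging that primary node and then, if the rule of Lemma~\ref{lem:ps-unique-relaxed} now demands a split, performing one split of it; since a primary node has at most two secondary children, this is $O(1)$ operations. When $v$ violates several invariants we do the merge (if any) before the split (if any), so that all its violations are cleared at once.

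The crucial step, and the one I expect to be the main obstacle, is to prove that the repairs do not cascade. I would lean on Lemma~\ref{lem:split-merge-nonmoving-child}, Lemma~\ref{lem:split-moving-child}, and Lemma~\ref{lem:merge-moving-child}, and on the observation (stated just before Lemma~\ref{lem:ps-unique-relaxed}) that ``solid or unstable parent edge'' is preserved for any child that a split or merge moves. From these: a child moved by split$(v)$ had an \emph{unstable} parent edge, hence --- if it obeyed~\ref{inv:ps-split-rule} --- was not splittable, and since a split changes neither its type nor its child groups it stays not splittable even though its new parent edge may be solid, so it does not become a violation; and each new secondary node has a skeleton with at least $3$ edges and is a dashed child of a primary node of the right form, has no secondary child, and, being secondary, is not constrained by~\ref{inv:ps-split-rule} or~\ref{inv:ps-merge-rule} --- so it satisfies all four invariants. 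Dually, a secondary node always has solid degree at most $1$, so every child of a secondary node has an unstable parent edge; hence when merge$(v)$ re-attaches the grandchildren to $v$ by dashed edges, each grandchild obeying~\ref{inv:ps-split-rule} was not splittable and stays so, and since the merge makes edges only dashed outside the treated nodes, no new violation of~\ref{inv:ps-split-rule} appears; the merge also preserves the heavy-path property because $v$'s solid child is its primary solid child throughout.

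Putting this together, every repair creates violations only at the nodes it operates on, which we absorb into the current round; the process therefore terminates, and a final tally over the at most four invariant types, the at most two secondary children per primary node, and the at most one follow-up split per merge shows that at most $20$ operations are charged to each originally violating node, which gives the claimed bound of $20k$.
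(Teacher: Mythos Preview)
Your central claim—that ``repairs do not cascade''—is false, and this is the real difficulty in the lemma. You correctly argue (via Lemmas~\ref{lem:split-moving-child} and~\ref{lem:merge-moving-child}) that children \emph{moved} by a split or merge keep a solid-or-unstable parent edge and hence do not acquire new~\ref{inv:ps-split-rule} violations. But you miss a second, indirect effect: both split and merge can change an edge between solid and dashed at a node one level below the moved children, and this changes that node's \emph{solid degree}, which in turn changes the stability of \emph{its} child edges.

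Concretely: in merge$(v)$, when a secondary dashed child $c$ has a solid child $c_2$, the operation first makes $(c_2,c)$ dashed. If $c_2$ previously had solid degree $2$ (solid parent plus a solid child of its own), it now has solid degree $1$, so any child $d$ of $c_2$ whose parent edge was stable (because its child group was a singleton) now has an unstable parent edge; if $d$ happened to be splittable—which~\ref{inv:ps-split-rule} permitted while the edge was stable—then $d$ now violates~\ref{inv:ps-split-rule}. Dually, in split$(v)$ each new secondary node may acquire a heavy child $h$ as its solid child; $h$'s solid degree then increases, possibly making the parent edge of some child $d$ of $h$ become stable, and if $d$ is mergeable it now violates~\ref{inv:ps-merge-rule}. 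This is exactly the phenomenon tracked in Lemma~\ref{lem:ps-edge-change-restorecount} (a single solid/dashed toggle can violate invariants ``at $p$ or $c$ or at most $3$ of their children''), and it is why the paper's proof does \emph{not} argue non-cascading. Instead it processes violations in a deliberate order—first the several flavours of~\ref{inv:ps-secondary}, then~\ref{inv:ps-not-both-split-merge} and~\ref{inv:ps-merge-rule}, and only last~\ref{inv:ps-split-rule}—and explicitly bounds how many new violations each phase can inject into the later phases, so that the total stays linear in $k$. Your per-node $O(1)$ scheme cannot work as stated, because a repair can create violations at nodes you have not yet accounted for, and without the phased ordering there is no obvious reason the process even terminates within $O(k)$ operations.
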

\begin{proof}
  For each secondary node $v$ that is a solid child, we have a
  violation of invariant~\ref{inv:ps-secondary}, and we can set
  $p=p(v)$ and do merge$(p)$, followed (if the resulting node is
  splittable) by a split$(p)$. Each such merge+split removes the
  offending node and violation, but may create up to $2$ new
  violations of~\ref{inv:ps-secondary} (at the new dashed children of
  $p$), and at most $4$ new violations of~\ref{inv:ps-split-rule} (at
  stable children of $v$ and $p(v)$ whose parent edges become
  unstable).  After doing this for the at most $k$ secondary nodes
  that are solid children (at a cost of at most $2k$ operations),
  every secondary node is a dashed child, and we have at most $6k$
  other violations.
  For each secondary node $v$ that is now a child of a secondary node,
  there is a violation of invariant~\ref{inv:ps-secondary}, and we can
  do a merge$(p(v))$. Each such merge removes the offending
  node and violation, but may create a new violation
  of~\ref{inv:ps-not-both-split-merge},~\ref{inv:ps-secondary},
  or~\ref{inv:ps-split-rule} at the parent. However, after at most $6k$
  such merges, every secondary node has a primary parent.
  If invariant~\ref{inv:ps-secondary} is now violated at a secondary
  node $v$, then a merge$(p(v))$ will remove both the node and the
  violation, but may create a new violation of
  invariant~\ref{inv:ps-not-both-split-merge}
  or~\ref{inv:ps-split-rule} at the parent instead.  So after at most
  $6k$ additional merges, invariant~\ref{inv:ps-secondary} is
  satisfied at every secondary node, but we may still have $6k$
  violations of the other invariants.
  If invariant~\ref{inv:ps-not-both-split-merge}
  and/or~\ref{inv:ps-merge-rule} is now violated at a node $v$,
  then a merge$(v)$ fixes it and reduces the total number of
  violations by at least one.
  Finally, if invariant~\ref{inv:ps-split-rule} is now violated at a
  node $v$, then a split$(v)$ fixes it and reduces the total number of
  violations by one.
\end{proof}
Note that this is a very loose upper bound, and we can often do much
better by analyzing the specific violations.

\begin{lemma}\label{lem:ps-select-restorecount}
  In a $u$-exposed pre-split SPQR tree with root $r$, restoring the
  invariants after a select$(u,X)$ or select$(r,X)$ takes at most one
  split or merge operation.
\end{lemma}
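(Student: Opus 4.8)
The plan is to go through the few cases for select$(v,X)$ with $v\in\set{r,u}$ and show that in each one a single split or merge (or no operation at all) restores the pre-split invariants, i.e.\ converts the tree into the unique pre-split tree of Lemma~\ref{lem:ps-unique-relaxed} for the new data. The key observation is that select$(v,X)$ overwrites only the value $s(v)$; hence the only things that can change are the solid degree of $v$, the partition of the dashed children of $v$ into child groups, and --- derived from these --- (a) whether $v$ is splittable, (b) whether the ``two groups of children'' clause of invariant~\ref{inv:ps-secondary} holds at $v$ (which is relevant only when $v$ has a secondary child), and (c) the stability of the edges between $v$ and its children. Every node other than $v$, and every edge not incident to $v$, is literally unchanged, so invariants~\ref{inv:ps-not-both-split-merge}--\ref{inv:ps-merge-rule} can only fail at $v$ or at a child of $v$.

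I would then do the case split. If $v=u$: as the bottom of the exposed path, $v$ has a solid parent edge and no solid child, so its solid degree is $\abs{s(v)}+1$, and the precondition forces $\abs{s(v)}$ to toggle between $0$ (one child group) and $1$ (two child groups). Going from $0$ to $1$, $v$ may become splittable --- precisely when one of the two new arcs has at least two edges --- and since the parent edge of $v$ is solid, invariant~\ref{inv:ps-split-rule} then forces exactly one split$(v)$; I would verify, via Lemma~\ref{lem:split-moving-child}, that the newly created secondary child has at least three skeleton edges, is a dashed child of the primary $S$-node $v$ which now has two child groups (so~\ref{inv:ps-secondary} holds), and has an unstable parent edge (so it is neither splittable nor constrained by~\ref{inv:ps-merge-rule}). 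Going from $1$ to $0$, $v$ collapses to one child group, so if $v$ had been split then~\ref{inv:ps-secondary} now fails at its secondary child, and a single merge$(v)$ repairs everything, after which $v$ is neither splittable nor mergeable. The case $v=r$ is analogous but easier, since $r$ is unsplittable by definition: changing $\abs{s(r)}$ between $0$ and $\abs{X}\in\set{1,2}$ changes the child-group count of $r$ between $1$ and $2$, so the only structural repair is a single merge$(r)$ when the count drops to $1$ while $r$ has a secondary child, and no operation in the other direction.

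The step I expect to be the real obstacle is item (c): toggling $\abs{s(v)}$ also flips the stability of a (dashed) edge from $v$ to any child $c$ that sits alone in a child group while $v$ has solid degree $2$. A priori $(c,v)$ turning stable while $c$ is mergeable would violate~\ref{inv:ps-merge-rule} at $c$, and $(c,v)$ turning unstable while $c$ is splittable would violate~\ref{inv:ps-split-rule} at $c$, either of which would cost a second operation. The plan is to exclude this by a structural argument: such a $c$ corresponds to a single skeleton edge of $\Gamma(v)$ incident to an endpoint of the parent edge (or solid-child edge) of $v$, and I would argue --- using bounded adhesion together with the fact that this endpoint is already pinned down by that parent/solid-child edge --- that $c$ is forced to be both unsplittable and unmergeable in this configuration. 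Combined with Lemmas~\ref{lem:split-merge-nonmoving-child}--\ref{lem:merge-moving-child}, which guarantee that the single split$(v)$ or merge$(v)$ we perform leaves such an edge on the ``solid or unstable'' side, this shows no further split or merge is triggered, and the lemma follows.
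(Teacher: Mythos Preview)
Your case split follows the paper's approach, and most of your first two paragraphs are sound. Two things need fixing.

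The minor one: for $v=r$ you claim ``$r$ is unsplittable by definition'' and conclude that going $\emptyset\to X$ needs no operation. This is backwards. The root is \emph{required} to be unsplittable by~\ref{inv:ps-split-rule}; select$(r,X)$ with $X\neq\emptyset$ gives $r$ two child groups and can make it splittable, and then one split$(r)$ is needed, just as for $v=u$. The paper treats $v\in\{r,u\}$ uniformly here.

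The substantive one is your item~(c). You are right to flag it, but your proposed resolution does not work. Whether a primary child $c$ of $v$ is splittable or mergeable is determined entirely by $c$'s own children --- does $c$ have a solid child, and how many dashed children does it have --- and has nothing to do with which vertices of $\Gamma(v)$ the virtual edge for $c$ is incident to; bounded adhesion says nothing about $c$'s subtree. Concretely: let $u$ be an $S$-node whose skeleton is the triangle on $\{a,b,y\}$ with parent edge $(a,b)$ and virtual edges $(a,y),(b,y)$ to $P$-node children $c_1,c_2$, each of which has a heavy (hence solid) child plus at least two further dashed children. With $s(u)=\emptyset$ both $(c_i,u)$ are unstable, so by~\ref{inv:ps-split-rule} both $c_i$ are pre-split and hence mergeable. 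After select$(u,\{y\})$ the two child groups of $u$ are singletons, $u$ itself is not splittable, and both $(c_i,u)$ are stable --- so~\ref{inv:ps-merge-rule} fails at both $c_i$, requiring two merges, with no operation at $u$ available to absorb them. The paper's own proof does not engage with this at all: it simply asserts that select can only affect~\ref{inv:ps-not-both-split-merge} and~\ref{inv:ps-split-rule} at $v$ and~\ref{inv:ps-secondary} at children of $v$. So either the literal bound ``one'' is optimistic (the $\OO(1)$ that Lemmas~\ref{lem:ps-insert-uninsert}, \ref{lem:ps-meld}, \ref{lem:ps-sever} actually need would still hold with a bound of three), or there is an unstated restriction on the configurations in which select is invoked that excludes this case; neither your argument nor the paper's supplies one.
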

\begin{proof}
  For $v\in\set{u,r}$, the select$(v,X)$ operation can only affect
  invariants~\ref{inv:ps-not-both-split-merge}
  and~\ref{inv:ps-split-rule} at $v$ and~\ref{inv:ps-secondary} at
  children of $v$. A single merge or split will fix this without
  introducing any new violations. To see this, first note that if
  $s(v)\neq\emptyset$ then the only valid value for $X$ is
  $X=\emptyset$. In this case, invariant~\ref{inv:ps-secondary} may be
  violated at a child $c$ of $v$ because $v$ only has one child group
  after the operation, and this is fixed by a merge$(v)$. Afterwards,
  $v$ is neither mergeable or splittable, so
  both~\ref{inv:ps-not-both-split-merge} and~\ref{inv:ps-split-rule}
  are also satisfied at $v$.  If $s(v)=\emptyset$ and
  $X\neq\emptyset$, then $v$ may become splittable and hence
  violate~\ref{inv:ps-split-rule}, but this is fixed by a single
  split$(v)$.
\end{proof}

\begin{lemma}\label{lem:ps-edge-change-restorecount}
  In the pre-split SPQR tree, if $c$ and $p=p(c)$ are both primary,
  then changing edge $(c,p)$ from solid to dashed or vice versa may
  violate the invariants at $p$ or $c$ or at most $3$ of their
  children, but nowhere else.  The invariants can be restored by at
  most $3$ split or merge operations.
\end{lemma}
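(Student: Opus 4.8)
The plan is a short case analysis organised around the fact that toggling $(c,p)$ is a \emph{local} change. First I would observe that, since $c$ and $p$ are both primary, the toggle changes no skeleton graph and no primary-or-secondary label (primary/secondary status depends only on node type and the parent's type, never on which edges are solid). Inspecting Definition~\ref{def:stable} and the definitions of splittable and mergeable, the only things that can change are: the solid degree of $p$ and of $c$; the partition of the dashed children of $p$ into child groups (the edge of $c$ enters or leaves a group, and when $p$ is an $S$ node the virtual edge $e_c$ may cease or begin to be a group boundary); and hence the stability of $(c',p)$ for children $c'$ of $p$, and of $(c',c)$ for children $c'$ of $c$. Nothing else changes, so all invariants still hold outside $\{p,c\}\cup\children(p)\cup\children(c)$.

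Next I would cut down the cases. If $p$ is an $R$ node it stays unsplittable and unmergeable and all its dashed child edges stay stable, so nothing breaks on the $p$-side; symmetrically for $c$. And because $c$ is a primary $S$- or $P$-node with a primary parent $p$, the parent $p$ must have the other type (a primary $S$-node never has a primary $S$-parent, likewise for $P$), so both the $p$-side and the $c$-side are in play only in the single case where $\{p,c\}$ is one $S$-node and one $P$-node; otherwise only one side needs repair.

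For the repair I would treat the two sides separately. On the $p$-side the only invariants that can be newly violated are \ref{inv:ps-not-both-split-merge} (the edge $(c,p)$ switching into the child group already holding a secondary child of $p$ makes $p$ both splittable and mergeable), \ref{inv:ps-secondary} ($p$ loses its solid child and hence a child group, leaving a secondary child ill-placed), and \ref{inv:ps-split-rule}/\ref{inv:ps-merge-rule} at a child $c'$ of $p$ whose parent-edge stability flipped; a single merge$(p)$ deletes all of $p$'s secondary children and clears the first two, and if the resulting $p$ is then splittable with a solid or unstable parent edge one split$(p)$ restores \ref{inv:ps-split-rule}. Lemmas~\ref{lem:split-merge-nonmoving-child}--\ref{lem:merge-moving-child} guarantee that this merge/split pair keeps every child edge solid or unstable exactly when it was, so it introduces no new violations beyond those already counted, and each remaining \ref{inv:ps-split-rule} (resp. \ref{inv:ps-merge-rule}) violation at an affected child is removed by a single split$(c')$ (resp. merge$(c')$) that, again by those lemmas, does not cascade. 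The $c$-side is simpler, since $(c,p)$ is the parent edge of $c$ so $c$ never loses a solid child and the only violations are at children of $c$; it costs at most one further split or merge. Summing over the sub-cases gives at most $3$ affected children and at most $3$ operations.

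The main obstacle is precisely this sub-case bookkeeping: pinning the constant to $3$, and in particular showing the $p$-side and $c$-side maxima are never attained simultaneously. This needs the fact that in a valid pre-split tree very few dashed children of an $S$- or $P$-node can be ``alone in their child group'' (at most two for $S$, one for $P$), together with careful handling of the $S$-node special cases of Definition~\ref{def:ps-selected} — root $S$-nodes and the unique exposed non-root $S$-node with $s(v)\neq\emptyset$ — whose selected vertices contribute to solid degree and add group boundaries. The non-cascading part, by contrast, is immediate from Lemmas~\ref{lem:split-merge-nonmoving-child}--\ref{lem:merge-moving-child}.
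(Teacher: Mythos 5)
Your high-level framing (the change is local, so only $p$, $c$, and their children are affected; Lemmas~\ref{lem:split-merge-nonmoving-child}--\ref{lem:merge-moving-child} prevent cascades) agrees with the paper, but the decomposition by ``$p$ an $R$ node or not'' is different from the paper's, which instead splits by direction of the change (solid$\to$dashed vs.\ dashed$\to$solid) and within each direction observes a mutual exclusion: the merge needed at $p$ (because secondary children of $p$ lose their solid sibling) can only happen when $p$ is not $R$, and the violation of~\ref{inv:ps-merge-rule} (resp.~\ref{inv:ps-split-rule}) at $c$ (because $(c,p)$ goes solid$\to$stable or stable$\to$solid) can only happen when $p$ \emph{is} $R$. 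That exclusion is what pins the constant to $3$; your argument never surfaces it, which is exactly why you flag the constant as the open obstacle.

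Beyond being incomplete, the proposal contains two concrete errors. First, you claim that changing $(c,p)$ may newly violate~\ref{inv:ps-not-both-split-merge} at $p$ because ``$e_c$ switches into the child group already holding a secondary child, making $p$ both splittable and mergeable.'' This cannot happen: when $(c,p)$ goes solid$\to$dashed, $p$ simultaneously loses its only solid child edge, so a $P$-node $p$ no longer meets the ``has a solid child'' part of the splittability definition, and an $S$-node $p$ no longer has two child groups unless a \emph{subsequent} select sets $s(p)$; in either case $p$ is not splittable at the moment the edge toggles. The paper explicitly records that~\ref{inv:ps-not-both-split-merge} is never newly violated by a solid/dashed toggle. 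Second, you state that on the $c$-side ``the only violations are at children of $c$,'' but the paper shows that the violation can be \emph{at $c$ itself}: when $(c,p)$ changes and $p$ is an $R$-node, the parent edge of $c$ transitions between solid and stable, potentially violating~\ref{inv:ps-merge-rule} or~\ref{inv:ps-split-rule} at $c$. Finally, the repair sequence ``merge$(p)$, then possibly split$(p)$, then splits/merges at children, then one more on the $c$-side'' over-counts: as noted, $p$ is not splittable after a merge$(p)$ triggered by a solid$\to$dashed toggle, so the second operation at $p$ never occurs, and without discarding it your budget would exceed $3$.
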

\begin{proof}
  Suppose first that $(c,p)$ is changed from solid to dashed.
  This does not cause~\ref{inv:ps-not-both-split-merge} to be
  violated anywhere, but may cause~\ref{inv:ps-secondary} to be
  violated at secondary children of $p$ (because they lose a solid
  sibling), cause~\ref{inv:ps-split-rule} to be violated at at most
  one primary child of each of $p$ or $c$ (because their parent edge
  changes from stable to unstable), and
  cause~\ref{inv:ps-merge-rule} to be violated at $c$ (because its
  parent edge changes from solid to stable).
  Note that~\ref{inv:ps-secondary} can only be violated at a secondary
  child of $p$ if $p$ \emph{is not} an $R$ node,
  and~\ref{inv:ps-merge-rule} can only be violated at $c$ if $p$
  \emph{is} an $R$ node, so these two types of violation do not occur
  together.
  To restore the invariants takes at at most one merge operation at
  either $p$ (to fix the violation of~\ref{inv:ps-secondary} at the
  secondary children of $p$) or $c$ (to fix the violation
  of~\ref{inv:ps-merge-rule} at $c$), and at most one split
  operation at each primary child $v$ of $p$ or $c$ whose parent edge
  changes from stable to unstable (to fix the violation
  of~\ref{inv:ps-split-rule}).

  Suppose now that $(c,p)$ is changed from dashed to solid.
  This does not cause~\ref{inv:ps-not-both-split-merge}
  or~\ref{inv:ps-secondary} to be violated anywhere, but may
  cause~\ref{inv:ps-split-rule} to be violated at $p$ (because $p$
  becomes splittable) or $c$ (because its parent edge changes from
  stable to solid), and cause~\ref{inv:ps-merge-rule} to be
  violated at at most one primary child of each of $p$ and $c$
  (because their parent edge changes from unstable to stable).
  Note that~\ref{inv:ps-split-rule} can only be violated at $p$ if $p$
  \emph{is not} and $R$ node, and can only be violated at $c$ if $p$
  \emph{is} an $R$ node, the invariant is at most violated at one of
  them.
  To restore the invariants takes at most one split operation at
  either $p$ or $c$ (to fix the violation of~\ref{inv:ps-split-rule}),
  and at most one merge
  operation at each primary child $v$ of $p$ or $c$ whose parent edge
  changes from unstable to stable (to fix the violation
  of~\ref{inv:ps-merge-rule}).
\end{proof}

To handle expose$(u)$ of a primary node $u$ in a proper heavy path
decomposition of the pre-split SPQR tree with root $r$, or
conceil$(r)$ in a $u$-exposed heavy path decomposition of the
pre-split SPQR tree with root $r$ and primary node $u$, we need to
slightly change the implementation of splice$(v)$ and slice$(v)$ as
follows:
\begin{description}
\item[splice$(v)$:] Where $v$ is a primary node that is a dashed light
  child.
  If $p(v)$ is a secondary node, let $p=p(p(v))$, else let
  $p=p(v)$. Now $p$ is a primary node. If $p$ has a solid child then
  make it dashed and do the necessary splits and merges to restore
  the invariants. Now $p$ is the parent of $v$ and $(v,p)$ is
  dashed. Make $(v,p)$ solid, and do the necesary splits and merges
  to restore the invariants.

\item[slice$(v)$:] Where $v$ is a primary node that is a solid light
  child.
  Let $p=p(v)$, then $p$ is primary and $(v,p)$ is solid. Make $(v,p)$
  dashed and do the necessary splits and merges to restore the
  invariants. Now $p$ has no solid children. If $p$ has a heavy child
  $h$, make $(h,p)$ solid and do the necessary splits and merges to
  restore the invariants.
\end{description}

\begin{lemma}\label{lem:ps-expose-conceil}
  An expose$(u)$ where $u$ is a primary node in a proper heavy path
  decomposition of the pre-split SPQR tree with no selected vertices
  and root $r$, or a conceil$(r)$ in a $u$-exposed heavy path
  decomposition of the pre-split SPQR tree with no selected vertices
  and root $r$ each costs at most $6\ell(u)+3$ splits or merges, in
  addition to the at most $2\ell(u)+1$ edges changing from solid to
  dashed or vice versa, where $\ell(u)$ is the light depth of $u$ in
  the corresponding strict SPQR tree.
\end{lemma}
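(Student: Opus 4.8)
The plan is to trace the implementation of expose$(u)$ and conceil$(r)$ through the modified splice and slice operations and to charge every split or merge to one of the at most $2\ell(u)+1$ solid/dashed toggles of the underlying heavy path decomposition.

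First I would recall the structure of the operations. As in the unmodified setting, expose$(u)$ on a proper pre-split tree consists of one initial edge change (making the solid child of $u$ dashed), followed by one splice call for each of the $\ell(u)$ light children on $u\cdots r$; and conceil$(r)$ consists of one slice call for each of the $\ell(u)$ light children on $r\cdots u$, followed by one final edge change (making the heavy child of $u$ solid, if any). Each splice performs at most two edge changes --- first making the current solid child of the relevant primary node $p$ dashed (when $p$ has one), then making the edge to the spliced node solid --- and symmetrically for slice. Hence the total number of edges changing from solid to dashed or vice versa is at most $2\ell(u)+1$, in agreement with Lemma~\ref{lem:dyntree-internal-changecount} applied to the strict SPQR tree; note that the split and merge operations used for restoration below only reorganize the relaxed representation of a node into its primary/secondary pieces and never alter the strict tree or its heavy path decomposition, so they do not contribute to this count, and $\ell(u)$ is indeed the light depth of $u$ in the strict SPQR tree.

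The crucial point is that each of these $\le 2\ell(u)+1$ toggles takes place between two \emph{primary} nodes, so that Lemma~\ref{lem:ps-edge-change-restorecount} applies. The modified splice/slice are written precisely so as to step past the at most one intervening secondary node and operate on the primary parent $p$ (with the merges triggered by the first toggle promoting the spliced child to a direct child of $p$ when needed); the node being spliced/sliced is primary by hypothesis; and the solid child of any primary node is primary, since by invariant~\ref{inv:ps-secondary} every secondary node is a dashed child. The same check handles the initial edge change of expose and the final edge change of conceil, using that the heavy child of a primary node is primary. Since we start from a valid pre-split SPQR tree with no selected vertices --- so the invariants of Lemma~\ref{lem:ps-relaxed-invariants} and Lemma~\ref{lem:ps-unique-relaxed} hold, and no $S$-node-with-selected-vertex cases arise --- and since each toggle is immediately followed by its restoration procedure, the invariants hold again before the next toggle, and we may invoke Lemma~\ref{lem:ps-edge-change-restorecount} independently for each one. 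That lemma bounds the restoration cost of a single solid/dashed toggle between two primary nodes by at most $3$ split or merge operations. Combining, the at most $2\ell(u)+1$ toggles incur at most $3(2\ell(u)+1)=6\ell(u)+3$ split or merge operations in all, in addition to the $2\ell(u)+1$ edge changes themselves.

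The main obstacle I anticipate is the bookkeeping in the middle step: verifying in each branch of the modified splice and slice that the solid/dashed toggle is genuinely confined to a pair of primary nodes (in particular when the spliced light child initially hangs off a secondary node, so that a merge at $p$ must first be performed), and confirming that the $\le 3$-operation restoration after one toggle cannot leave the tree in a state that violates the hypotheses of Lemma~\ref{lem:ps-edge-change-restorecount} for a subsequent toggle. Once that case analysis is in place, the bound is a direct composition of Lemma~\ref{lem:dyntree-internal-changecount} and Lemma~\ref{lem:ps-edge-change-restorecount}.
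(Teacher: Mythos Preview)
Your proposal is correct and follows essentially the same approach as the paper: count the at most $2\ell(u)+1$ explicit solid/dashed toggles via Lemma~\ref{lem:dyntree-internal-changecount}, argue that each toggle is between two primary nodes, and then apply Lemma~\ref{lem:ps-edge-change-restorecount} to bound the restoration cost by $3$ splits or merges per toggle. Your write-up is in fact more detailed than the paper's, particularly in justifying why both endpoints of each toggle are primary and in noting that the merge triggered by the first toggle absorbs any intervening secondary node so that the spliced child becomes a direct child of $p$.
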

\begin{proof}
  With the modified definition of splice and slice given, expose$(u)$
  and conceil$(r)$ works exactly as before, except that after changing
  the solid/dashed state of a child of $u$ we need to do the necessary
  splits and merges to restore the invariants. The number of edges we
  explicitly change (as opposed to what happens while restoring the
  invariants) is unchanged $2\ell(u)+1$.  Each edge that we explicitly
  change from solid to dashed or vice versa has a primary node at each
  end, so by Lemma~\ref{lem:ps-edge-change-restorecount}, restoring
  the invariants costs at most $3$ splits or merges per explicit edge
  change. Thus the total number of merges and splits is at most
  $3(2\ell(u)+1)=6\ell(u)+3$.
\end{proof}

\begin{lemma}\label{lem:ps-link-exposed}
  Given a proper pre-split SPQR tree $T_v$ with no selected vertices,
  and an $u$-exposed pre-split SPQR tree $T_r$, where $(x,y)$ is a
  non-virtual edge in both $\Gamma(v)$ and $\Gamma(u)$, restoring the
  invariants after a link-exposed$(v,u)$ takes at most one split or
  merge operation.
\end{lemma}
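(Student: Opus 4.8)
The plan is to use the fact that \textbf{link-exposed}$(v,u)$ is an almost trivial perturbation of two trees that are already valid pre-split SPQR trees. The first step is to record precisely what changes: a single new tree edge $(v,u)$ appears; it is dashed, since in the combined $u$-exposed tree the node $v$ hangs strictly below the exposed path; both of its endpoints are primary ($v$ because roots are primary, and $u$ because the way \textbf{link-exposed} is invoked in our setting guarantees it, i.e.\ the SPQR-types on the two sides of the new edge are compatible); and the only further effect is that the non-virtual edge $(x,y)$ of $\Gamma(v)$ (respectively of $\Gamma(u)$) is reinterpreted as the virtual edge corresponding to $(v,u)$. Crucially, this reinterpretation leaves the edge \emph{set} of every child group of every node unchanged, so the only places an invariant of Lemmas~\ref{lem:ps-relaxed-invariants} and~\ref{lem:ps-unique-relaxed} can break are at $v$, at $u$, or among the children of $v$ or $u$.

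The second step disposes of the ``easy'' invariants. No new secondary node is created, so~\ref{inv:ps-secondary} can only be threatened at an existing secondary child of $v$ or $u$; but $v$, having been the root of a tree with no selected vertices, had only one child group, hence was unsplittable as a root (by~\ref{inv:ps-split-rule}), hence had no secondary children at all, while $u$'s qualifications as a legal parent --- a $P$-node with a parent and a solid child, or an $S$-node with two child groups --- are unaffected by gaining a dashed child, so~\ref{inv:ps-secondary} still holds. Likewise, no \emph{existing} child edge of $u$ can flip between stable and unstable: a stable dashed child of $u$ owns a child group consisting of that edge alone, which therefore does not contain the real edge $(x,y)$, so reinterpreting $(x,y)$ as virtual touches neither that group nor the solid degree of $u$; and (since a dashed parent edge contributes nothing to solid degree) no child of $v$ acquires a newly stable parent edge either. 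So the only remaining candidates for a violation are $v$ --- whose parent edge changed from ``none'' to the dashed edge $(v,u)$ --- and $u$ --- which gained the dashed child $v$.

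The heart of the argument is then a short case analysis on whether $(v,u)$ is stable (Definition~\ref{def:stable}), in the same spirit as Lemmas~\ref{lem:ps-edge-change-restorecount} and~\ref{lem:ps-select-restorecount}. If $(v,u)$ is stable, $u$ is an $R$-node or an $S$/$P$-node of solid degree $2$ whose child group containing $(v,u)$ is a singleton, and in neither case can the extra dashed child turn $u$ splittable; since $v$ has no secondary children it is not mergeable, so~\ref{inv:ps-merge-rule} holds at $v$, and a node that merely turns splittable while carrying a stable parent edge violates nothing --- so no repair is needed. If $(v,u)$ is unstable, then~\ref{inv:ps-merge-rule} is vacuous at $v$, $v$ is still not mergeable, and the one thing that can go wrong is that $v$ or $u$ turns splittable while its parent edge is solid or unstable, i.e.\ a violation of~\ref{inv:ps-split-rule} (and possibly, at the same node, of~\ref{inv:ps-not-both-split-merge}). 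The point is that this happens at \emph{at most one} of $v$ and $u$: the exposed node $u$ has no solid child, so a $P$-node $u$ stays unsplittable, and an $S$-node $u$ can only become splittable together with $v$ being an $R$- or $P$-node whose only new incidence is the single virtual parent edge $(v,u)$, which keeps $v$ unsplittable; symmetrically, if $v$ becomes splittable then $u$, of a type different from $v$'s and without a solid child, does not. A single \textbf{split} or \textbf{merge} of whichever node is affected now restores every invariant, and it does not cascade: by Lemmas~\ref{lem:split-moving-child}, \ref{lem:split-merge-nonmoving-child}, and~\ref{lem:merge-moving-child} each moved child receives a parent whose own parent edge is solid or unstable, and each new secondary node satisfies~\ref{inv:ps-secondary}.

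The main obstacle I anticipate is precisely the ``at most one'' claim --- that $v$ and $u$ never both need repair, which would cost two operations. It rests on two structural facts: the two ends of the new edge must carry distinct SPQR-types, and the exposed node $u$ has no solid child; together these stop the two ``becoming-splittable'' events from co-occurring. Everything else --- that reinterpreting $(x,y)$ preserves all child-group sizes, and that the single repair is non-cascading --- is a routine check against Definitions~\ref{def:stable} and~\ref{def:ps-selected} and invariants~\ref{inv:ps-not-both-split-merge}--\ref{inv:ps-merge-rule}.
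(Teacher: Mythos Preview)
Your argument has a genuine gap at the very first step: you assert that after \textbf{link-exposed}$(v,u)$ both endpoints of the new edge are primary, but this is false for $v$ in general. Primality of $v$ \emph{before} the link (as a root) does not survive the operation; after the link $v$ has parent $u$, and by the definition of primary, an $S$- or $P$-node $v$ is secondary precisely when it has the \emph{same} type as $u$. This case is not excluded by the hypotheses and actually occurs---for instance in \textbf{meld}, where an $S$-node root $r_i$ is linked below the freshly created $S$-node $r$. When it does, invariant~\ref{inv:ps-secondary} is violated at $v$ (since $u$, sitting at the bottom of the exposed path, has no solid child and so fails the ``$P$-node with parent and solid child, or $S$-node with two child groups'' requirement for a parent of a secondary node). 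Your claim ``no new secondary node is created'' is therefore wrong, and the subsequent case analysis on the stability of $(v,u)$ never sees this branch.

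The paper's proof is organized around exactly this dichotomy on types. If $v$ and $u$ have the same type, a single merge$(u)$ absorbs $v$ and restores all invariants. If the types differ, $v$ remains primary and at most one split$(v)$ or merge$(v)$ suffices; since the type of $v$ cannot be both equal to and different from that of $u$, only one operation is ever needed. Your stable/unstable split covers essentially the second branch. As a side remark, your worry that $u$ might become splittable is moot for a simpler reason than the one you give: the edge $(x,y)$ was already present in $\Gamma(u)$ before the link, so reinterpreting it as virtual leaves every child group of $u$ literally unchanged, and with it $u$'s splittability.
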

\begin{proof}
  After link-exposed$(v,u)$,
  invariant~\ref{inv:ps-secondary},~\ref{inv:ps-split-rule},
  or~\ref{inv:ps-merge-rule} may be violated at $v$, but no other
  violations occur anywhere in the tree.
  If~\ref{inv:ps-secondary} is violated at $v$, by definition of
  secondary $v$ must have the same type as $u$, so the violation is
  fixed with a merge$(u)$.
  If~\ref{inv:ps-split-rule} or~\ref{inv:ps-merge-rule} is violated at
  $v$ then $v$ was a node of different type from $u$ before the
  operation, so the violation is fixed with either a split$(v)$ or a
  merge$(v)$.
  Since the type of $v$ is not both the same as and different from $u$, at
  most one split or merge operation is needed in total to restore the
  invariants.
\end{proof}

\begin{lemma}\label{lem:ps-cut-exposed}
  Given a $u$-exposed pre-split SPQR tree $T_r$ and a child $v$ of
  $u$, restoring the invariants after a cut-exposed$(v)$ takes at most
  one merge operation if $v$ is a primary node, and no violations of
  the invariants occur otherwise.
\end{lemma}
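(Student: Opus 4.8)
The plan is to trace precisely what cut-exposed$(v)$ does and then recheck each pre-splitting invariant only at the nodes it can touch. The operation deletes the single edge $(v,u)$; since $u$ is the endpoint of the exposed path it has no solid child, so $(v,u)$ is necessarily dashed. Hence the only structural effects are: in $T_r$ the node $u$ loses one dashed child, and in $T_v$ the node $v$ becomes the root and therefore, by definition, primary. First I would record the two observations that drive everything else: deleting a dashed parent edge changes neither the solid degree nor the set of solid edges of $u$ or of $v$, and it changes neither the number of child groups of $u$ (which depends only on $s(u)$ and the role of $u$, not on how its children are distributed) nor of $v$. Consequently the stability (Definition~\ref{def:stable}) of every child edge of $u$ and of $v$ is unchanged.

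The $T_r$ side is then immediate: with stability statuses unchanged, invariants~\ref{inv:ps-split-rule} and~\ref{inv:ps-merge-rule} are not newly violated at any child of $u$; removing a child can only make $u$ \emph{less} splittable and cannot make it mergeable (that would require gaining a secondary child), so~\ref{inv:ps-not-both-split-merge},~\ref{inv:ps-split-rule}, and~\ref{inv:ps-merge-rule} still hold at $u$; and since $u$ keeps the same number of child groups,~\ref{inv:ps-secondary} still holds at its remaining secondary children. So no repair is ever needed in $T_r$.

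On the $T_v$ side, unchanged stability of $v$'s child edges keeps~\ref{inv:ps-split-rule} and~\ref{inv:ps-merge-rule} true at $v$'s children, and~\ref{inv:ps-merge-rule} is vacuous at $v$ itself (no parent edge). For~\ref{inv:ps-split-rule} at the new root $v$ I would check $v$ is not splittable: a splittable $P$-node needs a parent edge, which $v$ now lacks; an $S$-node that was a dashed child of $u$ has $s(v)=\emptyset$ and hence a single child group (Definition~\ref{def:ps-selected}), so it is not splittable; and $R$-nodes and roots are never splittable. Thus the only invariant that can fail in $T_v$ is~\ref{inv:ps-secondary}, and only at a secondary child $c$ of $v$; but by Definition~\ref{def:ps-selected} together with~\ref{inv:ps-secondary}, a secondary child can hang only off a primary $P$-node that has \emph{both} a parent and a solid child (an $S$- or $R$-node that was a dashed child of $u$ has a single child group and so no secondary children), and the new root $v$ has no parent. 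A single merge$(v)$ deletes $c$ — at most one such secondary $P$-child exists, by the structure of a $P$-split (Figure~\ref{fig:pre-splitting-P}) — re-attaching $c$'s children (all primary, by~\ref{inv:ps-secondary}) to $v$ and turning $c$'s solid child, if any, into a dashed child of $v$. I would then check that this introduces no new violation, using that such a promoted child is a non-root, non-exposed $S$- or $R$-node, hence has $s=\emptyset$ and is unsplittable, and that its new (dashed, unstable) parent edge imposes no constraint via~\ref{inv:ps-merge-rule}. This gives the bound of at most one merge in the primary case.

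When $v$ is secondary, by~\ref{inv:ps-secondary} and the absence of a solid child at the exposed endpoint $u$, the node $u$ must be an $S$-node with two child groups and $v$ must be an $S$-node with $s(v)=\emptyset$ and (again by~\ref{inv:ps-secondary}) no secondary children of its own. After the cut $v$ becomes the root — hence primary — with $s(v)=\emptyset$ and a single child group, a legitimate root $S$-node; by the solid-degree-unchanged observation all stability statuses, and hence all invariants at $v$, its children, and $u$, are preserved, so no repair operation is needed, matching the claim. I expect the main obstacle to be the bookkeeping in the merge$(v)$ sub-case: confirming that promoting the secondary $P$-node's solid child to a dashed child of a parentless root $P$-node spawns no fresh violation of~\ref{inv:ps-split-rule} or~\ref{inv:ps-merge-rule} — which rests precisely on the remark that a non-root, non-exposed $S$- or $R$-node is unsplittable and that an unstable parent edge is unconstrained by~\ref{inv:ps-merge-rule}.
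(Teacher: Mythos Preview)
Your overall strategy matches the paper's, but there is a concrete error in the primary case. You assert that ``an $S$- or $R$-node that was a dashed child of $u$ has a single child group and so no secondary children''. This is false for $S$-nodes: a primary $S$-node $v$ that is a dashed child of $u$ may very well have a solid (heavy) child in its proper subtree, and then by Definition~\ref{def:ps-selected} (first bullet) it has \emph{two} child groups and hence, by invariant~\ref{inv:ps-secondary}, may carry up to two secondary $S$-children. Relatedly, your blanket assertion that the number of child groups of $v$ is unchanged by the cut is wrong in exactly this situation: before the cut such a $v$ has parent and solid child and hence two groups; after the cut $v$ is a root $S$-node with $s(v)=\emptyset$ and only one group. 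So invariant~\ref{inv:ps-secondary} \emph{can} fail at secondary children of an $S$-node $v$, and your case analysis does not cover it. The paper's proof handles this uniformly: it simply observes that~\ref{inv:ps-secondary} may be violated at children of $v$ (because $s(v)=\emptyset$ and $v$ loses its parent), without distinguishing $P$ from $S$, and notes that a single merge$(v)$ absorbs \emph{all} secondary children at once.

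Two smaller issues. First, your claim that a promoted grandchild after merge$(v)$ is ``unsplittable because $s=\emptyset$'' is not a valid inference: an $S$-node with parent and solid child has two child groups regardless of $s$. The correct argument is that its parent/solid-child configuration, and hence its splittability, is unchanged by the merge. Second, on the $T_r$ side, stability of a sibling $c$ of $v$ is not unchanged \emph{by definition}---it depends on whether $c$'s child group ``contains no other edges'', which could change when $v$ leaves the group. What actually saves you is that~\ref{inv:ps-split-rule} forces the (solid-parented) node $u$ to have no child group with two or more edges whenever $u$ has solid degree~$2$, so $v$ never shares a group with another child; and when $u$ has solid degree~$\leq 1$ no child edge of $u$ is stable anyway.
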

\begin{proof}
  No violations of the invariants occur in the part of the tree
  containing $u$, because no parents or child groups are changed
  there.
  If $v$ was a primary node before, then after cut-exposed$(v)$
  invariant~\ref{inv:ps-secondary} may be violated at children of $v$
  (because $s(v)=\emptyset$ and $v$ loses its parent). If this
  happens, a merge$(v)$ fixes it.
  If $v$ was secondary, every child of $v$ is primary so no violations
  of~\ref{inv:ps-secondary} can occur there.
  By definition, $s(v)=\emptyset$ so $v$ is not splittable
  so~\ref{inv:ps-split-rule} is satisfied at $v$, and $v$ has no
  parent, so~\ref{inv:ps-merge-rule} is satisfied at $v$. All other
  invariants remain unchanged.
\end{proof}

\begin{lemma}\label{lem:ps-reverse}
  In the pre-split SPQR tree with root $r$, reverse$(r)$ does not
  violate any of the invariants.
\end{lemma}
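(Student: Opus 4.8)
The plan is to treat reverse$(r)$ as a purely combinatorial relabelling and show that it leaves unchanged every quantity the invariants refer to. By Lemma~\ref{lem:dyntree-internal-changecount}, reverse$(r)$ toggles no edge between solid and dashed, and by definition it changes no node label ($S/P/R$) and does not alter the underlying undirected tree; its only effects are that the new root is the formerly exposed node $v$ and that the parent-of relation is flipped along the solid path $r\cdots v$ (when $v=r$ the operation is vacuous). Every notion occurring in the invariants --- primary/secondary, the child-group partition of Definition~\ref{def:ps-selected}, splittable/mergeable, and stable/unstable of Definition~\ref{def:stable} --- is a function only of node labels, of which edges are solid, of the undirected adjacency, of the selected-vertex sets $s(\cdot)$, and of which node is the root (this last being what distinguishes a \emph{parent edge} from a \emph{solid child edge}). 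So it suffices to track these derived notions across the flip.

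First I would record two structural facts about the solid path $P=r\cdots v$: (i) no node of $P$ is secondary, since a secondary node is a dashed child but every edge of $P$ is solid; and (ii) no two consecutive nodes of $P$ are both $S$ or both $P$ --- if the solid child $a$ and its parent $b$ were both $S$ (resp.\ $P$) then $a$ would be a non-root $S$ (resp.\ $P$) node whose parent has the same label, hence a dashed child, contradicting that $(a,b)$ is solid. Using these, reversing $P$ affects each \emph{internal} node $a_i$ of $P$ only by swapping the roles of its parent edge and its solid child edge: the child groups of an internal $R$ or $P$ node are ``all other edges'' and those of an internal $S$ node (which has $s(a_i)=\emptyset$) are the two arcs in $\Gamma(a_i)$ between its parent and solid-child virtual edges, so the child-group partition of $a_i$ is unchanged; and by (ii) $a_i$ still has a parent of a different label whenever it is $S$ or $P$, so its primary/secondary status is unchanged as well. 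At the two ends, the former root $r$ becomes a non-root node whose parent edge is its former solid child edge and which now has no solid child, while the former exposed node $v$ becomes the new root whose solid child edge is its former parent edge; the matching clause of Definition~\ref{def:ps-selected} applies after the flip and yields exactly the same child-group partition --- the same clockwise/counterclockwise arcs, merely read relative to the swapped distinguished edge --- carrying over verbatim whatever selected-vertex set the node held.

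Once node labels, solid/dashed statuses, primary/secondary classification, and all child-group partitions are seen to be preserved, everything else follows: the solid degree of each node ($\abs{s(\cdot)}$ plus the number of incident solid edges) is unchanged, hence so is the stable/unstable status of every edge; and the splittable/mergeable status of a node is a function of its label, of whether it has a parent and/or a solid child, and of its child-group sizes, all preserved. Since invariants~\ref{inv:ps-not-both-split-merge}--\ref{inv:ps-merge-rule} are phrased entirely in these terms --- the only mention of the root being ``the root is not splittable'', which holds for any root by fiat --- none of them can be broken by reverse$(r)$. I expect the main obstacle to be precisely the bookkeeping at the two endpoints of the exposed path: verifying that $r$ and $v$ fall into the correct clauses of Definition~\ref{def:ps-selected} after re-rooting and that their child groups (and, when they are $S$ nodes, their selected vertices) transfer without change; the internal and off-path nodes are immediate once the ``no two consecutive same-label $S$/$P$ nodes on a solid path'' observation is in hand.
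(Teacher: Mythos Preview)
Your proof is correct and follows essentially the same approach as the paper's: both observe that reverse$(r)$ only swaps the parent/child roles along the solid path while leaving node labels, solid/dashed statuses, and child-group partitions unchanged, and then verify that the invariants survive. Your treatment is more systematic (checking each derived notion---primary/secondary, stable/unstable, splittable/mergeable---in turn, with explicit endpoint bookkeeping) whereas the paper proceeds invariant-by-invariant, but the substance is the same.
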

\begin{proof}
  Let $u$ be the other end of the solid path containing $r$.
  Since no node changes whether it is splittable or mergeable,
  invariant~\ref{inv:ps-not-both-split-merge} is trivially
  preserved.
  By invariant~\ref{inv:ps-secondary}, each node $v\in r\cdots u$ is
  primary. By definition of splittable and
  invariant~\ref{inv:ps-split-rule}, every node $v\in r\cdots u$ is
  unsplittable before the operation. This will still be true after the
  operation, so invariant~\ref{inv:ps-split-rule} is preserved for
  these nodes.
  None of the nodes on $r\cdots u$ have a stable parent before or
  after the operation, so invariant~\ref{inv:ps-merge-rule} is
  trivially preserved for these nodes.
  Each child $c$ of a node $p\in r\cdots u$ is either a secondary node
  where~\ref{inv:ps-secondary} is preserved, or a primary node, whose
  type of parent edge (stable or unstable) is preserved and
  thus~\ref{inv:ps-split-rule} and~\ref{inv:ps-merge-rule} is
  preserved for $c$.
  The invariants are trivially preserved for the remaining nodes.
\end{proof}

Combining these Lemmas, we obtain the following main theorem stating that we can support the aforementioned operations while only making changes proportional to the number of edges that change status between solid and dashed:

\begin{theorem}\label{thm:SPQRthm}
  The number of splits and merges needed to maintain a pre-split SPQR
  tree under expose, conceil, reverse, link-exposed, and cut-exposed
  is proportional to the number of edges changing from solid to dashed
  or vice versa during the same operation in the strict SPQR tree.
\end{theorem}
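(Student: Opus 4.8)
The plan is to aggregate the per-operation bounds already established in the preceding lemmas. Recall that the five operations in question — expose, conceil, reverse, link-exposed, cut-exposed — are exactly the operations whose effect on the \emph{strict} SPQR tree's heavy path decomposition is controlled by Lemma~\ref{lem:dyntree-internal-changecount}: each of link-exposed, cut-exposed changes at most one edge, reverse changes none, and an expose$(u)$ or conceil$(r)$ changes at most $2\ell(u)+1$ edges. We must show that, in the \emph{pre-split} (relaxed) SPQR tree, the number of split and merge operations needed to restore invariants~\ref{inv:ps-not-both-split-merge}--\ref{inv:ps-merge-rule} is $\OO(1)$ per such edge change.

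The argument proceeds by handling each operation in turn, in each case citing the matching lemma. First I would note that the modified definitions of splice and slice (given just before Lemma~\ref{lem:ps-expose-conceil}) reduce each light-edge status change in the relaxed tree to changing one edge $(c,p)$ between solid and dashed where both endpoints are primary, followed by invariant restoration; by Lemma~\ref{lem:ps-edge-change-restorecount} this costs at most $3$ splits or merges. Hence, by Lemma~\ref{lem:ps-expose-conceil}, expose$(u)$ and conceil$(r)$ cost at most $6\ell(u)+3$ splits or merges, which is $\OO(1)$ times the $2\ell(u)+1$ strict-tree edge changes. For link-exposed, Lemma~\ref{lem:ps-link-exposed} gives at most one split or merge, matching the single strict-tree edge change; for cut-exposed, Lemma~\ref{lem:ps-cut-exposed} gives at most one merge, again matching its single edge change; and for reverse, Lemma~\ref{lem:ps-reverse} shows no invariant is violated, so zero splits or merges are needed, matching the zero strict-tree edge changes. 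Assembling these four cases yields a uniform constant factor (the worst case being the $3$ from expose/conceil), which proves the claimed proportionality.

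The one subtlety that needs care — and the step I expect to be the real obstacle — is making precise the correspondence between ``edges changing from solid to dashed or vice versa during the same operation in the strict SPQR tree'' and the edge changes actually performed in the relaxed tree. Because the relaxed tree is partitioned into primary and secondary nodes with primary nodes in bijection with strict-tree nodes (Lemma~\ref{lem:ps-relaxed-invariants}), a status change on a strict-tree edge $(c,p)$ corresponds to a status change on the relaxed-tree edge between the primary copies of $c$ and $p$, modulo routing through an intervening secondary node; the modified splice/slice precisely reroute so that the explicitly toggled edge has primary endpoints, which is what licenses the application of Lemma~\ref{lem:ps-edge-change-restorecount}. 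Once this bookkeeping is pinned down, the theorem follows by summing the per-edge bounds. I would therefore structure the proof as: (1) recall the strict-tree edge-change counts; (2) invoke the modified splice/slice to reduce to primary-endpoint edge toggles; (3) apply Lemmas~\ref{lem:ps-expose-conceil}, \ref{lem:ps-link-exposed}, \ref{lem:ps-cut-exposed}, and \ref{lem:ps-reverse} to each operation; (4) conclude with the constant $6$ (or any fixed constant dominating all four cases).
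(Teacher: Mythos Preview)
Your proposal is correct and follows essentially the same approach as the paper, which simply states the theorem as the combination of Lemmas~\ref{lem:ps-expose-conceil}, \ref{lem:ps-link-exposed}, \ref{lem:ps-cut-exposed}, and \ref{lem:ps-reverse} without spelling out the details. Your more explicit bookkeeping about the primary/secondary correspondence and the per-operation constants is exactly the content that the paper leaves implicit in the phrase ``Combining these Lemmas.''
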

 \subsection{Pre-contracting}\label{sec:dynspqr-precontract}
We have now shown that a pre-split SPQR tree doesn't change too much
during each of the operations we need.  However, we still need to
describe how to handle the actual path contraction that happens during
an edge insertion, and the corresponding path expansion that happens
during an uninsert, in worst case $\OO(\log n)$ time. Formally we
define the operations
\begin{description}

\item[contract-path$(u)$:] Given a $u$-exposed pre-split SPQR tree
  $T_r$, where each $v\in\set{r,u}$ is either an $R$ node, or an $S$
  node with $\abs{bag(v)}=3$.  Contract the path $r\cdots u$ into a
  single \emph{pseudo-$R$} node $v$. The result is a $v$-exposed
  pre-split SPQR tree.

\item[expand-path$(r)$:] Given a $r$-exposed pre-split SPQR tree
  $T_r$, where $r$ is a pseudo-$R$ node contracted from a path
  $u\cdots v$, expand the node $r$ into that path.  The result is
  either a $u$-exposed pre-split SPQR tree with root $v$, or a
  $v$-exposed pre-split SPQR tree with root $u$.
\end{description}
The idea is to represent each $R$ node and each solid path as
variations of the same tree data structure, such that one can be
turned into the other just by changing information at the root of that
tree.

Note that every $R$ node we have, will be the result of some earlier
contraction, and that because of pre-splitting every $S$ or $P$ node
that gets contracted has a skeleton graph with $3$ or $4$ edges.

We start by defining a \emph{path representation} $\Pi(v)$ of an SPQR
node $v$, as a graph constructed from $v$ by creating a vertex
$\pi(v_i)$ for each original $S$ or $P$ node $v_i$ it was contracted
from, and connecting them with edges into a path. We assign each edge
a positive integer weight.  Note that for an $S$ or $P$ node $v$, the
path representation is just the single-vertex graph
$\Pi(v)=(\set{\pi(v)},\emptyset)$.

Given path representations $\Pi(c)$ for each node $c$ on a solid path
$u\cdots v$ in the SPQR tree, we can create a combined path
representation $\Pi(u\cdots v)$ of the whole path by connecting the
ends of the paths for each node with new edges of weight $0$.  Thus,
in $\Pi(u\cdots v)$, each node in $u\cdots v$ corresponds to a maximal
segment of the path where every edge has positive weight. In
particular, each $R$ node corresponds to a segment with at least one
edge.
Now, the weight of a vertex $x$ is added to its
representative $a(x)$ defined in Section~\ref{sec:heavyweighted}. However, since we do not explicitly maintain the SPQR-nodes, the weight of $x$ is added to a fragment on the path representing this node.

The idea is now that when doing a contract-path$(r)$, resulting in a
new root $r'$, we can get from $\Pi(r\cdots u)$ to $\Pi(r')$ by simply
adding $1$ to all edges on $\Pi(r\cdots u)$.  Similarly, if we later
do an expand-path$(r')$, we can reconstruct $\Pi(r\cdots u)$ from
$\Pi(r')$ by subtracting $1$ from every edge.

By storing each path using a balanced binary search tree, and doing
the edge updates lazily, we can turn this idea into an effective data
structure with $\OO(\log n)$ worst case time for each of the
operations we need.  
\begin{lemma}{\cite{DBLP:journals/jcss/SleatorT83}}
  We can maintain such a tree in $\OO(\log n)$ time per join, split,
  or add-to-path.
\end{lemma}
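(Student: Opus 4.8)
The plan is to realise each path representation $\Pi(v)$ --- and, more generally, the combined $\Pi(u\cdots v)$ for a whole solid path --- as a balanced binary search tree whose nodes are in bijection with the edges of the path representation, keyed by position along the path, with each node storing its own edge weight. Any standard flavour of balanced BST that supports concatenate and split in $\OO(\log n)$ time works here (red--black trees, weight-balanced trees, or the biased search trees already used elsewhere in the paper); I would pick whichever meshes best with the surrounding implementation.

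First I would augment each node with a single lazy additive tag $\Delta$, with the meaning ``add $\Delta$ to the weight of every edge in my subtree, then push to my children'', together with whatever aggregates the callers need: subtree size (to index by position along the path), the minimum edge weight in the subtree (so the weight-$0$ glue edges separating the $R$-segments can be located by a root-to-leaf search), and the sum of the vertex weights currently attributed to fragments in the subtree. Each of these is a sum- or min-type statistic that commutes with a uniform additive shift, so a node's aggregates can be recomputed from its own value and its two children's aggregates in $\OO(1)$, and the tag $\Delta$ can be pushed from a node to its two children in $\OO(1)$.

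Given this, add-to-path$(\delta)$ is just ``add $\delta$ to the root's tag'', in $\OO(1)$ time; join (concatenating the two path representations of two solid paths that are being melded, inserting a new weight-$0$ glue edge between them) and split (cleaving a solid path, and symmetrically expanding a contracted $R$-node) are the textbook balanced-BST concatenate and split, each touching $\OO(\log n)$ nodes along a single spine. In all three operations, and in the index/min queries the rest of this section performs on these trees, a node is only ever inspected after its ancestors' tags have been pushed into it, so the lazy shifts never interfere with the rebalancing rotations or with the bottom-up aggregate recomputation along the affected spine; the total work is $\OO(\log n)$ per operation.

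The one point that needs a little care --- and which I expect to be the only real obstacle --- is that contract-path/expand-path repeatedly add and subtract $1$ along a path and must still recover, after an expand-path, exactly which edges were weight-$0$ glue edges versus genuine $\Pi$-edges. I would handle this by marking the glue edges with a one-bit flag stored in the corresponding leaves (equivalently, keeping each edge's ``native'' weight separate from the accumulated shift), so that the segment structure of $\Pi(u\cdots v)$ is reconstructed simply by cutting at the flagged edges. This bookkeeping is orthogonal to the BST mechanics and does not affect the $\OO(\log n)$ bound, which is exactly the claim of Sleator and Tarjan~\cite{DBLP:journals/jcss/SleatorT83}.
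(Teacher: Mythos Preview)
Your proposal is correct and follows the standard approach; in fact the paper does not give its own proof of this lemma at all---it simply cites Sleator and Tarjan~\cite{DBLP:journals/jcss/SleatorT83} for the result, since a balanced BST with lazy additive tags and min/size aggregates is exactly the classical machinery.

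One small remark: the extra one-bit flag you introduce in the final paragraph is not needed under the paper's scheme. The glue edges are precisely the current weight-$0$ edges, and this invariant survives nesting: each contract-path adds $1$ to every edge on the path (including previously-glued edges now inside an $R$ node), and the matching expand-path subtracts $1$, so an edge has weight $0$ if and only if it is a glue edge at the current level of contraction. The minimum-weight aggregate you already store therefore suffices to locate them, with no separate marker. Your flag is harmless, just redundant.
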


In the resulting tree over $\Pi(u\cdots v)$, for every node $c$ in
$u\cdots v$ there is a unique deepest node $\pi(c)$ such that the
subtree rooted in $\pi(c)$ contains all vertices of $\Pi(c)$.  If $v$
is an $S$ or $P$ node in the SPQR tree, this will be a leaf (and so
$\pi(c)$ is the same as already defined above), but if $c$ is an $R$
node this will be not be the case.

We have defined all our operations on pre-split SPQR trees in terms of
the actual nodes, but because we want the path representation tree to
be balanced at all times, the node in this tree that correspond to an
$R$ node may change.  Thus, every pointer to an $R$ node will have to
be implicit, in the sense that what is actually stored is a pointer to
one of the (now obsolete) $S$ or $P$ nodes that it was contracted from, and every time we need the node we have to find it in the tree.
\begin{lemma}\label{lem:pc-find-rep}
  Let $u\cdots v$ be a solid path in the pre-split SPQR tree, with $u$
  an ancestor to $v$.  Let $\Pi=\Pi(u\cdots v)$ be a path
  representation tree for $u\cdots v$, and let $u'$ be en end
  fragment of $\Pi$ that is part of $\Pi(u)$.

  Given a fragment $\pi(c')\in \Pi$ of an SPQR node $c\in u\cdots v$,
  we can in $\OO(\log n)$ time find the first zero-weight edge on
  $\pi(c')\cdots \pi(u')$ (if any); and the internal node representing
  $\pi(u)$.  And in $\OO(\log^2 n)$ time we can find the first
  zero-weight \emph{light} edge on $\pi(c')\cdots \pi(u')$ (if any).\end{lemma}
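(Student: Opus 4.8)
The plan is to realise each path representation $\Pi(u\cdots v)$ as a balanced binary search tree over the interleaved sequence of its vertices and edges, exposing exactly the interface of the quoted result of Sleator and Tarjan: join, split, and ``add a fixed integer to all edge weights on a contiguous sub-path'' in $\OO(\log n)$ time via lazy propagation. On top of that I would augment every subtree with the minimum edge weight it contains; this is maintained trivially under the lazy range-adds, and since all edge weights are nonnegative integers an edge has weight $0$ precisely when it attains the subtree minimum. Recall that the weight-$0$ edges of $\Pi(u\cdots v)$ are exactly the boundaries between consecutive SPQR nodes on the solid path, so the three objects in the statement are well defined on this tree.

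For the first two items I would use a finger search: walk from the node holding $\pi(c')$ up to its lowest BST-ancestor that also spans $\pi(u')$, and at each step inspect the sibling subtree lying on the $\pi(u')$ side; the augmented minima answer in $\OO(1)$ per node whether that subtree contains a zero-weight edge, so the first such subtree met on the way lets us descend (again steered by the minima) to the closest zero-weight edge, all in $\OO(\log n)$. For $\pi(u)$: since $u'$ is an end fragment of $\Pi$ lying inside $\Pi(u)$, and by the contract-path precondition $u$ is either an $R$ node or a $3$-vertex $S$ node, $\Pi(u)$ is exactly the maximal run of positive-weight edges at the $\pi(u')$ end of $\Pi(u\cdots v)$. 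So I would locate the first zero-weight edge $e$ from that end by the same $\OO(\log n)$ descent, take $z$ to be the endpoint of $e$ on the $\pi(u')$ side, and return $\pi(u)=\operatorname{LCA}_{\Pi}(u',z)$, which is $\OO(\log n)$ with parent pointers.

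The third item is where I expect the real work to be. Whether a weight-$0$ boundary edge $(c_i,c_{i+1})$ is \emph{light} is not intrinsic to $\Pi$: it is the question whether $w(T_{c_i})\ge 2w(T_{c_{i+1}})$ in the underlying weighted SPQR/tree-decomposition, and by the discussion preceding Lemma~\ref{lem:treedecomp-external-optimes} those subtree weights are not stored explicitly and each costs $\OO(\log n)$ to recompute from the $\hat w$ values; this is why a light-boundary flag cannot simply be cached and read off in one descent. My plan is therefore to keep the subtree-minimum augmentation for locating weight-$0$ edges and to resolve lightness on the fly: perform a guided descent toward $\pi(u')$ that, at the $\OO(\log n)$ tree nodes it visits, recomputes the at most $\OO(a_{\max})$ relevant subtree weights in $\OO(\log n)$ time each to decide whether the candidate boundary edge it is about to commit to is light, steering the descent accordingly; this gives the claimed $\OO(\log^2 n)$. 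The delicate point, and the one I would spend most care on, is exactly this interplay — certifying that each such descent needs only $\OO(\log n)$ weight recomputations (using that $u\cdots v$ is a $u$-exposed solid path and hence carries only $\OO(\log n)$ light boundary edges), and that the lazy range-adds and the join/split reshapings of $\Pi$ that accompany cleaving or concatenating solid paths never invalidate the derived lightness comparisons, since those comparisons involve SPQR-subtree weights rather than the $\Pi$-edge weights being updated.
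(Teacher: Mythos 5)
Your plan for the first two items matches the paper's proof: store per-node lazy offsets $\Delta$ and subtree minima of edge weights in the balanced tree over $\Pi(u\cdots v)$, push the $\Delta$'s down along a root path to materialize actual weights, and use the minima to steer a descent to the first zero-weight edge; your $\operatorname{LCA}$ recipe for $\pi(u)$ is a perfectly sound way to fill in a step the paper leaves implicit. For the third item you also land on the paper's method --- simulate Sleator--Tarjan's light-edge search, recomputing each needed SPQR-subtree weight in $\OO(\log n)$ time from the $\hat w$ values --- but your stated justification for the $\OO(\log^2 n)$ bound is off: the number of weight recomputations is bounded by the $\OO(\log n)$ depth of the balanced path-representation tree, not by the number of light boundary edges on the exposed path, and your ``guided descent'' cannot steer by boundary-weight comparisons alone, because the endpoint weights of a sub-path having ratio $\geq 2$ is necessary but not sufficient for that sub-path to contain a light edge (e.g.\ weights $8,6,4,3$ give ratio $8/3>2$ with every consecutive ratio $<2$). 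One really does have to invoke the Sleator--Tarjan $\operatorname{tilt}$ machinery, which stores enough differential cost information per node to decide in $\OO(1)$ which child subtree holds the sought edge; your plan then simply replaces each stored value with an $\OO(\log n)$-time recomputation, which is exactly what the paper's one-line proof intends. The remark about invariance of lightness comparisons under lazy range-adds is correct and worth keeping, since those adds only touch $\Pi$-edge weights, not the SPQR-subtree weights that define light/heavy.
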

\begin{proof}
  Each internal node $\pi$ in the path representation tree corresponds
  both to the edge $e_\pi$ separating the two subtrees, and the whole
  subpath $\Pi$ spanned by the subtree rooted at $\pi$.  Let $\pi$
  store a ``lazy'' value $\Delta(\pi)$ that must be added to the
  weights of all edges in the subtree, as well as $w(e_\pi)$ and
  $m_\pi:=\min_{e\in\Pi}w(e)$. These values can easily be maintained as
  the tree is rebalanced.

  To find $\pi(v)$ we first find the path to the root of the path
  representation tree containing $v'$, then we ``push down'' all the
  $\Delta$ values along the path, such that each node $\pi$ along this
  path knows the actual values of $w(e_\pi)$ and $m_\pi$.

  Given these values it is easy to find $\pi(v)$ and to find the first zero-weight edge towards the root.
  
  To find the first zero-weight light edge towards the root, we simulate the algorithm from~\cite{DBLP:journals/jcss/SleatorT83}. This costs an extra $\log$ factor, because instead of storing all the weights, we calculate weights spending $O(\log n)$ time per weight. 
\end{proof}

As the path decomposition (and the tree changes), we will use the
following internal operations to update the path representations of
each solid path:
\begin{description}
\item[join-path$(\pi(v))$:] Where $v$ is a dashed child with no solid
  siblings. This operation is called when the edge $(v,p(v))$ changes from dashed to
  solid.  Join the path representation trees $\Pi_v$ and $\Pi_p$
  containing $\pi(v)$ and $\pi(p(v))$ into a single tree by adding a
  new internal node representing a weight $0$ path-edge between an end
  of $\Pi_v$ and an end of $\Pi_p$, and rebalancing as needed.

  Note that while each tree contains a node that corresponds to
  $(v,p(v))$, this node is unlikely to be at the end of the path.
  Rather than deleting it or rearranging the tree to move the node to
  an end of the path (which would make undo tricky), we simply choose
  one of the ends for each path that is connected via nonzero-weight
  edges to this node.  Since a nonzero-weight edge is interpreted as a
  contraction, this ensures that all the zero-weight edges still form
  the correct path.

\item[split-path$(v)$:] Where $v$ is a solid child. This operation is called when the
  edge $(v,p(v))$ changes from solid to dashed.  Let $a$ be the node
  on the solid path containing $v$ that is closest to the root. Split
  the path representation tree $\Pi_v$ containing $\pi(v)$ into two by
  finding the first zero-weight edge on the path from
  $\pi(v)$ to $\pi(a)$ in $\Pi_v$ and deleting it.
\end{description}

These join- and split-operations can be supported in a time bounded by the time stated in Lemma~\ref{lem:pc-find-rep}
\begin{observation}
Join-path and split-path can be supported in $O(\log ^2 n )$ time, and contract can be supported in $O(\log n)$ time.
\end{observation}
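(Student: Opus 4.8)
The plan is to reduce each of the three operations to a constant number of the primitive operations on the balanced search trees that represent each solid path — \textbf{join}, \textbf{split}, and \textbf{add-to-path}, each of cost $\OO(\log n)$ by the Sleator--Tarjan lemma cited above — together with a constant number of the search queries supplied by Lemma~\ref{lem:pc-find-rep}.

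First I would handle \textbf{contract-path}$(u)$. In the $u$-exposed pre-split SPQR tree the solid path $r\cdots u$ is already maintained as a single path representation tree $\Pi(r\cdots u)$, in which the obsolete $S$/$P$ nodes along the path are joined to the fragments of the intervening $R$ nodes by weight-$0$ path-edges while every genuine $R$-segment consists of positive-weight edges. Contracting $r\cdots u$ into a pseudo-$R$ node is then implemented by a single add-to-path that adds $1$ to the weight of every edge of $\Pi(r\cdots u)$ — so that all former weight-$0$ edges become weight-$1$, i.e. part of one $R$-segment — followed by relabelling the root record of the tree as a pseudo-$R$ node; both steps are $\OO(\log n)$. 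The inverse \textbf{expand-path}$(r)$ subtracts $1$ from every edge weight and restores the root record, again in $\OO(\log n)$; this is exactly why the two are mutual inverses and undo is well defined.

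Next, \textbf{split-path}$(v)$, triggered when $(v,p(v))$ turns from solid to dashed: I would use Lemma~\ref{lem:pc-find-rep} to locate the internal node $\pi(a)$ representing the root-closest node $a$ on the solid path containing $v$, then (again by Lemma~\ref{lem:pc-find-rep}) find the first zero-weight edge on the subpath $\pi(v)\cdots\pi(a)$ of $\Pi_v$ and delete it by a split. For \textbf{join-path}$(\pi(v))$, triggered when $(v,p(v))$ turns from dashed to solid: in each of the two trees $\Pi_v$ and $\Pi_p$ we must pick the end that is connected to the (obsolete) node recording the virtual edge $(v,p(v))$ by a run of positive-weight edges only; Lemma~\ref{lem:pc-find-rep} lets us test, in the stated time, whether a zero-weight edge exists between that node and a given end, so a constant number of such queries identifies a valid end of each tree, after which a single join inserting a fresh weight-$0$ path-edge between those two ends completes the operation. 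The search queries of Lemma~\ref{lem:pc-find-rep} are $\OO(\log n)$ once the relevant edge weights are available at the visited nodes, but since (as in the proof of Lemma~\ref{lem:treedecomp-external-optimes}) the node weights $w(\cdot)$ that underlie the heavy/light classification are not stored explicitly and each costs $\OO(\log n)$ to recompute, these searches cost $\OO(\log^2 n)$; hence join-path and split-path run in $\OO(\log^2 n)$, while contract-path, which needs only add-to-path, runs in $\OO(\log n)$.

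The main obstacle is the end-selection step inside join-path: the node recording the virtual edge $(v,p(v))$ generically sits in the interior of each path representation tree, and we cannot afford to move it to an end (doing so would destroy the exact reversibility that undo requires). The correctness argument must therefore show that choosing, in each tree, an end reachable from that node through positive-weight edges alone always leaves the set of weight-$0$ edges forming precisely the intended solid path of obsolete $S$/$P$ nodes, and that such an end always exists, so that a later split-path and expand-path faithfully undo the join. With that invariant established, the time bounds follow immediately by counting the constant number of Lemma~\ref{lem:pc-find-rep} queries and $\OO(\log n)$ tree primitives performed in each operation.
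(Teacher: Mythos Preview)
Your reduction of the three operations to the stated primitives---contract-path via a single add-to-path, split-path via a first-zero-weight-edge search then a tree split, join-path via end-selection then a tree join---matches the paper, which offers no more than the one-line justification ``These join- and split-operations can be supported in a time bounded by the time stated in Lemma~\ref{lem:pc-find-rep}.''

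Where your argument goes astray is in explaining the $\OO(\log^2 n)$ bound. You conflate two unrelated families of weights. The edge weights in the path-representation tree are the integer contraction counters; they are stored lazily via the $\Delta$-values described in the proof of Lemma~\ref{lem:pc-find-rep}, and the first-zero-weight-edge search over them is genuinely $\OO(\log n)$, as that lemma states explicitly. The tree-decomposition node weights $w(\cdot)$ from Section~\ref{sec:heavyweighted} and Lemma~\ref{lem:treedecomp-external-optimes} play no role in that search. So the queries you actually invoke for split-path and for end-selection in join-path cost only $\OO(\log n)$ each, and your own argument, taken at face value, gives $\OO(\log n)$ for all three operations---not the claimed $\OO(\log^2 n)$. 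The $\OO(\log^2 n)$ clause of Lemma~\ref{lem:pc-find-rep} pertains to the separate \emph{first zero-weight light edge} query (``light'' in the heavy-path-decomposition sense), which is what requires recomputing the $w(\cdot)$ values at $\OO(\log n)$ apiece; the paper's Observation simply quotes that looser bound from the lemma, whereas you reach the same number by an incorrect route.
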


We are now ready to formally define the insert and undo insert operations:

\begin{description}
\item[insert$(r; x,y)$:] In the SPQR-tree rooted in $r$ with $m(x,y)$ exposed, update the SPQR-tree reflecting that the edge $(x,y)$ is inserted.
\item[uninsert$(r; x,y)$:] In the SPQR-tree rooted in $r$ with $r$ exposed and $(x,y)\in \Gamma(r)$, update the SPQR-tree reflecting that the edge $(x,y)$ is removed.
\end{description}

\begin{lemma}\label{lem:ps-insert-uninsert}
  Given vertices $x,y\in V[G]$ and a $u$-exposed pre-split SPQR tree
  $T_r$ where $r\cdots u=m(x,y)$, restoring the invariants during an
  insert$(T_r;x,y)$ takes at most $\OO(1)$ splits and merges. While
  undoing the insert in the resulting tree $T$ with an uninsert$(T;
  (x,y))$, restoring the invariants takes the same number of splits
  and merges.
\end{lemma}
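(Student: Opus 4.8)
The plan is to go through the description of \textbf{insert}$(T_r;x,y)$ in Section~\ref{sec:dynspqr-changes}, step by step, and at each step identify which of the pre-splitting invariants \ref{inv:ps-not-both-split-merge}--\ref{inv:ps-merge-rule} could be violated and how to restore them using a constant number of split/merge calls. Recall that the whole point of the $u$-exposure with $r\cdots u=m(x,y)$ is that the entire critical path has been made solid; by Lemma~\ref{lem:ps-expose-conceil} (applied to the preceding expose/conceil sequence) the path $r\cdots u$ consists of \emph{primary} nodes only, and moreover the dashed child edges hanging off the solid path are either solid or \emph{unstable} (Definition~\ref{def:stable}, together with Lemmas~\ref{lem:split-merge-nonmoving-child}--\ref{lem:merge-moving-child}). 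The key structural fact I want to extract is: the split of each $S$- or $P$-node along the path into at most $3$ (resp.\ $2$) pieces, which the strict SPQR tree performs during the insert, is \emph{already done} in the pre-split tree — the primary node together with its (at most two) secondary children exactly realize those pieces, by invariant~\ref{inv:ps-secondary} and the remark after Lemma~\ref{lem:ps-relaxed-invariants}. So the "splitting" part of the insert costs us nothing; only the two \emph{ends} of the path, and the transition from "path of solid edges" to "single $R$ node", need attention.

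First I would handle the generic Case~\ref{it:m-def-path}. Here $r$ and $u$ are $S$-nodes (or $R$-nodes) at the ends, and by the select$(\cdot)$ machinery we have arranged that the selected vertex of $r$ (resp.\ $u$) is exactly the relevant endpoint $x$ (resp.\ $y$) — this is what Definition~\ref{def:ps-selected} and the select operation are for, and restoring invariants after a select costs $\le 1$ split/merge by Lemma~\ref{lem:ps-select-restorecount}. Thus the two end-$S$-nodes are already split at $x$ and $y$ respectively into their at most three pieces. What changes combinatorially is: (i) the solid path $r\cdots u$ gets contracted into a pseudo-$R$ node via \textbf{contract-path}$(u)$ (Section~\ref{sec:dynspqr-precontract}), and (ii) the dashed children that used to hang off internal path nodes now hang off the new $R$ node. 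By Observation~\ref{obs:contract-stable}, a \emph{stable} dashed child stays stable under this contraction; and an \emph{unstable} dashed child of an $S$/$P$ node at the end of a solid path — which is what all of them are — becomes, after the contraction, a dashed child of an $R$ node, hence \emph{stable} (first bullet of Definition~\ref{def:stable}). A dashed child edge changing from unstable to stable can only violate invariant~\ref{inv:ps-merge-rule} at that primary child, fixed by one merge each; but there are $O(\log n)$ such children total along the path, which is too many. So this is where I need to be careful: the fix is that these children were the secondary nodes created by pre-splitting the internal path nodes, and a secondary node is by definition \emph{not} primary, so invariant~\ref{inv:ps-merge-rule} (a statement about primary nodes with a stable parent) simply does not apply to them; and the primary path nodes they were attached to are exactly the ones being absorbed into the $R$ node. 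After the contraction the pseudo-$R$ node $v$ has $s(v)=\emptyset$, all its dashed children either stable or solid, and is unsplittable and unmergeable — so invariants~\ref{inv:ps-split-rule} and~\ref{inv:ps-merge-rule} hold at $v$ for free. The only genuine residual work is: the two former end-nodes $r,u$ each had a secondary-child structure that must be reconciled with the new parent $v$ (an $R$ node of different type), which by the argument of Lemma~\ref{lem:ps-link-exposed}/\ref{lem:ps-cut-exposed} costs $O(1)$ merges, and at most a handful of primary nodes adjacent to $r$ or $u$ may see their parent edge flip stable$\leftrightarrow$unstable, costing $O(1)$ splits/merges by Lemma~\ref{lem:ps-edge-change-restorecount}. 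Summing: $O(1)$ total.

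Then I would dispatch the degenerate Cases~\ref{it:m-def-single}--\ref{it:m-def-P} from Section~\ref{sec:dynspqr-changes}: a single $S$-node split into $s_1,s_2$ (handled by the same end-node analysis with a $2$-cycle root as in the parenthetical after \textbf{split}$(v)$), a single separation-pair edge subdivided by a new $P$-node, a single $R$-node with a parallel edge getting a new leaf $P$-child, and the trivial "just add the edge to $\Gamma(v)$" case. Each of these touches $O(1)$ nodes, so by Lemmas~\ref{lem:ps-edge-change-restorecount}, \ref{lem:ps-link-exposed}, \ref{lem:ps-cut-exposed} and \ref{lem:ps-restorecount} (with $k=O(1)$) restoring the invariants costs $O(1)$ splits/merges. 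Finally, for \textbf{uninsert}$(T;(x,y))$ I would observe that the operation is literally the reverse: \textbf{expand-path}$(r)$ undoes the contraction, each dashed child of the (pseudo-)$R$ node that was stable because its parent is an $R$ node reverts to being an unstable dashed child of an end-of-solid-path $S$/$P$ node, and the same $O(1)$-node bookkeeping at the two ends is undone. Since every split/merge has an inverse merge/split, and we performed $O(1)$ of them during insert, undoing costs the same $O(1)$; here I would lean on the fact (used throughout Section~\ref{sec:dynspqr-presplit}) that the pre-split tree corresponding to a given (strict tree, selected vertices) triple is \emph{unique} (Lemma~\ref{lem:ps-unique-relaxed}), so "restore the invariants" is a well-defined target and the number of operations needed to reach it is symmetric under reversal.

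The main obstacle I anticipate is precisely the bookkeeping at the two path endpoints in Case~\ref{it:m-def-path}: making sure that when the end-$S$-node $r$ (which may carry up to two secondary children, and a selected vertex $x$) gets absorbed into the contracting path, its secondary children and the newly-formed $R$ node interact cleanly — in particular that no \emph{cascade} of re-splits is triggered among the former internal-path secondary nodes. The non-cascading claim is exactly the delicate point flagged in the introduction ("we become able to prove that the act of pre-splitting and choosing a heavy child does not cascade"), and here it reduces to checking that (a) secondary nodes are never subject to invariants~\ref{inv:ps-split-rule}/\ref{inv:ps-merge-rule}, and (b) an unstable$\to$stable or stable$\to$unstable flip of a dashed edge only propagates one level, which is the content of Lemmas~\ref{lem:split-merge-nonmoving-child}--\ref{lem:merge-moving-child} and~\ref{lem:ps-edge-change-restorecount}. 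Assembling these into a clean $O(1)$ bound — rather than an $O(\log n)$ bound that would sink the whole running-time analysis — is the crux.
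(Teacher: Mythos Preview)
Your overall approach matches the paper's: the insert in the generic case is exactly two \texttt{select} calls (one at $r$, one at $u$), each costing at most one split by Lemma~\ref{lem:ps-select-restorecount}, followed by \texttt{contract-path}, which is claimed to preserve all invariants; the degenerate cases are dispatched separately. The paper in fact gets the explicit constant ``at most $2$'' rather than just $\OO(1)$.

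However, your argument for why \texttt{contract-path} does not trigger a cascade has a real error. You write that the dashed children hanging off the internal path nodes ``were the secondary nodes created by pre-splitting \ldots\ and a secondary node is by definition not primary, so invariant~\ref{inv:ps-merge-rule} \ldots\ simply does not apply to them.'' But after the contraction their parent is the new (pseudo-)$R$ node, which has \emph{different} type, so by the definition of primary these former secondary $S$/$P$ nodes \emph{become primary}. Invariant~\ref{inv:ps-merge-rule} therefore \emph{does} apply to them, and your stated reason for why it is not violated is wrong. The correct reason is that, by invariant~\ref{inv:ps-secondary} in the pre-contraction tree, a secondary node cannot itself have a secondary child; hence each newly-primary node has no secondary children, is not mergeable, and~\ref{inv:ps-merge-rule} is satisfied. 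Similarly~\ref{inv:ps-split-rule} only constrains nodes with solid or unstable parent, so the now-stable parent edge causes no violation there either.

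Once this is fixed, your extra ``cleanup at the two end-nodes'' step is unnecessary: the \texttt{select}s already bring $r$ and $u$ into the shape required by the precondition of \texttt{contract-path} (each is an $R$ node or an $S$ node with $\abs{\bag(\cdot)}=3$), and the contraction then preserves all invariants outright. The paper's proof simply stops after the two selects and the contract, with no residual reconciliation.
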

\begin{proof}
  If $x\in\bag(r)\setminus\bag(u)$ and $y\in\bag(u)\setminus\bag(r)$,
  then the insert consist of a select$(r,\set{x})$, a
  select$(u,\set{y})$, and a contract-path$(r)$. By
  Lemma~\ref{lem:ps-select-restorecount}, each select costs at most
  one split, and then the contract-path does not violate any of the
  invariants.

  If $r=u$ is an $S$ node, then the insert starts by a
  select$(r,\set{x,y})$. Restoring the invariants then costs at most
  one split, which turns $r$ into a kind of pseudo-$S$ node with only
  two children.  This node is then converted to a pseudo-$P$ node
  (which doesn't violate any of the invariants), and the edge $(x,y)$
  is added to $\Gamma(r)$ turning it into a real $P$ node.

  If $r\cdots u$ consists of a single edge $(r,u)$ with
  $\bag((r,u))=\set{x,y}$, this edge is first   subdivided by adding a new pseudo-P node $c$
  between $r$ and $u$ (which doesn't violate any of the invariants),
  and then, the edge $(x,y)$ is added to $\Gamma(c)$, turning it into a
  real $P$ node.

  If $r=u$ is a real $R$ node and $\Gamma(r)$ already contains an edge
  $(x,y)$, then a new pseudo-$P$ root $r'$ is added containing
  $(x,y)$, and the new duplicate $(x,y)$ edge is added to $\Gamma(r')$
  turning it into a real $P$ node.

  Finally, if $r=u$ is an $R$ node that does not contain an $(x,y)$
  edge, or a $P$ node, $(x,y)$ is simply added to $\Gamma(r)$.

  Thus in all cases, the number of merges and splits is at most $2$.
\end{proof}

\subsection{Incremental SPQR-trees of biconnected graphs}\label{sec:incr-spqr}

\todo[inline]{Improve, include references to theorems/lemmas/observations}
In conclusion, this section has presented a data structure for updating an implicit representation of the SPQR-tree of a biconnected graph subject to edge insertion and undo edge insertion. 

Each update gives rise to $O(\log n)$ changes in the implicit representation of the SPQR-tree, and the update is supported in $O(\log^2 n)$ time. 

This structure may be used to answer triconnectivity of an incremental biconnected graph:
\begin{proof}[Proof of Theorem~\ref{thm:incr3con} for biconnected graphs]
  Let all vertices have weight $1$, and build the implicit representation of the SPQR-tree. Upon query, given vertices $a,b$ of the graph, find their critical path $m(a,b)$. If $m(a,b)$ is a single $R$ or $P$ node, the answer is yes, otherwise, the answer is no. 	

  Since the time for finding $m(x,y)$ is $O(\log n)$ and insertions are handled in $O(\log^2 n)$ time, the data structure has an update time of $O(\log^2 n)$ and a query time of $O(\log n)$.
\end{proof}
  
\section{Dynamic BC trees}\label{sec:dynbc}
In the previous section, we gave an incremental algorithm for maintaining the SPQR-tree of a biconnected graph. Similarly, we here present an algorithm for maintaining the BC-tree of a connected component. Many of the challenges for SPQR-trees appear in a simpler form for BC-trees, and we address them in as similar a way as we can: To accommodate edge-insertions, we maintain a path decomposition, where we precontract solid paths, and ensure certain node types have been pre-split accordingly. 

The main technical difference is that while the solid path contractions in SPQR-trees would approximately correspond to inserted edges, here, in the BC-tree, we actually insert struts (see Section~\ref{sec:dynbc-precontract}) that are dummy edges covering each heavy path, such that the union $G_s$ of our graph $G$ and the struts has a BC-tree of height $O(\log n)$ corresponding only to the light edges.

Now, for each block of $G_s$, we maintain its SPQR-tree.
These SPQR-trees will have the elegant property that they are rooted in the S-node containing the strut, and that each edge from the root separates a block in $G$ via its two cutvertex neighbours on the solid path. 
Maintaining these struts leads to a different flavour of algorithmic challenge: Namely, now any change in the path decomposition in the BC-tree is reflected as a change of struts which will lead to links, cuts, and local changes in SPQR-trees. 

We show (in Section~\ref{sec:SPQRBC}) how to accommodate the changes in the SPQR-tree driven by the changes in the BC-tree simply by putting together some of the atomic operations such as split, link, cut, and contract, that already are supported by the structure.

\subsection{Strict and relaxed BC trees}
\begin{definition}
  Let $x$ be a vertex in a connected loopless multigraph
  $G$. Then $x$ is an \emph{articulation point} if $G-\set{x}$ is not
  connected.
\end{definition}
\begin{definition}
  A (strict) BC-tree for a connected loopless multigraph $G=(V,E)$ with at
  least $1$ edge is a tree with nodes labelled $B$ and $C$, where each
  node $v$ has an associated \emph{skeleton graph} $\Gamma(v)$ with
  the following properties:
  \begin{itemize}
  \item For every node $v$ in the BC-tree, $V(\Gamma(v))\subseteq V$.
  \item For every edge $e\in E$ there is a unique node $v=b(e)$ in the
    BC-tree such that $e\in E(\Gamma(v))$.
  \item For every edge $(u,v)$ in the BC-tree, $V(\Gamma(u))\cap
    V(\Gamma(v))\neq\emptyset$ and either $u$ or $v$ is a $C$ node.
  \item If $v$ is a $B$ node, $\Gamma(v)$ is either a single edge or a
    biconnected graph.
  \item If $v$ is a $C$ node, $\Gamma(v)$ consists of a single vertex,
    which is an articulation point in $G$.
  \item No two $B$ nodes are neighbors.
  \item No two $C$ nodes are neighbors.
  \end{itemize}
\end{definition}
The BC-tree for a connected graph is unique. The
(skeleton graphs associated with) the $B$ nodes are sometimes referred
to as $G$'s biconnected components.

In this paper, we use the term \emph{relaxed} BC tree to denote a tree that satisfies all but the last condition.  Unlike the strict BC tree, the relaxed BC tree is not unique.

\subsection{Changes during insert}\label{sec:dynbc-changes}
  When inserting edge $(x,y)$, the change to the BC tree happens along
  the \emph{critical path} $m(x,y)$ defined as:
  \begin{itemize}
  \item If no node contains both $x$ and $y$, then $m(x,y)$ is the
    unique shortest path $u\cdots v$ in the BC tree such that $u$
    contains (or is) $x$ and $v$ contains (or is) $y$.
  \item Otherwise exactly one node $u$ contains both $x$ and $y$, and
    $m(x,y)$ is the path consisting of that node.
  \end{itemize}
  Note that the path is considererd to be undirected, so $m(x,y)=m(y,x)$.
  Note also that in each case, both ends of $m(x,y)$ are $B$-nodes.
  This notion of a critical path is implicit in~\cite{DBLP:journals/siamcomp/PoutreW98}.

  The actual change caused by inserting $(x,y)$ consists of first
  \emph{splitting} each $C$ node $u$ of degree $d(u)\geq 3$ that is
  internal to $m(x,y)$ into a pair of adjacent nodes $u_p,u_c$ where
  \begin{itemize}
  \item $u_p$ inherits the neighbors of $u$ that are on the path, so $d(u_p)=3$.
  \item $u_c$ inherits the remaining neighbors of $u$, so $d(u_c)=d(u)-1$.
  \end{itemize}
  After the splits, the modified path is contracted into a single $B$ node.

Observe that given $x$ and $y$, we may determine their critical path $m(x,y)$ in $O(\log n)$ time.

\subsection{Heavy paths in an unweighted BC tree}
We want to use a version of our biased dynamic trees to represent our
BC trees.  For this, we could use the scheme defined in
Section~\ref{sec:heavyweighted}, but since our underlying graph is
unweighted, we can use a simpler way of assigning weights.

We will be working with relaxed BC trees where each $C$ node has at
most one $C$ node among its neighbors.  For each vertex $x\in V(G)$ in
such a BC tree let $b(x)$ be either the (unique, if it exists) $C$
node $u$ furthest from the root such that $x\in\bag(u)$, and if no
such node exists $b(x)$ is the unique $B$ node $u$ such that
$x\in\bag(u)$.  Then, just as in Section~\ref{sec:heavyweighted} we
can define
\begin{align*}
  b^{-1}(v) &:= \set{x\in\bag(v)\suchthat b(x)=v}
  \\
  w(v) &:= \sum_{x\in b^{-1}(v)} w(x) = \abs{b^{-1}(v)}
\end{align*}

The resulting weights are $3$-positive because in every path
consisting of $3$ nodes of degree $2$, at least one is a $C$ node of
positive weight.  Thus we can use these weights to define our heavy
path decomposition of a relaxed BC tree.  Furthermore, if we ensure
that every edge in the BC tree where both endpoints are $C$ nodes is
dashed, then the weights are not changed by a reverse$(r)$ operation,
so we can maintain these weights explicitly.

\subsection{SPQR tree changes caused by BC tree changes}\label{sec:SPQRBC}
The changes in the path decomposition in the BC-tree gives rise to changes in the SPQR-trees. Namely, each solid path in the BC-tree is precontracted by the insertion of a strut, and is thus represented as SPQR-tree rooted in an S-node with all SPQR-trees of the blocks along the solid path as children.
When an edge in the BC-tree changes status from solid to dashed, the artificial $S$-node containing its strut gets split into three nodes; we call this the \emph{sever} operation. Its opposite, for when a dashed edge becomes solid, we call \emph{meld} (see Figure~\ref{fig:meldsever}).
Meld can be implemented in two steps by performing two links followed by a contraction in case S-nodes were linked. Reversely, sever can be implemented using split and cut.
\begin{figure}[h]\begin{center}
\end{center}
  \centering
  \begin{tikzpicture}[
  treenode/.style={
    circle,
    draw,
    minimum width=.8cm,
  },
  ]
  \begin{scope}
    \node[treenode] (s0) at (.75,.5) {$S$};
    \node[treenode] (l0) at (0,-.5) {};
    \node[treenode] (r0) at (1.5,-.5) {};

    \draw (s0) --++ (-90:.75);
    \foreach \a in {105,125,145}{
      \draw (l0) --++ (-\a:.75);
    };
    \foreach \a in {35,55,75}{
      \draw (r0) --++ (-\a:.75);
    };
  \end{scope}

  \begin{scope}[shift={(4,0)}]
    \node[treenode] (s1) at (.75,.5) {$S$};
    \node[treenode] (l1) at (0,-.5) {};
    \node[treenode] (r1) at (1.5,-.5) {};

    \draw (s1) --++ (-90:.75);
    \draw (s1) -- (l1);
    \draw (s1) -- (r1);
    \foreach \a in {105,125,145}{
      \draw (l1) --++ (-\a:.75);
    };
    \foreach \a in {35,55,75}{
      \draw (r1) --++ (-\a:.75);
    };
  \end{scope}

  \begin{scope}[shift={(8,0)}]
    \node[treenode] (s2) at (.75,0) {$S$};
    \foreach \a in {35,55,75,90,105,125,145}{
      \draw (s2) --++ (-\a:.75);
    };
  \end{scope}

  \draw[-stealth] (1,1) to[bend left=15] node[midway,above] {meld} (8.5,1);

  \draw[-stealth] (2.25,0.25) to[bend left=15] node[midway,above] {link} (3.25,0.25);
  \draw[-stealth] (3.25,0.125) to[bend left=15] node[midway,below] {cut} (2.25,0.125);

  \draw[-stealth] (6.25,0.25) to[bend left=15] node[midway,above] {merge} (7.25,0.25);
  \draw[-stealth] (7.25,0.125) to[bend left=15] node[midway,below] {split} (6.25,0.125);

  \draw[-stealth] (8.5,-1.5) to[bend left=15] node[midway,below] {sever} (1,-1.5);

\end{tikzpicture}
 \caption{The \emph{meld} and \emph{sever} operations.\label{fig:meldsever}}
\end{figure}
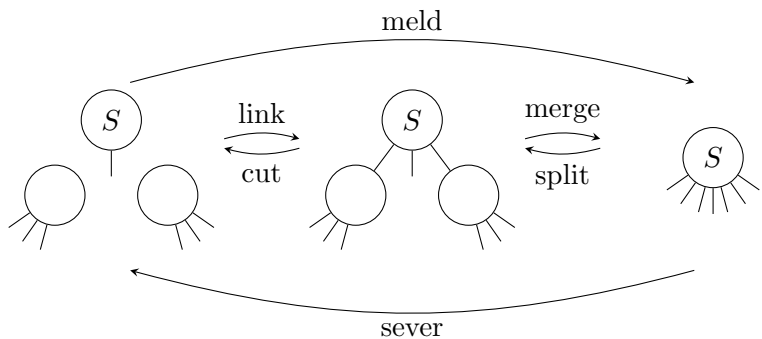

Recall that the BC-tree is bipartite, thus, an edge that changes status between solid and dashed is incident to a block. It is the neighbours of this block along the solid path that toggle their status of being endpoints of struts.

Stated in the language of Section~\ref{sec:dynspqr-changes}, if the heavy BC-path that has the strut $(c,d)$ is severed in the block whose neighbours on the heavy paths are $a$ and $b$, its SPQR-tree is updated as follows:
\begin{itemize}
\item The $S$ node at its root is split into at most $3$ pieces with respect to the edges $(c,d)$ and $(a,b)$. \item The edges connecting these at most $3$ resulting $S$ nodes are cut.
\item The $S$ node $x$ with the $\le 4$-cycle containing both $(a,b)$ and $(c,d)$ is now the root of a tree and has only one child $y$. The edge $(x,y)$ is cut, and $y$ becomes the root of its tree.
\end{itemize}

Meld is simply the opposite operation. Let $a$ and $b$ denote the cutvertices that are neighbours of the block $m$ incident to the edge that changes status from dashed to solid. Let $(c,d)$ be the endpoints of the resulting strut. Let $T_1$ and $T_2$ denote the SPQR-trees of the heavy paths to be melded, and denote by $T_m$ the SPQR-tree over $m$.
\begin{itemize}
	\item Construct an $S$ node $x$ consisting of a $\le 4$-cycle $a,c,d,b$, and link $x$ to the root of $T_m$ via the edge $(a,b)$.
	\item Link $x$ to $T_1$ and $T_2$ via the edges $(a,c)$ and $(b,d)$, respectively.
	\item Contract the edges $(a,c)$ and $(b,d)$. \end{itemize}

In the descriptions above, we did not fully address the case where $a=c$ or $b=d$, however, either of these would correspond to $x$ being a $3$-cycle, and thus, only differ in that one link and one contract operation, or one split and one cut operation, is omitted.

Formally, we define two new operations
\begin{description}

\item[meld$(T_1,\ldots,T_k; x_0,\ldots,x_k)$:] Where each $T_i$ is a
  proper heavy path decomposition of a SPQR tree with root $r_i$ such
  that $(x_{i-1},x_i)$ is a non-virtual edge in $\Gamma(r_i)$.

  Combine the trees into a single new tree as follows: First construct
  an $S$ node $r$ where $\Gamma(r)$ is the cycle $x_0\cdots x_k$. This
  node by itself is an $r$-exposed SPQR tree. Then use
  link-exposed$(r_i,r)$ to link each $T_i$ to $r$. Finally call
  conceil$(r)$ to make it a proper heavy path decomposed SPQR tree.

\item[sever$(T; x_0,\ldots,x_k)$:] Where $T$ is a proper heavy path
  decomposition of a SPQR tree whose root $r$ is an $S$ node containing
  $x_0,\ldots,x_k$ in that order along the cycle.

  Split the tree into $k$ smaller trees as follows: First call
  expose$(r)$. Then for each $i\in\set{1,\ldots,k}$ call
  select$(r,\set{x_{i-1},x_i})$, letting $r_i$ be the new child that
  this creates, and then call cut-exposed$(r_i)$. Finally,
  destroy the (now isolated) node $r$.

\end{description}

\begin{lemma}\label{lem:ps-meld}
  Given vertices $x_0,\ldots,x_k\in V[G]$ and proper pre-split SPQR
  trees $T_1,\ldots,T_k$, maintaining the pre-split invariants during
  meld$(T_1,\ldots,T_k; x_0,\ldots,x_k)$ takes at most $3+k$ splits
  and merges.
\end{lemma}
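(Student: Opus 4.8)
The plan is to charge the splits and merges to the three phases that make up $\mathrm{meld}(T_1,\ldots,T_k;x_0,\ldots,x_k)$ as it is defined: creating the new $S$-node $r$ with $\Gamma(r)$ the cycle $x_0\cdots x_k$ and $s(r)=\emptyset$; performing the $k$ calls $\textnormal{link-exposed}(r_i,r)$; and finishing with a single $\textnormal{conceil}(r)$. Each phase is handled by a lemma already proved, and the totals just add up.

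First, the freshly constructed node $r$ is, by itself, a (trivially) $r$-exposed pre-split SPQR tree: it has no children, so invariants~\ref{inv:ps-not-both-split-merge}, \ref{inv:ps-secondary}, \ref{inv:ps-split-rule}, \ref{inv:ps-merge-rule} all hold vacuously, and no split or merge is needed. Next I would process the $k$ calls $\textnormal{link-exposed}(r_i,r)$ one at a time, carrying as an induction hypothesis that just before the $i$-th call the tree rooted at $r$ is an $r$-exposed pre-split SPQR tree with no selected vertices, and that the edge $x_{i-1}x_i$ is still non-virtual in $\Gamma(r)$. The latter holds because the $k$ trees are linked to the $k$ distinct cycle edges $x_0x_1,\ldots,x_{k-1}x_k$ (the remaining edge $x_kx_0$ being the strut), so $x_{i-1}x_i$ has not been used before step $i$; and $x_{i-1}x_i$ is non-virtual in $\Gamma(r_i)$ by hypothesis of the lemma. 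Since each $T_i$ is a proper pre-split SPQR tree with no selected vertices, Lemma~\ref{lem:ps-link-exposed} applies: restoring the invariants after the $i$-th link costs at most one split or merge, and the result is again an $r$-exposed pre-split SPQR tree with no selected vertices, re-establishing the hypothesis. Summed over $i$, the $k$ links cost at most $k$ splits and merges.

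Finally, after all links we have an $r$-exposed pre-split SPQR tree with no selected vertices and root $r$, and $\textnormal{conceil}(r)$ is invoked. Here the exposed node is the root, so its light depth in the corresponding strict SPQR tree is $\ell(r)=0$; hence by Lemma~\ref{lem:ps-expose-conceil} the $\textnormal{conceil}(r)$ changes at most $2\ell(r)+1=1$ edge from dashed to solid and costs at most $6\ell(r)+3=3$ splits or merges to restore the invariants. (Concretely, on a trivial exposed path $\textnormal{conceil}(r)$ only makes the edge to $r$'s heavy child, if any, solid — a single dashed-to-solid change between two primary nodes — which by Lemma~\ref{lem:ps-edge-change-restorecount} costs at most $3$.) Adding the three phases gives at most $0+k+3=3+k$ splits and merges.

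I do not expect a real obstacle; the work is in confirming that the bookkeeping composes. The two points to be careful about are that the intermediate objects genuinely satisfy the pre-split invariants so that Lemmas~\ref{lem:ps-link-exposed} and~\ref{lem:ps-expose-conceil} apply (this is exactly what the induction gives), and that the ``no selected vertices'' hypotheses of those lemmas hold (they do, since $r$ is created with $s(r)=\emptyset$ and each $T_i$ carries no selected vertices, and neither link-exposed nor its restoration introduces any). The degenerate cases where two consecutive $x_j$ coincide — so $\Gamma(r)$ is only a $3$-cycle — are subsumed: there we simply perform one fewer $\textnormal{link-exposed}$, which only lowers the count.
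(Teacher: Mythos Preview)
Your proof is correct and follows essentially the same approach as the paper: bound the $k$ calls to link-exposed by Lemma~\ref{lem:ps-link-exposed} (one split or merge each), then bound the final conceil$(r)$ by Lemma~\ref{lem:ps-expose-conceil} with $\ell(r)=0$ (at most $3$ splits or merges). Your version is more detailed in verifying the hypotheses of the cited lemmas (the induction over links, the ``no selected vertices'' condition), but the skeleton and the arithmetic are identical.
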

\begin{proof}
  By Lemma~\ref{lem:ps-link-exposed}, for each $i\in\set{1,\ldots,k}$
  the violations caused by link-exposed$(r_i,r)$ are repaired using at
  most one merge or split.
  In total, at most $k$ splits or merges suffice to restore the
  invariants before the final conceil.
  Since $r$ is the root, $\ell(r)=0$ and so by
  Lemma~\ref{lem:ps-expose-conceil} fixing the invariants during the
  conceil$(r)$ takes at most $3$ merge and split operations.
\end{proof}

\begin{lemma}\label{lem:ps-sever}
  Given a proper pre-split SPQR tree $T_r$ and vertices
  $x_0,\ldots,x_k\in \bag(r)$, maintaining the pre-split invariants
  during a sever$(T_r; x_0,\ldots,x_k)$ takes at most $3+k$ splits and
  merges.
\end{lemma}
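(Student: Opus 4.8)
The plan is to mirror the proof of Lemma~\ref{lem:ps-meld}, reading off the definition of sever$(T_r;x_0,\ldots,x_k)$ and charging the splits and merges to each of its sub-operations. First I would handle the initial expose$(r)$: since $r$ is the root, its light depth is $\ell(r)=0$, so by Lemma~\ref{lem:ps-expose-conceil} restoring the pre-split invariants during expose$(r)$ costs at most $6\ell(r)+3=3$ split or merge operations. After this step $r$ is a root $S$-node with no solid child and no selected vertices, which is the state in which the iteration begins.

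Next I would charge the $k$ iterations. In iteration $i$ the operation performs select$(r,\set{x_{i-1},x_i})$ and then cut-exposed$(r_i)$. By Lemma~\ref{lem:ps-select-restorecount}, restoring the invariants after a select at the (or the exposed) node takes at most one split or merge; concretely, the select can only make $r$ splittable, and the single split$(r)$ that repairs this is precisely the operation that carves off the secondary child $r_i$ carrying the arc of $\Gamma(r)$ between $x_{i-1}$ and $x_i$. Since $r_i$ is then a \emph{secondary} node with $s(r_i)=\emptyset$, Lemma~\ref{lem:ps-cut-exposed} says that cut-exposed$(r_i)$ introduces no new invariant violations. Hence each iteration costs at most one split or merge, for a total of at most $k$; together with the degenerate case where the arc between $x_{i-1}$ and $x_i$ is a single edge (so that select creates no new node and we instead cut the already-existing child carrying that edge, costing at most one merge if it happens to be primary, again within the per-iteration budget). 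Finally, destroying the now-isolated node $r$ affects nothing elsewhere in the tree. Summing, the total number of splits and merges is at most $3+k$, as claimed.

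The arithmetic is easy; the part that will take care is making the middle paragraph's accounting rigorous, namely verifying that after each cut-exposed$(r_i)$ the structure is again a valid $r$-exposed pre-split SPQR tree so that the next select$(r,\set{x_i,x_{i+1}})$ is a legal invocation. In particular, the analysis in Lemma~\ref{lem:ps-select-restorecount} only treats a nonempty new selection $X$ when $s(r)=\emptyset$, so I would need to argue that peeling off $r_i$ leaves the residual skeleton $\Gamma(r)$ in the shape required by the next selection — either because cutting $r_i$ effectively clears the relevant selected vertex from the remaining child group, or by interleaving an explicit select$(r,\emptyset)$ whose cost is absorbed when $r$ is not splittable. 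Tracking whether $r_i$ is a freshly created secondary node or (in the single-edge arc case) a pre-existing primary child, and in the latter case absorbing the one extra merge, is the only other bookkeeping point; once it is set up, everything collapses to the $O(1)$-per-iteration bound above.
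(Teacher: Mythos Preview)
Your proof is correct and follows essentially the same approach as the paper: charge $3$ to the initial expose$(r)$ via Lemma~\ref{lem:ps-expose-conceil}, then observe that each select/cut-exposed pair costs at most one split or merge in total (if the select triggered a split then $r_i$ is secondary and cut-exposed is free by Lemma~\ref{lem:ps-cut-exposed}; otherwise the select was free and the cut-exposed on a primary $r_i$ costs at most one merge). The paper's proof is equally terse about the iteration bookkeeping you flag in your last paragraph; your instinct that this needs a sentence of care is sound, but neither it nor the degenerate single-edge case changes the bound.
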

\begin{proof}
  Since $r$ is the root, $\ell(r)=0$ and so by
  Lemma~\ref{lem:ps-expose-conceil} fixing the invariants during the
  expose$(r)$ takes at most $3$ merge and split operations.
  By Lemma~\ref{lem:ps-select-restorecount}, restoring the invariants
  after each select$(r_i, \set{x_{i-1},x_i})$ costs at most one
  split. Note that if this is necessary, the corresponding
  $r_i$ in the subsequent cut-exposed$(r_i)$ is a secondary
  node.
  By Lemma~\ref{lem:ps-cut-exposed}, for each $r_i$ the violations
  caused by cut-exposed$(r_i)$ are repaired by at most one merge or
  split if $r_i$ is a primary node, and None otherwise. Thus for each
  pair of a select$(r_i, \set{x_{i-1},x_i})$ and a cut-exposed$(r_i)$
  the cost of restoring the invariants is at most $1$ merge or split.
\end{proof}

\subsection{Pre-splitting}
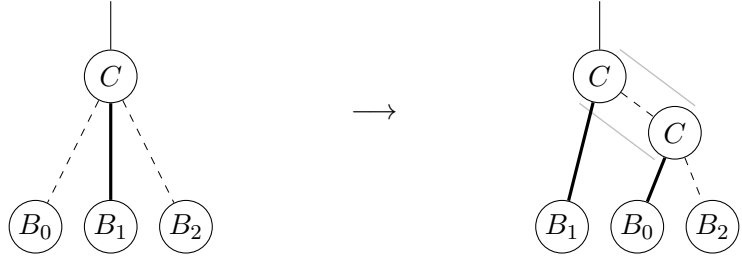
\begin{figure}[H]
	\center
	\begin{tikzpicture}[
    spqr-node/.style={
      draw,
      circle,
      fill=white,
      minimum size=7mm,
      inner sep=0pt,
    },
    spqr-edge/.style={
      draw,
    },
    dashed spqr-edge/.style={
      spqr-edge,
      very thin,
      dashed
    },
    solid spqr-edge/.style={
      spqr-edge,
      very thick,
    }
  ]
  \begin{scope}
    \coordinate (p) at (1,3);
    \node[spqr-node] (p0) at (1,2) {$C$};
    \node[spqr-node] (l0) at (0,0) {$B_0$};
    \node[spqr-node] (l1) at (1,0) {$B_1$};
    \node[spqr-node] (l2) at (2,0) {$B_2$};
    \draw[spqr-edge] (p) -- (p0);
    \draw[dashed spqr-edge] (p0) -- (l0);
    \draw[solid spqr-edge] (p0) -- (l1);
    \draw[dashed spqr-edge] (p0) -- (l2);
  \end{scope}
  \begin{scope}[shift={(4.5,0)}]
    \node at (0,1.5) {$\longrightarrow$};
  \end{scope}
  \begin{scope}[shift={(7,0)}]
    \coordinate (p) at (.5,3);
    \node[spqr-node] (p0) at (.5,2) {$C$};
    \node[spqr-node] (p1) at (1.5,1.25) {$C$};
    \node[spqr-node] (l1) at (0,0) {$B_1$};
    \node[spqr-node] (l0) at (1,0) {$B_0$};
    \node[spqr-node] (l2) at (2,0) {$B_2$};
    \draw[spqr-edge] (p) -- (p0);
    \draw[solid spqr-edge] (p0) -- (l1);
    \draw[dashed spqr-edge] (p0) -- (p1);
    \draw[solid spqr-edge] (p1) -- (l0);
    \draw[dashed spqr-edge] (p1) -- (l2);
    \begin{pgfonlayer}{background}
      \filldraw[line width=26,line join=round,gray!50](p0.center)--(p1.center)--cycle;
      \filldraw[line width=25,line join=round,white](p0.center)--(p1.center)--cycle;
    \end{pgfonlayer}
  \end{scope}
\end{tikzpicture}
 	\caption{\label{fig:pre-splitting-C}Pre-splitting a $C$ node creates
		a primary and at most $1$ secondary child node. The primary
		inherits the solid child, the secondary inherits the remaining
		children and chooses its heaviest child as solid.}
\end{figure}

Since inserting an edge naively would require $\OO(n)$ vertex splits,
we instead use a variation of the pre-splitting ideas from
Section~\ref{sec:dynspqr-presplit} to maintain a \emph{pre-split}
BC-tree.  In fact if we treat $B$ nodes as $R$ nodes, and $C$ nodes as
$P$ nodes, we can use almost exactly the same definitions and get the
same result.
\begin{definition}
  A node $v$ in a rooted relaxed BC tree is \emph{primary} if either
  $v$ is the root; or $v$ is a $B$ node; or $v$ is a $C$ node whose
  parent is a $B$ node. All other nodes are called \emph{secondary}.
\end{definition}
In any relaxed BC tree $T$ there is a one-to-one correspondence
between the primary nodes in $T$ and the nodes in the corresponding
strict BC tree.
\begin{definition}
  Define a node $v$ in a BC tree to be \emph{splittable}, if $v$ is a
  $C$ node with both a parent and a solid child, and at least $2$
  primary children. All other nodes are \emph{unsplittable}. In
  particular, the root, the leaves, and the $B$ nodes are
  unsplittable.
\end{definition}
\begin{definition}
  Define a node to be \emph{mergeable} if it has a secondary
  child. All other nodes are \emph{unmergeable}.
\end{definition}
In a strict BC tree many nodes may be splittable, but no node is
mergeable. We will use the following operations.
\begin{description}
\item[split$(v)$:] Where $v$ is a splittable node. Replace the group
  of all the dashed children of $v$ with a single new dashed secondary
  child node $c$, inheriting the edges that it replaces.  If $c$ has a
  heavy child $h$, make $(h,c)$ solid.
\item[merge$(v)$:] Where $v$ is a mergeable node. For each secondary
  child $c$: if $(c,v)$ is dashed and $c$ has a solid child $c_2$,
  make $(c_2,c)$ dashed; then contract edge $(c,v)$.
\end{description}
\begin{lemma}\label{lem:bc-ps-relaxed-invariants}
  If we start with a heavy path decomposition of a strict BC tree, and
  only split or merge primary nodes, the following innvariants hold:
  \begin{enumerate}[label={I'-\arabic*.}, ref={I'-\arabic*}, series=bc-presplitting-invariants]
  \item\label{inv:bc-ps-not-both-split-merge} No node is both
    splittable and mergeable.
  \item\label{inv:bc-ps-secondary} Each secondary node has at least
    $2$ children, and is a dashed child of a primary node which has
    both a parent and a solid child.
  \end{enumerate}
  Furthermore, any heavy path decomposition of a relaxed BC tree
  where these inveriants hold can be constructed from the
  corresponding strict BC tree by splitting a unique subset of the
  splittable primary nodes in any order.
\end{lemma}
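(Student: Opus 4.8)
The plan is to mirror the argument for Lemma~\ref{lem:ps-relaxed-invariants} almost verbatim, using the analogy "$B$ node $\leftrightarrow$ $R$ node, $C$ node $\leftrightarrow$ $P$ node" that the text has already set up, and checking that the simpler split/merge definitions in the BC setting still produce the same invariants. The base case is immediate: a strict BC tree has only primary nodes, so invariant~\ref{inv:bc-ps-not-both-split-merge} holds (no node is mergeable) and~\ref{inv:bc-ps-secondary} holds vacuously. For the inductive step I would argue separately for split and merge, assuming the invariants hold before the operation and that the node acted on is primary.

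For split$(v)$ with $v$ a splittable primary node: being splittable, $v$ is a $C$ node with a parent edge, a solid child edge, and at least $2$ primary children. After split$(v)$, the node $v$ is no longer splittable (its dashed children have been absorbed into the single new secondary child $c$, so it no longer has $\geq 2$ primary children) but is now mergeable, so~\ref{inv:bc-ps-not-both-split-merge} holds at $v$. The new secondary child $c$ has at least $2$ children — because the group of dashed children of $v$ that it inherits, together with its promoted solid child, must be counted; more carefully, $v$ had $\geq 2$ primary children, one of which is solid (hence stays with $v$) and the rest dashed, plus possibly secondary dashed children, so $c$ inherits $\geq 1$ dashed child; I need to double-check that this always yields $\geq 2$ children at $c$, and if not, adjust the definition of splittable to require enough dashed children — but I expect the intended reading is that "$\geq 2$ primary children" with exactly one solid forces $\geq 1$ dashed primary child, plus the invariant already in force gives the count; in any case $c$ is a dashed child of the primary $C$ node $v$, which has both a parent and a solid child, so~\ref{inv:bc-ps-secondary} holds at $c$. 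Since $v$ was not mergeable before, each node $h$ that was reparented to $c$ was primary, so making $(h,c)$ solid when $h$ is heavy does not create a secondary solid child and hence does not break~\ref{inv:bc-ps-secondary}. All other nodes are untouched.

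For merge$(v)$ with $v$ mergeable: by~\ref{inv:bc-ps-secondary}, $v$ is primary and has a secondary child; after merge$(v)$, $v$ has no secondary children so it is unmergeable, and~\ref{inv:bc-ps-not-both-split-merge} is preserved at $v$; the contraction and possible demotion of solid edges to dashed removes secondary nodes but creates none, so~\ref{inv:bc-ps-secondary} continues to hold everywhere. Finally, for the "furthermore" clause I would note that, as long as the invariants hold, a primary node of the relaxed BC tree is splittable exactly when the corresponding node of the strict BC tree is a $C$ node with degree $\geq 3$ that lies appropriately (so that it has a parent and a solid child under the current decomposition), and mergeable exactly when it has been pre-split; hence starting from the strict tree and calling split on precisely the subset of primary nodes one wants to be mergeable produces the desired relaxed tree, and since split operations on distinct primary nodes do not interfere (each touches only its own children), the result is independent of order and the subset is unique. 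The main obstacle I anticipate is the bookkeeping around the child-count in~\ref{inv:bc-ps-secondary} after a split — verifying that the new secondary node genuinely has $\geq 2$ children under the stated definition of splittable — but this should be a matter of unwinding the definitions rather than a substantive difficulty.
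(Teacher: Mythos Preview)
Your proposal is correct and takes essentially the same approach as the paper, which simply states that the proof is ``completely analogous to Lemma~\ref{lem:ps-relaxed-invariants}'' and gives no further detail. Your flagged concern about the child count at the new secondary node is a legitimate observation: the BC definition of \emph{splittable} requires ``at least $2$ primary children,'' whereas the analogous SPQR $P$-node condition requires a child group with at least $2$ edges (i.e.\ at least $2$ dashed children), and only the latter directly guarantees that the secondary node created by \texttt{split} has $\geq 2$ children; this is a minor definitional looseness that the paper glosses over, and your instinct to resolve it by reading ``at least $2$ primary children'' as ``at least $2$ dashed (primary) children'' is exactly right.
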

\begin{proof}
  Completely analoguous to Lemma~\ref{lem:ps-relaxed-invariants}.
\end{proof}
\begin{definition}
  Define the \emph{solid degree} of a node $v$ to be the number solid
  edges incident to $v$. Define an edge $(c,p(c))$ to be \emph{stable}
  if it is dashed and either
  \begin{itemize}
  \item $p(c)$ is a $B$ node; or
  \item $p(c)$ is a $C$ node with solid degree $2$ and $c$ has no
    dashed siblings.
  \end{itemize}
  All other dashed edges are \emph{unstable}. In particular, every
  dashed child edge of a $C$ node that is at the end of a solid path
  is unstable, as are all child edges of a splittable node that would
  get moved during a split.
\end{definition}
\begin{observation}
  If the edge $(c,p(c))$ is stable, then contracting the solid path
  containing $p(c)$ into a $B$ node during an edge insertion does not
  change this.
\end{observation}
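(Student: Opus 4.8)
The plan is to read the claim straight off the definition of \emph{stable}, by following the edge $(c,p(c))$ through the contraction; the argument is the obvious analogue of the one for the $R$-node case (Observation~\ref{obs:contract-stable}), with ``$B$'' in place of ``$R$'' and ``solid child / articulation point'' in place of ``child group / separation pair''.

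First I would observe that, since $(c,p(c))$ is stable, it is in particular dashed, so $c$ is not the solid child of $p(c)$ and therefore does not lie on the solid path $P$ through $p(c)$ that gets contracted. Hence $c$ is not absorbed by the contraction: it survives as a node, and the sole effect of collapsing $P$ into a single $B$ node $b$ is that the parent edge of $c$ is redirected from $p(c)$ to $b$. If $p(c)$ is itself a $B$ node this is immediate; if $p(c)$ is a $C$ node, I would additionally use that stability forces $p(c)$ to have solid degree $2$ — a solid parent edge and a solid child edge — so $p(c)$ is genuinely internal to $P$ and is collapsed into $b$ together with the rest of $P$, rather than being separated from $c$ by the splitting that precedes the contraction (here one uses that $p(c)$ is not splittable in the pre-split BC tree, which is exactly what the ``no dashed siblings'' condition and the split rule guarantee).

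Next I would check that the redirected edge $(c,b)$ is again dashed. Contracting a solid path into one $B$ node does not alter the solid/dashed status of the edges that hang off $P$; moreover the decomposition resulting from the contraction is $b$-exposed, so $b$ has no solid child and every edge from $b$ to a child is dashed. Finally, since $b$ is a $B$ node, $(c,b)$ satisfies the first alternative in the definition of \emph{stable}, which is precisely the assertion that the contraction ``does not change this''.

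I do not expect a genuine obstacle here: the entire content is unwinding the definitions of \emph{stable} and of the path contraction carried out during an edge insertion. The only point deserving care — and it is minor — is the $C$-node subcase, where one must be certain that $p(c)$ is contracted as part of $P$; stability of $(c,p(c))$, i.e.\ solid degree $2$ together with $c$ having no dashed siblings, is exactly what rules out $p(c)$ being split off instead.
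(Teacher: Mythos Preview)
Your argument is correct; the paper states this as a bare observation with no proof, and what you have written is exactly the unwinding of the definition of \emph{stable} that the reader is meant to supply. The only substantive point---that a $C$-node parent must be internal to the solid path (solid degree $2$) and cannot be split away from $c$ because $c$ has no dashed siblings---is the right one to isolate, and you have it.
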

\begin{lemma}
  If splitting or merging the node $p$ does not give its dashed child
  $c$ a new parent, the $(c,p)$ is unstable after the operation if and
  only if it was unstable before the operation.
\end{lemma}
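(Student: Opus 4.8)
The plan is to mirror the proof of Lemma~\ref{lem:split-merge-nonmoving-child} from the SPQR setting, taking advantage of the fact that the BC-tree notion of stability is strictly simpler. Since $c$ keeps the same parent, the edge $(c,p)$ stays dashed throughout the operation, so the only thing to check is how the \emph{stable} predicate behaves; recall that a dashed edge $(c,p)$ is stable exactly when $p$ is a $B$ node, or $p$ is a $C$ node of solid degree $2$ and $c$ has no dashed siblings.

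First I would dispose of the possibility that $p$ is a $B$ node. A $B$ node is unsplittable by definition, and it is never mergeable either: every secondary node is a $C$ node whose parent is a $C$ node, so a $B$ node has no secondary children. Hence whenever split$(p)$ or merge$(p)$ is actually carried out, $p$ is a $C$ node, and the first clause of the stability definition plays no role before or after.

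Then I would split into the two operations. For split$(p)$: a splittable node $p$ is a $C$ node, and split$(p)$ by definition moves \emph{all} of the dashed children of $p$ onto a single new secondary child, so there is no dashed child $c$ of $p$ that fails to get a new parent, and the claim holds vacuously. For merge$(p)$: $p$ is mergeable, hence has a secondary child $c'$, which is itself a dashed child of $p$; so already before the merge $c$ has a dashed sibling, and (as $p$ is a $C$ node, not a $B$ node) $(c,p)$ is unstable. After merge$(p)$, $p$ is still a $C$ node, and each secondary child is replaced by its children --- at least two of them, by invariant~\ref{inv:bc-ps-secondary} --- all of which become dashed children of $p$; so $c$ still has a dashed sibling and $(c,p)$ remains unstable. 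In every case $(c,p)$ is unstable after the operation if and only if it was unstable before, which is the claim.

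The one point that needs care, rather than being purely formal, is the merge$(p)$ analysis: one must be precise that the children receiving a new parent are exactly the grandchildren reached through secondary children (not $p$'s own primary children), and that although merge$(p)$ temporarily demotes the solid child of each secondary node to dashed, it cannot inadvertently leave $p$ with solid degree $2$ and $c$ sibling-free. Invariant~\ref{inv:bc-ps-secondary} --- that each secondary node has at least two children --- is precisely what rules this out, since it guarantees $c$ inherits at least two fresh dashed siblings from $c'$.
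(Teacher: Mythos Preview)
Your proof is correct and is precisely the natural specialization of the paper's approach (the paper itself just writes ``Completely analoguous to Lemma~\ref{lem:split-merge-nonmoving-child}''): since $C$ nodes have only a single child group, the split case is vacuous and the merge case reduces to showing $(c,p)$ is unstable both before and after, exactly as you argue. One point you use implicitly and could state: in the merge case you need $c\neq c'$, which holds because $c$ survives as a dashed child of $p$ while the secondary child $c'$ is contracted away.
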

\begin{proof}
  Completely analoguous to Lemma~\ref{lem:split-merge-nonmoving-child}.
\end{proof}

\begin{lemma}\label{lem:bc-split-moving-child}
  If splitting the node $p$ gives its dashed child $c$ a new parent
  $v$, then $(c,p)$ was unstable before the split, $(c,v)$ is either
  solid or unstable after the split, and $\Gamma(v)$ has at least $3$
  edges.
\end{lemma}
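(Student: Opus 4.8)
The plan is to reuse, essentially verbatim, the argument from Lemma~\ref{lem:split-moving-child}, now specialised to the BC‑tree setting, where the role played there by a splittable $P$ node is played here by a splittable $C$ node. Recall that such a $p$ is a $C$ node with both a parent edge and a solid child and, since we only ever split primary nodes and the invariants of Lemma~\ref{lem:bc-ps-relaxed-invariants} hold, $p$ has no secondary children, so all of its children are primary; the split$(p)$ operation replaces the entire block of dashed children of $p$ by the single new secondary child $v$, which inherits exactly those edges, is attached to $p$ by a dashed edge, and acquires a solid child only if it has a heavy one. So the whole proof will be a one‑for‑one translation, and I would likely just write ``completely analogous to Lemma~\ref{lem:split-moving-child}'' and then spell out the three claims as below.

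First I would argue that $(c,p)$ is unstable before the split: $c$ is one of the dashed children relocated to $v$, so it is exactly the situation singled out in the definition of \emph{stable} as unstable (``all child edges of a splittable node that would get moved during a split''); concretely, $p$ is not a $B$ node, and since more than one dashed child is moved, $c$ has a dashed sibling, so $p$ is not a $C$ node of solid degree $2$ whose moved child has no dashed siblings. Next I would examine $(c,v)$ after the split: $v$ is a newly created $C$ node with $(v,p)$ dashed, hence $v$ has solid degree at most $1$, so it is not a $C$ node of solid degree $2$; therefore $(c,v)$ is either the edge that split turned solid (when $c$ is $v$'s heavy child) or, being dashed under a parent of solid degree $\le 1$, is unstable. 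Finally, the third conclusion — which in the BC setting is the assertion of invariant~\ref{inv:bc-ps-secondary} that $v$ has at least two children (the counterpart of ``$\Gamma(v)$ has at least $3$ edges'' for $R/P$‑type secondary nodes) — holds because $v$ inherits all of $p$'s dashed children, of which there are at least two.

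The one point that needs care, and which is the BC analogue of the ``child group with at least $2$ edges'' hypothesis used in Lemma~\ref{lem:split-moving-child}, is to be sure that whenever split actually relocates $c$ to a new parent, at least two dashed children are being moved, so that $v$ is a legitimate secondary node and $c$ genuinely has a sibling under $v$; this is what the definition of \emph{splittable} for $C$ nodes supplies. Everything else — the bookkeeping of solid degrees and of which edge, if any, becomes solid — is routine and identical to the $P$‑node case, so I do not expect any real obstacle beyond getting that counting step lined up with the definition of splittable.
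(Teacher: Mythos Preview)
Your proposal is correct and follows exactly the paper's approach: the paper's proof is simply ``Completely analoguous to Lemma~\ref{lem:split-moving-child}'', and what you have written is precisely that analogy unfolded in the BC setting, with the splittable $C$ node playing the role of the splittable $P$ node. Your extra care in the final paragraph about ensuring at least two dashed children are moved (so that $v$ genuinely has $\ge 2$ children, which you rightly identify as the BC analogue of ``$\Gamma(v)$ has at least $3$ edges'') is appropriate and is exactly the point that the one-line proof sweeps under ``analogous''.
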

\begin{proof}
  Completely analoguous to Lemma~\ref{lem:split-moving-child}.
\end{proof}
\begin{lemma}\label{lem:bc-merge-moving-child}
  If merging node $p$ contracts the dashed edge $(v,p)$ where $v$ has
  a child $c$ and at least $2$ children, then $(c,v)$ was
  solid or unstable before the merge, and $(c,p)$ is unstable after
  the merge.
\end{lemma}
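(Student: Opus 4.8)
The plan is to mirror the proof of Lemma~\ref{lem:merge-moving-child} under the dictionary $B\leftrightarrow R$, $C\leftrightarrow P$, with the cardinality condition ``$\Gamma(v)$ has at least $3$ edges'' replaced by ``$v$ has at least $2$ children''. The only thing to verify is that each step of that earlier argument survives the translation, which it should, since the notions of solid degree, stable edge, and secondary node were set up in this section precisely to be analogues of the SPQR ones.

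First I would pin down the types of $v$ and $p$. Since a \textbf{merge} only ever contracts an edge from a node to one of its secondary children, $v$ is secondary, so by the definition of primary in a relaxed BC tree $v$ is a $C$ node whose parent $p$ is also a $C$ node. Because the edge $(v,p)$ being contracted is dashed, the only solid edge that can be incident to $v$ is the one to its (at most one) solid child, so the solid degree of $v$ is at most $1$. Hence no child edge $(c,v)$ of $v$ can be stable: stability would force $v$ to be a $B$ node (it is not) or a $C$ node of solid degree $2$ (it is not). This gives the first half of the claim: before the merge, every child edge of $v$ — in particular $(c,v)$ — is solid or unstable.

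Next I would follow the bookkeeping of \textbf{merge}: it first makes the solid child of $v$ (if any) dashed, and then the contraction of $(v,p)$ makes all of $v$'s former children dashed children of $p$. So after the merge $(c,p)$ is dashed, and since $v$ had at least $2$ children, $c$ now has at least one dashed sibling (another former child of $v$). Combined with $p$ being a $C$ node rather than a $B$ node, this rules out both stability clauses for $(c,p)$, so $(c,p)$ is unstable after the merge, finishing the proof.

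The main (mild) obstacle is making sure the hypothesis ``$v$ has at least $2$ children'' does exactly the work that ``$\Gamma(v)$ has at least $3$ edges'' does in the SPQR proof, namely guaranteeing that after the contraction the set of dashed children of $p$ contains an edge other than $(c,p)$; once that is checked, the statement is genuinely analogous to Lemma~\ref{lem:merge-moving-child}, and I would present it as such with at most this one-line reminder.
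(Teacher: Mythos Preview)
Your proposal is correct and is exactly the analogy the paper intends; the paper's own proof consists of the single line ``Completely analoguous to Lemma~\ref{lem:merge-moving-child}.'' You have simply spelled out that analogy, correctly tracking that $v$ being secondary forces both $v$ and $p$ to be $C$ nodes, that the dashed parent edge bounds $v$'s solid degree by $1$, and that the ``at least $2$ children'' hypothesis guarantees a dashed sibling for $c$ after the contraction.
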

\begin{proof}
  Completely analoguous to Lemma~\ref{lem:merge-moving-child}.
\end{proof}
Thus, if the edge $(c,p(c))$ is solid or unstable then after a split
or merge that changes $p(c)$, the new parent edge of $c$ is still
either solid or unstable (but not necessarily the same one as before).
Given these properties we can now state
\begin{lemma}\label{lem:bc-ps-unique-relaxed}
  Given a $u$-exposed (or proper) heavy path decomposition of a strict
  BC tree $T$ with root $r$, there is a unique corresponding relaxed
  BC tree $T'$ that satisfies the inveriants of
  Lemma~\ref{lem:bc-ps-relaxed-invariants} as well as the following:
  \begin{enumerate}[resume*=bc-presplitting-invariants]
  \item\label{inv:bc-ps-split-rule} The root, and each primary node with
    a solid or unstable parent edge is not splittable.
  \item\label{inv:bc-ps-merge-rule} A primary node with a stable
    parent edge is not mergeable.
  \end{enumerate}
  We call $T'$ the \emph{pre-split} BC tree corresponding to
  $T$.
\end{lemma}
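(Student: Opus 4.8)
The plan is to mirror the proof of Lemma~\ref{lem:ps-unique-relaxed}, using the correspondence suggested by the section (treat $B$ nodes like $R$ nodes and $C$ nodes like $P$ nodes), so that the BC analogues of Lemmas~\ref{lem:split-merge-nonmoving-child}, \ref{lem:split-moving-child}, and~\ref{lem:merge-moving-child} do all the structural work.

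For existence, I would start from the strict BC tree. Every node there is primary and no node is mergeable, so invariants~\ref{inv:bc-ps-not-both-split-merge}, \ref{inv:bc-ps-secondary}, and~\ref{inv:bc-ps-merge-rule} hold vacuously. Moreover, in a strict BC tree the parent of a $C$ node is a $B$ node, so every \emph{dashed} parent edge of a $C$ node is stable; hence~\ref{inv:bc-ps-split-rule} can only fail at a splittable (necessarily $C$-) node whose parent edge is solid. Let $X$ be the set of such nodes. For each $v\in X$, call split$(v)$: afterwards $v$ has exactly one primary child (its solid child) together with one new secondary child $c$, so $v$ is no longer splittable and~\ref{inv:bc-ps-split-rule} holds at $v$. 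I would then check, via the BC analogues of the moving-child lemmas, that the only children affected are those reassigned to $c$, that their new parent edges are again solid or unstable, and that $c$ itself is either unsplittable or has a stable parent edge; consequently no fresh violation of any invariant is created anywhere. By the ``in any order'' clause of Lemma~\ref{lem:bc-ps-relaxed-invariants}, the order in which the nodes of $X$ are split is irrelevant.

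For uniqueness, Lemma~\ref{lem:bc-ps-relaxed-invariants} says every relaxed BC tree satisfying~\ref{inv:bc-ps-not-both-split-merge} and~\ref{inv:bc-ps-secondary} is obtained from the strict BC tree by splitting some subset $S$ of its splittable primary nodes. I would argue that~\ref{inv:bc-ps-split-rule} and~\ref{inv:bc-ps-merge-rule} pin down $S$ exactly: a splittable primary node of the strict tree with a solid or unstable parent edge must lie in $S$ (else~\ref{inv:bc-ps-split-rule} fails, using that splittability and the solid/unstable status of the parent edge are preserved for nodes we do not split), and one with a stable parent edge must not lie in $S$ (else it becomes mergeable while keeping its stable parent edge, violating~\ref{inv:bc-ps-merge-rule}). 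Since for any edge exactly one of ``stable'' and ``solid or unstable'' holds, these two conditions determine $S$ uniquely, giving the unique pre-split BC tree $T'$.

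The step I expect to be the main obstacle is the non-cascading claim inside the existence argument: verifying that performing split$(v)$ for $v\in X$ never introduces a new violation of~\ref{inv:bc-ps-split-rule} (or of any other invariant) at some other node, so that a single pass over $X$ suffices. This is exactly where the moving-child lemmas are essential — they guarantee that a child's parent edge keeps its solid/unstable status when the child is reassigned during a split, which is precisely what prevents the split rule from propagating. The remaining content is routine bookkeeping about which node type plays which role.
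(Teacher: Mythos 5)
Your proposal is correct and takes essentially the same approach as the paper, whose proof is simply the one-line remark that the argument is completely analogous to the SPQR version (Lemma~\ref{lem:ps-unique-relaxed}): start from the strict tree, observe that only invariant~\ref{inv:bc-ps-split-rule} can fail, split exactly the offending nodes, and note that these splits neither cascade nor leave freedom in the choice of split set. You have merely unfolded the analogy and made explicit the non-cascading and uniqueness observations that the paper's terse proof leaves implicit.
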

\begin{proof}
  Completely analoguous to Lemma~\ref{lem:bc-ps-unique-relaxed}.
\end{proof}

\begin{lemma}\label{lem:bc-ps-edge-change-restorecount}
  In the pre-split BC tree, if $c$ and $p=p(c)$ are both primary,
  then changing edge $(c,p)$ from solid to dashed or vice versa may
  violate the invariants at $p$ or $c$ or at most $3$ of their
  children, but nowhere else.  The invariants can be restored by at
  most $3$ split or merge operations.
\end{lemma}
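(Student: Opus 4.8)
The plan is to transcribe the proof of Lemma~\ref{lem:ps-edge-change-restorecount} under the dictionary $B\leftrightarrow R$, $C\leftrightarrow P$, substituting the BC versions of the supporting lemmas (Lemma~\ref{lem:bc-split-moving-child} and Lemma~\ref{lem:bc-merge-moving-child}) for their SPQR analogues. The situation is in fact slightly simpler here, since $C$ nodes never carry selected vertices: the solid degree of a node is just the number of incident solid edges, and since each node has at most one solid child and one parent edge, every $C$ node has solid degree at most $2$.

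First I would treat the case where $(c,p)$ changes from solid to dashed. This cannot make any node both splittable and mergeable, so invariant~\ref{inv:bc-ps-not-both-split-merge} is untouched. It can violate invariant~\ref{inv:bc-ps-secondary} at the secondary children of $p$ (they lose their solid sibling), which forces $p$ to be a $C$ node; it can violate invariant~\ref{inv:bc-ps-split-rule} at at most one primary child of $p$ and at most one primary child of $c$, namely a dashed child whose parent edge switches from stable to unstable because the solid degree of $p$ (resp.\ $c$) drops from $2$ to $1$; and it can violate invariant~\ref{inv:bc-ps-merge-rule} at $c$, if $(c,p)$ becomes stable, which --- as a $C$ node has solid degree at most $2$ --- forces $p$ to be a $B$ node and hence rules out the invariant~\ref{inv:bc-ps-secondary} violation. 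So a single merge at whichever of $p$ or $c$ is relevant, plus at most one split at the offending child of $p$ and one at the offending child of $c$, restores all invariants: three operations.

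Then the dashed-to-solid case is symmetric: invariants~\ref{inv:bc-ps-not-both-split-merge} and~\ref{inv:bc-ps-secondary} are untouched; invariant~\ref{inv:bc-ps-split-rule} may fail at $p$ (now possibly splittable, forcing $p$ a $C$ node) or at $c$ (whose parent edge switches from stable to solid, forcing $p$ a $B$ node) but not both; and invariant~\ref{inv:bc-ps-merge-rule} may fail at one primary child of $p$ and one of $c$ whose parent edge switches from unstable to stable as the solid degree climbs to $2$. One split at $p$ or $c$ and at most one merge at each of those two children fixes it, again three operations. In both directions Lemmas~\ref{lem:bc-split-moving-child} and~\ref{lem:bc-merge-moving-child} guarantee the repairs do not cascade. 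I do not expect a genuine obstacle: the one point worth verifying rather than merely quoting is the ``solid degree $\leq 2$'' observation that makes the invariant~\ref{inv:bc-ps-secondary}/invariant~\ref{inv:bc-ps-merge-rule} dichotomy hold, and --- in keeping with the surrounding BC lemmas --- one could legitimately compress the whole argument to ``Completely analogous to Lemma~\ref{lem:ps-edge-change-restorecount}.''
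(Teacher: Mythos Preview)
Your proposal is correct and matches the paper's approach exactly: the paper's own proof is simply ``Completely analogous to Lemma~\ref{lem:ps-edge-change-restorecount}'' (modulo a self-referential typo in the label), and you have both identified this and carried out the transcription under the $B\leftrightarrow R$, $C\leftrightarrow P$ dictionary in more detail than the paper bothers to. Your observation that $C$ nodes carry no selected vertices, so the solid-degree bound is automatic, is the one genuine simplification relative to the SPQR case and is correctly noted.
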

\begin{proof}
  Completely analoguous to Lemma~\ref{lem:bc-ps-edge-change-restorecount}.
\end{proof}

Just like for pre-split SPQR trees, to handle expose$(u)$ of a primary
node $u$ in a proper heavy path decomposition of the pre-split BC tree
with root $r$, or conceil$(r)$ in a $u$-exposed heavy path
decomposition of the pre-split BC tree with root $r$ and primary node
$u$, we need to slightly change the implementation of splice$(v)$ and
slice$(v)$ as follows:
\begin{description}
\item[splice$(v)$:] Where $v$ is a primary node that is a dashed light
  child.
  If $p(v)$ is a secondary node, let $p=p(p(v))$, else let
  $p=p(v)$. Now $p$ is a primary node. If $p$ has a solid child then
  make it dashed and do the necessary splits and merges to restore
  the invariants. Now $p$ is the parent of $v$ and $(v,p)$ is
  dashed. Make $(v,p)$ solid, and do the necesary splits and merges
  to restore the invariants.

\item[slice$(v)$:] Where $v$ is a primary node that is a solid light
  child.
  Let $p=p(v)$, then $p$ is primary and $(v,p)$ is solid. Make $(v,p)$
  dashed and do the necessary splits and merges to restore the
  invariants. Now $p$ has no solid children. If $p$ has a heavy child
  $h$, make $(h,p)$ solid and do the necessary splits and merges to
  restore the invariants.
\end{description}

\begin{lemma}\label{lem:bc-ps-expose-conceil}
  An expose$(u)$ where $u$ is a primary node in a proper heavy path
  decomposition of the pre-split BC tree with root $r$, or a
  conceil$(r)$ in a $u$-exposed heavy path decomposition of the
  pre-split SPQR tree with root $r$ each costs at most $6\ell(u)+3$
  splits or merges, in addition to the at most $2\ell(u)+1$ edges
  changing from solid to dashed or vice versa, where $\ell(u)$ is the
  light depth of $u$ in the corresponding strict SPQR tree.
\end{lemma}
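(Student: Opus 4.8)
The plan is to mirror the proof of Lemma~\ref{lem:ps-expose-conceil} almost verbatim, substituting the BC-tree analogues for the SPQR-specific ingredients. First I would recall that, with the modified implementations of splice$(v)$ and slice$(v)$ given just above, an expose$(u)$ (resp.\ conceil$(r)$) proceeds exactly as in the underlying biased dynamic-tree algorithm: it changes the solid/dashed status of the same sequence of edges as before, and by Lemma~\ref{lem:dyntree-internal-changecount} the number of such \emph{explicit} status changes is at most $2\ell(u)+1$, where $\ell(u)$ is the light depth of $u$ in the strict BC tree. The only additional work is that after each such toggle we must perform some split and merge operations to restore the pre-splitting invariants.

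Next I would verify that every edge whose status is \emph{explicitly} toggled by the modified splice/slice has a primary node at both endpoints. Indeed, splice and slice are only ever invoked on a primary light child $v$, and they always act through $p$, where $p = p(v)$ or $p = p(p(v))$ is primary by construction (a secondary node is bypassed), so the toggled edge $(v,p)$ joins two primary nodes. This is the step I expect to be the main (though still routine) obstacle: one has to check that the secondary nodes created by pre-splitting never interfere with the identification of the edge actually being toggled, but the modified splice/slice are precisely engineered so that the node whose solid-child status we manipulate is always the primary ancestor.

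With that in place, I would apply Lemma~\ref{lem:bc-ps-edge-change-restorecount}: each explicit toggle of an edge between two primary nodes can be repaired by at most $3$ split or merge operations, and this repair touches only the two endpoints and at most $3$ of their children, so repairs triggered by successive toggles stay disjoint and do not cascade. Summing over the at most $2\ell(u)+1$ explicit toggles yields at most $3(2\ell(u)+1) = 6\ell(u)+3$ splits and merges, in addition to the $2\ell(u)+1$ solid/dashed changes themselves, which is the claimed bound. The symmetric conceil$(r)$ statement follows by the identical argument with slice in place of splice, so a single \textbf{by symmetry} remark finishes the proof.
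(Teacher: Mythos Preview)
Your proposal is correct and takes essentially the same approach as the paper, which simply says ``Completely analoguous to Lemma~\ref{lem:ps-expose-conceil}.'' You have in fact spelled out the analogous argument in more detail than the paper does, including the verification that every explicitly toggled edge joins two primary nodes and the application of Lemma~\ref{lem:bc-ps-edge-change-restorecount} to bound the repair cost per toggle.
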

\begin{proof}
  Completely analoguous to Lemma~\ref{lem:ps-expose-conceil}.
\end{proof}

\begin{lemma}\label{lem:bc-ps-link-exposed}
  Given a proper pre-split BC tree $T_v$, and an $u$-exposed pre-split
  BC tree $T_r$, where $x$ is a vertex edge in both
  $\Gamma(v)$ and $\Gamma(u)$, restoring the invariants after a
  link-exposed$(v,u)$ takes at most one split or merge operation.
\end{lemma}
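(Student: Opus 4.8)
The plan is to follow the proof of Lemma~\ref{lem:ps-link-exposed} essentially verbatim, under the correspondence that treats $B$ nodes like $R$ nodes and $C$ nodes like $P$ nodes, with the BC-tree invariants~\ref{inv:bc-ps-not-both-split-merge}--\ref{inv:bc-ps-merge-rule} playing the roles of~\ref{inv:ps-not-both-split-merge}--\ref{inv:ps-merge-rule}. The first step is to observe that link-exposed$(v,u)$ changes the tree only locally: $v$ goes from being the root of a proper pre-split BC tree (hence primary, and unsplittable since roots are unsplittable) to being a dashed child of $u$, and the only incidence that changes is the new edge $(v,u)$. Since $u$ is the exposed node of a $u$-exposed decomposition it has no solid child, so $u$ stays unsplittable, and either $u$ is the root or $(u,p(u))$ lies on $r\cdots u$ and is therefore solid, hence not stable, so invariant~\ref{inv:bc-ps-merge-rule} remains vacuous at $u$. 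Thus no invariant can fail at $u$, nor at any node other than $v$, and only $v$ needs attention.

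Second, I would do the case analysis at $v$, splitting on whether $v$ becomes primary or secondary after the link. If $v$ has the same type as $u$ — necessarily both $C$ nodes, since two $B$ nodes cannot be adjacent even in a relaxed BC tree — then $v$ becomes secondary, and since $u$ has no solid child, invariant~\ref{inv:bc-ps-secondary} fails at $v$; a single merge$(u)$ absorbs $v$ back into $u$ and restores the invariant without creating new ones. If instead $v$ has a type different from $u$, then $v$ stays primary, so~\ref{inv:bc-ps-secondary} cannot be newly violated, and only one of~\ref{inv:bc-ps-split-rule} (since $v$ now has a parent and may have become splittable) or~\ref{inv:bc-ps-merge-rule} (since the new parent edge $(v,u)$ may now be stable while $v$ is mergeable) can fail; a single split$(v)$ or merge$(v)$ repairs it. The two cases are mutually exclusive and each uses at most one split or merge, which gives the stated bound.

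The step I expect to require the most care is checking that the single repair operation does not cascade: that re-parenting $v$ to $u$, and in the first case handing $v$'s children to $u$, does not flip any child edge of $v$ or $u$ between stable and unstable so as to create a further violation. This is handled by the BC analogues of the moving-child lemmas: in the second case the children of $v$ keep their parent during split$(v)$ or merge$(v)$, so their stability is unchanged by the (unnumbered) analogue of Lemma~\ref{lem:split-merge-nonmoving-child}; and in the first case, although the children of $v$ get $u$ as their new parent, $u$ has the same type as $v$, so by Lemmas~\ref{lem:bc-split-moving-child} and~\ref{lem:bc-merge-moving-child} a solid or unstable child edge of $v$ stays solid or unstable and a stable one stays stable. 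Assembling these observations yields the lemma.
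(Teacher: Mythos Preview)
Your proposal is correct and follows essentially the same approach as the paper, which simply states ``Completely analoguous to Lemma~\ref{lem:ps-link-exposed}.'' You have filled in the analogy explicitly and added the non-cascading argument that the paper leaves implicit; the case split on whether $v$ and $u$ have the same type, and the corresponding single merge$(u)$ versus split$(v)$/merge$(v)$ repair, match the SPQR proof verbatim under the $B\leftrightarrow R$, $C\leftrightarrow P$ correspondence.
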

\begin{proof}
  Completely analoguous to Lemma~\ref{lem:ps-link-exposed}.
\end{proof}

\begin{lemma}\label{lem:bc-ps-cut-exposed}
  Given a $u$-exposed pre-split BC tree $T_r$ and a child $v$ of
  $u$, restoring the invariants after a cut-exposed$(v)$ takes at most
  one merge operation if $v$ is a primary node, and no violations of
  the invariants occur otherwise.
\end{lemma}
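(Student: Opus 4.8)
The plan is to follow the proof of the SPQR-tree analogue, Lemma~\ref{lem:ps-cut-exposed}, almost verbatim under the correspondence ``$B$ node $\leftrightarrow$ $R$ node'', ``$C$ node $\leftrightarrow$ $P$ node'', so that invariants \ref{inv:bc-ps-not-both-split-merge}--\ref{inv:bc-ps-merge-rule} play the role of \ref{inv:ps-not-both-split-merge}--\ref{inv:ps-merge-rule}. First I would note that cut-exposed$(v)$ only deletes the edge $(v,u)$ and leaves every other parent pointer, solid/dashed marking, and child grouping unchanged; hence the only node whose splittable/mergeable status can change is $v$, and the only secondary nodes whose defining clause in \ref{inv:bc-ps-secondary} (``dashed child of a primary node that has a parent and a solid child'') can be affected are the children of $v$. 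Consequently all invariants still hold everywhere outside $\{v\}\cup\children(v)$, and it suffices to examine $v$ and its children.

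Next I would split into the two cases of the statement. If $v$ was secondary before the cut, then by \ref{inv:bc-ps-secondary} every child of $v$ is primary, so no child of $v$ can violate \ref{inv:bc-ps-secondary}; and after the cut $v$ is a tree root, so it has no parent and is therefore unsplittable (preserving \ref{inv:bc-ps-split-rule}), it is not constrained by \ref{inv:bc-ps-merge-rule}, and \ref{inv:bc-ps-not-both-split-merge} holds at $v$ as well. Thus no repair is needed, matching the ``otherwise'' clause. If instead $v$ was primary, then the only possible damage is that $v$ has lost its parent, so any secondary child $c$ of $v$ now violates \ref{inv:bc-ps-secondary}. Since $v$ is primary with a secondary child it is mergeable, and a single merge$(v)$ repairs this: it contracts away each secondary child of $v$, afterwards $v$ is unmergeable and, being a root, unsplittable, so \ref{inv:bc-ps-not-both-split-merge}, \ref{inv:bc-ps-split-rule} and \ref{inv:bc-ps-merge-rule} all hold at $v$, and \ref{inv:bc-ps-secondary} holds at the former (now primary) children of the contracted secondary nodes. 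Hence at most one merge suffices.

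Since the statement is essentially bookkeeping, I do not anticipate a genuine obstacle; the only point warranting care — and where I would focus the write-up — is that merge$(v)$ does not cascade: contracting the secondary children of $v$ demotes some solid child edges to dashed, and one must check this does not newly violate \ref{inv:bc-ps-split-rule} or \ref{inv:bc-ps-secondary} at the former grandchildren of $v$. This is exactly what Lemma~\ref{lem:bc-merge-moving-child} (together with the fact that those grandchildren are primary with solid-or-unstable parent edges) guarantees: their parent edges remain unstable after the merge, so \ref{inv:bc-ps-split-rule} is preserved, and no new secondary node is created, so \ref{inv:bc-ps-secondary} is preserved. The whole argument can therefore be dispatched with a one-line ``completely analogous to Lemma~\ref{lem:ps-cut-exposed}'', as the paper does for its sibling lemmas.
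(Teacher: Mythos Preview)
Your proposal is correct and follows exactly the approach the paper takes: the paper's proof is the single line ``Completely analoguous to Lemma~\ref{lem:ps-cut-exposed}'', and your write-up is precisely that analogy spelled out under the $B\leftrightarrow R$, $C\leftrightarrow P$ correspondence. Your additional paragraph checking that merge$(v)$ does not cascade is more careful than what the paper's own proof of Lemma~\ref{lem:ps-cut-exposed} makes explicit, but it is correct and does no harm.
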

\begin{proof}
  Completely analoguous to Lemma~\ref{lem:ps-cut-exposed}.
\end{proof}

\begin{lemma}\label{lem:bc-ps-reverse}
  In the pre-split BC tree with root $r$, reverse$(r)$ does not
  violate any of the invariants.
\end{lemma}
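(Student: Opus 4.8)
The plan is to follow the template of Lemma~\ref{lem:ps-reverse} almost verbatim, exploiting that reverse$(r)$ is extremely gentle: by Lemma~\ref{lem:dyntree-internal-changecount} it changes no edge between solid and dashed, and all it does is reverse the orientation of the edges on the solid path $r\cdots u$, where $u$ denotes the other endpoint of that path. First I would note that since no edge changes its solid/dashed status and no node gains or loses a secondary child, splittability and mergeability are unaffected at every node, so invariant~\ref{inv:bc-ps-not-both-split-merge} is trivially preserved everywhere.

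Next I would check invariants~\ref{inv:bc-ps-split-rule} and~\ref{inv:bc-ps-merge-rule} along the reversed path. By invariant~\ref{inv:bc-ps-secondary}, every node on $r\cdots u$ is primary. The internal nodes of $r\cdots u$ each have a solid incident path-edge both before and after the operation, hence are unsplittable by~\ref{inv:bc-ps-split-rule}; and since a \emph{stable} edge is by definition dashed, none of these nodes has a stable parent edge before or after, so~\ref{inv:bc-ps-merge-rule} is vacuous for them. The endpoints of the path are the old and new roots, which are unsplittable by definition, so~\ref{inv:bc-ps-split-rule} and~\ref{inv:bc-ps-merge-rule} hold there as well.

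Then I would handle the children hanging off the path. A child $c$ of a node $p\in r\cdots u$ is either secondary---and since $p$ retains both a parent and a solid child after the reverse, the requirement of~\ref{inv:bc-ps-secondary} is still met at $c$---or primary, in which case the classification of its parent edge $(c,p)$ as solid, stable, or unstable is unchanged by re-orienting the edges along $r\cdots u$ (stability of $(c,p)$ depends only on $p$ being a $B$ node, or a $C$ node of solid degree $2$ with $c$ having no dashed siblings, none of which is touched). Hence~\ref{inv:bc-ps-split-rule} and~\ref{inv:bc-ps-merge-rule} persist at $c$. Every node and edge neither on nor adjacent to $r\cdots u$ is untouched, so its invariants trivially hold, which finishes the sketch.

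The only real care needed---and what I would treat as the crux---is checking that the two definitions containing the word ``parent'' (namely \emph{splittable} for $C$ nodes, and \emph{stable} for edges) are genuinely symmetric under the parent/child swap along the solid path; once one observes that splittability of a $C$ node is governed through invariant~\ref{inv:bc-ps-split-rule} by the \emph{existence} of a solid incident path-edge rather than its orientation, and that stability forces an edge to be dashed (so it cannot occur on the solid path at all), the rest is pure bookkeeping. Given how tightly the BC-tree notions of primary, splittable, and stable mirror their SPQR-tree counterparts, the shortest honest write-up is simply ``Completely analogous to Lemma~\ref{lem:ps-reverse}.''
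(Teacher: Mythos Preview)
Your proposal is correct and follows exactly the template of Lemma~\ref{lem:ps-reverse}, which is precisely what the paper does: its entire proof reads ``Completely analoguous to Lemma~\ref{lem:ps-reverse}.'' Your closing remark that this one-liner is the shortest honest write-up matches the paper verbatim.
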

\begin{proof}
  Completely analoguous to Lemma~\ref{lem:ps-reverse}.
\end{proof}

Combining the previous Lemmas whose proofs were all completely analogous to the corresponding proofs for SPQR-trees, we now obtain a Theorem completely analogous to Theorem~\ref{thm:SPQRthm} for SPQR trees:

\begin{theorem}
  The number of splits and merges needed to maintain a pre-split BC
  tree under expose, conceil, reverse, link-exposed, and cut-exposed
  is proportional to the number of edges changing from solid to dashed
  or vice versa during the same operation in the strict BC-tree.
\end{theorem}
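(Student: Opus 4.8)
The plan is to imitate the (equally short) justification of Theorem~\ref{thm:SPQRthm} line for line, since the BC-tree counterpart of every ingredient is already in hand. First I would decompose each of the five primitives, as executed on a pre-split BC tree, into two kinds of atomic actions: (i)~explicit toggles of the solid/dashed status of a tree edge $(c,p)$ in which \emph{both} endpoints are primary, each immediately followed by a sequence of splits and merges that restores the pre-splitting invariants; and (ii)~a constant-size ``framing'' action that is itself at most one edge change --- the new dashed parent edge created by link-exposed$(v,u)$, the cut performed by cut-exposed$(v)$, or nothing at all for reverse$(r)$. This is exactly the decomposition used in the modified descriptions of splice$(v)$/slice$(v)$ (which drive expose$(u)$ and conceil$(r)$) and in the statements of Lemmas~\ref{lem:bc-ps-link-exposed}, \ref{lem:bc-ps-cut-exposed}, and \ref{lem:bc-ps-reverse}.

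Next I would charge the cost. By Lemma~\ref{lem:bc-ps-edge-change-restorecount}, each explicit toggle of an edge between two primary nodes is repaired by at most $3$ splits or merges, and the damage stays within an $O(1)$ neighbourhood of the toggled edge, so the repairs do not cascade. Plugging this into the bookkeeping of Lemma~\ref{lem:bc-ps-expose-conceil} gives that an expose$(u)$ or conceil$(r)$ performs at most $2\ell(u)+1$ explicit edge changes and at most $6\ell(u)+3$ splits or merges, i.e.\ at most $3$ splits/merges per edge change plus an additive constant. Lemma~\ref{lem:bc-ps-link-exposed} gives one edge change and at most one split/merge for link-exposed; Lemma~\ref{lem:bc-ps-cut-exposed} gives one edge change and at most one merge for cut-exposed; Lemma~\ref{lem:bc-ps-reverse} gives zero edge changes and zero splits/merges for reverse. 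In every case the number of splits and merges is bounded by a constant times the number of solid/dashed (and link/cut) edge changes the primitive induces, the additive constant being harmless because any primitive that performs a split or a merge also performs at least one edge change. Since these five primitives are the only ones used to maintain the pre-split BC tree, summing over a run gives the claimed linear bound.

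The only thing that genuinely needs checking --- rather than merely asserting ``analogous to SPQR'' --- is twofold: that no BC-specific case lets a primitive emit a split or merge without an accompanying edge-status change, and that the pre-splitting invariants are only ever violated in the $O(1)$ neighbourhood of the toggled edge so the per-toggle restoration cost is truly $O(1)$. Both are supplied by Lemmas~\ref{lem:bc-ps-relaxed-invariants} and \ref{lem:bc-ps-edge-change-restorecount} together with \ref{lem:bc-ps-link-exposed}--\ref{lem:bc-ps-reverse}; the non-cascading property is precisely Lemma~\ref{lem:bc-ps-edge-change-restorecount}, itself a transcription of Lemma~\ref{lem:ps-edge-change-restorecount}. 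I expect no real obstacle here: because treating $B$ nodes as $R$ nodes and $C$ nodes as $P$ nodes makes the BC machinery a restriction of the SPQR machinery (the restriction actually making matters easier, since there is no $S$-node selection step), the substantive work has already been discharged in the preceding lemmas and this theorem is essentially a one-paragraph corollary of them.
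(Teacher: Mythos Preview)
Your proposal is correct and takes exactly the same approach as the paper: the paper's ``proof'' is simply the sentence ``Combining the previous Lemmas whose proofs were all completely analogous to the corresponding proofs for SPQR-trees, we now obtain a Theorem completely analogous to Theorem~\ref{thm:SPQRthm},'' and you have spelled out that combination in detail, citing precisely Lemmas~\ref{lem:bc-ps-edge-change-restorecount}, \ref{lem:bc-ps-expose-conceil}, \ref{lem:bc-ps-link-exposed}, \ref{lem:bc-ps-cut-exposed}, and \ref{lem:bc-ps-reverse} as the paper intends. Your observation that the additive constant is harmless because every primitive that emits a split or merge also emits at least one edge change is a useful sanity check that the paper leaves implicit.
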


\subsection{Pre-contract}\label{sec:dynbc-precontract}

Given a path decomposition with an appropriate pre-splitting of the affected vertices, we precontract heavy paths simply by adding struts that cover them. Most aspects of maintaining this path decomposition dynamically under insert and undo insert is completely analogous to the SPQR-case. However, the difference lies in how changes in the path decomposition drive changes in the SPQR-forest as described in Section~\ref{sec:SPQRBC}.

Let us first describe exactly which struts are inserted because of the BC-tree. Consider a heavy path. For each endpoint of the path, we choose a representing vertex. If either endpoint is an articulation point, it will be representing itself. If the root-nearest node is a B-node and has a dashed parent $p$, it will be represented by $p$. If the root-furthest node is a B-node and has children, it will be represented by its heaviest child. Finally, in all remaining cases, the B-node chooses any arbitrary vertex different from its neighbour along the heavy path.

To implement that a solid edge turns dashed, there are two cases. In case its root-nearest node is a block, one first performs a sever operation freeing the involved block, and then performs a meld ensuring that the heaviest child of the block is endpoint of the strut. In case its root-nearest node is an articulation point, this articulation point becomes the endpoint of both new struts.

Note that with the rule in place saying that we always choose the heaviest child cutvertex in place, we are always able to pay the cost of expose.

\subsection{Incremental BC/SPQR-trees and triconnectivity in $\OO(\log^3 n)$ insertion time}\label{sec:incr-bc}

\todo[inline]{Improve, include references to theorems/lemmas/observations}

We now have a structure for maintaining the BC- and SPQR trees of an incremental graph, and answering triconnectivity queries:
\begin{proof}[Proof of Theorem~\ref{thm:incr3con}]
  Maintain the pre-contracted path decomposed BC-tree, and the SPQR-trees over its (pre-contracted) blocks. Let $a,b$ be the vertices of the graph that a query specifies. First, we find $m(a,b)$ in the BC-tree. If $m(a,b)$ is not a single (precontracted) block, the answer is no. Otherwise, we maintain the SPQR-tree of the precontracted block. In the SPQR-tree, we can find the critical path $m(a,b)$, and if this is an R-node or P-node that is not the child of the root, the answer is yes. If it is not an R- or P-node, the answer is no. Finally, if it is an R- or P-node that is a child of the root, perform a sever operation, possibly leading to an un-contract path, and answer the question within the block.
\end{proof}

\section{Embeddings of biconnected planar graphs}\label{sec:dynEmb}

In this section we show how we can use the dynamic SPQR trees to
control the embedding of a biconnected graph under edge insertions
with backtracking.  First, we define an abstract scheme for describing
an embedding, given a relaxed SPQR tree.  The crucial part of this
description is the definition of certain \emph{flip-bits}, where toggling a flip-bit corresponds to performing a separation-flip in the embedded graph (see Section~\ref{sec:SPQRembedding}). We go on to define \emph{good embeddings} as those where almost all flip-bits on the heavy paths are set in a way that is favourable of an edge insertion covering it (see Section~\ref{sec:heavySPQRembedding}). As an example of a good embedding, we give the canonical embedding (Section~\ref{sec:canonical}), by making arbitrary but  consistent choices whenever the embedding is not uniquely determined by our definition of good embeddings.

As a consequence, we obtain an algorithm for incremental planarity testing of a biconnected graph with worst-case polylog update time (see Section~\ref{sec:incrPlanrBicon}).

\subsection{Specifying an embedding in an SPQR tree}\label{sec:SPQRembedding}
Given a biconnected planar graph $G=(V,E)$, and any relaxed SPQR tree
for it, we can define a unique embedding of $G$ by choosing:
\begin{itemize}
\item A \emph{root} for the SPQR tree.
\item A \emph{flip bit} for each SPQR edge.
\item A \emph{local embedding} of $\Gamma(v)$ for each SPQR node
  $v$. Note: For $S$,
  $P$, and $R$ nodes, respectively, the number of choices are $1$, $(d(v)-1)!$, and $2$.
\end{itemize}

Given these choices, we can now construct a unique embedding of $G$.
For each node $v$ in the SPQR tree, define $G_v$ as the subgraph of
$G$ induced by the vertices in $T_v$, possibly together with a virtual
edge representing the rest of the graph.  Thus $G_r=G$.  In bottom-up
order, construct $\emb(G_v)$ from the embeddings
$\emb(G_{c_1}),\ldots,\emb(G_{c_k})$ of its children as follows:
\begin{itemize}
\item If $v$ is a leaf, $\emb(G_v)$ is just the local embedding of
  $\Gamma(v)$.
\item Otherwise, take each child $c_i$, construct $\emb(G_{c_i})$
  recursively, flip it if the flip-bit for $(v,c_i)$ is set, and
  choose an outer face that contains the virtual edge corresponding to
  $(v,c_i)$, and delete that virtual edge to get a plane graph
  $C_i$. Note that we get the same graph no matter which of the two
  faces adjacent to the virtual edge we picked as outer face, since
  they get merged anyway when the edge is deleted. Then take the local
  embedding of $\Gamma(v)$ and replace each virtual edge corresponding
  to a child edge $(v,c_i)$ with the corresponding $C_i$.
\end{itemize}

Note that not all combinations of these give distinct embeddings,
e.g.:
\begin{itemize}

\item Changing the root either does nothing or flips the whole
  embedding.

\item For an $S$-node $v$, changing all the incident flip-bits
  flips the whole embedding if $v$ is the root, and gives exactly the
  same embedding otherwise

\item For a $P$-node or $R$-node, changing all the incident flip-bits
  and flipping the local embedding of $\Gamma(v)$ flips the whole
  embedding if $v$ is the root, and gives exactly the same embedding
  otherwise.

\end{itemize}

This way of specifying an embedding has the nice property that
toggling a flip bit or moving a single edge to a new position in the
local embedding of a $P$ node each correspond to a \emph{separation
  flip} in the corresponding embedding $G$.  Thus, if we can bound the
number of these changes during an edge insertion, we bound the number
of separation flips.

\subsection{Good embeddings}\label{sec:heavySPQRembedding}
For a biconnected planar graph $G=(V,E)$, we will define our class of
\emph{good} embeddings $\EmbGood(G)$ based on the family of pre-split
SPQR trees for $G$, by restricting the values of the flip-bits for (a
subset of) the solid edges.  The canonical embedding is then defined
by uniquely specifying the local embedding of each $P$ node and all
the remaining flip-bits.

Let $x,y\in V$ be distinct vertices in the biconnected planar graph
$G=(V,E)$.  Let $T$ be the unique pre-split SPQR tree for $G$ that is
$u$-exposed and has root $r$ such that $m(x,y)=r\cdots u$, and where
for $v\in\set{r,u}$, if $v$ is an $S$ node, then
$s(v)=\bag(v)\cap\set{x,y}$.

We will define a set $\EmbGood(G; x,y)$ of good embeddings of $G$ by
defining flip-bits for (most of) the solid edges in $T$.  Then define
\begin{align*}
  \EmbGood(G) &:= \bigcup_{x,y\in V}\EmbGood(G; x,y)
\end{align*}

Our goal is to define $\EmbGood(G; x,y)$ such that if $G\cup (x,y)$ is planar then any $H\in\EmbGood(G; x,y)$ admits inserting $(x,y)$. Or in other words:
\begin{lemma}\label{lem:biconn-embgood-insert}
  For any biconnected planar graph $G$ with $n$ vertices, and any edge
  $(x,y)\in G$: $\flipdist(\EmbGood(G-(x,y); x,y), \EmbGood(G;
  x,y))\in\OO(1)$.
\end{lemma}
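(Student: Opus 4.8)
The plan is to exhibit, for the specified edge $(x,y)$, a single embedding that lies (essentially) in both $\EmbGood(G-(x,y); x,y)$ and $\EmbGood(G; x,y)$, up to a constant number of flips, and then appeal to the triangle inequality for the Hausdorff distance. First I would recall the setup: $\EmbGood(G-(x,y); x,y)$ is built from the unique pre-split SPQR tree $T^-$ for $G-(x,y)$ that is $u^-$-exposed with root $r^-$ realizing $m_{T^-}(x,y) = r^-\cdots u^-$, with $s(v)=\bag(v)\cap\set{x,y}$ for $v\in\set{r^-,u^-}$; and $\EmbGood(G;x,y)$ is built analogously from the SPQR tree $T$ for $G=(G-(x,y))\cup(x,y)$, where by Section~\ref{sec:dynspqr-changes} the tree $T$ is obtained from $T^-$ by the insert operation along $m(x,y)$: the solid path $r^-\cdots u^-$ (after at most a constant number of pre-splits/selects at its endpoints, per Lemma~\ref{lem:ps-select-restorecount} and Lemma~\ref{lem:ps-insert-uninsert}) is contracted into a single (pseudo-)$R$ node $\rho$ carrying the new edge $(x,y)$. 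The key structural point is that all solid edges of $T^-$ on $m(x,y)$ disappear in $T$, while all \emph{other} solid edges of $T^-$ survive unchanged in $T$: a dashed child edge $(c,p)$ of a node $p$ on the contracted path is \emph{stable} (Definition~\ref{def:stable}, and the ``in particular'' clauses guarantee this once the path is being contracted), and by Observation~\ref{obs:contract-stable} its stability — hence the fact that it carries no constrained flip-bit — is preserved by the contraction. So the constraints that $\EmbGood(G-(x,y);x,y)$ and $\EmbGood(G;x,y)$ impose on flip-bits differ only at the $\OO(1)$ edges touched by the pre-splits near $r^-,u^-$, plus the flip-bit of the new $R$ node $\rho$ relative to its parent.

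Concretely, I would argue as follows. Take any $H^-\in\EmbGood(G-(x,y);x,y)$. Since $G = (G-(x,y))\cup(x,y)$ is planar (this is the hypothesis that makes the statement nonvacuous — if $(x,y)$ cannot be added planarly there is nothing to prove, as then we interpret the claim as comparing to $\emptyset$; but in fact $(x,y)\in G$ by assumption so $G$ is planar), the defining property of good embeddings (the goal stated just before the lemma: if $G\cup(x,y)$ is planar then every $H^-\in\EmbGood(G-(x,y);x,y)$ admits inserting $(x,y)$) tells us that $(x,y)$ can be drawn across a face of $H^-$ without any flip. Let $H$ be the resulting embedding of $G$. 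Now I need to check $H$ is within $\OO(1)$ flips of $\EmbGood(G;x,y)$. The embedding $H$ corresponds to $T$ with: flip-bits on all surviving solid edges inherited from $H^-$ (these satisfy the $\EmbGood(G;x,y)$ constraints because, as noted, those constraints coincide with the $\EmbGood(G-(x,y);x,y)$ constraints on exactly these edges), the local embedding of $\rho$ forced to contain the edge $(x,y)$ on a face adjacent to the virtual edge to its parent, and the handful of flip-bits / $P$-node local embeddings at the $\OO(1)$ nodes created by the pre-splits near $r^-$ and $u^-$. Any mismatch between $H$ and a genuine member of $\EmbGood(G;x,y)$ is confined to these $\OO(1)$ locations, and by Section~\ref{sec:SPQRembedding} toggling a flip-bit or relocating one edge in a $P$-node local embedding is a single separation flip. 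Hence $\min_{H'\in\EmbGood(G;x,y)}\flipdist(H,H')\in\OO(1)$, and composing with the zero-flip insertion $H^-\to H$ gives $\min_{H'\in\EmbGood(G;x,y)}\flipdist(H^-,H')\in\OO(1)$. The symmetric direction (starting from $H\in\EmbGood(G;x,y)$, deleting $(x,y)$, landing near $\EmbGood(G-(x,y);x,y)$) is identical, using that uninsert reverses insert with the same $\OO(1)$ split/merge count (Lemma~\ref{lem:ps-insert-uninsert}). Taking the max of the two one-sided Hausdorff distances yields the claim.

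The main obstacle I anticipate is bookkeeping the precise correspondence between the flip-bit assignments of $T^-$ and of $T$ across the contraction, and verifying that the $\OO(1)$ ``interface'' changes really are the only discrepancies — in particular making sure that the choice of root ($r^-$ in $T^-$ versus $r^-$-or-descendant in $T$, and the re-rooting done by reverse$(r)$) does not silently flip large chunks of the embedding. This is handled by the observations at the end of Section~\ref{sec:SPQRembedding}: changing the root either does nothing or globally flips, and a global flip is a single flip; and re-rooting an $S$-node or $P$-node along a solid path interacts with the incident flip-bits in a way that is absorbed into the $\OO(1)$ interface count. A secondary subtlety is that $\EmbGood(G;x,y)$ constrains flip-bits of solid edges but leaves the local embedding of $\rho$ and the other $P$-nodes free; I must therefore phrase the conclusion as ``$H$ agrees with \emph{some} $H'\in\EmbGood(G;x,y)$ off an $\OO(1)$ set,'' rather than ``$H\in\EmbGood(G;x,y)$,'' which is exactly why the lemma is stated with $\flipdist\in\OO(1)$ rather than $=0$. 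Once these points are pinned down the proof is a short composition-of-bounds argument.
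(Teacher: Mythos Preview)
Your proposal is correct and follows essentially the same approach as the paper: use Definition~\ref{def:embspqr-solid} to see that any $H\in\EmbGood(G-(x,y);x,y)$ admits $(x,y)$, invoke Lemma~\ref{lem:ps-insert-uninsert} to bound the structural changes during insert/uninsert by $\OO(1)$, and conclude that only $\OO(1)$ flip-bits differ. The paper's proof is considerably more compact because it factors the flip-bit bookkeeping through Lemma~\ref{lem:embspqr-flipbit-changecount} (each solid/dashed change or split/merge touches $\OO(1)$ flip-bits), whereas you re-derive this conclusion by hand via Observation~\ref{obs:contract-stable} and a direct analysis of which edges survive the contraction; your extra care about stable edges and root changes is sound but not needed once that lemma is available.
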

By additionally making the flip-bits for each solid edge depend only
on its closest few neighbors on the solid path, we then immediately
get:
\begin{lemma}\label{lem:biconn-embgood-mpath}
  For any biconnected planar graph $G$ with $n$ vertices, and any two
  pairs of distinct vertices $x_1,y_1$ and $x_2,y_2$:
  $\flipdist(\EmbGood(G; x_1,y_1), \EmbGood(G; x_2,y_2))\in \OO(\log
  n)$.
\end{lemma}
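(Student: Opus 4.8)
The plan is to deduce Lemma~\ref{lem:biconn-embgood-mpath} from Lemma~\ref{lem:biconn-embgood-insert} together with the locality of the flip-bit rule, using the machinery already in place for converting between exposed heavy path decompositions. Recall that $\EmbGood(G;x,y)$ is defined via the unique $u$-exposed pre-split SPQR tree $T$ with root $r$ and $m(x,y)=r\cdots u$, by fixing the flip-bits on (most of) the solid edges; and crucially the fixing rule for each solid edge depends only on a constant-size neighbourhood of that edge on its solid path. The idea is: starting from a member of $\EmbGood(G;x_1,y_1)$, re-expose the tree so that it becomes the $u_2$-exposed tree with root $r_2$ witnessing $m(x_2,y_2)$, count how many solid edges change status during this re-exposure (each such change costs $O(1)$ flips to re-set the flip bit), and then observe that the remaining solid edges still have their flip bits set in accordance with the rule because the rule is local.

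Concretely, I would proceed as follows. First, fix $x_1,y_1$ and $x_2,y_2$ and the corresponding pre-split SPQR trees $T^{(1)}$ (which is $u_1$-exposed with root $r_1$) and $T^{(2)}$ (which is $u_2$-exposed with root $r_2$). Both are pre-split SPQR trees of the same strict SPQR tree of $G$, just with different choices of exposed path and of $s(\cdot)$ on the two endpoints. By Lemma~\ref{lem:spqr-mpath-changecount} (and its proof via the sequence conceil$(r_1)$, expose$(r_2')$, reverse, conceil$(r_2')$, expose$(u_2)$, appropriately adapted — possibly through an intermediate common root or via the argument in the footnote of the definition of $m(x,y)$), together with Lemma~\ref{lem:treedecomp-lightdepth}, the number of edges that change between solid and dashed in the underlying strict SPQR tree is $O(1+\log\frac{w(T)}{w(y_1)}+\log\frac{w(T)}{w(y_2)})$; for unit weights this is $O(\log n)$. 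By Theorem~\ref{thm:SPQRthm} the accompanying splits and merges in the pre-split tree are also $O(\log n)$, and I would add a short remark that changing a pre-split $S$-node's selected set via select$(\cdot)$ at the two endpoints only costs $O(1)$ further changes by Lemma~\ref{lem:ps-select-restorecount}.

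Second, for the flip-bit bookkeeping: after re-exposing, consider any solid edge $e$ that was solid in $T^{(1)}$, is still solid in $T^{(2)}$, and whose constant-size neighbourhood on its solid path was not touched by any of the $O(\log n)$ edge status changes, splits, or merges. Since the good-embedding rule determines the flip bit of $e$ purely from that neighbourhood, the flip bit that was correct for $\EmbGood(G;x_1,y_1)$ is already the one required for $\EmbGood(G;x_2,y_2)$, so nothing needs to change at $e$. Only the $O(\log n)$ solid edges near a change, plus the constantly many solid edges whose status flipped (newly solid ones need their bit set, each a single separation flip) and the $O(1)$ endpoint adjustments, require flip-bit toggles — and by the observation in Section~\ref{sec:SPQRembedding} each flip-bit toggle is exactly one separation flip. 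Summing, we go from some element of $\EmbGood(G;x_1,y_1)$ to some element of $\EmbGood(G;x_2,y_2)$ with $O(\log n)$ flips; the Hausdorff-distance direction the other way is symmetric. This yields $\flipdist(\EmbGood(G;x_1,y_1),\EmbGood(G;x_2,y_2))\in O(\log n)$.

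The main obstacle I expect is making precise the claim that ``the rule is local, hence untouched edges need no change,'' because the re-exposure genuinely reshuffles which edges are solid and because pre-split secondary nodes come and go; I would need Lemma~\ref{lem:ps-relaxed-invariants} and the stable/unstable dichotomy (Lemmas~\ref{lem:split-merge-nonmoving-child}–\ref{lem:merge-moving-child}) to argue that a solid edge whose neighbourhood is undisturbed really is left intact by the whole re-exposure sequence, not merely by a single splice/slice. A secondary subtlety is handling the case analysis of $m(x,y)$ (single node, single edge, $P$-node) at the two endpoints uniformly, but those are all $O(1)$-sized perturbations covered by Lemmas~\ref{lem:ps-select-restorecount} and~\ref{lem:ps-insert-uninsert}-style reasoning, so they only contribute additively.
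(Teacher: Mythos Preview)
Your approach is essentially the paper's own: the paper's proof simply cites Lemma~\ref{lem:spqr-mpath-changecount} to bound the number of solid/dashed edge changes during the re-exposure by $\OO\bigl(1+\sum_{i}\log\frac{w(T)}{w(x_i)}+\log\frac{w(T)}{w(y_i)}\bigr)=\OO(\log n)$, and then cites Lemma~\ref{lem:embspqr-flipbit-changecount} to conclude that each such change (and each accompanying split/merge) forces at most $O(1)$ flip-bit toggles. What you identify as the ``main obstacle'' --- making precise that untouched solid edges keep their flip-bits --- is exactly the content of Lemma~\ref{lem:embspqr-flipbit-changecount}, which the paper has already isolated and proved; so you are not missing an idea, you have merely unpacked the lemma the paper packages.
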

And combining Lemma~\ref{lem:biconn-embgood-insert}
and~\ref{lem:biconn-embgood-mpath} then gives:
\begin{theorem}\label{thm:biconn-embgood-flipdist}
  For any biconnected planar graph $G$ with $n$ vertices, and any edge
  $e\in G$: $\flipdist(\EmbGood(G-e),\EmbGood(G))\in\OO(\log n)$
\end{theorem}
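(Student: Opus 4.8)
The theorem is obtained purely by combining Lemma~\ref{lem:biconn-embgood-insert} and Lemma~\ref{lem:biconn-embgood-mpath}, using the triangle inequality in the flip-graph $\mathcal{G}$ together with the fact that $\flipdist$ between two sets is a Hausdorff distance. Write $e=(x_0,y_0)$. Recall $\EmbGood(G)=\bigcup_{x,y\in V}\EmbGood(G;x,y)$ and likewise for $G-e$, and that all these sets are non-empty (Theorem~\ref{thm:embclass}, established in the referenced sections), so both directions of the Hausdorff distance are meaningful. Below I assume that $G-e$ is still biconnected; the case where deleting $e$ destroys biconnectivity is handled by the BC-tree part of the paper, and the only block of $G-e$ that behaves differently from $G$ is the one that contained $e$.

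\textbf{From $\EmbGood(G-e)$ towards $\EmbGood(G)$.} Let $H\in\EmbGood(G-e)$ be arbitrary, so $H\in\EmbGood(G-e;x_1,y_1)$ for some pair $x_1,y_1$. Apply Lemma~\ref{lem:biconn-embgood-mpath} to the biconnected planar graph $G-e$ with the pairs $(x_1,y_1)$ and $(x_0,y_0)$: there is $H_1\in\EmbGood(G-e;x_0,y_0)$ with $\flipdist(H,H_1)\in\OO(\log n)$. Now apply Lemma~\ref{lem:biconn-embgood-insert} to $G$ and the edge $e=(x_0,y_0)$: since $\flipdist(\EmbGood(G-(x_0,y_0);x_0,y_0),\EmbGood(G;x_0,y_0))\in\OO(1)$, there is $H_2\in\EmbGood(G;x_0,y_0)\subseteq\EmbGood(G)$ with $\flipdist(H_1,H_2)\in\OO(1)$. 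By the triangle inequality $\flipdist(H,H_2)\in\OO(\log n)$, hence $\min_{H'\in\EmbGood(G)}\flipdist(H,H')\in\OO(\log n)$.

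\textbf{From $\EmbGood(G)$ towards $\EmbGood(G-e)$.} This is symmetric. For $H'\in\EmbGood(G)$ pick a pair $(x_1,y_1)$ with $H'\in\EmbGood(G;x_1,y_1)$; use Lemma~\ref{lem:biconn-embgood-mpath} on $G$ to reach some $H_1'\in\EmbGood(G;x_0,y_0)$ within $\OO(\log n)$ flips; then use the other direction of the Hausdorff bound in Lemma~\ref{lem:biconn-embgood-insert} to reach some $H_2'\in\EmbGood(G-e;x_0,y_0)\subseteq\EmbGood(G-e)$ within $\OO(1)$ flips. Taking the maximum over the two directions yields $\flipdist(\EmbGood(G-e),\EmbGood(G))\in\OO(\log n)$, as claimed.

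\textbf{Where the difficulty lies.} The combination step above is routine; all the genuine content is in the two lemmas. Lemma~\ref{lem:biconn-embgood-mpath} is the harder one: it requires arranging the definition of $\EmbGood(G;x,y)$ so that the fixed flip-bits on the heavy solid paths depend only on a bounded neighbourhood along those paths, so that moving the exposed critical path from $m(x_1,y_1)$ to $m(x_0,y_0)$ only disturbs embedding choices at the $\OO(\log n)$ light edges and the $\OO(\log n)$ heavy-path endpoints encountered along the way — the count being controlled by the light-depth bound (Lemma~\ref{lem:treedecomp-lightdepth}), just as the heavy/light edge changes are counted in Section~\ref{sec:dynSPQR}. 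Lemma~\ref{lem:biconn-embgood-insert} then has to verify that, once $m(x_0,y_0)=r\cdots u$ is exposed, the good flip-bits are already in a configuration that admits inserting $(x_0,y_0)$, up to $\OO(1)$ residual choices near the two ends $r$ and $u$. The main subtlety in the present step itself is simply that both lemmas are invoked for $G$ and for $G-e$, which is why biconnectivity of $G-e$ is assumed here (and supplied, in general, by the BC-tree machinery).
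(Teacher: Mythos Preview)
Your proof is correct and follows essentially the same route as the paper: write $e=(x,y)$, use Lemma~\ref{lem:biconn-embgood-mpath} to move from an arbitrary pair to the pair $(x,y)$, then use Lemma~\ref{lem:biconn-embgood-insert} to cross from $G-e$ to $G$, and add up via the triangle inequality. The paper applies the triangle inequality directly at the level of the Hausdorff distance (using that $\EmbGood(G;x,y)\subseteq\EmbGood(G)$), whereas you unfold both directions of the Hausdorff supremum pointwise; this is slightly more explicit but amounts to the same argument. Your remark about the biconnectivity of $G-e$ is apt: the paper's Section~\ref{sec:dynEmb} tacitly works under that assumption and defers the general case to Section~\ref{sec:dynEmb-general}, exactly as you note.
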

\begin{proof}
  Let $e=(x,y)$. Then by the triangle inequality
  \begin{align*}
    \flipdist(\EmbGood(G-e),\EmbGood(G))
    &\leq \flipdist(\EmbGood(G-e),\EmbGood(G-e;x,y))
    \\
    &\quad + \flipdist(\EmbGood(G-e;x,y),\EmbGood(G;x,y))
    \\
    &\quad + \flipdist(\EmbGood(G;x,y),\EmbGood(G))
  \end{align*}
  By Lemma~\ref{lem:biconn-embgood-mpath} the first and last of these
  terms is $\OO(\log n)$ and by Lemma~\ref{lem:biconn-embgood-insert}
  the middle term is $\OO(1)$.
\end{proof}

Let $T$ be a pre-split SPQR tree that is $u$-exposed and has root
$r$ where $r\cdots u=m(x,y)$.
\begin{definition}\label{def:gammaM}
  For each solid path $M$ in $T$, define the \emph{relevant part} of
  $M$ as the maximal subpath $M_r$ of $M$ that does not end in a
  $P$-node.  If $M$ consists only of a $P$-node, $M_r$ is the empty
  path.  For any (nonempty) path $H$, let $\Gamma(H)$ denote the graph
  ``glued'' from the skeleton graphs on $H$.
\end{definition}
We will define flip-bits for $M_r$ by defining a unique embedding for
$\Gamma(M_r)$ and letting the flip-bits be exactly the ones required to
describe that embedding.

\begin{definition}
  For each $R$ node that is internal to a solid path, we say that it
  is \emph{happy} if the two virtual edges corresponding to the path
  are in the same face (note that this will be the same in all
  embeddings), and \emph{cross} otherwise.  All other nodes are
  happy if they are internal to a solid path, and cross if they are
  isolated or at the end of a solid path.
\end{definition}

\begin{definition}
  Define a \emph{run} on the relevant part of a solid path as any
  maximal subpath with at least one edge and no internal cross
  nodes. Note that a run can not end in a $P$-node.  For every run
  $A=v_1\ldots v_2$  , define the \emph{strut}\footnote{From the
    dictionary definition of a \emph{strut}: a rod or bar forming part
    of a framework and designed to resist compression. ``a supporting
    strut".}  $r(A)$ as the edge $r(A):=(r(A,v_1),r(A,v_2))$, where $r(A,v_i)$ is
the vertex with smallest id that is in the same face in
$\emb(\Gamma(v_i))$ as the virtual edge $e_{v_i}$ incident to $v_i$ on
$A$, but is not an endpoint of $e_{v_i}$.  As a special case, if
$A=m(x,y)$ then treat $x$ and $y$ as having $-\infty$ in this
comparison.  Thus if $G\cup(x,y)$ is planar and $m(x,y)$ is not a
single $P$-node, $r(m(x,y))=(x,y)$.
\end{definition}

\begin{definition}\label{def:embspqr-solid}
  Now consider the graph $\Gamma'(M_r)$ obtained from $\Gamma(M_r)$ by
  adding $r(A_1),\ldots,r(A_k)$ where $A_1,\ldots,A_k$ are the runs
  comprising $M_r$.  This graph is a subdivision of a planar
  triconnected graph and thus has exactly two planar embeddings,
  that are mirror images of each other.  Choose one of these
  embeddings 
  to be $\emb(\Gamma'(M_r))$, and
  define $\emb(\Gamma(M_r))$ to be the corresponding embedding of $M_r$
  obtained by removing the struts from $\emb(\Gamma'(M_r))$.
  In particular, if $G\cup(x,y)$ is planar and $m(x,y)$ is not a
  single $P$-node, the embedding of $\Gamma(m(x,y))$ must have $x$ and
  $y$ in the same face.
\end{definition}

Call a $P$ node on the relevant part of a solid path \emph{up-free} (resp. down-free) if is incident to an $S$-node containing the root-nearest (resp. root-furthest) endpoint of the strut.

 \begin{lemma}\label{lem:unique-flip-bits}
  Let $M_r$ be the relevant part of a solid path. Given
  $\emb(\Gamma(M_r))$, and given local embeddings of all $P$- and $R$-nodes on $M_r$, 
  there is a unique assignment of flip-bits to all
  internal edges on $M_r$, such that for every internal $S$ node or up-free $P$ node $v\in M_r$, the flip-bit of $(v,p(v))$ is $0$, and for every internal down-free $P$-node $v\in M_r$ with heavy child $h(v)$, the flip-bit of $(h(v),v)$ is $0$. 
\end{lemma}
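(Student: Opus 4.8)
The plan is to regard an assignment of flip-bits to the internal edges of $M_r = v_1\cdots v_m$ (write $e_i$ for the edge with child end $v_{i+1}$, and $f_i$ for its flip-bit) as a vector in $\mathbb{F}_2^{m-1}$, and to describe precisely which such vectors, together with the prescribed local embeddings, reconstruct $\emb(\Gamma(M_r))$ via the scheme of Section~\ref{sec:SPQRembedding}. Note first that $\Gamma(M_r)$ is a biconnected planar graph having $M_r$ itself as a (relaxed) SPQR tree, and that the local embeddings of its $P$- and $R$-nodes are exactly those read off from $\emb(\Gamma(M_r))$ while the $S$-node ones are the unique choice; hence by the embedding-specification of Section~\ref{sec:SPQRembedding} at least one internal flip-bit vector $F_0$ reconstructs $\emb(\Gamma(M_r))$. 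So existence is immediate, and the work is to (i) count the fibre of ``(internal flip-bits) $\mapsto \emb(\Gamma(M_r))$'' and (ii) show the normalisation conditions pick out a unique point of it.

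For (i) I would use the three identities listed after the reconstruction in Section~\ref{sec:SPQRembedding}. For an $S$-node $v_i$ internal to $M_r$, simultaneously toggling $f_{i-1}$ and $f_i$ leaves $\emb(\Gamma(M_r))$ unchanged: its net effect is to reflect the cycle $\Gamma(v_i)$ in place, and a cycle equals its own mirror image, so this is an automorphism of the glued embedding. Conversely these are the only relations affecting the internal bits: for a $P$- or $R$-node the analogous ``toggle all incident bits'' move also reflects the (now fixed) local embedding, and for the endpoints $v_1, v_m$ it flips all of $\emb(\Gamma(M_r))$, a different embedding; and no further coincidences can arise because $M_r$ is an SPQR tree of $\Gamma(M_r)$ and the fibre structure is governed by the node-gauges only. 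Thus the set of reconstructing vectors is exactly $F_0 + \operatorname{span}\{g_{v_i} : v_i \text{ an internal } S\text{-node of } M_r\}$, where $g_{v_i} = \mathbf 1_{\{e_{i-1}, e_i\}}$ toggles the parent edge and solid-child edge of $v_i$.

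For (ii), order the internal $S$-nodes $v_{i_1},\dots,v_{i_s}$ along $M_r$. The condition attached to $v_{i_k}$ fixes $f_{i_k-1}$ (its parent-edge bit), and $g_{v_{i_k}}$ is the gauge vector whose leftmost coordinate is $e_{i_k-1}$. Since at most two $S$-nodes can be consecutive on $M_r$ (a primary node together with a pre-split secondary copy) and never share both their path-edges, the resulting $\mathbb{F}_2$-system for the gauge coefficients is lower bidiagonal with unit diagonal, hence uniquely solvable — this determines a unique reconstructing vector satisfying all internal-$S$-node conditions. What remains is to check that the up-free / down-free conditions at the internal $P$-nodes are consistent with this solution; these constrain a $P$-node's parent-edge bit or heavy-child-edge bit, which may be untouched by every gauge vector, so the claim must rely on $\emb(\Gamma(M_r))$ being not arbitrary but the embedding singled out in Definition~\ref{def:embspqr-solid} through the struts $r(A_1),\dots,r(A_k)$ and the rigidity of the strut-augmented, subdivided-triconnected graph $\Gamma'(M_r)$. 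I expect translating ``up-free / down-free'' and that rigidity into the statement that $F_0$ already carries the prescribed value on exactly those $P$-constrained edges to be the main obstacle; the gauge bookkeeping above is routine once the coset description is in hand.
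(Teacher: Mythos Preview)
Your $\mathbb{F}_2$-coset framing is more systematic than the paper's argument, which is a short case analysis: edges with $R$-nodes on both sides are rigid; at each internal $S$-node the two-edge gauge you identify lets one set the parent bit to $0$; and free $P$-nodes are treated separately. For the $S$-node part your description matches the paper, though one detail is off: in a pre-split tree every secondary node is a \emph{dashed} child (invariant~\ref{inv:ps-secondary}), so no two $S$-nodes are ever adjacent on a solid path. The $S$-node gauges therefore have pairwise disjoint supports and the system is diagonal, not merely bidiagonal.

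The genuine gap is exactly where you flag it. You describe the fibre as $F_0+\operatorname{span}\{g_{v_i}:v_i\text{ an internal }S\text{-node}\}$ and then regard the up-free and down-free $P$-conditions as extra constraints whose consistency must be extracted from the strut construction in Definition~\ref{def:embspqr-solid}. The paper's proof takes the opposite stance: it asserts that for a free $P$-node the flip-bit on the heavy edge toward the adjacent $S$-node is \emph{irrelevant to the embedding}, so that the free-$P$ conditions are further normalisations of residual gauge freedom rather than constraints. Concretely, your claim that only internal-$S$-node gauges preserve the embedding is too restrictive; e.g.\ when the leaf $v_m$ of $M_r$ is an $S$-node, toggling $f_{m-1}$ alone flips the cycle $\Gamma(v_m)$ before gluing, which is a no-op --- this is exactly the freedom that the down-free condition at $v_{m-1}$ is meant to kill, and it lies outside your span. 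Until you correctly enumerate all the embedding-preserving moves (not just the internal-$S$ ones), the uniqueness argument does not close, and your proposal explicitly leaves this step open.
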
 
\begin{proof}
  For an edge between $R$-nodes on $M_r$, this is trivial,
  since every assignment of flip-bits to the edge gives a
  different embedding, and only one of these is $\emb(\Gamma(M_r))$.  If
  $v$ is an internal $S$ node on $M_r$, then (as observed above when we
  described how the flip-bits define an embedding) changing the
  flip-bits of both edges incident to $v$ together does not change the
  embedding. In particular, we can always choose the flip-bit of
  $(v,p(v))$ to be zero. 
  
  If $v$ is a $P$-node, then since the SPQR-tree is pre-split, $v$ has degree exactly $3$. Thus, there are only two different circular orderings of its incident edges, both of which have its heavy edges as neighbours. If $v$ is not free, every setting of flip-bits would lead to a different embedding, and only one of these is $\emb(\Gamma(M_r))$. If $v$ is a free $P$-node, then the flip-bit on the heavy edge connecting it to an endpoint of the strut is irrelevant to the embedding, and can be chosen to be set to $0$.  
\end{proof}

\begin{definition}
  For any biconnected planar graph $G=(V,E)$ and any $x,y\in V$, let
  $\EmbGood(G;x,y)\subseteq\Emb(G)$ be the class of embeddings of $G$
  corresponding to choosing arbitrary embeddings of $P$- and
  $R$-nodes, setting the flip-bits of all internal edges of the
  relevant part of each solid path in the pre-split SPQR tree that is
  $u$-exposed and has root $r$ where $r\cdots u=m(x,y)$ as in
  Lemma~\ref{lem:unique-flip-bits},
  and letting all other flip-bits be arbitrary.
\end{definition}

\begin{lemma}\label{lem:embspqr-flipbit-changecount}
  Changing an edge from solid to dashed or vice versa, or doing a
  merge or split in the pre-split SPQR tree, requires changes to at
  most $\OO(1)$ flip-bits.
\end{lemma}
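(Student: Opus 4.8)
The plan is to trace through each of the four atomic operations — toggling a solid/dashed edge, split$(v)$, merge$(v)$ — and, using the locality properties already established in Section~\ref{sec:dynspqr-presplit}, argue that each one only disturbs the flip-bit assignment of $\OO(1)$ edges. The key observation is that the flip-bits that are \emph{constrained} by our definition of $\EmbGood(G;x,y)$ are, by Lemma~\ref{lem:unique-flip-bits}, determined purely locally: the flip-bit of an internal edge $(v,p(v))$ on the relevant part $M_r$ of a solid path depends only on $\emb(\Gamma(M_r))$ (which is itself rigid up to a global mirror) together with the local embeddings of the $P$- and $R$-nodes at its two endpoints, and the ``normalization'' choices (flip-bit $0$ for the parent edge of an internal $S$-node or up-free $P$-node, and for the heavy child edge of an internal down-free $P$-node). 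All other flip-bits are arbitrary and hence need not change at all. So the only flip-bits that must be re-derived after an operation are those incident to nodes whose \emph{status} changed: whether a node is internal to a solid path, at the end of one, or isolated; whether a $P$-node is up-free or down-free; and which edge on a node is solid.

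First I would handle the solid/dashed toggle of an edge $(c,p)$ with both endpoints primary. Such a change moves the boundary of at most the two solid paths meeting at this edge: a constant number of nodes (namely $c$, $p$, and their $\OO(1)$ affected children, exactly as enumerated in Lemma~\ref{lem:ps-edge-change-restorecount}) change their ``happy/cross'' status or their up-free/down-free status, and hence the constrained flip-bits incident to this $\OO(1)$-sized neighborhood are the only ones that can be forced to new values; everything else on the two solid paths retains its embedding and therefore its flip-bits, since $\emb(\Gamma(M_r))$ is unchanged away from the new endpoint and the normalization rule is unchanged there. Next, for split$(v)$ and merge$(v)$: by Lemma~\ref{lem:split-moving-child}, Lemma~\ref{lem:merge-moving-child}, and Lemma~\ref{lem:split-merge-nonmoving-child}, a split or merge only reparents $\OO(1)$ children and only changes the stable/unstable (hence, via Definition~\ref{def:gammaM} and the ``run'' decomposition, the relevant-part/strut) structure in an $\OO(1)$-sized region around $v$; the skeleton graphs being glued are exactly the same multiset of graphs, so the induced embedding $\emb(\Gamma(M_r))$ is unchanged up to where a new secondary node has been inserted or removed, and the flip-bit of each edge not incident to that $\OO(1)$-region is unaffected. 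In particular a newly created secondary node inherits a contiguous block of children and is given a consistent embedding, contributing only $\OO(1)$ new flip-bits.

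I expect the main obstacle to be the bookkeeping around $P$-nodes and the up-free/down-free distinction. A split or merge can change which $S$-node a $P$-node is adjacent to along the solid path, and hence flip a $P$-node between up-free, down-free, and ``not free,'' which changes \emph{which} of its incident edges carries the forced-to-zero flip-bit; I need to check that when this happens, the embedding of $\Gamma(M_r)$ itself does not force a cascade — i.e.\ that reassigning the zero to a different incident edge of the $P$-node and compensating on the adjacent edge stays within the $\OO(1)$ neighborhood, exactly as in the $S$-node ``change both incident flip-bits'' observation used in the proof of Lemma~\ref{lem:unique-flip-bits}. Since pre-splitting guarantees every $P$-node on a solid path has degree exactly $3$, there are only two circular orderings and only two candidate edges for the zero, so this compensation is purely local; once that is spelled out, the constant bound follows by summing the $\OO(1)$ contributions from each of the $\OO(1)$ affected nodes.
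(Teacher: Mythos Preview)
Your proposal is correct and follows essentially the same approach as the paper: the key observation that only flip-bits on edges internal to the relevant part of a solid path are constrained, and that each atomic operation changes the internal/end status of at most $\OO(1)$ edges. The paper's proof is considerably terser (it simply counts the at most two edges that become or cease to be internal per operation, and reduces split/merge to the solid/dashed case), whereas you spell out more carefully the locality of $\emb(\Gamma(M_r))$ and the $P$-node free-status bookkeeping; but the underlying argument is the same.
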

\begin{proof}
  Only the flip-bits of edges that become internal edges on a relevant
  part of a solid path need to change. 
  
  Changing an edge from dashed to solid, makes one or two edges (where two is in case the child is a $P$-node) internal on a relevant part of a solid path. Vice versa, changing an edge from solid to dashed affects at most two edges, however, their flip-bits are free and need not change.
  
  Upon pre-splitting a $P$- or $S$-node into a parent $p(x_i)$ and at most two children $x_1$ and possibly $x_2$, the edge that arises from the splitting is itself dashed, and each child $x_i$ gets a heavy child. In case we pre-split a $P$-node, nothing happens, but in case for $S$-nodes, each $x_i$ gets a heavy child, that is, one edge changes status from dashed to solid. As already argued, this can lead to at most two edges becoming relevant. The case for merging is symmetric. 
\end{proof}
\begin{lemma}\label{lem:reverse-changecount}
	Reversing the solid root path requires no flips.
\end{lemma}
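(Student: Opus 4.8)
The plan is to show that reverse$(r)$ does not change the embedded graph at all — so that \emph{no} separation flip is required — by checking that every ingredient out of which the good embedding is built is left untouched, and that the only thing that does change is the internal encoding (the flip-bit values) along the reversed root path.

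First I would record the invariance properties of reverse$(r)$. By Lemma~\ref{lem:dyntree-internal-changecount} it changes no edge from solid to dashed or vice versa, and by Lemma~\ref{lem:ps-reverse} it requires no split or merge; in particular it changes no skeleton graph, no chosen local embedding of a $P$- or $R$-node, and not the collection of solid paths (the root path $r\cdots u$ is the same path, only its orientation and the parent--child relations along it are reversed). Consequently, for every solid path $M$ the relevant part $M_r$ (Definition~\ref{def:gammaM}), the runs comprising it, the struts $r(A_i)$, the graph $\Gamma'(M_r)$, and hence the chosen embedding $\emb(\Gamma'(M_r))$ and its restriction $\emb(\Gamma(M_r))$ (Definition~\ref{def:embspqr-solid}) are all unchanged: none of these depends on the orientation of $M$, only on the (unchanged) local embeddings and on the (unchanged, as a set) pair of end nodes of $M$; the one constraint pinning down $\emb(\Gamma'(M_r))$ on the run equal to $m(x,y)$ — that $x$ and $y$ lie in a common face — is symmetric in the two ends of that run, and for every other solid path we simply keep the choice already made.

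Next I would invoke Lemma~\ref{lem:unique-flip-bits} for the reversed tree: given the (unchanged) embedding $\emb(\Gamma(M_r))$ together with the (unchanged) local embeddings of the $P$- and $R$-nodes on $M_r$, that lemma produces a \emph{unique} assignment of flip-bits to the internal edges of $M_r$ realizing exactly $\emb(\Gamma(M_r))$, now with respect to the new parent--child orientation. I would check that the normalisation is still meaningful after the reversal: for an internal $S$-node both incident edges lie on $M_r$, so pinning the new parent edge to $0$ is possible and still singles out a unique assignment (toggling both flip-bits incident to an internal $S$-node together is a no-op on the embedding, as observed in Section~\ref{sec:SPQRembedding}); and for an internal $P$-node, reversing $r\cdots u$ swaps up-free with down-free at the same time as it swaps the ``up'' and ``down'' solid edges at that node, so the edge pinned to $0$ is literally the same edge as before. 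Adopting this new flip-bit assignment on the internal edges of the root path is therefore a purely internal re-encoding that leaves $\emb(\Gamma(M_r))$, and with it $\emb(\Gamma(M'_r))$ for every relevant part $M'_r$, unchanged; since the good embedding of $G$ is exactly the embedding whose restriction to each $\Gamma(M'_r)$ is $\emb(\Gamma(M'_r))$ — all remaining flip-bits and all remaining local embeddings being arbitrary and likewise untouched — the embedded graph is the same before and after, i.e.\ $\flipdist = 0$ and no flip is needed.

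The place where one must be careful — and what I regard as the crux — is the interaction between moving the root (from $r$ to $u$) and the re-normalisation along $r\cdots u$: naively, re-expressing a chain of $2$-sums with the opposite parent direction introduces a global mirror per edge, and one would then have to argue these mirrors cancel. The clean way to avoid that bookkeeping is exactly the route above: instead of tracking mirrors edge by edge, appeal to the uniqueness in Lemma~\ref{lem:unique-flip-bits}, which pins the flip-bits on $M_r$ directly to the (unchanged) embedding $\emb(\Gamma(M_r))$ and thereby sidesteps any reasoning about the gauge of individual flip-bits.
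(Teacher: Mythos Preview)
Your proposal is correct and follows essentially the same approach as the paper's own proof: the paper's argument is the one-sentence observation that although many flip-bits on internal edges of the root path may need to change to meet the convention of Lemma~\ref{lem:unique-flip-bits}, these changes do not alter the embedding but only re-describe the same one. You have fleshed out exactly this idea, additionally verifying explicitly (via Lemma~\ref{lem:dyntree-internal-changecount} and Lemma~\ref{lem:ps-reverse}) that the underlying data $\emb(\Gamma(M_r))$ is unchanged and that the up-free/down-free normalisation for internal $P$-nodes pins the same physical edge before and after the reversal.
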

\begin{proof}
Although many flipbits on internal edges of the solid root path may have to change in order for it to live up to the convention in Lemma~\ref{lem:unique-flip-bits}, these changes in flipbits do not lead to a change in the embedding, but only reflect a different way of describing the same embedding.
\end{proof}

\begin{proof}[Proof of Lemma~\ref{lem:biconn-embgood-insert}]
  By Definition~\ref{def:embspqr-solid} $\emb(\Gamma(m(x,y)))$ and
  hence any embedding $H\in\EmbGood(G-(x,y);x,y)$ admits adding
  $(x,y)$. By Lemma~\ref{lem:ps-insert-uninsert} the insert$(T;(x,y))$
  changes only a constant number of edges, and hence by
  Lemma~\ref{lem:embspqr-flipbit-changecount} at most a constant
  number of flip-bits need to be changed.
\end{proof}
\begin{proof}[Proof of Lemma~\ref{lem:biconn-embgood-mpath}]
  By Lemma~\ref{lem:spqr-mpath-changecount}, the number of edge
  changes when transforming the pre-split SPQR tree is $\OO\paren*{ 1
    + \log\frac{w(T)}{w(x_1)} + \log\frac{w(T)}{w(y_1)} +
    \log\frac{w(T)}{w(x_2)} + \log\frac{w(T)}{w(y_2)} }$, and by
  Lemma~\ref{lem:embspqr-flipbit-changecount} this is also the number
  of flip-bits that need to be changed.
\end{proof}

\subsection{Flip parity}\label{sec:flipp}
The embeddings we construct are recursively ``glued'' together from embeddings of smaller graphs, by replacing a virtual edge $e$ with an embedded subgraph $C_e$. Each time we do this, we have a choice of flipping the subgraph $C_e$ first. We want to make this choice consistently over time, even as parts of the graph $G$ containing $e$ changes. In particular, whenever possible, we want that if the local subgraph of $G$ surrounding $e$ gets flipped, the choice of whether to flip $C_e$ before gluing is reversed.  We can formalize this by introducing the concept of a \emph{flip-parity} bit for each edge $e$ in $G$.

Given a graph $G$, we uniquely partition the edges into \emph{local
  paths}, defined as the subsets of edges, such that any pair of edges
with a common end vertex of degree $2$ in $G$ are in the same set. Two distinct local paths $H_1$ and $H_2$ are \emph{neighbors}, if
there exists edges $e_1\in H_1$ and $e_2\in H_2$ such that $e_1$ and
$e_2$ are adjacent in the cyclic order around some common vertex.

    A \emph{flip-parity function} is a function
    which for every plane embedded multigraph $G$ and every edge $e\in
    G$ assigns a value $\flipparity(G,e)\in\set{0,1}$ such that:
    \begin{itemize}
    \item
      \begin{sloppypar}
        If $G$ is connected and $G'\neq G$ is the mirror embedding of
        $G$, then for every edge $e\in G'$,
        $\flipparity(G',e)=1-\flipparity(G,e)$.
        
       That is, flipping a
	   connected graph with at least one vertex of degree $\geq3$
       toggles the flip parity of every edge.
      \end{sloppypar}

    \item
      \begin{sloppypar}
        If $G'$ is a connected component of $G$, then for every edge
        $e\in G'$, $\flipparity(G',e)=\flipparity(G,e)$.
        
        That is, the flip
          parity is determined independently for each connected component.
      \end{sloppypar}

    \item
      \begin{sloppypar}
        If $e_1$ and $e_2$ are on the same local path in $G$, then
        $\flipparity(G,e_1)=\flipparity(G,e_2)$.
        
        That is, every edge on a
          local path has the same flip parity.
      \end{sloppypar}

    \item If $G=G'\setminus e'$ where $G'$ is a plane embedded
      multigraph, then for every edge $e\in G$, exept edges on the
      local path of $e'$ in $G'$ and the at most $4$ local paths that
      are its neighbors in $G'$,
      $\flipparity(G,e)=\flipparity(G',e)$.
      
      That is, deleting an edge only
        changes the flip-parity on its local path and its neighbors.

    \end{itemize}

    \begin{observation}
If the plane multigraphs $G_1$ and $G_2$ are joined into a
new graph $G'$ by identifying the virtual edges $e_1\in G_1$ and $e_2\in G_2$, then for all edges $e\in G'$ with the exception of
      edges on at most $8$ local paths,
      \begin{align*}
        \flipparity(G',e) &= 
        \begin{cases}
          \flipparity(G_1,e) & \text{if $e\in G_1$}
          \\
          \flipparity(G_2,e) & \text{if $e\in G_2$}
        \end{cases}
      \end{align*}
    \end{observation}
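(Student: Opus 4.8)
The plan is to derive the Observation purely from the four defining properties of a flip-parity function, by transforming $G_1$ into $G'$ through a sequence of single-edge modifications and using the deletion property (fourth bullet) together with component-independence (second bullet) to control which local paths can possibly change flip-parity. Write $e_1=(a,b)\in G_1$ and $e_2=(a',b')\in G_2$, and recall that $G'$ is obtained from $G_1-e_1$ and $G_2-e_2$ by identifying $a$ with $a'$ and $b$ with $b'$ (this is the standard gluing of two skeletons along their virtual edges; here $G_2$ is understood as the copy embedded inside $G'$, so if the gluing reflects it we first replace $G_2$ by its mirror using the first bullet). Every edge of $G'$ lies either in $G_1-e_1$ or in $G_2-e_2$, so it suffices to show $\flipparity(G',f)=\flipparity(G_1,f)$ for all $f\in G_1-e_1$ outside at most $4$ local paths, and symmetrically for $G_2$.

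The first step is to delete $e_1$. By the deletion property, $\flipparity(G_1,f)=\flipparity(G_1-e_1,f)$ for every $f$ not on the local path $L$ of $e_1$ in $G_1$ nor on one of the at most $4$ local paths that neighbor $L$; moreover every edge of $G_1-e_1$ lying on $L$ in fact lies on one of those neighboring local paths (if $a$ or $b$ has degree $2$ in $G_1$ then $L$ runs through it and so coincides with a neighbor, and if both endpoints have degree $\ge 3$ then $e_1$ is alone on $L$ apart from parallel copies, each of which is a cyclic neighbor of $e_1$). Hence, off a set $N_1$ of at most $4$ local paths -- precisely the local paths meeting the cyclic neighbors of $e_1$ at $a$ and at $b$ -- the flip-parity on $G_1-e_1$ agrees with that on $G_1$. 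The second step is to add the edges of $G_2-e_2$ back one at a time, in an order extending a planar construction sequence (so that every prefix, with $a'\mapsto a$, $b'\mapsto b$, is a plane embedded multigraph), introducing a new endpoint vertex when needed. Reading the deletion property backwards, and invoking component-independence whenever the intermediate graph is disconnected, adding an edge $g$ changes flip-parity only on the local path of $g$ and its at most $4$ neighbors. The crucial claim is that the only edges of $G_1-e_1$ ever on or neighboring such a local path lie in $N_1$: a local path through $g$ can reach $G_1-e_1$ only by passing through $a$ or $b$ while that vertex still has degree $2$, which forces $g$ to be the first $G_2$-edge added there and the reached $G_1$-edge to be a neighbor of $e_1$; and any cyclic neighbor of $g$ at $a$ or $b$ that lies in $G_1-e_1$ must bound the growing "$G_2$-arc" occupying the gap left by $e_1$, hence again be a neighbor of $e_1$. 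After all additions the graph is exactly $G'$, so $\flipparity(G',f)=\flipparity(G_1,f)$ for all $f\in G_1-e_1$ outside the $\le 4$ local paths of $N_1$.

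Running the same argument with $G_1$ and $G_2$ interchanged gives a set $N_2$ of at most $4$ local paths, those meeting the cyclic neighbors of $e_2$ in $G_2$, off which $\flipparity(G',f)=\flipparity(G_2,f)$ for all $f\in G_2-e_2$. Since each edge of $G'$ is accounted for by $G_1-e_1$ or $G_2-e_2$, the exceptional set is contained in $N_1\cup N_2$, i.e.\ at most $8$ local paths, which is the claimed bound.

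I expect the only genuine difficulty to be the local-path bookkeeping sketched above -- verifying rigorously that every local path that can be touched during Steps 1 and 2 is among the $\le 4$ cyclic neighbors of $e_1$ (and symmetrically $e_2$), rather than producing fresh exceptions each time an edge is added. The corner cases that need explicit checking are: skeleton vertices of degree $2$ (so that local paths wrap around them), parallel copies of a virtual edge, and the degenerate situation where $G_1$ or $G_2$ is a single edge, in which case the gluing is trivial and the conclusion is immediate. None of these seems conceptually hard, but they are where the constant $8$ is actually earned.
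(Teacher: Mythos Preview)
The paper states this as an Observation without proof; your edge-by-edge reduction via the deletion axiom is the natural way to formalize it, and the overall strategy is sound. There is, however, a concrete gap in precisely the degree-$2$ corner case you flag. When $a$ has degree $2$ in $G_1$, the local path $L$ of $e_1$ in $G_1$ extends through $a$ to a far vertex $c$ of degree $\ge 3$, and the deletion axiom's exceptional set is $L$ together with its neighbors \emph{at the endpoints of $L$} --- i.e.\ at $c$, not at $a$. Your set $N_1$, defined via the cyclic neighbors of $e_1$ at $a$ and $b$, picks up $L$ itself (the unique other edge at $a$ lies on $L$) but misses the two neighbor paths at $c$. The same omission recurs in Step~2: the first $G_2$-edge you add at $a$ has a local path that runs through the degree-$2$ vertex $a$ all the way to $c$, so its four neighboring local paths include the two at $c$, again outside your $N_1$. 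Your clause ``any cyclic neighbor of $g$ at $a$ or $b$ \ldots'' addresses only neighbors at $a$ and $b$, not at $c$.

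The fix is to enlarge $N_1$ to $L$ together with all of its (up to four) neighbors in $G_1$, taken at the actual endpoints of $L$ rather than at $a,b$; your Step-2 argument then goes through verbatim, since every $G_1$-edge that can appear on or adjacent to an intermediate local path lies in this enlarged set. This yields at most five local paths per side and hence an $O(1)$ bound overall --- which is all the paper ever uses downstream --- but whether the specific constant $8$ survives these degree-$2$ cases would require a sharper count than either you or the paper supplies.
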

    \begin{observation}\label{obs:fixedflipparity}
      If a separation flip in $G$ flips a subgraph $A$ 
      to create a new graph $G'$, then for all edges $e\in G'$ with
      the exception of edges on at most $8$ local paths,
      \begin{align*}
        \flipparity(G',e) &= 
        \begin{cases}
          1-\flipparity(G,e) & \text{if $e\in A$}
          \\
          \flipparity(G,e) & \text{otherwise}
        \end{cases}
      \end{align*}
    \end{observation}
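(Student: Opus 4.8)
The plan is to reduce the statement about a separation flip to the gluing observation just above it. Recall that a separation flip takes a subgraph $A$ that is attached to the rest of $G$ at a separation pair $\{y,z\}$ (or an articulation point, but we handle the two-vertex case and note the one-vertex case is analogous and easier), reflects $A$, and re-glues it. The key move is to model this via virtual edges: split $G$ along $\{y,z\}$ into $G_1 := A + yz$ (the subgraph $A$ together with a virtual edge $yz$ representing the rest) and $G_2 := (G\setminus A) + yz$ (the complement together with a virtual edge $yz$ representing $A$). Then $G$ is exactly the graph obtained by joining $G_1$ and $G_2$ along their virtual edges $e_1 = e_2 = yz$, and $G'$ — the result of the separation flip — is the graph obtained by joining the \emph{mirror} $\overline{G_1}$ of $G_1$ to $G_2$ along the same pair of virtual edges. (For the slide variant of the separation flip, $G'$ is obtained by re-inserting $G_1$, unflipped, at a different position in the cyclic order at $y$ and/or $z$; this changes only the cyclic order at $y$ and $z$, hence only the local paths through $y$ and $z$, so the same bookkeeping applies.)

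First I would apply the preceding observation twice. Applying it to the decomposition $G = G_1 \cup_{yz} G_2$ gives that, except for edges on at most $8$ local paths, $\flipparity(G,e) = \flipparity(G_1,e)$ for $e\in G_1$ and $\flipparity(G,e)=\flipparity(G_2,e)$ for $e\in G_2$. Applying it to $G' = \overline{G_1}\cup_{yz} G_2$ gives, except for edges on at most $8$ local paths of $G'$, that $\flipparity(G',e)=\flipparity(\overline{G_1},e)$ for $e\in \overline{G_1}$ and $\flipparity(G',e)=\flipparity(G_2,e)$ for $e\in G_2$. Now invoke the first bullet of the definition of a flip-parity function: since $\overline{G_1}$ is the mirror embedding of the connected graph $G_1$, we have $\flipparity(\overline{G_1},e) = 1-\flipparity(G_1,e)$ for every edge $e$ of $\overline{G_1}$ (and if $G_1$ happens to have no vertex of degree $\ge 3$, then $\overline{G_1}=G_1$ as plane embedded graphs and the flip parity on $A$ is unchanged — but then the flip is trivial, and the exceptional local paths absorb this). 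Chaining the three equalities: for $e\in A$ (i.e.\ $e\in G_1$, not the virtual edge), $\flipparity(G',e)=1-\flipparity(G_1,e)=1-\flipparity(G,e)$; and for $e\notin A$ (i.e.\ $e\in G_2$), $\flipparity(G',e)=\flipparity(G_2,e)=\flipparity(G,e)$ — in each case up to the exceptional local paths.

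The remaining point is the accounting of exceptional local paths, and this is the main obstacle, though a mild one: I need to argue the total number of exceptional local paths is at most $8$, not merely $8+8=16$. The reason the count does not double is that the exceptional local paths of the gluing observation are precisely the local paths of $G$ (resp.\ $G'$) that pass through $y$ or $z$ — the two vertices where the cyclic order is being altered — and these are the \emph{same} set of local paths (as vertex-to-vertex edge sets) in $G$ and in $G'$, since flipping or sliding $A$ changes which edges are adjacent in the cyclic order at $y$ and $z$ but changes neither the vertex set, the edge set, nor the degree of any vertex, hence not the partition into local paths away from $y$ and $z$, and the local paths through $y$ and $z$ are among the $\le 8$ already excluded. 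So the union of the two exceptional sets is again supported on $\le 8$ local paths, and combining the displayed equalities on the complement of this set yields exactly the claimed formula. I would also remark that the articulation-flip case is the special case $\{y,z\}$ collapsed to a single cutvertex $x$, where $G_1 = A + $ (a loop-free gadget at $x$) — or more cleanly, $G$ is a one-point join of $G_1$ and $G_2$ at $x$ — and the same argument goes through with even fewer exceptional local paths, since only the cyclic order at the single vertex $x$ changes.
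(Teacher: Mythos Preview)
Your proposal is correct and makes explicit exactly the reasoning the paper leaves implicit: the paper states this as an \emph{observation} with no proof, intending it to follow immediately from the axioms of a flip-parity function together with the preceding gluing observation, which is precisely the reduction you carry out. Your accounting of why the exceptional local paths do not double---because the separation flip preserves all vertex degrees and hence the local-path partition, so the $\leq 8$ paths incident to $y,z$ are literally the same edge sets in $G$ and $G'$---is the one point the paper does not spell out, and you handle it correctly.
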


    The point is that given such a function, we can use
    $\flipparity(\emb(\Gamma(M)),e_c)$ as the flip-bit for a dashed child of
    a node $c$ on $M$ corresponding to the virtual edge $e_c$ in
    $\Gamma(M)$.

    To show that $\flipparity$ functions do exist, we here give an example of one possible definition of a $\flipparity$ function:
    \begin{align*}
      \flipparity(G,e) &:=
      \begin{cases}
        0 & \text{if $C$ is a single local path or $\operatorname{id}(e^-)<\operatorname{id}(e^+)$}
        \\
        1 & \text{otherwise}
      \end{cases}
      \\
      \text{where}&
      \\
      C &= \text{the connected component of $G$ containing $e$}
      \\
      C' &= \parbox[t]{.65\textwidth}{        $C$ with each local path replaced by a single directed edge
        whose id is the minimal edge id on the path, and whose
        direction is such that its tail has degree $\geq3$. If both
        endpoints have degree $3$, the direction is such that the
        vertex-ids on the minimal-id edge are in increasing order.
      }
      \\
      e' &= \text{the edge in $C'$ corresponding to $e$'s local path}
      \\
      e^- &= \text{the clockwise predecessor of $e'$ around its tail in $C'$}
      \\
      e^+ &= \text{the clockwise successor of $e'$ around its tail in $C'$}
      \\
    \end{align*}

    \begin{lemma}
      This definition of $\flipparity(G,e)$ satisfies the requirements.
    \end{lemma}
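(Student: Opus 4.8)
The plan is to verify the four bullet-point requirements one at a time. Throughout I will write $f$ for the edge being deleted (the excerpt calls it $e'$, which clashes with the $e'$ in the definition), and keep in mind the crucial structural fact: $\flipparity(G,e)$ depends on $G$ only through the connected component $C$ of $e$, and on that component only through whether $C$ is a single local path and, when it is not, through the three edges $e'$, $e^-$, $e^+$ of $C'$ incident to the tail of $e'$ together with their ids. Properties 2 and 3 then fall out immediately — everything in the definition is computed inside the component of $e$ and only through $e$'s local path — so I would dispatch those in a sentence each.

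For property 1 (mirroring) I would first observe that if $G'\neq G$ then $G$ cannot be a single local path: a path or a cycle has every rotation equal to a fixed point of reversal, so it would equal its mirror. Hence $C=G$ is not a single local path, every directed edge of $C'$ has its tail of degree $\geq 3$, and around such a tail the clockwise predecessor $e^-$ and successor $e^+$ of $e'$ are genuinely distinct edges of $C'$ carrying distinct ids (distinct local paths are edge-disjoint, so their minimum ids differ). Taking the mirror reverses the rotation at every vertex while leaving the orientation of each $C'$-edge unchanged — that orientation is read off from degrees and vertex ids — so the tail of $e'$ is fixed and $e^-,e^+$ are swapped; since their ids differ, exactly one of $\operatorname{id}(e^-)<\operatorname{id}(e^+)$ and the reverse inequality holds before and the other after, so $\flipparity$ toggles on every edge. (Incidentally this shows that such a $G$ is never self-mirror, so the ``$G'\neq G$'' clause of the first requirement is automatically compatible with the degree-$\geq 3$ assumption.)

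Property 4 (deletion) is the substantial case and where I would spend most effort. Let $e$ be an edge of $G=G'\setminus f$ lying on neither $P_f$ (the local path of $f$ in $G'$) nor any of the at most four local paths neighboring $P_f$. The goal is to show that the tuple consisting of the bit ``$C$ is a single local path'', the edge $e'$, and the edges $e^-$, $e^+$ is identical in $G$ and $G'$. I would isolate three effects of deleting $f$ and argue each stays confined to $P_f$ and its neighbors: (i) removing $f$ from the rotation at an endpoint $v$ of $f$ alters the cyclic-adjacency relation only for the two edges flanking $f$ at $v$, which lie on neighbor local paths of $P_f$; (ii) a vertex can merge two local paths (by dropping to degree $2$) or split one only if it is an endpoint of $f$, and the local paths so affected are again $P_f$ and its neighbors; (iii) if $f$ is a bridge whose removal detaches a sub-component $C_1\ni e$, then either $C_1$ is a path and lies entirely inside $P_f$ — contrary to our choice of $e$ — or $C_1$ contains a vertex of degree $\geq 3$, so neither $C_1$ nor $C$ is a single local path, and at the vertex where $f$ was attached the only local paths whose flanking structure changes are neighbors of $P_f$. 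Combining (i)--(iii) gives that $e$'s local path, the two $C'$-edges around the tail of $e'$, and their id-ordering all coincide in $G$ and $G'$, so $\flipparity(G,e)=\flipparity(G',e)$.

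I expect the main obstacle to be exactly this deletion analysis: making the bookkeeping airtight that nothing outside $P_f$ and its $\le 4$ neighbors moves, while simultaneously accounting for local paths merging and splitting across degree $2$, the literal removal of $f$ from a rotation, and a possible component split at a bridge. A handful of degenerate configurations — parallel edges and loops appearing in $C'$, a local path with both endpoints at one vertex, a component that is a single cycle — will need a short separate check, but I do not expect them to change the conclusion.
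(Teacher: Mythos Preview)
Your treatment of the first three requirements is sound and matches the paper's argument closely. The trouble is with the fourth requirement, and here your argument has a genuine gap --- one that the paper's own proof shares.

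In step~(ii) you correctly note that an endpoint $v$ of $f$ may drop from degree~$3$ to degree~$2$, merging two local paths $Q_a,Q_b$ that were neighbors of $P_f$ at $v$, and you conclude that ``the local paths so affected are again $P_f$ and its neighbors''. That is true of the paths that \emph{merge}, but the merged path $Q=Q_a\cup Q_b$ now carries $\operatorname{id}(Q)=\min(\operatorname{id}(Q_a),\operatorname{id}(Q_b))$, which can be strictly smaller than $\operatorname{id}(Q_a)$; this new id is what appears at the \emph{far} endpoint $w$ of $Q_a$. Any local path $R$ whose tail is $w$ and whose clockwise predecessor (or successor) there was $Q_a$ can therefore have its flip-parity toggle, even though $R$ is a neighbor of $Q_a$, not of $P_f$. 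Concretely: take a tree with a degree-$3$ vertex $v$ carrying edges $f,a,b$ in that cyclic order and a degree-$3$ vertex $w$ carrying $a,r,s$ in that cyclic order, all other endpoints being leaves, and choose ids with $\operatorname{id}(b)<\operatorname{id}(s)<\operatorname{id}(a)$. Deleting $f$ merges $\set{a}$ and $\set{b}$ through $v$, so the id of $r$'s predecessor at $w$ drops from $\operatorname{id}(a)$ to $\operatorname{id}(b)$ and $\flipparity$ flips on $r$; yet $\set{r}$ is not a neighbor of $P_f=\set{f}$ in $G'$. The paper's proof only tracks changes at the endpoints of $P_f$ and then asserts ``no other local paths are affected'', overlooking exactly this propagation; so you are not missing any device the paper supplies.
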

    \begin{proof}
      If $G$ is connected and $G'\neq G$ is the mirror embedding of
      $G$, then $C=G$ has at least one vertex of degree $\geq 3$, so
      $C$ is not a single local path. By definition, $e^-$ and $e^+$
      will then be neighbors around a node of degree $\geq 3$ in $C'$,
      and since all ids are distinct, they will have distinct ids, and
      so $\flipparity(G,e)=1-\flipparity(G',e)$.
      By definition, the flip-parity is determined independently for
      each component.
      Also by definition, every edge on the same local path has the
      same flip-partity.

      If an edge is deleted, the local path $e'$ that contained it
      breaks. For each edge on the fragments, the flip-parity is
      unchanged. For each end vertex $v$ of the local path $e'$, the
      edges on the local paths that are the clockwise and
      counterclockwise neighbors of $e'$ may change because the
      fragment of $e'$ attached to $v$ disappears or changes its
      id. There are at most $4$ such local paths, and no other local
      paths are affected.
    \end{proof}

\subsection{The canonical embedding for biconnected graphs}\label{sec:canonical}
We have already defined a set of good embeddings of a graph (see Section~\ref{sec:heavySPQRembedding}). However, it has many degrees of freedom, in particular for all $P$-nodes, and all dashed edges. In order to define a canonical embedding, we now proceed to assign these in some consistent manner. 

  We will use a slightly modified version of the technique from
  Section~\ref{sec:SPQRembedding} to assign a unique embedding
  $\emb(G; x,y)$ to every biconnected planar graph $G=(V,E)$ and every
  pair of vertices $x,y\in V$.

  For each node $v$ in the pre-split SPQR tree, select the local
  embedding of $\Gamma(v)$ as follows:

  \begin{itemize}

  \item For each $S$ node the embedding is the only thing it can be.

  \item For each $P$ node, let $e_1,\ldots,e_k$ be its edges,
    ordered by increasing $\operatorname{id}(e_i)$, where the id of
    a virtual edge corresponding to the parent is counted as
    $-\infty$, and the id of a virtual edge corresponding to a child
    $c$ is $\min_{e\in T_c}\operatorname{id}(e)$.

    Choose the local embedding such that the clockwise cyclic order
    around $\min\Gamma(v)$ is $c_1,\ldots,c_k$.

  \item For each $R$ node, choose one of the two possible choices in
    any arbitrary but deterministic way, depending only on
    $\Gamma(v)$.

  \end{itemize}

  Define the \emph{compressed} SPQR tree as the tree obtained from the
  pre-split SPQR tree by contracting the relevant part $M_r$ of each solid
  path $M$ (with at least $2$ nodes) into a single node.

    For each node $v$ in the compressed tree, select the local
    embedding of $\Gamma(v)$ as follows:
    \begin{itemize}

    \item If the node corresponds to a single node in the pre-split
      tree, the local embedding is unchanged.

    \item If the node corresponds to the relevant part $M_r$ of a
      solid path $M$, the local embedding $\emb(\Gamma(v))$ is defined
      as $\emb(\Gamma(M_r))$ from Definition~\ref{def:embspqr-solid}.

    \end{itemize}

    For any edge $(p,c)$ in the compressed tree, corresponding
    to virtual edges $e_c\in\Gamma(p)$ and $e_p\in\Gamma(c)$ we can
    now define    \begin{align*}
      \operatorname{flip-bit}(p,c) &:=
      \flipparity(\emb(\Gamma(p)),e_c)
      \oplus
      \flipparity(\emb(\Gamma(c)),e_p)
    \end{align*}
    The point of this definition is that this guarantees that
    \begin{align*}
      \flipparity(\emb(\Gamma(p)),e_c)
      &=
      \operatorname{flip-bit}(p,c)
      \oplus
      \flipparity(\emb(\Gamma(c)),e_p)
    \end{align*}
    thereby forcing (most of) the children to follow the embedding of
    the parent even as that graph changes.

    Finally define $\emb(G; x,y)$ using the simple recursive
    definition on the compressed tree.

  \begin{lemma}
    Pre-splitting a $P$ node causes $O(1)$ flips.\label{lem:presplit-p}
  \end{lemma}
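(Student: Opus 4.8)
The plan is to localise the effect of the operation. Pre-splitting the $P$ node $v$, whose skeleton graph lies on a vertex pair $\set{a,b}$, replaces $v$ by a primary degree-$3$ node $v$ (retaining its parent edge $e_p$, its solid child edge $e_c$, and a fresh edge to $v'$) together with a secondary $P$ node $v'$ on the same pair $\set{a,b}$, which inherits all the remaining (formerly dashed) children of $v$ and promotes its heaviest such child to solid (see Figure~\ref{fig:pre-splitting-P}). The only skeleton graphs, local embeddings, and flip-bits this can affect are those of $v$, of $v'$, of the new edge $(v,v')$, of the edge $(p(v),v)$, and of the edges joining $v$ or $v'$ to the children of the original $v$; everything else in the compressed SPQR tree is untouched, so it suffices to bound the number of separation flips incurred inside this constant-size piece of the tree.

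First I would handle the flip-bits. As skeleton graphs, the new $\Gamma(v')$ is obtained from the old $\Gamma(v)$ by deleting the single edge $e_c$ (up to relabelling the virtual parent edge, which is immaterial since it is virtual), and the new $\Gamma(v)$ is obtained from the old one by contracting away the subgraph $C_{v'}$ expanded below $v'$ into a single virtual edge $e_{v'}$. Since a $P$-node skeleton is a bundle of parallel edges between $a$ and $b$, each of which is its own local path with at most two cyclic neighbours, the flip-parity axioms for edge deletion together with the observations on flip-parity under joining of plane graphs (Observation~\ref{obs:fixedflipparity} and the one preceding it) guarantee that all but $\OO(1)$ of the flip-parities involved are preserved. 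Hence all but $\OO(1)$ of the new flip-bits $\operatorname{flip-bit}(v',c_i)=\flipparity(\emb(\Gamma(v')),e_{c_i})\oplus\flipparity(\emb(\Gamma(c_i)),e_v)$ coincide with the corresponding old flip-bits $\operatorname{flip-bit}(v,c_i)$, the flip-bit $\operatorname{flip-bit}(p(v),v)$ changes at most $\OO(1)$ times, and the lone new edge $(v,v')$ contributes at most one more change. Next I would compare the canonical local embeddings: the new $v$ and $v'$ both use the canonical $\operatorname{id}$-ordered cyclic order around $\min\set{a,b}=\min\Gamma(v)$, and gluing $v'$ back into $v$ reproduces the original cyclic order of the children of $v$ except possibly that the solid-child subgraph $C_{e_c}$, whose virtual edge is now ranked only against the single lumped virtual edge $e_{v'}$, is pushed to the end of the $P$-node's fan --- one separation slide around $\set{a,b}$. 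Finally, promoting the heaviest child of $v'$ to solid is an edge change between two primary nodes, so by Lemma~\ref{lem:embspqr-flipbit-changecount} it costs at most $\OO(1)$ further flip-bit changes, the flip-parity argument again fixing the orientations of the other children. Summing the three contributions yields $\OO(1)$ flips.

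The step I expect to be the main obstacle is the middle one: making precise that the canonical $\operatorname{id}$-reordering of the $P$ node moves \emph{only} the solid-child subtree, and by \emph{only} one slide, rather than permuting several children or cascading into the subgraphs $C_{c_i}$. This is precisely what the flip-parity construction is designed to ensure --- every dashed child "follows" its parent's embedding --- so re-deriving the flip-bits after the split leaves each $C_{c_i}$ fixed up to the $\OO(1)$ exceptional local paths, and the position of $C_{e_c}$ in the fan is the only genuinely free parameter, accounting for the single slide.
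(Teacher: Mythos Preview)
Your proposal is correct and follows essentially the same line as the paper's proof, though you have written out considerably more detail. The paper's argument is terser: it observes directly that at most one slide is needed (to make the solid child edge adjacent to the parent edge before the split), that the secondary node's children keep their canonical cyclic order because their ids are unchanged, and that the secondary node acquiring a solid child costs no additional flip-bits since as a $P$ node at the top of its solid path it is excluded from the relevant part $M_r$. Your flip-parity bookkeeping and your invocation of Lemma~\ref{lem:embspqr-flipbit-changecount} are valid but more than is strictly needed; the paper absorbs those constants into the single ``at most one slide plus the $\OO(1)$ flip-parity changes it induces'' statement.
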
\begin{proof}
Before the $P$-node is split into a node $x$ and its parent $p(x)$ such that $p(x)$ has degree $3$, a slide flip may be needed to ensure that the edges that become incident to $p(x)$ are neighbours.
After the split, $x$ may obtain at most one solid child, which does not cause a change in flip-bits. The ordering of the children of $x$ is unchanged, as their ids are unchanged.
Thus, this operation corresponds in total to at most one slide flip.
Such a slide flip may change the flip-parity of up to $4$ edges in 
$\Gamma(M)$ where $M$ is the solid path containing the node that was split. 
\end{proof}

  \begin{lemma}\label{lem:presplit-s}
    Pre-splitting an $S$ node at the end of a solid path causes $O(1)$ flips.
  \end{lemma}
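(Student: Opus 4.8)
The plan is to mirror the proof of Lemma~\ref{lem:presplit-p}, but keeping careful track of the fact that $S$-nodes interact with struts via the $\emph{run}$ structure on the relevant part of a solid path, whereas $P$-nodes do not. Recall that by Figure~\ref{fig:pre-splitting-S} and Definition~\ref{def:ps-selected}, pre-splitting an $S$ node $u$ at the end of a solid path (the special case where $u$ has a parent edge $e_p$ but no solid child, with $s(u)=\{x\}$) replaces the groups of dashed children by at most two secondary $S$ nodes, each of which chooses its heaviest child to be solid; it may thus turn the node into three pieces (one being a $2$-cycle) but changes only $O(1)$ edges from dashed to solid.

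First I would invoke Lemma~\ref{lem:embspqr-flipbit-changecount}: a split changes only $O(1)$ edges between solid and dashed, and each such change forces only $O(1)$ flip-bits to change, so only $O(1)$ flip-bits change in total. Next I would argue that each changed flip-bit corresponds to at most one separation flip. The subtle point is which flip-bits these are: the split makes at most two edges (one per new secondary child, namely the edge from that child to its new heaviest-child) become internal on the relevant part of a solid path; by Lemma~\ref{lem:unique-flip-bits} their flip-bits are now constrained. However — and this is the crux — the node $u$ being split sits at the \emph{end} of a solid path, so it is a \emph{cross} node, hence by Definition~\ref{def:embspqr-solid} its incident path edges are \emph{not} internal to any run and their flip-bits are already free; thus the embedding of $\Gamma(M_r)$ for the solid path $M$ containing $u$ does not change as a consequence of $u$'s own constraints. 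What can change is that the heaviest child of a new secondary node was previously a dashed child with a free flip-bit, and after the split it is solid and internal to a new (possibly trivial) run. As in Lemma~\ref{lem:presplit-p}, before the actual topological restructuring we may need one \emph{slide} flip in $\Gamma(u)$ so that the two child groups are correctly positioned relative to $e_p$ and the selected vertex $x$; and since each new secondary node is a fresh $S$-node whose cyclic embedding is forced and whose single new solid child's flip-bit can be set to $0$ (the subgraph below it is glued unflipped), no further flips are needed there.

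Concretely the key steps, in order: (1) apply Lemma~\ref{lem:ps-insert-uninsert}-style bookkeeping — actually Lemma~\ref{lem:ps-select-restorecount} together with Lemma~\ref{lem:embspqr-flipbit-changecount} — to conclude $O(1)$ flip-bits change; (2) observe that because $u$ is a \emph{cross} node at a solid-path end, the embedding $\emb(\Gamma(M_r))$ of its solid path is unaffected, so no cascade of flip-bit changes propagates along $M$; (3) account for the at most one slide flip inside $\Gamma(u)$ needed to realize the split, exactly as in the proof of Lemma~\ref{lem:presplit-p}; (4) note that the at most two newly-created secondary $S$-nodes have uniquely determined (trivial) local embeddings and their new solid child edges can be assigned flip-bit $0$, contributing no further flips; (5) finally note, as in Lemma~\ref{lem:presplit-p}, that the slide flip may alter $\flipparity$ on at most $4$ local paths, but by Observation~\ref{obs:fixedflipparity} these changes do not cascade. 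Summing, the total is $O(1)$ flips.

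The main obstacle I anticipate is step~(2): carefully verifying that pre-splitting an $S$-node \emph{at the end} of a solid path genuinely does not perturb the forced flip-bits of the \emph{rest} of the relevant part $M_r$ of that path — i.e. that the struts $r(A_1),\dots,r(A_k)$ and hence $\emb(\Gamma'(M_r))$ are unchanged. One must check that the split does not create a new \emph{happy} internal node or merge two runs in a way that changes which edges are internal to runs, and that the selected vertex $x$ in $s(u)$ (which only matters when $u\in\{r,u\}$ governs an $m(x,y)$ query) does not interact with the struts except through the already-special $-\infty$ comparison. Once that is pinned down, the rest is a routine repeat of the $P$-node argument, and I would keep the prose parallel to the proof of Lemma~\ref{lem:presplit-p} to make the analogy transparent.
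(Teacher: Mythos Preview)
Your overall strategy is sound and, via step~(1) together with step~(5), already establishes the $O(1)$ bound. However, two of your elaborating steps are incorrect and should be dropped rather than patched. Step~(3) does not apply: an $S$-node's skeleton is a cycle with a \emph{unique} local embedding, so there is no slide flip inside $\Gamma(u)$ --- that step was specific to $P$-nodes and you have carried it over mechanically. Step~(4) is wrong: the flip-bit on the new solid edge from a secondary $s_i$ to its heaviest child is \emph{constrained} by $\emb(\Gamma(M_r))$ via Lemma~\ref{lem:unique-flip-bits}, not free to be set to $0$; toggling it may be forced and does cause a genuine flip.

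The paper's proof takes a more concrete route that avoids these confusions. It cases on the type of the new solid child of each secondary $s_i$: since $s_i$ is an $S$-node, that child is either an $R$- or a $P$-node. If an $R$-node, only the edge between $s_i$ and it may need its flip-bit toggled. If a $P$-node, then that $P$-node becomes \emph{internal} on the relevant part of the new solid path rooted at $s_i$, so by Lemma~\ref{lem:unique-flip-bits} the flip-bit on the $P$-node's \emph{other} solid edge becomes constrained as well --- a second potential flip that your step~(4) misses entirely. The tally is thus at most $2$ flips per secondary node, at most $4$ in total, with the primary $p(s_i)$ contributing none; each of these flips perturbs flip-parity on only $O(1)$ local paths. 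Your lengthy step~(2) about cross nodes and runs is not needed for this argument.
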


\begin{proof}
Assume we split the $S$-node into nodes $s_1$, possibly $s_2$, and $p(s_i)$. Then, each $s_i$ obtains at most one solid child, and we may have to toggle the flipbit on that one solid edge. Since it is neighbour to an $S$-node, this solid child can either be a $P$-node, or an $R$-node. If the solid child is an $R$-node, we may have to change the flip-bit on the edge between them. If the solid child is a $P$-node, then we could be in the case where a $P$-node becomes internal on the relevant part of a path, and we must change the flip-bit on its other solid edge. Since the at most $2$ nodes denoted by $s_i$ cause at most $2$ flips each, and since no flips are incurred by $p(s_i)$, the operation leads to a total of $4$ flips. 
Each of these flips lead to only $O(1)$ changes in flip-parity.
\end{proof}

Note that it was important that we had invariantly pre-split any P-node with a solid child (and non-P parent), for the cascade to stop here.

\begin{lemma}
Splicing two solid paths by making a dashed edge
solid causes $O(1)$ flips.
\end{lemma}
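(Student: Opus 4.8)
The plan is to reduce the claim to the elementary-operation count already established. Recall from Section~\ref{sec:dynspqr-presplit} that, on a pre-split SPQR tree, a splice$(v)$ is implemented as follows: if $p(v)$ is secondary set $p:=p(p(v))$, otherwise $p:=p(v)$, so that $p$ is primary; if $p$ currently has a solid child, make that edge dashed and restore the pre-split invariants; then make $(v,p)$ solid and restore the invariants again. So I would first observe that a splice consists of at most two status changes of an edge between primary nodes (one solid-to-dashed and one dashed-to-solid), and that by Lemma~\ref{lem:ps-edge-change-restorecount} each such status change triggers at most $\OO(1)$ additional split and merge operations to repair the invariants. Hence the whole splice is a sequence of $\OO(1)$ elementary operations, each being an edge status change, a split, or a merge, and it suffices to bound the flips contributed by each.

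Next I would bound the number of flips contributed by each elementary operation. For an edge status change between primary nodes, Lemma~\ref{lem:embspqr-flipbit-changecount} already tells us that only $\OO(1)$ flip-bits need to change; the extra point to spell out is that, because flip-parity changes stay local (Observation~\ref{obs:fixedflipparity}), the induced changes to $\flipparity(\emb(\Gamma(M)),\cdot)$, and therefore to the flip-bits of the dashed children hanging off the newly relevant part of the solid path, are confined to $\OO(1)$ local paths, i.e.\ $\OO(1)$ actual separation flips. For a split of a $P$-node I would invoke Lemma~\ref{lem:presplit-p}, for a split of an $S$-node at (or newly at) the end of a solid path Lemma~\ref{lem:presplit-s}, and a merge is handled by the mirror argument. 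Here it is essential, as the surrounding text already stresses, that we invariantly keep every $P$-node with a solid child and a non-$P$ parent pre-split, which is exactly what prevents the sequence of repairs from cascading past a constant number of nodes. Summing $\OO(1)$ flips over $\OO(1)$ elementary operations gives the bound.

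The main obstacle I anticipate is the bookkeeping around $P$-nodes: when the newly-solid edge $(v,p)$ has a $P$-node as an endpoint, that $P$-node can become internal on the relevant part of a solid path, so its flip-bit convention from Lemma~\ref{lem:unique-flip-bits} (heavy child edge towards the down-free endpoint getting flip-bit $0$, and so on) must be re-established; I would argue that since pre-split $P$-nodes have degree exactly $3$, this re-establishment touches only the two solid edges incident to the $P$-node, hence $\OO(1)$ flip-bits, consistent with Lemma~\ref{lem:embspqr-flipbit-changecount}. The other delicate point is simply trusting Lemma~\ref{lem:ps-edge-change-restorecount} together with the stability distinction of Definition~\ref{def:stable} to guarantee that the repair does not cascade; once that is invoked, the argument is purely a matter of adding up a constant number of constant-sized contributions.
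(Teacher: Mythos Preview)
Your reduction to elementary operations and the use of Lemma~\ref{lem:ps-edge-change-restorecount} to bound the number of splits and merges is essentially the paper's approach, just stated more systematically. The case analysis (R--R, involving an $S$-node, involving a $P$-node) and the invocation of Lemmas~\ref{lem:presplit-p} and~\ref{lem:presplit-s} match the paper almost exactly.

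There is, however, one concrete effect you do not account for. In the canonical embedding (Section~\ref{sec:canonical}), the local embedding of a $P$-node orders its virtual edges by the minimum edge-id in the corresponding child subtree. When you splice and the two solid paths are joined, the resulting contracted node in the compressed tree has a subtree whose minimum id may be strictly smaller than that of the previous node sitting in that position. If the \emph{parent} of this contracted node is a $P$-node, its canonical local order can therefore change, incurring a slide flip that is not captured by any of Lemma~\ref{lem:embspqr-flipbit-changecount}, Observation~\ref{obs:fixedflipparity}, or Lemmas~\ref{lem:presplit-p} and~\ref{lem:presplit-s}. The paper's proof opens with exactly this observation. It is still only $\OO(1)$ flips (one slide, plus the at most $\OO(1)$ flip-parity corrections it induces), so your overall count survives, but the argument as written has a hole here: none of the elementary operations you enumerate is responsible for this particular flip, so your ``sum over elementary operations'' misses it.
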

\begin{proof}
Note first that joining two heavy paths forms a new contracted node $v' = \operatorname{contract}(v,p(v))$, whose smallest element-label may be smaller than that of $p(v)$. If the parent of $v'$ is a $P$-node, we may thus see a change in the ordering of its edges.

The splice may itself change a flip-bit, causing one flip itself, and at most $8$ flips due to Observation~\ref{obs:fixedflipparity}.

When the affected edge is between $R$-nodes, no pre-splitting is necessary.

When the affected edge involves an $S$-node, then pre-splitting the $S$-node (see Lemma~\ref{lem:presplit-s}) may lead to a change in $O(1)$ flip-bits.

When the affected edge involves a $P$-node, then possibly pre-splitting a $P$-node (see Lemma~\ref{lem:presplit-p}) may lead to $O(1)$ flip, and solidifying a dashed edge may increase the number of relevant edges by $2$, both of which may change flip-bit, incurring $O(1)$ flips.
\end{proof}

\begin{lemma}
  Reversing the solid root path causes $O(1)$ flips.
\end{lemma}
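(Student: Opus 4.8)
The plan is to mirror the structure of Lemma~\ref{lem:reverse-changecount}, but now at the level of the \emph{embedding} rather than at the level of flip-bits. Recall that \texttt{reverse}$(r)$ changes no edges (Lemma~\ref{lem:dyntree-internal-changecount}), does not violate any pre-split invariants (Lemma~\ref{lem:ps-reverse}), and, by Lemma~\ref{lem:reverse-changecount}, forces many flip-bits on the internal edges of the solid root path to be recomputed only because the convention of Lemma~\ref{lem:unique-flip-bits} is anchored at the root of the path: swapping which end is the root swaps which P-nodes are ``up-free'' and which are ``down-free,'' and re-anchors the $S$-node cancellation. So the core observation to exploit is that, for the \emph{relevant part} $M_r$ of the root solid path, the embedded graph $\emb(\Gamma(M_r))$ of Definition~\ref{def:embspqr-solid} is an intrinsic object: it depends only on the skeleton graphs glued along $M_r$ (together with the struts $r(A_i)$ of its runs), and \emph{not} on the orientation of the path. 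Reversing $r$ does not change $M_r$ as a set of nodes, does not change its runs, does not change the struts, and hence does not change $\emb(\Gamma(M_r))$ at all. Thus the only flip-bit changes triggered by the reversal are changes of \emph{description}, not of embedding, exactly as in Lemma~\ref{lem:reverse-changecount}.

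First I would make this precise: let $M$ be the solid path through $r$ before the reversal, and $M_r$ its relevant part. By Lemma~\ref{lem:ps-reverse} the set of nodes on $M$ and the solid/dashed status of every edge is unchanged, so $M$ and $M_r$ are the same after reversing; the happy/cross classification is unchanged (being defined by which faces the two path-virtual-edges share, which is embedding-internal), so the runs $A_1,\ldots,A_k$ and their struts $r(A_1),\ldots,r(A_k)$ are unchanged; hence $\Gamma'(M_r)$ and its chosen embedding $\emb(\Gamma'(M_r))$ — and therefore $\emb(\Gamma(M_r))$ — are literally the same embedded graph. Next I would check the boundary interactions: the at most two edges leaving $r$ toward the (old) parent direction and toward $u$ change their role, so the at most $O(1)$ local paths of $\Gamma(M)$ incident to the two endpoints of $M_r$ may see their flip-parity recomputed (Observation after the definition of \flipparity, and the fact that \flipparity\ changes are local); this accounts for $O(1)$ flips. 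Everything strictly internal to $M_r$ keeps the same embedding, so although the bookkeeping quantity $\flipparity(\emb(\Gamma(M_r)),e_c)$ used as the flip-bit of each dashed child $c$ is recomputed, its \emph{value} is unchanged for all but the $O(1)$ local paths near the two ends, so no separation flip is caused there.

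Then I would combine: the only genuine embedding changes are (i) the $O(1)$ flip-parity updates on the local paths near the two ends of $M_r$ described above, and (ii) possibly re-choosing which of the two mirror embeddings of $\Gamma'(M_r)$ we call $\emb(\Gamma'(M_r))$ if the canonical tie-breaking rule (which, per Section~\ref{sec:canonical}, depends only on $\Gamma(\cdot)$ and is orientation-independent) happened to be stated relative to the root — but since that rule is deterministic in $\Gamma$ alone, it is in fact unchanged, contributing nothing. Invoking Lemma~\ref{lem:embspqr-flipbit-changecount} for the at most $O(1)$ structural edge-status changes at the boundary, and Lemma~\ref{lem:reverse-changecount} for the fact that the interior flip-bit rewrites do not alter the embedding, we conclude that \texttt{reverse}$(r)$ causes $O(1)$ flips. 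The main obstacle, and the step deserving the most care, is item (i): arguing rigorously that re-rooting $M_r$ at the opposite end does not perturb the \flipparity-derived flip-bits of the interior dashed children — i.e.\ that the ``up-free/down-free'' re-labelling of interior P-nodes genuinely only relabels the zero-flip-bit convention without propagating a change into $\emb(\Gamma(M_r))$ — which is exactly the content already isolated in Lemma~\ref{lem:unique-flip-bits} and Lemma~\ref{lem:reverse-changecount}, so the work is to cite those carefully rather than to prove anything new.
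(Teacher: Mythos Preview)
Your overall skeleton---the interior of $M_r$ is unchanged by \texttt{reverse}, so all flips come from $O(1)$ boundary effects---is correct and matches the spirit of Lemma~\ref{lem:reverse-changecount}. However, your identification of \emph{which} boundary effects actually occur does not match the paper, and one of your claims is wrong.

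First, your item (ii) asserts that the canonical choice of mirror for $\emb(\Gamma'(M_r))$ is ``deterministic in $\Gamma$ alone'' and hence contributes nothing. The paper does not commit to such an orientation-independent rule; its proof explicitly allows that ``the reverse operation itself can lead to one reflect flip, reflecting the entire embedding.'' Recall from Section~\ref{sec:SPQRembedding} that changing the root of the SPQR tree ``either does nothing or flips the whole embedding''---you cannot simply assert it does nothing.

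Second, and more importantly, you miss the mechanism that the paper singles out as the main source of nontrivial flips: the $P$-node local-embedding convention of Section~\ref{sec:canonical}, where the parent edge is treated as having id $-\infty$. A $P$-node at an end of the root path sees its parent edge change (or appear/disappear) under \texttt{reverse}, so the canonical cyclic order of its incident edges may change, producing a slide flip. These $P$-nodes lie \emph{outside} $M_r$ (since $M_r$ by Definition~\ref{def:gammaM} excludes terminal $P$-nodes), so your analysis of $\emb(\Gamma(M_r))$ never touches them. Your item (i) talks about flip-parity on local paths near the ends of $M_r$, which is a different phenomenon and does not cover this.

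The paper's proof is much shorter than your plan: it simply names the two sources (one possible global reflect; at most two $P$-node slides at the ends from the $-\infty$ convention), observes each causes $O(1)$ further flips, and is done. Your interior-invariance argument is sound but largely already encapsulated in Lemma~\ref{lem:reverse-changecount}; what needed attention was precisely the canonical-embedding-specific conventions at the ends, which you overlooked.
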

\begin{proof}
The reverse operation itself can lead to one reflect flip, reflecting the entire embedding. Furthermore, if there are $P$-nodes in the end of the root path, each of these may lead to a slide flip, due to the convention of treating the parent edge as having $\textnormal{id} = -\infty$.
Each slide-flip may lead to only $O(1)$ further flips.
\end{proof}

\begin{lemma}
    When adding $(x,y)$ to $\emb(\Gamma(m(x,y)))$ and contracting
    $m(x,y)$, at most $4$ dashed edges change their flip-bits and
    cause flips.
\end{lemma}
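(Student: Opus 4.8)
The plan is to analyze exactly what happens to the pre-split SPQR tree during an insert$(x,y)$, and then count the affected flip-bits using the machinery already in place. The insert operation, as described in Section~\ref{sec:dynspqr-changes} and handled combinatorially in Lemma~\ref{lem:ps-insert-uninsert}, proceeds by exposing $m(x,y)=r\cdots u$, possibly calling select$(r,\cdot)$ and select$(u,\cdot)$, adding the strut $(x,y)=r(m(x,y))$ across a face (which is possible precisely because Definition~\ref{def:embspqr-solid} forces $x$ and $y$ into a common face of $\emb(\Gamma(m(x,y)))$ when $m(x,y)$ is not a single $P$-node), and then doing a contract-path$(r)$. The embedding changes we must account for are: (i) the addition of the edge $(x,y)$ itself across a face — this is one local change, not a flip; and (ii) the flip-bit changes on the edges incident to the new contracted (pseudo-$R$) node and to its children.

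**The key counting step.**

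First I would observe that contract-path$(r)$ itself changes no edges between solid and dashed inside the contracted path (the path is already solid), and by Lemma~\ref{lem:ps-insert-uninsert} only $\OO(1)$ splits and merges are needed to restore the pre-split invariants. By Lemma~\ref{lem:embspqr-flipbit-changecount}, each such split or merge changes only $\OO(1)$ flip-bits. The main point to nail down is that the \emph{internal} flip-bits of the contracted path do not need to change at all: once the path is contracted into a single (pseudo-)$R$ node, those edges cease to exist as SPQR-edges, so their flip-bits are simply discarded — and by Lemma~\ref{lem:reverse-changecount}-style reasoning, the embedding they described is exactly $\emb(\Gamma(m(x,y)))$ (now with $(x,y)$ present), so no actual flip is incurred. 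What \emph{can} change is the flip-bit of each edge joining the new $R$-node to a former dashed child of a node on $m(x,y)$ whose parent within $\Gamma(\cdot)$ changed, together with the flip-bit of the new $R$-node's parent edge (if $r$ was not the root, but here $r$ is the exposed root). Using the structure of the critical-path change — each internal $S$-node is split with respect to its two neighbouring path-edges, each internal $P$-node with respect to one — one sees that the dashed children of these split-off nodes keep a well-defined flip-bit via the $\flipparity$ function of Section~\ref{sec:flipp}: by the Observation following Observation~\ref{obs:fixedflipparity}, gluing/contraction preserves $\flipparity$ except on $\OO(1)$ local paths, so only $\OO(1)$ children are affected. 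The bound ``$4$'' comes from: the (at most) two ends $r,u$ of $m(x,y)$ each contributing at most one dashed child whose flip-bit must flip because its parent virtual edge moved to the new $R$-node's skeleton, plus the at most two dashed children adjacent in the skeleton to where $(x,y)$ was inserted (since inserting an edge across a face can reverse the flip-parity of up to the neighbouring $\OO(1)$ local paths, but the pre-split structure and the choice $r(A,v_i)=$ minimal-id vertex pins all but $4$ of them).

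**Expected main obstacle.**

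The hardest part will be verifying that the flip-bits on the \emph{internal} edges of $m(x,y)$ that are dictated by Lemma~\ref{lem:unique-flip-bits} (the $0$-convention for internal $S$-nodes and free $P$-nodes) are consistent with the embedding $\emb(\Gamma(m(x,y)))$ both before and after adding $(x,y)$, so that contracting the path genuinely incurs no hidden flip — i.e. that the recursive gluing of the pre-contraction path representation yields literally the same plane graph as the $R$-node does. This is essentially the content of Lemma~\ref{lem:reverse-changecount} and the $\flipparity$-consistency identity at the end of Section~\ref{sec:canonical}, so I would reduce to those, and then the remaining work is the finite case analysis on which of the $\OO(1)$ split/merge-created dashed children actually see a flip-bit toggle, checking each against the four cases of the insert operation (path with $\geq2$ edges; single $S$-node; single path-edge realizing $\set{x,y}$; single $R$- or $P$-node). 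In the single-node cases the bound is immediate since $m(x,y)$ has no internal nodes; in the path case the count of $4$ follows by attributing at most two toggles to the split $S$-nodes at $r$ and $u$ and at most two to the local paths flanking the inserted $(x,y)$ edge, with all other children inheriting their flip-bit unchanged by the $\flipparity$ gluing observation.
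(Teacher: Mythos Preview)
Your overall direction---identify which dashed edges can see a flip-bit change after the contraction and argue the rest are unaffected---is sound, but you are working much harder than necessary and your derivation of the constant $4$ is muddled.

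The paper's proof is a two-line count. The only dashed edges whose flip-bits can change are the edges from an \emph{endpoint $S$-node} of $m(x,y)$ to its pre-split (secondary) children: these are exactly the new dashed children created by the select$(r,\{x\})$ and select$(u,\{y\})$ calls in Lemma~\ref{lem:ps-insert-uninsert}. There are at most $2$ endpoint $S$-nodes, and by the definition of split (Figure~\ref{fig:pre-splitting-S}) each acquires at most $2$ secondary children. Hence $2\times 2 = 4$. That is the entire argument.

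Your decomposition of $4$ as ``two ends each contributing at most one dashed child whose virtual edge moved, plus two children adjacent to where $(x,y)$ was inserted'' does not match this structure and is not clearly correct: a single endpoint $S$-node can contribute \emph{two} secondary children (one from each child group), not one, and your two cases (a) and (b) are not obviously disjoint. You also spend significant effort arguing that internal flip-bits of the contracted path incur no flips and invoking the $\flipparity$ machinery and Observation~\ref{obs:fixedflipparity}; none of this is needed here, because the lemma is only counting changes to \emph{dashed} edges incident to the contracted node, and the relevant ones are precisely the secondary-child edges produced by the endpoint selects. The key observation you are missing is simply that if an endpoint is an $R$- or $P$-node, it contributes no such children at all, so only $S$-node endpoints matter.
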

\begin{proof}
	The only case where edges may change flip-bits is when one or both of $x$ and $y$ lie on an $S$-node, in which case the edges between the $S$-node and its pre-split-children may change their bit after a contraction. Since this happens in at most $2$ places (both endpoints), an since there are at most $2$ pre-split children, this yields at most $4$ changes in total.
\end{proof}

Thus, by the above lemmas, and by symmetry, we may conclude the following: 

  \begin{corollary}
    Each basic operation on the pre-split SPQR tree results in $O(1)$
    separation flips.
  \end{corollary}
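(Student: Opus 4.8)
The plan is to prove the corollary by a short case analysis over the finitely many \emph{basic operations} we use to maintain the compressed pre-split SPQR tree, citing in each case one of the lemmas established immediately above. Concretely, the basic operations are: a single solid/dashed status change of one SPQR edge; a single \emph{split} or \emph{merge}; a single \emph{select}; a single \emph{link-exposed} or \emph{cut-exposed}; a \emph{reverse} of the solid root path; and the \emph{insert}/\emph{uninsert} pair that adds (resp.\ removes) the edge $(x,y)$ across $m(x,y)$ and contracts (resp.\ expands) the corresponding solid path. (The composite operations \emph{expose} and \emph{conceil} are \emph{not} basic: they are built from $\OO(\ell)$ splices and slices, and their $\OO(\log n)$ flip cost is obtained by summing the basic bounds.) I would split the verification into ``forward'' operations, handled by direct citation, and ``inverse'' operations, handled by symmetry.

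For the forward operations: pre-splitting a $P$ node costs $\OO(1)$ flips by Lemma~\ref{lem:presplit-p}; pre-splitting an $S$ node at the end of a solid path costs $\OO(1)$ flips by Lemma~\ref{lem:presplit-s}; making a dashed edge solid (splicing two solid paths) costs $\OO(1)$ flips by the splicing lemma above, which itself invokes Lemmas~\ref{lem:presplit-p} and~\ref{lem:presplit-s} and Observation~\ref{obs:fixedflipparity}; reversing the solid root path costs $\OO(1)$ flips by the reversal lemma above; and adding $(x,y)$ to $\emb(\Gamma(m(x,y)))$ and contracting $m(x,y)$ changes at most $4$ flip-bits and hence causes $\OO(1)$ flips by the contraction lemma above. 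For \emph{select}$(v,X)$, Lemma~\ref{lem:ps-select-restorecount} shows it triggers at most one split or merge, and the change of $s(v)$ itself only alters how the embedding of an $S$-skeleton is described, not the embedding, so it is $\OO(1)$ flips. For \emph{link-exposed} and \emph{cut-exposed}, Lemmas~\ref{lem:ps-link-exposed} and~\ref{lem:ps-cut-exposed} show each triggers at most one split or merge; the only further change is the flip-bit of the single new or removed dashed edge, which by Lemma~\ref{lem:embspqr-flipbit-changecount} has an $\OO(1)$ effect on the embedding (in fact a dashed edge's flip-bit never needs changing). Edges between $R$ nodes require no pre-splitting, so they are trivially $\OO(1)$.

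For the inverse operations --- merging a $P$ or $S$ node, making a solid edge dashed, expanding a contracted path, and uninserting $(x,y)$ --- I would invoke the symmetry of flip-distance: $\flipdist$ is a metric on $\Emb$, so whenever a basic operation transforms the canonical embedding from $\emb(T)$ to $\emb(T')$ with $\flipdist(\emb(T),\emb(T'))\in\OO(1)$, its inverse transforms $\emb(T')$ to $\emb(T)$ with the same bound. Together with the forward cases this covers every basic operation, giving the corollary.

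The main obstacle --- already dispatched inside the individual lemmas rather than here --- is bounding the potential \emph{cascade} of forced flip-bit changes: toggling one flip-bit can force flip-bits of neighbouring edges via $\flipparity$, and can create new splittable/mergeable nodes. The cascade stays $\OO(1)$ because (a) by Observation~\ref{obs:fixedflipparity} a separation flip alters $\flipparity$ on at most $8$ local paths, and (b) by the pre-split invariants --- crucially, that every $P$-node with a solid child and non-$P$ parent is already pre-split --- the induced chain of splits and merges terminates after a constant number of steps. Since those lemmas are already proved, the corollary itself is just the bookkeeping above.
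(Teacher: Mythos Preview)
Your proposal is correct and takes essentially the same approach as the paper: the paper's own proof is the single sentence ``Thus, by the above lemmas, and by symmetry, we may conclude the following,'' which is exactly your strategy of citing the preceding lemmas for the forward operations and invoking symmetry for the inverses. Your version is simply more explicit about the case split and about which auxiliary operations (select, link-exposed, cut-exposed) are folded in, but there is no substantive difference in the argument.
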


\begin{proof}[Proof of Theorem~\ref{thm:canonical-emb} for biconnected graphs]
The canoncial embedding $\emb(G)$ defined here lives up to the criteria.
\end{proof}

\section{Embeddings of general planar graphs}\label{sec:dynEmb-general}

In this section, we generalise the definitions of flip bits, good embedding, and canonical embedding, from biconnected graphs to connected components of a graph. 
We show how to endow the BC-tree with information that specifies the embedding (Section~\ref{sec:BCembedding}), we define good embeddings (Section~\ref{sec:EmbGoodBiconn}), and we fix the canonical embedding by making arbitrary but consistent choices when the good embeddings are ambiguous (Section~\ref{sec:canonBC}).

\subsection{Specifying an embedding in a BC tree}\label{sec:BCembedding}
Given a connected planar graph $G=(V,E)$, and any relaxed BC tree for
it, we can specify an embedding of $G$ by choosing:
\begin{itemize}
\item A \emph{root} for the BC tree.
\item A \emph{flip-bit} for each BC edge. Changing this flip-bit corresponds to flips of the \emph{reflect} type.
\item A \emph{local embedding} $\emb(\Gamma(v))\in\Emb(\Gamma(v))$ for
  each $B$ node $v$ in the BC tree.
\item A \emph{local order} of the children of each $C$ node $c$.
\item A \emph{glue corner}\footnote{For any embedded graph, the \emph{corner} between a pair of consecutive edges corresponds to an edge of its medial graph; see e.g.~\cite{DBLP:journals/mst/HolmR17} for a definition.} in the graph $\emb(\Gamma(b))$ for each BC tree
  edge $(b, c)$ incident to a $B$ node $b$.  (We need a glue corner for
  each side of every edge, but the remaining corners are given by the
  local order at each $C$ node). Changing this glue corner corresponds to flips of the \emph{slide} type. 
\end{itemize}
Given these choices, we can now construct a unique embedding of $\Gamma(T_v)$ recursively for each BC node $v$ as follows:
\begin{itemize}
\item If $v$ is a leaf, it must be a $B$ node, and we just use the chosen
  local embedding $\emb(\Gamma(v))$.
\item If $v$ is a non-leaf $B$ node take each child $c_i$, construct
  $\emb(\Gamma(T_{c_i}))$ recursively, flip it if the flip-bit for
  $(v,c_i)$ is set, and then attach it at the glue corner chosen for
  $(v,c_i)$.
\item If $v$ is a $C$ node take each child $c_i$, construct
  $\emb(\Gamma(T_{c_i}))$ recursively, flip it if the flip-bit for
  $(v,c_i)$ is set, and then join the glue corners chosen for $(v,c_i)$ in
  the order chosen for $v$.  The corner between the first edge of $c_0$ and the last
  edge of $c_{\max}$ will be the glue corner by which this graph is attached to
  its parent.
\end{itemize}
Note that not all possible embeddings of $G$ can be described this
way\footnote{E.g. most embeddings of the so-called
  \emph{friendship graph} are excluded, but we retain exactly those
  embeddings where every vertex is in the outer face.}, but we will
restrict our attention to those that can.
Note also, that not all combinations of these choices give distinct
embeddings, e.g.:
\begin{itemize}
\item Changing the root either does nothing or flips the whole
  embedding.
\item For each non-root $B$ node $v$, if we change all the incident
  flip bits and flip the local embedding of $\Gamma(v)$ then we get
  exactly the same embedding of $G$.
\item For each non-root $C$ node $v$, if we change all the incident
  flip bits and reverse the local order at $v$ then we get
  exactly the same embedding of $G$.
\end{itemize}

\subsection{Good embeddings}\label{sec:EmbGoodBiconn}
Let $x,y\in V$ be distinct vertices in the connected planar graph
$G=(V,E)$.  Let $T$ be the unique pre-split BC-tree for $G$ that is
$u$-exposed and has root $r$ such that $m(x,y)=r\cdots u$.

We will define a set $\EmbGood(G; x,y)$ of good embeddings of $G$ by
defining flip-bits for (most of) the solid edges in $T$.  Then define
\begin{align*}
  \EmbGood(G) &:= \bigcup_{x,y\in V}\EmbGood(G; x,y)
\end{align*}

Given a solid path $M$ with a strut $(a,b)$ in a BC tree, we can define
a subset of good embeddings $\EmbSolid(\Gamma(M); a,b)$ for $\Gamma(M)$ as
follows:
\begin{itemize}
\item If $G_M=\Gamma(M)\cup(a,b)$ is planar, then since $G_M$ is
  biconnected, $\EmbGood(G_H; a,b)$ is defined in
  Section~\ref{sec:flip-dist} and we can just set
  \begin{align*}
    \EmbSolid(\Gamma(H); a,b) := \set*{
      A-(a,b)
      \suchthat
      A\in\EmbGood(G_H; a,b)
    }
  \end{align*}
\item Otherwise, in the pre-split SPQR tree for $G_M$, the path
  $m(a,b)$ in the pre-split SPQR-tree consists of a single $S$ node $r$, and in each child $c_i$
  the virtual edge $(x_i,y_i)$ corresponding to the parent edge has
  the property that if the corresponding subgraph $G_i$ is not planar
  then $G_i-(x_i,y_i)$ is biconnected and planar.  Thus for each
  $G_i$, if it is planar we can use any embedding in $\EmbGood(G_i)$
  as the local embedding, otherwise we can start with any embedding in
  $\EmbGood(G_i-(x_i,y_i))$ and add the virtual edge $(x_i,y_i)$ at
  arbitrary corners around $x_i$ and $y_i$ to get a local embedding.
  Finally, we can combine these local embeddings into an embedding of
  $\Gamma(M)$ as usual.  The set of all embeddings of $\Gamma(M)$ that
  can result from this process is then $\EmbSolid(\Gamma(M); a,b)$.
\end{itemize}

\begin{definition}
  Now for any connected planar graph $G=(V,E)$, and vertices $x,y\in V$,
  we can define $\EmbGood(G; x,y)$ based on the $u$-exposed pre-split BC
  tree with root $r$, where $r\cdots u=m(x,y)$, as the set of all
  embeddings where for each solid path $M$ with strut $(x',y')$, the
  local embedding of $\Gamma(M)$ is in $\EmbSolid(\Gamma(M); x',y')$.
\end{definition}

\begin{lemma}\label{lem:gen-embgood-insert}
  For any planar graph $G$ with $n$ vertices, and any edge $(x,y)\in
  G$: $\flipdist(\EmbGood(G-(x,y); x,y), \EmbGood(G; x,y))=1$.
\end{lemma}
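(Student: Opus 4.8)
The plan is to follow the proof of Lemma~\ref{lem:biconn-embgood-insert} (the biconnected analogue), exploiting that in the pre-split BC tree the critical path $m(x,y)$ is precontracted by a strut equal to $(x,y)$. Since $G-(x,y)$ and $G$ have different edge sets, every embedding of one differs from every embedding of the other, and the only graph-changing adjacency in the flip graph $\mathcal{G}$ is an edge insertion or deletion; hence $\flipdist(\EmbGood(G-(x,y);x,y),\EmbGood(G;x,y))\ge 1$ automatically. It therefore suffices to show the two one-sided Hausdorff bounds are at most $1$: that every $H\in\EmbGood(G-(x,y);x,y)$ admits inserting $(x,y)$ (a single step) so that the result lies in $\EmbGood(G;x,y)$, and symmetrically that every $H'\in\EmbGood(G;x,y)$ admits deleting $(x,y)$ to land in $\EmbGood(G-(x,y);x,y)$.

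For the first direction, let $T$ be the $u$-exposed pre-split BC tree of $G-(x,y)$ with root $r$ and $r\cdots u=m_{G-(x,y)}(x,y)$ used in the definition of $\EmbGood(G-(x,y);x,y)$, and let $M=r\cdots u$ be the solid root path, whose strut is $(x,y)$. The glued graph $\Gamma(M)$ together with $(x,y)$ is a subgraph of $G$ and hence planar; since it is also biconnected, $\EmbSolid(\Gamma(M);x,y)$ is by definition the set of embeddings $A-(x,y)$ with $A\in\EmbGood(B^\ast;x,y)$ where $B^\ast:=\Gamma(M)\cup(x,y)$, and by Definition~\ref{def:embspqr-solid} each such $A$ has $x$ and $y$ on a common face. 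Attaching the hanging blocks at cutvertices only merges the face where $x,y$ meet with the outer faces of those blocks, so in $H$ itself $x$ and $y$ still lie on a common face; inserting $(x,y)$ there at the corners where the strut sat returns $A$ on the block $B^\ast$ while leaving every other block, and all the BC-tree and SPQR-tree structure, flip-bits, glue corners and local orders, untouched (the insertion merely contracts $m_{G-(x,y)}(x,y)$ into the single $B$-node $B^\ast$, promoting the strut to the real edge $(x,y)$). Since the resulting embedding $H'$ has a good biconnected embedding on $B^\ast$ and agrees with $H$, hence with $\EmbGood(G;x,y)$, everywhere else, we get $H'\in\EmbGood(G;x,y)$; this uses one step and no flips. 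The second direction is the mirror image: for $H'\in\EmbGood(G;x,y)$ the restriction to the block $B^\ast=m_G(x,y)$ lies in $\EmbGood(B^\ast;x,y)$, so deleting $(x,y)$ leaves an embedding whose restriction to $B^\ast-(x,y)=\Gamma(M)$ lies in $\EmbSolid(\Gamma(M);x,y)$ and is unchanged elsewhere, i.e. a member of $\EmbGood(G-(x,y);x,y)$. Combining the two directions gives Hausdorff distance at most $1$, hence exactly $1$.

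The main obstacle is purely the bookkeeping around the change the BC tree undergoes when $(x,y)$ is inserted: one must verify that the pre-split BC tree of $G$ with $m_G(x,y)$ as solid root path restricts, outside $B^\ast$, to the pre-split BC tree of $G-(x,y)$ outside $M$ (same hanging blocks, re-parented onto $B^\ast$ but with identical embedding data), and that the constraint $\EmbGood(G;x,y)$ imposes on the local embedding of $B^\ast$ is exactly the biconnected $\EmbGood(B^\ast;x,y)$ — both of which follow from choosing the strut of the solid root path to be $(x,y)$. A secondary case is when $(x,y)$ is a bridge of $G$, so $G-(x,y)$ is disconnected; then $m(x,y)$ and its strut live across the two BC trees joined by $(x,y)$ in the general-graph construction, $B^\ast$ is the single-edge block created by the bridge, and the same argument — add the bridge across a face shared by $x$'s and $y$'s components — goes through verbatim. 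Neither of these should present a genuine difficulty once the definitions of Section~\ref{sec:EmbGoodBiconn} are unwound.
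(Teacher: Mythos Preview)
Your proposal is correct and follows essentially the same approach as the paper: both exploit that the strut on the solid root path $m(x,y)$ is precisely $(x,y)$, so ``inserting'' the edge is just promoting the strut to a real edge with no further change to the embedding. The paper's proof is three lines because the construction was engineered to make exactly this step trivial; you spell out the intermediate claims (that $x$ and $y$ share a face in the full embedding $H$ after hanging blocks are attached, that the BC-tree data outside $B^\ast$ is unchanged, and the bridge case) that the paper leaves implicit. Your added care on the bridge case is a point the paper's short proof glosses over.
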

\begin{proof}
  By definition $\EmbGood(G-(x,y);x,y)$ and $\EmbGood(G;x,y)$ are
  defined based on the same pre-split BC tree, and $(x,y)$ is already
  inserted as a strut. Converting the strut to a real edge does not
  change anything else in the embedding, so the only change to the
  embedding needed when inserting $(x,y)$ is the actual insertion of
  the edge. 
\end{proof}

\begin{lemma}\label{lem:gen-embgood-mpath}
  For any planar graph $G$ with $n$ vertices, and any two pairs of
  distinct vertices $x_1,y_1$ and $x_2,y_2$: $\flipdist(\EmbGood(G;
  x_1,y_1), \EmbGood(G; x_2,y_2))\in \OO(\log n)$.
\end{lemma}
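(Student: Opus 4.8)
The plan is to mirror the structure of the biconnected proof of Lemma~\ref{lem:biconn-embgood-mpath}, but now carried out one level up, at the level of the pre-split BC-tree, using the fact that all the BC-tree analogues of the SPQR machinery (pre-splitting, expose/conceil/reverse, sever/meld) have already been shown to incur only a bounded number of changes per solid/dashed edge toggle.

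First I would observe that $\EmbGood(G;x_1,y_1)$ and $\EmbGood(G;x_2,y_2)$ are each determined by the same relaxed BC-tree $T$ for $G$, but rooted and exposed differently: the former is based on the $u_1$-exposed pre-split BC-tree with root $r_1$ where $r_1\cdots u_1=m(x_1,y_1)$, and the latter on the $u_2$-exposed pre-split BC-tree with root $r_2$ where $r_2\cdots u_2=m(x_2,y_2)$. So the task is to bound, as a function of $\log n$, the total number of flips in the embedded graph $G$ incurred when transforming one such pre-split, exposed, rooted BC-tree into the other. By the BC-tree analogue of Lemma~\ref{lem:spqr-mpath-changecount} (which follows from the same conceil$(r_1)$, expose$(r_2)$, reverse, conceil$(r_2)$, expose$(u_2)$ sequence together with Lemma~\ref{lem:dyntree-internal-changecount} and the light-depth bound Lemma~\ref{lem:treedecomp-lightdepth} applied to the $3$-positive weights of the BC-tree), the number of BC-tree edges that change status between solid and dashed is $\OO\paren*{1+\log\frac{w(T)}{w(x_1)}+\log\frac{w(T)}{w(y_1)}+\log\frac{w(T)}{w(x_2)}+\log\frac{w(T)}{w(y_2)}}=\OO(\log n)$, since every vertex has weight at least $1$ and $w(T)\le n$. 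I would also invoke the BC-tree pre-splitting results: by the analogue of Theorem~\ref{thm:SPQRthm} for BC-trees, restoring the pre-split invariants costs $\OO(1)$ splits/merges per such edge toggle, so the total number of BC-tree splits, merges, and edge toggles is also $\OO(\log n)$.

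Next I would account for the flips in $G$. Each change in the BC-tree structure --- an edge toggle, a split, a merge, or the reverse of the root path --- must be translated, via the strut-based precontraction of solid paths (Section~\ref{sec:dynbc-precontract}), into changes in the SPQR-forest over the blocks of $G_s$, and thence into actual flips in the embedding via the flip-bits, local orders, and glue corners from Section~\ref{sec:BCembedding}. The key point is that each solid/dashed toggle in the BC-tree changes $\OO(1)$ struts, and by the sever/meld analysis (Lemmas~\ref{lem:ps-meld},~\ref{lem:ps-sever}, together with the biconnected-case corollary that each basic operation on a pre-split SPQR tree results in $\OO(1)$ separation flips) each such strut change costs only $\OO(\log n)$ flips --- but actually, using the weighted version (Observation~\ref{obs:treedecomp-internal-telescope} and Lemma~\ref{lem:dyntree-external-cost}, whose whole purpose, as the excerpt notes, is ``to count both the number of separation flips in the SPQR trees and the sum of those changes in the BC-trees, to get a total change cost of $O(\log n)$''), the sum over all $\OO(\log n)$ BC-tree changes of the associated SPQR-flip costs telescopes to $\OO(\log n)$. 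The reverse of the root path contributes $\OO(1)$ by the BC analogue of Lemma~\ref{lem:reverse-changecount}. Finally, one must check the flip-parity bookkeeping: changing a glue corner or a flip-bit may itself alter the flip-parity of up to a constant number of local paths (Observation~\ref{obs:fixedflipparity}), so this only inflates the count by a constant factor. Summing everything gives $\flipdist(\EmbGood(G;x_1,y_1),\EmbGood(G;x_2,y_2))\in\OO(\log n)$.

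The main obstacle I expect is the weighting/telescoping step: making precise that the per-BC-edge cost of the induced SPQR changes is not a fixed $\OO(\log n)$ each (which would give a $\log^2 n$ bound, the weaker bound mentioned in the overview) but rather a cost of the form $\OO(1+\log\frac{w(T_{p(v)})}{w(T_v)})$ that telescopes along the BC-tree path, exactly as in Lemma~\ref{lem:dyntree-external-cost} and Observation~\ref{obs:treedecomp-internal-telescope}. This requires correctly propagating the BC-tree vertex weights down into the SPQR-trees of the blocks (via the representative $a(x)$ and the path-representation fragments of Section~\ref{sec:dynspqr-precontract}) so that the light-depth bounds can be applied uniformly, and verifying that the sever/meld operations respect these weights. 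A secondary subtlety is the non-planar-strut case in the definition of $\EmbSolid(\Gamma(M);a,b)$: when $\Gamma(M)\cup(a,b)$ is not planar one must check that the freedom in placing the virtual edges $(x_i,y_i)$ at arbitrary corners does not force extra flips when transforming between the two exposed trees --- but since those corners are ``arbitrary but fixed'' degrees of freedom within each good-embedding class, the Hausdorff distance lets us choose them to match, so this contributes nothing.
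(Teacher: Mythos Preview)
Your proposal is correct and follows essentially the same approach as the paper: bound the number of solid/dashed edge toggles (and associated splits/merges) when re-exposing and re-rooting the pre-split BC tree from $m(x_1,y_1)$ to $m(x_2,y_2)$, and then use the telescoping of Lemma~\ref{lem:dyntree-internal-telescope} (equivalently Observation~\ref{obs:treedecomp-internal-telescope} / Lemma~\ref{lem:dyntree-external-cost}) to sum the induced SPQR-level changes to $\OO(\log n)$ total rather than $\OO(\log n)$ per BC-edge. Your write-up is considerably more detailed than the paper's terse proof and correctly flags the telescoping as the step that avoids $\OO(\log^2 n)$; the only superfluous piece is the flip-parity/glue-corner bookkeeping, which pertains to the \emph{canonical} embedding $\emb(G)$ rather than to the class $\EmbGood(G;x,y)$, where dashed flip-bits and local orders are free and hence cost nothing under the Hausdorff distance.
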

\begin{proof}
  The minimal flip-distance is determined by the total number of edge
  changes when changing from a pre-split BC tree that is $u_1$-exposed
  with root $r_1$ where $r_1\cdots u_1=m(x_1,y_1)$ to one that is
  $u_2$-exposed with root $r_2$ where $r_2\cdots u_2=m(x_2,y_2)$ is
  bounded by the number of edges changing between solid and dashed in
  the BC and SPQR trees, and the number of merges and splits.  By
  Lemma~\ref{lem:dyntree-internal-telescope}   this number
  is $\OO(\log n)$. 
  Thus, given any embedding $\emb_1\in \EmbGood(G;x_1,y_1)$, performing those $\OO(\log n)$ local changes will yield a local embedding in $\EmbGood(G;x_2,y_2)$ (and a symmetric argument applies to any $\emb_2\in \EmbGood(G;x_2,y_2)$). 
\end{proof}

\begin{theorem}\label{thm:good-emb-bc}
  For any planar graph $G$ with $n$ vertices, and any edge $e\in G$:
  $\flipdist(\EmbGood(G-e),\EmbGood(G))\in\OO(\log n)$
\end{theorem}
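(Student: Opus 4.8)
The plan is to mirror exactly the structure of the biconnected case (Theorem~\ref{thm:biconn-embgood-flipdist}): use the triangle inequality for the Hausdorff distance $\flipdist$ to split the distance $\flipdist(\EmbGood(G-e),\EmbGood(G))$ into three pieces anchored at a fixed choice of endpoints $x,y$ for $e$, and bound each piece separately using the two lemmas already proved in this subsection.

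Concretely, write $e=(x,y)$ and apply the triangle inequality to get
\begin{align*}
  \flipdist(\EmbGood(G-e),\EmbGood(G))
  &\leq \flipdist(\EmbGood(G-e),\EmbGood(G-e;x,y))
  \\
  &\quad + \flipdist(\EmbGood(G-e;x,y),\EmbGood(G;x,y))
  \\
  &\quad + \flipdist(\EmbGood(G;x,y),\EmbGood(G)).
\end{align*}
The middle term is $1\leq\OO(\log n)$ by Lemma~\ref{lem:gen-embgood-insert}. For the first and last terms, recall that $\EmbGood(G-e)=\bigcup_{x',y'}\EmbGood(G-e;x',y')$ and $\EmbGood(G)=\bigcup_{x',y'}\EmbGood(G;x',y')$ are unions over all vertex pairs. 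For the Hausdorff distance from a set $A$ to a finite union $\bigcup_i B_i$ that contains one particular $B_{i_0}$: every point of $A=\EmbGood(G-e;x,y)$ is within $\flipdist(\EmbGood(G-e;x,y),\EmbGood(G-e;x,y))=0$ of a point of $B_{i_0}=\EmbGood(G-e;x,y)\subseteq\EmbGood(G-e)$; and every point of $\EmbGood(G-e)$ lies in some $\EmbGood(G-e;x',y')$, which by Lemma~\ref{lem:gen-embgood-mpath} is within $\OO(\log n)$ of a point of $\EmbGood(G-e;x,y)$. Hence $\flipdist(\EmbGood(G-e),\EmbGood(G-e;x,y))\in\OO(\log n)$, and symmetrically for the last term with $G$ in place of $G-e$. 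Summing gives the claim.

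The routine-but-necessary technical point — the one I'd expect to be the only real obstacle — is the interaction between the Hausdorff $\max$-$\min$ and the set-union structure: one must verify that a pairwise bound $\flipdist(\EmbGood(G;x',y'),\EmbGood(G;x,y))\in\OO(\log n)$ holding for \emph{every} pair $(x',y')$ really does upgrade to $\flipdist\bigl(\bigcup_{x',y'}\EmbGood(G;x',y'),\,\EmbGood(G;x,y)\bigr)\in\OO(\log n)$, and that the constant hidden in $\OO(\log n)$ does not blow up with the number of pairs. It does not: in the $\max_{a}\min_{b}$ direction, for $a\in\bigcup_{x',y'}\EmbGood(G;x',y')$ we pick the summand $\EmbGood(G;x',y')$ containing $a$ and apply Lemma~\ref{lem:gen-embgood-mpath} once; in the $\max_{b}\min_{a}$ direction, $b\in\EmbGood(G;x,y)$ is itself in the union so the distance is $0$. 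So the bound is genuinely uniform, and the proof is essentially a one-line invocation of Lemma~\ref{lem:gen-embgood-insert}, Lemma~\ref{lem:gen-embgood-mpath}, and the triangle inequality, exactly as in Theorem~\ref{thm:biconn-embgood-flipdist}.
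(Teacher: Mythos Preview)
Your proposal is correct and is essentially identical to the paper's own proof: the paper writes $e=(x,y)$, applies the same triangle inequality decomposition into three terms, and bounds the middle term by Lemma~\ref{lem:gen-embgood-insert} and the outer two by Lemma~\ref{lem:gen-embgood-mpath}. Your additional paragraph unpacking why the pairwise bound of Lemma~\ref{lem:gen-embgood-mpath} upgrades to a Hausdorff bound against the full union is a detail the paper leaves implicit, but the argument is the same.
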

\begin{proof}
  Let $e=(x,y)$. Then by the triangle inequality
  \begin{align*}
    \flipdist(\EmbGood(G-e),\EmbGood(G))
    &\leq \flipdist(\EmbGood(G-e),\EmbGood(G-e;x,y))
    \\
    &\quad + \flipdist(\EmbGood(G-e;x,y),\EmbGood(G;x,y))
    \\
    &\quad + \flipdist(\EmbGood(G;x,y),\EmbGood(G))
  \end{align*}
  By Lemma~\ref{lem:gen-embgood-mpath} the first and last of these
  terms is $\OO(\log n)$ and by Lemma~\ref{lem:gen-embgood-insert}
  the middle term is $1$.
\end{proof}

\subsection{Glue-corners and flip-parity}

Similarly to Section~\ref{sec:flipp}, the embeddings we construct from BC-trees are recursively ``glued'' by combining smaller embedded graphs $\emb(G_1),\emb(G_1)$ at their \emph{glue corners} $c_1,c_2$. Each time we do this, we have the choice of which corners to glue in, and that of whether or not to flip the child before gluing.  We want to make these choices consistently over time, even as parts of the graph $G_i$ containing $c_i$ changes. In particular, whenever possible, we want the glue corner to stay fixed, and, if the local subgraph of $G_i$ surrounding $c_i$ gets flipped, we want that the choice of whether to flip $\emb(G_i)$ before gluing is reversed.  We can formalize this by introducing the concept of a \emph{glue-corner function}, and defining a  \emph{flip-parity} bit for each corner $G_i$.

A \emph{glue-corner function} is a function which for every plane embedded multigraph $G$ and every vertex $v\in G$ assigns a corner $\gluecorner(G,v)\in G$ such that:
\begin{itemize}
\item If $G$ is connected and $G'\neq G$ is the mirror embeding of
  $G$, then for every vertex $v\in G$, $\gluecorner(G,v)=\gluecorner(G',v)$.

  That is, flipping a connected graph with at least one vertex of
  degree $\geq 3$ does not change any glue corners.

\item If $G'$ is a connected component of $G$, then for every vertex
  $v\in G'$, $\gluecorner(G,v)=\gluecorner(G',v)$.

  That is, the glue corners are determined independently for each
  connected component.

\item If $G=G'\setminus e'$ where $G'$ is a plane embedded multigraph,
  then for every vertex $v\in G$, exept vertices on the local path of
  $e'$ in $G'$ or vertices that are internal to the at most $4$ local
  paths that are its neighbors in $G'$,
  $\gluecorner(G,v)=\gluecorner(G',v)$.

  Furthermore, for every two vertices $v_1,v_2$ of degree $2$ that are
  on the same local path,
  \begin{align*}
    \gluecorner(G,v_1)=\gluecorner(G',v_1)
    \iff
    \gluecorner(G,v_2)=\gluecorner(G',v_2)
  \end{align*}

  That is, deleting an edge only changes the glue-corners on its local
  path and its neighbors, and if one vertex of degree $2$ changes its
  glue corner, then all other degree $2$ vertices on the same local
  path changes their glue-corner as well.
\end{itemize}

In addition to the glue-corner function we also need to define a corresponding \emph{flip-parity function for corners}, as a function which for every plane embedded multigraph $G$ and every corner $c\in G$ assigns a value $\flipparity(G,c)\in\set{0,1}$ such that:
\begin{itemize}

\item
  \begin{sloppypar}
    If $G$ is connected and $G'\neq G$ is the mirror embedding of
    $G$, then for every corner $c\in G'$,
    $\flipparity(G',c)=1-\flipparity(G,c)$.

    That is, flipping a connected graph with at least one vertex of
    degree $\geq3$ toggles the flip parity of every corner.
  \end{sloppypar}

\item
  \begin{sloppypar}
    If $G'$ is a connected component of $G$, then for every corner
    $c\in G'$, $\flipparity(G',c)=\flipparity(G,c)$.

    That is, the flip
    parity is determined independently for each connected component.
  \end{sloppypar}

\item If $G=G'\setminus e'$ where $G'$ is a plane embedded multigraph,
  then for every vertex $v\in G$ of degree $2$, we have
  \begin{align*}
    \gluecorner(G,v)&\neq\gluecorner(G',v)
    \\
    &\Updownarrow
    \\
    \flipparity(G,\gluecorner(G,v))&=1-\flipparity(G',\gluecorner(G',v))
  \end{align*}

  For all other vertices, exept vertices on the local path of
  $e'$ in $G'$ or vertices that are at the end of the at most $4$ local
  paths that are its neighbors in $G'$,
  $\flipparity(G,\gluecorner(G,v))=\flipparity(G',\gluecorner(G',v))$.

  That is, when deleting an edge, the flips caused by the changing
  glue corners are grouped such that whole local paths can be
  flipped at the same time.
\end{itemize}

\begin{observation}
  If the plane multigraphs $G_1$ and $G_2$ are joined into a new graph
  $G'$ by gluing them together at corners $c_1\in G_1$ and $c_2\in
  G_2$ at some common vertex, then for all vertices $v\in G'$ with the
  exception of vertices on at most $4$ local paths,
  \begin{align*}
    \gluecorner(G',v) &= 
    \begin{cases}
      \gluecorner(G_1,v) & \text{if $v\in G_1$}
      \\
      \gluecorner(G_2,v) & \text{if $v\in G_2$}
    \end{cases}
  \end{align*}
\end{observation}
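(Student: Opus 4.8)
The plan is to prove this exactly as the structurally identical flip-parity observation of Section~\ref{sec:flipp} is proved: by expressing the vertex-gluing as a (possibly long) sequence of single-edge deletions and invoking the deletion-locality axiom of the glue-corner function together with the component-independence axiom. The starting point is to spell out the rotation of $G'$ at the shared vertex $v_0$. By definition of gluing at corners $c_1\in G_1$ and $c_2\in G_2$, the cyclic order of edges around $v_0$ in $G'$ consists of the block of $G_1$-edges around $v_0$ (in their $G_1$-order) with the block of $G_2$-edges around $v_0$ (in their $G_2$-order) spliced in at the position $c_1$, and the rotation at every other vertex is inherited unchanged from whichever of $G_1,G_2$ the vertex came from. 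Thus around $v_0$ there are exactly two \emph{interface} positions, each with a $G_1$-edge on one side and a $G_2$-edge on the other. Let $\mathcal{P}$ be the set of local paths of $G'$ that run through one of these (at most four) flanking edges; then $\abs{\mathcal{P}}\le 4$, and since $v_0$ is an endpoint of the local path of each edge incident to it, $v_0$ itself lies on a path in $\mathcal{P}$.

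The core claim is then that $\gluecorner(G',v)=\gluecorner(G_1,v)$ for every $v\in V(G_1)$ not lying on any path of $\mathcal{P}$, and symmetrically for $G_2$. To see this, delete from $G'$, one at a time and in any order, all edges not belonging to $G_1$; the resulting plane multigraph is exactly $G_1$. For a vertex $v\in V(G_1)\setminus\bigcup\mathcal{P}$ (in particular $v\neq v_0$), at every intermediate stage the local path containing $v$ consists only of $G_1$-vertices and does not reach $v_0$ through an interface edge, whereas the edge being deleted and all of its at most four neighbouring local paths lie on the $G_2$-side or are interface paths; hence $v$ never sits on the local path of a deleted edge nor internal to one of its neighbours, and the deletion axiom for $\gluecorner$ leaves the glue corner of $v$ unchanged at every step. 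Composing these equalities gives $\gluecorner(G',v)=\gluecorner(G_1,v)$, and the same argument with the roles of $G_1$ and $G_2$ interchanged handles $V(G_2)$. Since the union of the two families of interface paths still has size at most $4$, every vertex outside these paths has the claimed glue corner, which is the statement. (The analogous flip-parity claim for corners follows by tracking, alongside each deletion, the at most four local paths whose flip-parity can be disturbed.)

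The step I expect to need the most care is the bookkeeping around $v_0$ in the two degenerate situations: when one of the $G_i$ contributes only a single edge at $v_0$ (so $v_0$ has degree $\le 2$ on that side and an interface local path can extend deep into $G_i$), and when $v_0$ ends up isolated or of degree $2$ in $G'$ (so it is interior to, rather than an endpoint of, a local path). In both cases one must check that the affected vertices are still absorbed into the four-path budget and that the intermediate graphs arising during the peeling remain legitimate plane multigraphs so the deletion axiom applies verbatim; none of this is deep, but it is where a careless argument would lose the constant. An alternative that shifts the difficulty elsewhere is to work with the graph obtained from the disjoint union $G_1\sqcup G_2$ by inserting one new edge $f$ joining the two copies of $v_0$ at the corners $c_1,c_2$: removing $f$ recovers $G_1\sqcup G_2$ and contracting $f$ recovers $G'$, so one application of the deletion axiom plus a direct local analysis of the effect of the contraction at $v_0$ also gives the bound; but since the glue-corner axioms only speak about deletion, I would prefer the peeling argument above.
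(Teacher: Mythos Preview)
The paper states this as an observation with no proof, treating it (like the analogous flip-parity observation in Section~\ref{sec:flipp}) as immediate from the axioms of a glue-corner function. There is therefore nothing in the paper to compare against beyond the axioms themselves.

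Your peeling argument is a correct and considerably more careful justification than the paper offers. The crucial point you isolate is the right one: every vertex of $G_1\setminus\{v_0\}$ keeps its degree throughout the deletion of the $G_2$-edges, so the local-path structure on the $G_1$ side is unchanged away from $v_0$; and because the $G_1$-block in the cyclic order at $v_0$ never moves while the $G_2$-block shrinks, the only $G_1$-edges that can ever be cyclically adjacent to a $G_2$-edge at $v_0$ are the two fixed interface edges. Hence at every deletion step the exceptional set of the deletion axiom, restricted to $G_1$, lies inside the (at most two) interface local paths on the $G_1$ side, and symmetrically for $G_2$. The degenerate cases you flag do not break this: when $v_0$ drops to degree~$2$ during the peel, the single remaining $G_1$-edge at $v_0$ is by definition an interface edge, so any local path crossing $v_0$ is already in the four-path budget.
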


\begin{observation}
  If an articulation flip in $G$ flips a subgraph $A$ 
  to create a new graph $G'$, then for all vertices $v\in G'$ with
  the exception of vertices on at most $4$ local paths,
  \begin{align*}
    \gluecorner(G',v) &= \gluecorner(G,v)
  \end{align*}
\end{observation}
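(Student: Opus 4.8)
The statement to prove is an \emph{observation} about the behaviour of the glue-corner function under articulation flips: if an articulation flip in $G$ flips a subgraph $A$ to yield $G'$, then $\gluecorner(G',v)=\gluecorner(G,v)$ for every vertex $v$ except those lying on at most $4$ local paths. The natural approach is to reduce the articulation flip to a sequence of the two primitive operations for which the glue-corner function's behaviour is already axiomatised: edge deletion (and its inverse, edge insertion) and component split/join. An articulation flip at a cutvertex $x$ reflects one of the ``lobes'' hanging off $x$. The plan is to realise this reflection by (i) temporarily severing the lobe $A$ from the rest of $G$ at $x$ — i.e. splitting off the component containing $A$, which by the component-independence axiom of $\gluecorner$ changes no glue corners; (ii) observing that for a \emph{connected} graph, passing to its mirror embedding leaves all glue corners fixed (the first axiom of the glue-corner function); and (iii) re-gluing the (now reflected) lobe back at $x$, which by the join observation (the first displayed observation preceding this one) changes glue corners only on at most $4$ local paths.

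First I would make precise what ``the subgraph $A$ flipped by an articulation flip'' means: $A$ is the union of a nonempty proper subset of the connected components of $G-\{x\}$ together with their edges to $x$, and the flip reflects the induced embedding of $A$ about $x$ while keeping the rest of $G$ and the cyclic position of $A$'s block of edges around $x$ fixed (this is exactly the \emph{reflect} variant in Figure~\ref{fig:articulation-flip}; the \emph{slide} variant is handled by the glue-corner \emph{position} and does not move $A$'s internal corners at all, so it is the easier case and I would dispatch it first). Next I would split the argument into: vertices strictly inside $A$, vertices strictly outside $A\cup\{x\}$, the cutvertex $x$ itself, and vertices on the few local paths adjacent to the gluing point. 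For vertices strictly inside $A$: after detaching $A$ at $x$ we have a connected embedded graph (or a disjoint union of connected pieces, one per chosen component of $G-\{x\}$); reflecting it is exactly the mirror embedding of each connected piece, which by the first $\gluecorner$ axiom fixes every glue corner inside. For vertices strictly outside: the detach-and-reattach leaves their surrounding embedding untouched, and by component-independence their glue corners are unchanged throughout. The only vertices that can change are $x$ and those on the at most $4$ local paths incident to the corner where $A$ is re-attached, which is precisely the exceptional set in the join observation.

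The main obstacle I anticipate is bookkeeping the ``seam'': the corner of the outer graph at $x$ into which $A$ is glued, and the corner of $A$ at $x$ that meets it, together with the two edges flanking each — these are the local paths through $x$ and its immediate neighbours, and I must verify that ``at most $4$ local paths'' genuinely suffices rather than, say, $8$. Here the key point is that an articulation flip (unlike a general merge of two arbitrary multigraphs) glues at a \emph{vertex}, not along a virtual \emph{edge}, so only one corner on each side is affected and the edges flanking it lie on at most $2$ local paths per side, giving $4$ in total; I would spell this out carefully, noting that the ``slide'' component, if present, only moves $A$'s edge-block to a different corner of the outer graph and so affects only the local paths of the outer graph at $x$, again at most $2$. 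A secondary subtlety is that detaching $A$ is not literally an edge deletion but a vertex split at $x$; however, since $\gluecorner$ is required to be determined independently per connected component, I can invoke component-independence directly without routing through edge deletions, which keeps the argument clean. Assembling these pieces gives the observation; I would phrase the whole thing as a short case analysis mirroring the proof of Observation~\ref{obs:fixedflipparity} in Section~\ref{sec:flipp}, since the situation is the exact BC-tree analogue of the SPQR-tree separation-flip case already treated there.
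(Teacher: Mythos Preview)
The paper states this as an \emph{observation} with no proof; it is meant to follow more or less directly from the axioms of the glue-corner function together with the preceding join observation. Your detach--mirror--reattach decomposition is exactly the intended reading, and your case split (inside $A$, outside $A$, the seam at $x$) is the right structure.

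One point deserves more care than you give it. You write that detaching $A$ at $x$ ``uses component-independence directly,'' but the component-independence axiom says that if $G'$ is a connected component of $G$ then glue corners in $G'$ agree with those in $G$. After a vertex split at $x$, the two pieces are \emph{not} connected components of the original $G$; they are new graphs in which $x$ has been duplicated and its incident-edge set partitioned. So you cannot literally invoke that axiom for the detach step. The cleanest fix is to treat detach as the inverse of the join observation: if joining two graphs at a corner changes glue corners on at most $4$ local paths, then so does the inverse split, and those $4$ local paths are the same ones affected by the re-join, so you do not double-count to $8$. Alternatively, you can verify the observation directly for the concrete $\gluecorner$ the paper constructs immediately afterward (which depends only on the minimum-id edge at $v$ and its flip-parity), and note that this suffices since the paper only ever uses that particular function. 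Either route closes the gap; just don't leave the detach step resting on the component axiom alone.
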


The point is that given such functions, we can use
$\flipparity\paren[\Big]{\emb\paren[\big]{\Gamma(b)},\,
  \gluecorner\paren[\big]{\emb\paren[\big]{\Gamma(b)},c} }$ as the flip-bit for any edge $(b,c)$ where $b$ is a $B$ node and $c$ is a $C$ node.

    To show that these functions do exist, we here give an example of one possible definition:

Given a vertex $v$ in a plane embedded graph $G$, define
\begin{align*}
  \gluecorner(G,v) &:=
  \begin{cases}
    c^- & \text{if $\flipparity(G,e_0)=0$ \qquad (see Section~\ref{sec:flipp})}
    \\
    c^+ & \text{otherwise}
  \end{cases}
  \\
  \text{where}&
  \\
  e_0 &= \text{the minimum-id edge incident to $v$ in $G$}
  \\
  c^- &= \text{the corner clockwise preceding $e_0$ around $v$ in $G$}
  \\
  c^+ &= \text{the corner clockwise succeeding $e_0$ around $v$ in $G$}
\end{align*}

Given a corner $c$ between clockwise consecutive edges $e^-,e^+$ (not necessarily distinct) in a plane embedded graph $G$, define
\begin{align*}
  \flipparity(G,c) :=
  \begin{cases}
    0 & \text{if $\operatorname{id}(e^-)\leq\operatorname{id}(e^+)$}
    \\
    1 & \text{otherwise}
  \end{cases}
\end{align*}

By construction, these definitions of flip-parity and glue-corner have the properties listed above.

\subsection{The canonical embedding}\label{sec:canonBC}

We will use a slightly modified version of the technique from
Section~\ref{sec:BCembedding} to assign a unique embedding $\emb(G;
x,y)$ to every connected planar graph $G=(V,E)$ and 	every pair of
vertices $x,y\in V$.

Define the \emph{compressed} BC tree as the tree obtained from the
pre-split BC tree by contracting each solid path into a single node.

For each node $v$ in the compressed tree, select the local embedding
of $\Gamma(v)$ as follows:
\begin{itemize}
\item For each node in the compressed tree that is either an isolated
  $B$ node or corresponds to a path $M$, we can define the local
  embedding as the embedding of $\Gamma(M)$ from
  step~\ref{sec:EmbGoodBiconn}.

\item For each $C$ node, let $e_1,\ldots,e_k$ be the neighboring
  edges, ordered by increasing $\operatorname{id}(e_i)$, where the id
  of a virtual edge corresponding to the parent is counted as
  $-\infty$, and the id an edge corresponding to a child $c$
  is $\min_{v\in T_c}\operatorname{id}(v)$.

  Choose the local order of the children such that the clockwise
  cyclic order around $v$ is $c_1,\ldots,c_k$.
\end{itemize}

For any edge $(b,c)$ where $b$ is a $B$ node and $c$ is a $C$ node, we
can now define
\begin{align*}
  \operatorname{flip-bit}(c,b) := \flipparity\paren[\Big]{\emb\paren[\big]{\Gamma(b)},\,
  \gluecorner\paren[\big]{\emb\paren[\big]{\Gamma(b)},c} }
\end{align*}
thereby forcing (most of) the children to follow the embedding of the
parent even as that graph changes.

Finally define $\emb(G; x,y)$ using the simple recursive
definition on the compressed tree.

With these definitions in place, the proof of the following lemma goes along the same lines as in Section~\ref{sec:canonical}.

  \begin{lemma}
    For any planar graph $G$ with $n$ nodes, and any two pairs of
    distinct vertices $x_1,y_1$ and $x_2,y_2$: $\flipdist(\emb(G;
    x_1,y_1), \emb(G; x_2,y_2))\in\OO(\log n)$.\todo{proof?}
  \end{lemma}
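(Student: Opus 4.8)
The statement to prove is that the canonical embeddings $\emb(G;x_1,y_1)$ and $\emb(G;x_2,y_2)$ for a connected planar graph $G$ are at flip-distance $\OO(\log n)$. The overall strategy mirrors exactly the biconnected case (Section~\ref{sec:canonical}): we view the passage from one canonical embedding to the other as a sequence of \emph{basic operations} on the pre-split BC tree (and the pre-split SPQR trees hanging off its solid paths), and we charge $\OO(1)$ flips to each such operation. The first step is to observe that $\emb(G;x_1,y_1)$ is built from the $u_1$-exposed pre-split BC tree with root $r_1$ where $r_1\cdots u_1=m(x_1,y_1)$ (together with the SPQR trees of its blocks, each rooted in an S-node formed by the relevant strut), and likewise $\emb(G;x_2,y_2)$ from the $u_2$-exposed pre-split BC tree. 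By Lemma~\ref{lem:treedecomp-external-optimes} (or rather Observation~\ref{obs:treedecomp-internal-telescope} and Lemma~\ref{lem:dyntree-external-cost} applied with the BC-tree weights feeding into the SPQR-tree weights), the reconfiguration between these two states decomposes into conceil/expose/reverse on the BC tree, which in turn triggers meld/sever operations on the SPQR forest, and the total number of solid$\leftrightarrow$dashed edge changes (summed across the BC tree and all involved SPQR trees), plus the accompanying splits and merges, is $\OO(\log n)$.

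\textbf{Key steps.} First I would assemble, from the earlier sections, the list of ``basic operations'' involved: on the BC side, expose, conceil, reverse, link-exposed, cut-exposed, split, merge, meld, sever (Sections~\ref{sec:SPQRBC}, \ref{sec:dynbc-precontract}); and on the SPQR side, the analogous operations plus select, insert/uninsert of struts (Section~\ref{sec:dynSPQR}). Second, I would quote the counting bound: by Theorem~\ref{thm:SPQRthm} and its BC analogue, plus Lemma~\ref{lem:ps-meld} and Lemma~\ref{lem:ps-sever}, the number of splits/merges is proportional to the number of edges changing status, and by Lemma~\ref{lem:dyntree-external-cost}/Observation~\ref{obs:treedecomp-internal-telescope} that number is $\OO(\log n)$ when we sum the BC-tree light-depth bound with the SPQR-tree light-depth bounds (using the $b(x)$-ancestor weighting). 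Third — the substantive part — I would show that each basic operation incurs only $\OO(1)$ flips in the \emph{general} (BC) embedding. This reuses the biconnected corollary (``each basic operation on the pre-split SPQR tree results in $\OO(1)$ separation flips'') for everything happening inside a block, and adds the articulation-flip analysis: changing a flip-bit on a BC edge is one \emph{reflect} flip; changing a glue-corner is one \emph{slide} flip; by the properties of the glue-corner and corner-flip-parity functions (Section~\ref{sec:canonBC}, the two displayed observations), splitting or merging a $C$ node, splicing/severing solid paths, reversing the root path, and converting a strut to a real edge each disturb the glue-corners and flip-bits of only $\OO(1)$ local paths, hence cost $\OO(1)$ articulation flips; and struts that become incompatible with planarity (or newly compatible) are handled by $\EmbSolid$, which by Lemma~\ref{lem:gen-embgood-insert}-style reasoning changes only $\OO(1)$ things per strut. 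Finally, combining: $\OO(\log n)$ basic operations, each costing $\OO(1)$ flips, gives $\flipdist(\emb(G;x_1,y_1),\emb(G;x_2,y_2))\in\OO(\log n)$, and since the Hausdorff distance here is between singletons this is just the path length in the flip-graph.

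\textbf{Main obstacle.} The hard part is the per-operation $\OO(1)$-flip bookkeeping for the operations that live at the \emph{interface} between the BC tree and the SPQR forest — in particular meld and sever, where a change of strut simultaneously (i) re-roots and re-splits an S-node inside a block, (ii) relinks/cuts whole SPQR trees, and (iii) forces glue-corner changes at the cutvertices $a,b$ that are the strut endpoints. One must check that the glue-corner function's ``only local paths near the deleted edge move'' property composes correctly with the corner-flip-parity function so that the induced changes still amount to $\OO(1)$ slide/reflect flips rather than cascading down the heavy path; this is where the careful choice (Section~\ref{sec:dynbc-precontract}) to always represent a heavy-path endpoint by the \emph{heaviest} child cutvertex, and to pre-split exactly the $C$ nodes with a solid child, is what stops the cascade — exactly analogous to the remark after Lemma~\ref{lem:presplit-s} in the biconnected case. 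I would structure this as a sequence of short lemmas (one per basic operation, each ``$\OO(1)$ flips'', with proof ``analogous to Lemma~\ref{lem:presplit-p}/\ref{lem:presplit-s}/\ref{lem:reverse-changecount} using the glue-corner observations''), then a corollary ``each basic operation on the pre-split BC tree results in $\OO(1)$ flips'', and conclude by the triangle inequality along the $\OO(\log n)$-operation transformation path.
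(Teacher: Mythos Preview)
Your proposal is correct and takes essentially the same approach as the paper, which in fact does not write out a proof at all: the paper only states that ``the proof of the following lemma goes along the same lines as in Section~\ref{sec:canonical}'' and leaves a \texttt{todo} marker. Your plan is precisely the natural fleshing-out of that remark --- decompose the reconfiguration into $\OO(\log n)$ basic operations via the biased-dynamic-tree counting lemmas, then argue $\OO(1)$ flips per operation by porting the per-operation lemmas of Section~\ref{sec:canonical} to the BC setting using the glue-corner and corner-flip-parity machinery --- and you have correctly identified the meld/sever interface as the place requiring the most care.
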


\begin{proof}[Proof of Theorem~\ref{thm:canonical-emb}]
For any planar graph $G$ with $n$ nodes, and any edge $e\in G$:
$\flipdist(\emb(G-e), \emb(G)-e)\in\OO(\log n)$.
Thus, $\emb$ serves as a canonical embedding with the desired properties.
\end{proof} 

\subsection{Incremental planarity testing in $O(\log^3 n)$ insertion time}\label{sec:incrPlanrBicon}
We now have a structure for maintaining a planar embedding of an incremental graph and answering queries to whether an edge is compatible with planarity (not just with the embedding) of the graph:

\begin{proof}[Proof of Thm.~\ref{thm:incrplanarity}]
Let $G$ be a graph. Embed each component of $G$ with the canonical embedding, and build the data structure $M$ for maintaining alterable planar embeddings~\cite{DBLP:journals/mst/HolmR17}.
Let $(a,b)$ be an edge to be inserted or queried.
Inserting $(a,b)$
corresponds to at most $\OO(\log n)$ flip-bit changes. Changing the flip-bit of the SPQR-edge that corresponds to the separation pair $c,d$ yields a separation-flip in the pair $c,d$, which the data structure $M$ handles in $O(\log^2 n)$ time. 
Changing the embedding information of an articulation point corresponds to an articulation flip which $M$ handles in $O(\log^2 n)$ time.
Finally, query the data structure $M$ for whether the edge $(a,b)$ may be inserted across a face. 
Now if this is an update, then if the answer is yes, insert the edge, if no, report that the graph is no longer planar. If this is a query, report the answer, and reverse all operations.

Since there were $O(\log n)$ updates to the data structure $M$, and $M$ processes each update in $O(\log ^2 n)$ time, the total update or query time is $O(\log^3 n)$.
\end{proof}

\section{Implementation details}\label{sec:implementation}

We are now ready to provide the final details that complete the proofs of Theorem~\ref{thm:incr3con}, Theorem~\ref{thm:incrplanarity}, and Lemma~\ref{lem:intro-biaseddyntree}.
\subsection{Handling child lists}

We use globally biased search trees to track the dashed children of
each node.  Depending on the type of split we want to enable for a
given node, we need different versions of this structure.

For $C$ nodes in the BC tree, and $P$ nodes in the SPQR tree, we just want to split off one single child from the children's list. For $S$ nodes, on the other hand, we have a natural circular ordering of its neighbours, that divides its light children into two (possibly trivial) groups, namely those on either side of the heavy path. For $S$ nodes, we need to split off these two groups of light children separately.

The original dynamic trees~\cite{DBLP:journals/jcss/SleatorT83} used a
single globally biased search tree for all children of each node,
ordered left to right by their weight.  Adding or deleting a dashed
child $c$ of $v$ in this structure then costs
$\OO(\log\frac{w(T_v)-w(v)}{w(T_c)})$, which is exactly what we need for the $P$ and $C$ nodes.
However, this structure only supports a type of split where a single
child is separated from the rest.

A simple variation is to have all children ordered, and use one tree
for all children that are to the left of the solid edge, and another
tree for those that are to the right (using just a single tree when
there are no solid children).  Changing the solid/dashed state for a
child $c$ of $v$ then corresponds to a split/join on these trees,
which again costs $\OO(\log\frac{w(T_v)-w(v)}{w(T_c)})$.  This
structure only supports a different type of split, where (some
consequtive subsequence of) the left and/or right children are moved
to a new child. This is precicely the type of split we need for the
$S$ nodes.

\subsection{Handling solid paths}

Following~\cite{DBLP:journals/jcss/SleatorT83}, we want to use a
biased search tree for each solid path.  This works really well for
our biased dynamic trees (except for a tiny matter of handling zero
weights). However, when maintaining heavy paths for a tree
decomposition, our system of weights depend on the choice of root, so
we can not use them (directly) for balancing a biased tree over each
solid path.  But in order to maintain the correct set of solid paths
we need the be able to find the correct light edges, based on the
correct weights.

Fortunately, these two things do not need to be tied together. We can
use one algorithm, with one system of weights, to define and maintain
the shape of the biased search tree, and a different algorithm and
system of weights to determine the light edges.  We will do this, even
for the basic biased dynamic trees.

Given a solid path $H$ with bottom node $b$, let
$I(H)=V[H]\setminus\set{b,r}$, and let $t=\abs{I(H)}$, $u_0=b$ and
$u_{i+1}=p(u_i)$ for $i\in\set{0,\ldots,t}$.  In particular,
$I(H)=\set{u_1,\ldots,u_t}$ and the neighbors $u_0$ and $u_{t+1}$ are
well-defined.

We will keep our biased tree over $u_1\cdots u_t$ rather than over all
of $H$, because this vastly reduces the number of expensive update
operations we need to make on biased trees and makes reweight$(r,i)$
and reverse$(r)$ simpler.

\paragraph{Biased balancing for general biased dynamic trees}
To handle zero-weight leaves in the biased tree, we can replace every
zero weight with $\frac{1}{k}$. By definition of $k$-positive weights
(Definiton~\ref{def:k-positive}) this at most doubles the total weight
of the biased tree, so it adds at most a constant to the depth of every
node.  In particular, the time for a split or $3$-way join at $v$ is
$\OO\paren*{1+\frac{W}{\max\set{w(v),\frac{1}{k}}}}$, which is
$\OO\paren*{1+\frac{W}{w(v)}}$ whenever $w(v)\geq 1$. Thus the usual
telescoping sum still works, and we get the desired result.

However, we have to keep track of the \emph{actual} weight as well, in
order to find the correct light edges.

\begin{proof}[Proof of Lemma~\ref{lem:dyntree-internal-optimes}]
  This follows directly from
  using~\cite{DBLP:journals/jcss/SleatorT83} with our modified
  weights, and with the small change of keeping the bottom and root
  nodes out of the biased search tree for each solid path.
\end{proof}

\bibliography{references}

\end{document}